\theoremstyle{plain}
\newtheorem{theorem}{Theorem}[chapter]
\newtheorem{assumption}{Assumption}[chapter]
\newtheorem{lemma}{Lemma}[chapter]
\newtheorem{proposition}{Proposition}[chapter]
\newtheorem{ctrlobj}{Control Objective}[chapter]
\theoremstyle{definition}
\newtheorem{definition}{Definition}[chapter]
\newtheorem{example}{Example}[chapter]
\newtheorem{remark}{Remark}[chapter]
\theoremstyle{remark}
\newcommand*\bigcdot{\mathpalette\bigcdot@{.5}}
\newcommand\blankpage{%
	\null
	\thispagestyle{empty}%
	\addtocounter{page}{-1}%
	\newpage}
\newcommand{\beqn}{\begin{eqnarray*}} 
\newcommand{\eeqn}{\end{eqnarray*}}
\newcommand{\beq}{\begin{eqnarray}} 
\newcommand{\eeq}{\end{eqnarray}}
\def\deff{\stackrel{\triangle}{=}}
\newcommand{\djc}[1]{\textcolor{black}{#1}}
\newcommand{\kc}[1]{\textcolor{black}{#1}}
\newcommand{\kcw}[1]{\textcolor{black}{#1}}
\newcommand{\kcb}[1]{\textcolor{black}{#1}}
\newcommand{\new}[1]{\textcolor{black}{#1}}
\newcommand{\kcn}[1]{\textcolor{black}{#1}}
\newcommand{\ijc}[1]{\textcolor{black}{#1}}
\newcommand{\thesis}[1]{\textcolor{black}{#1}}
\def\norm#1{\|#1\|}
\newcommand{\blue}[1]{\textcolor{black}{#1}}
\begin{document}

\title{Alternative passive maps in the Brayton-Moser framework: Implications on control and optimization}

\author{Krishna Chaitanya Kosaraju}

\date{2017}
\department{Electrical Engineering}

\maketitle
 \certificate

 \vspace*{0.5in}

 \noindent This is to certify that the thesis titled {\bf Alternative passive maps in the Brayton-Moser framework: Implications on control and optimization}, submitted by {\bf Krishna Chaitanya Kosaraju}, 
   to the Indian Institute of Technology, Madras, for
 the award of the degree of {\bf Doctor of Philosophy}, is a bona fide
 record of the research work done by him under my supervision.  The
 contents of this thesis, in full or in parts, have not been submitted
 to any other Institute or University for the award of any degree or
 diploma.

 \vspace*{1.5in}

 \begin{singlespacing}
 	
 \hspace*{-0.25in}
 \parbox{2.5in}{
 \noindent {\bf Dr. Ramkrishna Pasumarthy} \\
 \noindent Research Guide \\ 
 \noindent Professor \\
 \noindent Dept. of Electrical Engineering\\
 \noindent IIT-Madras, 600 036 \\
 } 
 \hspace*{1.0in} 
 \end{singlespacing}

 \vspace*{0.25in}
 \noindent Place: Chennai\\
 Date: 19th January 2009

\acknowledgements

I would like to express my thanks of gratitude to my thesis advisor, Dr. Ramkrishna Pasumarthy, for giving this opportunity, and all his support. Sir, really thank you for putting up with my laziness. I sincerely thank, Dr. Arun D. Mahindrakar, for introducing me to control theory. Sir, thank you for motivating me to choose a career in research.   I would like to express my gratitude to Singh Sir for helping me look beyond the horizon. I would also like to thank Venky for all his fruitful collaboration.

\noindent There are those with out whom, my stay at IIT-Madras would not have been this enjoyable, to name a few Bhaskar, Srinivas, Vikram, Kruthika, Anup, Asit, Sravan, Sai Krishna, Vijay, Akshit, Sailash, Durgesh, Sharad, Yashraj, Niharika, Sunil, Abhiskek, Lokesh, Gourav, Immanuvel, Mithun, Abraham, Tahiya, Gopal and list goes on ... Thanks a lot for our never ending insane time waste in tea shops,  partying, making plans that never went beyond the night...

\noindent I would like to thank my family for all their constant love and support. There is one person to whom I cant just say thanks and move on. Vasu, saying 'this would not have been possible without your support', is an under statement.

\vfill \hfill ... to our little princess
\newpage 

 \vspace{2cm}
\afterpage{\blankpage}
 \begin{center}
 	\textbf{\large ABSTRACT}
 \end{center}
\addcontentsline{toc}{chapter}{ABSTRACT}

%
%
%
 \noindent In the recent years, passivity theory has gained renewed attention because of its advantages and practicality in modeling of multi-domain systems and constructive control techniques. Unlike Lyapunov theory, passivity theory takes a behavioral approach in its control design methodologies.
 Hence, it provides solutions, which not only achieve the control objectives, but are also easily interpretable in the standard engineering parlance.
 
 \noindent  The fundamental idea in passivity based control (PBC) methodologies is to find a controller that renders the closed-loop system passive. 
  It is well known that, the PBC methodologies that rely on power-conjugate port-variables do not work for control objectives that require bounded power and unbounded energy. This is commonly known as the {\em dissipation obstacle}. One possible alternative that has been well explored, in the case of finite dimensional systems, is Brayton-Moser formulation. 
   However, designing controllers in this framework leads to various difficulties, such as, solving for partial differential equations and finding storage functions satisfying a gradient structure.
   
\noindent In this thesis, we first show that the output port-variable derived from Brayton-Moser formulation is integrable, under the assumption that the input matrix is integrable. The integrated output port-variable is then used to construct a desired closed-loop storage function for the closed-loop system. Secondly, we show that a class of Brayton-Moser systems are contracting. This results in a new passivity property with ``differentiation at both port-variables''. We extend this to a class of contracting nonlinear systems using dynamic feedback and Krasovskii-type storage functions. 

\noindent Systems represented in Brayton-Moser framework possess a pseudo-gradient structure. Another class of problems where pseudo-gradient form naturally appears, is in the primal-dual gradient-methods of convex optimization. This observation motivates us to present passivity based converge analysis for the primal-dual gradient dynamics.

\noindent Brayton-Moser formulation is not a well-established topic in infinite-dimensional systems, albeit dissipation obstacle is more prevalent in these systems. In this thesis, we present modeling and control aspects of infinite-dimensional port-Hamiltonian systems (defined by a Stokes-Dirac structure) in Brayton-Moser framework. We illustrate these methods using (i) stability analysis of Maxwell's equations in $\mathbb{R}^3$, (ii) boundary control of transmission line system modeled using Telegrapher's equations.
%
%
%
%
%
 \pagebreak

\begin{singlespace}
  \tableofcontents
\thispagestyle{empty}
 \listoftables
\addcontentsline{toc}{chapter}{LIST OF TABLES}
\listoffigures
\addcontentsline{toc}{chapter}{LIST OF FIGURES}
\end{singlespace}







\pagebreak
\clearpage

\pagenumbering{arabic}

\chapter{Introduction}
%
The notion of passivity, originating from electrical networks, has been very useful in analyzing stability of a class of nonlinear systems. A system is passive if its energy is bounded from below, and is inherently stable at its natural equilibrium. In the context of state-space representation of nonlinear systems, this allows for a Lyapunov function interpretation of quantities such as allowable, stored and dissipated energy and thus provides a direct relationship between passivity and stability \cite{l2gain}. In the passivity-based control-by-interconnection methodologies, the controller can be understood as a dynamical system interconnected to the physical system that renders the closed-loop system passive \cite{OrtVanMasEsc}. Power conserving interconnection of passive systems is again passive. This has resulted in techniques called control-by-interconnection \cite{ortega2001putting}, in which we assume that controller is a passive system interconnected to the physical system, resulting in system's desired control objective and/or performance. Additionally, passivity-based controllers include sensing, actuation and do not usually require an external power source. Consequently, these are more robust and insensitive to measurements. Classical examples include lead-lag compensators, the gain-setting circuit of feedback amplifiers for voltage/current control and fly-ball governor in speed controller for windmills and steam engines which pre-dates back to 16th century \cite{maxwell1867governors}.

Energy is an intellectually deep but a simple concept, that eluded the best minds for centuries. In physics, energy represents the ability to do work, that exist in various forms such as mechanical, electrical, thermal and chemical. The behavior of a complex system can be described by analyzing the energy transfer among its subsystems. In this regard, the controller can be understood as an energy exchanging device (typically implemented on a computing system) that modifies the behavior of the plant.
Energy-based methods for modeling and control of complex physical systems has been an active area of research for the past two decades. In particular, the port-Hamiltonian based formulation has proven to be effective in modeling and control of complex physical systems from several domains, both finite- and infinite-dimensional \cite{bookgeo}. Port- Hamiltonian systems are inherently passive with the Hamiltonian (as the total energy), which is assumed to be bounded from below, serving as the storage function and the port variables being power-conjugate (force and velocity or voltage and current). This resulted in the development of so-called ``Energy Shaping'' methods for control of physical systems. The fundamental idea in energy shaping is to find a controller that 
 renders the closed-loop system passive. The term `shaping' refers to `assigning a desired energy function to the closed-loop system through control'. This often requires one to solve partial differential equations. In this context, the controller can be interpreted as a system that bridges the gap between the given open-loop and desired closed-loop energy.

%
To analyze passivity of a general nonlinear system, one needs to be crafty in constructing the storage function. To that end, recasting the dynamics into a known framework, such as port-Hamiltonian formulation has lead to passive maps with power-conjugate port-variables (such as voltage and current, Force and velocity). But the standard control by interconnection methodologies, where we assume that both plant and controller are passive, fails for the control-objectives that require bounded power but unbounded energy. In the case of resistive, inductive and capacitive (RLC) circuits, this phenomenon is usually called as {\em dissipation obstacle}. This motivated researchers to search for passive maps that are not necessarily power-conjugate. One possible alternative that has been explored extensively in the finite-dimensional case is the Brayton-Moser framework for modeling electrical networks \cite{BraMos1964I,BraMos1964II,BraMir1964}, which has been successfully adapted towards analyzing passivity of RLC circuits and for control of physical systems by ``power shaping''.

Physical systems in BM framework are modeled as pseudo-gradient systems with respect to a pseudo-Riemannian metric $A$ and a ``mixed-potential'' function $P$ which has units of power \cite{smale2000mathematical}.
\begin{eqnarray*}
A\dot{x}=\nabla_xP +Bu
\end{eqnarray*} 
where $x\in \mathbb{R}^n$ denotes the state variable, $B\in \mathbb{R}^{n\times m}$ and $u\in \mathbb{R}^m$ denotes input and input matrix respectively.
In the case of RLC circuits, the mixed-potential function is the sum of the content of the current carrying resistors, co-content of the voltage controlled resistors and instantaneous power transfer between storage elements. Unlike energy in the port-Hamiltonian formulation, the mixed-potential function in Brayton-Moser formulation is sign-indefinite. Hence, we cannot use this directly as a storage/Lyapunov function to infer any kind of passivity/stability properties. The key step to derive passivity in this framework is to find an equivalent gradient formulation with respect to a matrix $\tilde{A}$ (whose symmetric part is negative-definite) and a positive-definite mixed-potential function $\tilde{P}$. 
\begin{eqnarray*}
\tilde{A}\dot{x}=\nabla_x\tilde{P} +Bu
\end{eqnarray*}
In literature, the new mixed-potential function $\tilde{P}$ and matrix $\tilde{A}$ are together called as ``admissible-pairs''. The passive maps derived from Brayton-Moser framework directly follow from the inherent properties of these admissible-pairs. In the context of electrical networks, the passivity is now achieved with respect to `controlled voltages and derivatives of currents' or `controlled currents and derivatives of the voltages'. These new passive maps lay the groundwork for control by power-shaping methodology. Analogous to the energy-shaping, the idea is to make the closed-loop system passive by assigning a desired power-like function through control. This method has natural advantages over practical drawbacks of energy shaping methods like speeding up the transient response (as derivatives of currents and voltages are used as outputs) and also help overcome the ``dissipation obstacle''. For complete details on various energy and power-based modeling techniques, we refer to \cite{jeltsema2009multidomain}.

%
\vspace{0.5cm}
\noindent\textbf{Motivation and Contributions}

\noindent For systems formulated in the Brayton Moser framework, we aim to explore alternative passive maps and study their impact on control and optimization of dynamical system. Further, the research objectives gave rise to several publications, and are classified into three themes:

\hspace{-0.69cm}
\begin{minipage}{0.05\linewidth}
	\vspace{-5.4cm}
	(i)
\end{minipage}\begin{minipage}{0.95\linewidth}
	{\em Finite dimensional systems}: Energy shaping methods for designing controllers often suffer from dissipation obstacle. The Brayton-Moser formulation was a possible alternative to circumvent this problem. However, even in this framework, designing controllers leads to two chief difficulties. The first one involves solving partial differential equations, which might be a herculean task. We provide an alternate methodology for passive systems with an integrable output port-variable, which does not involve solving for partial differential equations. These results have been published in \cite{chinde2016building}. The second difficulty lies in the fact that the methodology requir-
\end{minipage}
\hspace{-0.5cm}\begin{minipage}{0.05\linewidth}
\vspace{-12.25cm}
(i)
\end{minipage}\begin{minipage}{0.95\linewidth}
es one to find storage functions satisfying the gradient structure. 
This has led to serious restrictions on the scope of problems that can be solved.
In \cite{ICCvdotidot} we have shown that, for systems in Brayton-Moser framework, storage functions that are constructed using Krasovskii's Lyapunov functions yields passive maps that have ``differentiation on both the port variables''. This solves the dissipation obstacle problem and avoids the need to find admissible pairs. To establish the result, we extended the property that, a class of dynamical systems in Brayton-Moser formulations are contracting. We further extended this to class of contracting nonlinear systems via dynamic state feedback \cite{ACC18}.
\end{minipage}
\vspace{0.5cm}

\hspace{-0.5cm}\begin{minipage}{0.05\linewidth}
\vspace{-14.5cm}
(ii)
\end{minipage}\begin{minipage}{0.95\linewidth}
{\em Infinite-dimensional systems}: Similar to finite dimensional systems, infinite dimensional systems also suffer from dissipation obstacle. The existing literature on boundary control of infinite dimensional systems by energy shaping, deals with either lossless systems \citep{Hugo} or partially lossless systems \citep{AleCla}, and thus avoid dissipation obstacle issues. The Brayton-Moser formulation that helped us solve these issues in finite dimensional systems is not a very established topic in infinite dimensional systems. For instance, the authors in \cite{BraMir1964} studied the stability of transmission line system with constant input and nonlinear load elements at boundary. Further, in \cite{DimVan07}, the authors provided Brayton-Moser formulation of Maxwell's equations with zero boundary energy flows. However, both the papers are limited to stability analysis and commonly avoids the boundary control problem. The basic building block to overcome dissipation obstacle is to write the equations in the Brayton-Moser form, which was not fully extended to infinite dimensional systems. However, to effectively use the method, we need to construct admissible pairs, which aids in stability analysis.  In case of infinite-dimensional systems with nonzero boundary energy flows, we need to find these admissible pairs for all individual subsystems, that is, spatial domain and boundary, while preserving the interconnection structure between these subsystems. These results are published in \cite{mtns}, \cite{IFAC}, \cite{book} and \cite{ima2018} and illustrated using `stability of Maxwell's equations' and `boundary control of transmission line system modeled by Telegraphers equations'.  
\end{minipage}
\vspace{0.5cm}

\hspace{-0.5cm}\begin{minipage}{0.05\linewidth}
	\vspace{-9.1cm}
	(iii)
\end{minipage}\begin{minipage}{0.95\linewidth}
	{\em Convex optimization}: Another set of problems where the pseudo-gradient formulation naturally arises is in gradient methods for convex optimization. Gradient-based methods are a well-known class of mathematical routines for solving convex optimization problems. These gradient algorithms have much to gain from a control and dynamical systems perspective, to have a better understanding of the underlying system theoretic properties (such as stability, convergence rates, and robustness). The convergence of gradient-based methods and Lyapunov stability, relate the solution of the optimization problem to the equilibrium point of a dynamical system. Specifically, the primal-dual methods closely resemble the pseudo-gradient structure. Moreover, Brayton-Moser formulation is inherently a pseudo-gradient formulation.  This observation motivates us to look for connections between convex optimization and Brayton-Moser formulation. These results are published in \cite{kosaraju2018stability}.
\end{minipage}
\vspace{0.1cm}

\noindent\textbf{Outline of thesis} This thesis is subdivided into six chapters which are structured as follows:

\vspace{0.25cm}
\hspace{-0.5cm}\begin{minipage}{0.05\linewidth}
\vspace{-1.6cm}
(i)
\end{minipage}\begin{minipage}{0.95\linewidth}
Chapter 2 accommodates most of the prerequisite and background information.
It contains a brief outline on modeling and control aspects in port-Hamiltonian and Brayton-Moser formulations.
\end{minipage}
\vspace{0.5cm}

\hspace{-0.5cm}\begin{minipage}{0.05\linewidth}
\vspace{-4.65cm}
(ii)
\end{minipage}\begin{minipage}{0.95\linewidth}
In Chapter 3, we show that the output port-variable, derived from systems modeled in Brayton Moser framework,   are integrable; under the assumption that the input matrix is integrable. The integrated output port-variable is then used to construct a desired storage function for the closed-loop system. Further, we show that a class of Brayton Moser systems are contracting, resulting in a new passivity property with ``differentiation at both port-variables''. We extended this methodology to a class of nonlinear systems using dynamic feedback and Krasovskii's method.
\end{minipage}
\vspace{0.5cm}

\hspace{-0.5cm}\begin{minipage}{0.05\linewidth}
\vspace{-6.9 cm}
(iii)
\end{minipage}\begin{minipage}{0.95\linewidth}
In Chapter 4, we establish that infinite-dimensional systems are prone to dissipation obstacle. Thereafter, we begin with Brayton-Moser formulation of port-Hamiltonian system defined using Stokes' Dirac structure. In the process, we present its Dirac formulation with a non-canonical bilinear form. Analogous to the finite-dimensional system, identifying the underlying gradient structure of the system is crucial in analyzing the stability. We illustrate this with two examples, (i) stability analysis of Maxwell's equations in $\mathbb{R}^3$ with zero boundary energy flows, (ii) boundary control of transmission line system modeled by Telegraphers equations. Towards the end, we extend the results presented in chapter 3 to infinite-dimensional systems.

\end{minipage}
\vspace{0.5cm}

\hspace{-0.5cm}\begin{minipage}{0.05\linewidth}
\vspace{-7.4cm}
(iv)
\end{minipage}\begin{minipage}{0.95\linewidth}
In Chapter 5, we deal with stability of continuous time primal-dual gradient dynamics of convex optimization problem. Primarily, the convex optimization problem with only affine equality constraints admits a Brayton Moser formulation. Secondly, the inequality constraints are modeled as a state dependent switching system. Finally, the two systems are shown as passive  systems and are interconnected in a power conserving way. This results in a new passive system whose dynamics represents the primal-dual gradient equations of the overall optimization problem. The aforementioned methodology is applied to an support vector machine problem and  simulations are provided for corroboration.

\vspace{0.5cm}
\hspace{-0.5cm}\begin{minipage}{0.05\linewidth}
(v)
\end{minipage}\begin{minipage}{0.95\linewidth}
In Chapter 6, we give concluding remarks and present some future directions.
\end{minipage}
%
\end{minipage}
\vspace{0.5cm}

\chapter{System theoretic Prerequisites}
%
In this chapter, we review few important results from the literature on port-Hamiltonian systems, Brayton Moser formulation, and their geometric properties and limitations. The list contains results that have directly shaped our work that we present in this thesis.  The list is by no means complete.
We start with a input-output port-Hamiltonian system and its Dirac formulation, subsequently we present control by interconnection methodology and its drawback, `dissipation obstacle'. Later on, we introduce Brayton Moser formulation of finite-dimensional topologically complete RLC circuits and presents some results on stability and control in this framework. Throughout the chapter, we illustrate these concepts using a parallel RLC circuit as an example. Towards the end, we advance to infinite-dimensional port-Hamiltonian systems and conclude with some general stability definitions for infinite dimensional systems.
\section{Port-Hamiltonian (pH) system and Dirac structures}
A port-Hamiltonian system with dissipation evolving on an n-dimensional state space manifold $\mathcal{X}$ with input space $\mathcal{U}= \mathbb{R}^m$  and output space $\mathcal{Y}= \mathbb{R}^m$ ($m\leq n$) is represented as
\begin{align}\label{PHDS}
\begin{split}
\dot{x}&=\left[J(x)-R(x)\right]\dfrac{\partial H}{\partial x}+g(x) u\\
y&=g^\top (x)\dfrac{\partial H}{\partial x}
\end{split}
\end{align}
where $x\in \mathcal{X}$ is the energy variable and the smooth function $H(x):\mathcal{X}\rightarrow \mathbb{R}$ represents the total stored energy, otherwise called as Hamiltonian. $u\in \mathcal{U}$ and $y\in \mathcal{Y}$ are called input and output port-variables respectively. The $n\times n$ matrices $J(x)$ and $R(x)$ satisfies $J(x)=-J^\top (x)$ and $R(x)=R^\top(x)\geq 0$. The input matrix $g(x)\in \mathbb{R}^{n\times m}$ and skew-symmetric matrix $J(x)$ capture the system's interconnection structure, where as, the positive semi-definite matrix $R(x)$ captures the dissipation (or resistive) structure in the system. By the properties of $J(x)$ and $R(x)$, it immediately follows that
\begin{align}\label{dPHDS}
\begin{split}
\dfrac{d}{dt}H(x)&= \dot{x}^\top \dfrac{\partial H}{\partial x}\\
&= \left(\left[J(x)-R(x)\right]\dfrac{\partial H}{\partial x}+g(x) u\right)^\top \dfrac{\partial H}{\partial x}\\
&= -\dfrac{\partial H}{\partial x}^\top R(x)\dfrac{\partial H}{\partial x}+u^\top g^\top \dfrac{\partial H}{\partial x}\\
&\leq u^\top y
\end{split}
\end{align}
This implies that the pH system \eqref{PHDS} is passive with port-variables $u$ and $y$. As the Hamiltonian $H$ represents the total energy stored in the system, $\dot{H}$ represents the instantaneous power transfer. Moreover, $u$ and $y$ are called power-conjugate port-variables, meaning, their product $u^\top y$ represents the power flow between the environment and the system. Well-known examples of such pairs are voltage-current in electrical circuits and force-velocity in mechanical systems. Consequently, the equation \eqref{dPHDS} gives the {\em dissipative} inequality
\begin{align}\label{iPHDS}
\begin{split}
\underbrace{H(x(t_1))-H(x(t_0))}_{\text{\large Stored\;\;energy}}&=  \underbrace{-\int_{t_0}^{t_1}\left(\dfrac{\partial H}{\partial x}^\top R(x)\dfrac{\partial H}{\partial x}\right)dt}_{\text{\large Dissipated energy}}+ \underbrace{\int_{t_0}^{t_1}u^\top ydt}_{\text{\large Supplied energy }}
\end{split}
\end{align}
where $t_0\leq t_1$ and $H$, $u^\top y$ represent the storage function and the supply rate respectively.

\textbf{\em Dirac Structures}:
In network theory, the Tellegen's theorem states that the summation of instantaneous power in all the branches of the network is zero. Dirac structure generalizes the underlying geometric structure of Tellegen's theorem (power conservation). Let $\mathcal{F}\times \mathcal{E}$ be the space of these power variables, where the linear space $\mathcal{F}$ is called flow space and $\mathcal{E}=\mathcal{F}^\ast$ is the dual space of $\mathcal{F}$, called as effort space. In the case of network theory, these spaces can be interpreted as spaces of branch voltages and currents, vice-versa (in the case of mechanical systems, they represent generalized forces and generalized velocities). Let $f\in \mathcal{F}$ and $e\in \mathcal{E}$ denotes the flow and effort variables respectively. The power in the total space of port variables ($\mathcal{F}\times \mathcal{E}$) can be defined as 
\beq
P=\left<e|f\right>,\;\; (f,e)\in\mathcal{F}\times \mathcal{E}
\eeq
where $\left<e|f\right>$ denotes the duality product, that is, the linear functional $e\in \mathcal{E}$ acting on $\mathcal{F}$. In the case of $\mathcal{F}=\mathbb{R}^m$ 
\beqn
\left<e|f\right>&=&e^\top f\eeqn
that is, the duality product can be identified with the inner-product defined on $\mathbb{R}^m$. 
\begin{definition}\cite{van2014port}
	Consider a finite-dimensional linear space $\mathcal{F}$ with $\mathcal{E}=\mathcal{F}^\ast$. A subspace $D\subset \mathcal{F}\times \mathcal{E}$ is a (constant) Dirac structure if
	\begin{itemize}
		\item [1.]$\left<e|f\right>=0$, for all $(f,e)\in \mathcal{D}$ \;\;\;\; (Power conservation)
		\item[2.]dim $\mathcal{D}$ $=$ dim $\mathcal{F}$\;\;\;\;\;\;\;\;\;\;\;\;\;\;\;\; (maximal dimension of subspace $\mathcal{D}$ )
	\end{itemize}
\end{definition}
\noindent Next, we present an equivalent definition for the Dirac structure, which will be useful in presenting the Dirac formulation of infinite dimensional systems in Chapter 5.
\begin{lemma}
	A (constant) Dirac structure on $\mathcal{F}\times \mathcal{E}$ is a subspace $\mathcal{D}\subset \mathcal{F}\times \mathcal{E}$ such that 
	\beq
	\mathcal{D}&=&\mathcal{D}^{\perp}
	\eeq
	where $\perp$ denotes the orthogonal complement with respect to the bilinear form $\left<\left<\;,\right>\right>$ given as
	\beq
	\left<\left<(f^a,e^a),(f^b,e^b)\right>\right>&=&\left<e^a|f^b\right>+\left<e^b|f^a\right>, \;\;\;\;(f^a,e^a),(f^b,e^b)\in \mathcal{F}\times \mathcal{E}
	\eeq
	or equivalently
	\beq
	D^{\perp}:=\left\{\left(f^a,e^a\right)|\left<\left<(f^a,e^a),(f,e)\right>\right>=0,(f,e)\in \mathcal{D}\right\}
	\eeq
\end{lemma}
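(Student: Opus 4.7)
The plan is to establish the equivalence in two directions, treating the definition (power conservation plus the maximality $\dim\mathcal{D}=\dim\mathcal{F}$) and the self-orthogonality condition $\mathcal{D}=\mathcal{D}^{\perp}$ as two characterisations that communicate through the dimension identity $\dim\mathcal{D}+\dim\mathcal{D}^{\perp}=\dim(\mathcal{F}\times\mathcal{E})=2\dim\mathcal{F}$. That identity rests on the nondegeneracy of the bilinear form $\langle\langle\cdot,\cdot\rangle\rangle$, which I will verify first.

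For the nondegeneracy, suppose some fixed $(f^a,e^a)\in\mathcal{F}\times\mathcal{E}$ satisfies $\langle\langle (f^a,e^a),(f^b,e^b)\rangle\rangle=0$ for every $(f^b,e^b)\in\mathcal{F}\times\mathcal{E}$. Specialising to $e^b=0$ yields $\langle e^a\mid f^b\rangle=0$ for all $f^b\in\mathcal{F}$, and since $\mathcal{E}=\mathcal{F}^{\ast}$ this forces $e^a=0$; by symmetry, specialising to $f^b=0$ forces $f^a=0$. This nondegeneracy gives the dimension identity above and is, I expect, the only conceptually delicate step of the proof; everything else is short linear algebra.

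For the forward direction, assume $\mathcal{D}$ is a Dirac structure in the sense of the definition. To show $\mathcal{D}\subseteq\mathcal{D}^{\perp}$, I would pick arbitrary $(f^a,e^a),(f^b,e^b)\in\mathcal{D}$; since $\mathcal{D}$ is a linear subspace, their sum also lies in $\mathcal{D}$, so power conservation applies to both the individual elements and to the sum. Expanding $\langle e^a+e^b\mid f^a+f^b\rangle=0$ by bilinearity and cancelling the two vanishing diagonal terms $\langle e^a\mid f^a\rangle$ and $\langle e^b\mid f^b\rangle$ yields $\langle e^a\mid f^b\rangle+\langle e^b\mid f^a\rangle=0$, which is precisely the condition for $(f^a,e^a)\in\mathcal{D}^{\perp}$. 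Equality then drops out of the dimension count: $\dim\mathcal{D}^{\perp}=2\dim\mathcal{F}-\dim\mathcal{D}=\dim\mathcal{F}=\dim\mathcal{D}$.

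For the reverse direction, assume $\mathcal{D}=\mathcal{D}^{\perp}$. Power conservation is immediate by setting $(f^a,e^a)=(f^b,e^b)=(f,e)$ in the bilinear form, which gives $2\langle e\mid f\rangle=0$ and hence $\langle e\mid f\rangle=0$ for every $(f,e)\in\mathcal{D}$. The maximality condition $\dim\mathcal{D}=\dim\mathcal{F}$ follows from the same identity $\dim\mathcal{D}+\dim\mathcal{D}^{\perp}=2\dim\mathcal{F}$ combined with the hypothesis $\mathcal{D}=\mathcal{D}^{\perp}$. Both conditions of the original definition are thereby recovered, completing the equivalence.
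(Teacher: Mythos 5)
Your proof is correct and complete: you verify nondegeneracy of the bilinear form $\left<\left<\;,\right>\right>$, deduce from it the dimension identity $\dim\mathcal{D}+\dim\mathcal{D}^{\perp}=2\dim\mathcal{F}$, obtain $\mathcal{D}\subseteq\mathcal{D}^{\perp}$ by polarizing the power-conservation condition over sums of elements of $\mathcal{D}$, and then close both directions of the equivalence by the dimension count. For the record, the paper itself states this lemma without proof, deferring to its cited references on Dirac structures; your argument is precisely the standard one found in that literature, so there is no methodological divergence to report.
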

\noindent For further exposition on Dirac structures and their alternate representation see \cite{van2014port,l2gain, courant1990dirac, dorfman1993dirac,dalsmo1998representations}.
%
%
%
%
\section{Control by interconnection}
The typical approach in control of physical systems is about choosing a controller that constraints the time derivative of the Lyapunov function candidate to be a negative (semi-) definite function. The form of the controller has very little to do with model or the physics of the plant, but more on the choice of the Lyapunov function candidate. Control objective with a performance criterion cannot be easily incorporated using this methodology. In the passivity-based control (PBC) methodologies, the controller can be understood as an aggregation of proportional, derivative and integral actions, thus providing a direct relation to the performance criteria. Further, the storage function, which acts as a Lyapunov function for stability analysis, is derived from the physics of the plant. The fundamental idea in PBC methodologies is to find a controller that renders the closed-loop system passive. In this section we briefly present a PBC methodology called control by interconnection \cite{garcia2005control,castanos2009asymptotic,ortega2007control,ortega2008control} and study its limitations.
%

In control by interconnection, we assume that the controller is a port-Hamiltonian system with dissipation (a passive dynamical system usually implemented on a computer) 
\begin{align}\label{cPHDS}
\begin{split}
\dot{x}_c&=\left[J(x_c)-R(x_c)\right]\dfrac{\partial H_c}{\partial x_c}+g_c(x_c) u_c\\
y_c&=g_c^\top (x_c)\dfrac{\partial H_c}{\partial x_c}
\end{split}
\end{align}
interconnected to the physical system \eqref{PHDS} using a standard feedback interconnection
\beqn
\begin{matrix}
	u=-y_c+v&u_c=y+v_c
\end{matrix}
\eeqn
such that the closed loop system
\begin{align}\label{cibPHDS}
\begin{split}
\begin{bmatrix}
\dot{x}\\
\dot{x}_c
\end{bmatrix}&=\begin{bmatrix}
J(x)-R(x) &-g(x)g_c^\top (x_c)\\g_c(x_c)g^\top(x) &J(x_c)-R(x_c)
\end{bmatrix}\begin{bmatrix}
\frac{\partial H}{\partial x}\\\frac{\partial H_c}{\partial x_c}
\end{bmatrix}+\begin{bmatrix}
g(x)&0\\0&g_c(x_c)
\end{bmatrix}\begin{bmatrix}
v\\v_c
\end{bmatrix}\\
\begin{bmatrix}
y\\y_c
\end{bmatrix}&=\begin{bmatrix}
g(x) &0\\0&g_c(x_c)
\end{bmatrix}^\top\begin{bmatrix}
\frac{\partial H}{\partial x}\\\frac{\partial H_c}{\partial x_c}
\end{bmatrix}
\end{split}
\end{align}
is again a port Hamiltonian system with dissipation. Next, we find the invariant functions, called Casimirs $\mathcal{C}(x,x_c)$, that are independent of the closed-loop Hamiltonian $H(x)+H_c(x_c)$, using
\begin{align}\label{casPHDS}
\begin{split}
\begin{bmatrix}
\frac{\partial \mathcal{C}}{\partial x}\\
\frac{\partial \mathcal{C}}{\partial x_c}
\end{bmatrix}^\top\begin{bmatrix}
J(x)-R(x) &-g(x)g_c^\top (x_c)\\g_c(x_c)g^\top(x) &J(x_c)-R(x_c)
\end{bmatrix}&=0.
\end{split}
\end{align}
These Casimirs, that relate the plant state to controller state, are used to shape the closed-loop Hamiltonian  at the desired operating point by replacing $H(x)+H_c(x_c)$ by $H(x)+H_c(x_c)+H_a(\mathcal{C})$. Further, if the Casimirs are of the form 
\beq \label{cas::clp_inv} \mathcal{C}(x,x_c)=x_c-C(x)\eeq
then we can eliminate $x_c$ from  \eqref{cibPHDS} by restricting the closed-loop dynamics to the level set $L=\left\{(x,x_c)|x_c=C(x)+c\right\}$, where $c\in \mathbb{R}$ is a constant. Thereby, we can use $H(x)+H_c(C(x)+c)$ as the new storage function. 

There are mainly two disadvantages in using this methodology. The first one being, the need for solving  partial differential equations given in \eqref{casPHDS} to find the Casimir functionals. This often turns out to be a herculean task. The second disadvantage lies in the existence of the Casimir functional itself. The partial differential equations in \eqref{casPHDS} can be simplified (using \eqref{cas::clp_inv}) to the following set of necessary conditions.
%
\beq
\dfrac{\partial C}{\partial x}^\top J(x)\dfrac{\partial C}{\partial x}&=&J_c(x_c)\\
R(x)\dfrac{\partial C}{\partial x}&=& 0\label{cas::diss_obs}\\
R_c(x_c)&=&0\\
\dfrac{\partial C}{\partial x}^\top J(x)&=&g_c(x_c)g(x)^\top.
\eeq
In the necessary conditions given above, one that hinders us most often is $R(x)\dfrac{\partial C}{\partial x}= 0$. Let us consider a scenario where the $i^{th}$ coordinate $x_i$ of state vector $x$ needs to be controlled. Further assume that the resistive structure of the system imposes $R(x_i)\neq 0$. Then from equation \eqref{cas::diss_obs}, we have
$$R(x_i)\dfrac{\partial C}{\partial x_i}=0\implies \dfrac{\partial C}{\partial x_i}=0.$$
%
This implies that the achievable Casimirs are independent of $x_i$. Hence, $x_i$ cannot be controlled by this methodology.
%
In the case of RLC circuits this is usually called  as the {\em dissipation obstacle}. In the next subsection, we present an equivalent physical interpretation of the dissipation obstacle.
\section{Dissipation Obstacle}
In standard control by interconnection methodologies \citep{ortega2008control}, we assume that both plant and controller \kcn{are} passive. Plants that extract unbounded energy (but bounded power) at nonzero equilibrium, cannot be stabilized under this assumption. 
%
The following example better illustrates this limitation of control by interconnection methodology.
\begin{example}\textbf{(Parallel RLC circuit)}.\label{moti::example_pRLC}
	\begin{figure}[h]
		\centering
		\includegraphics[width=0.6\linewidth]{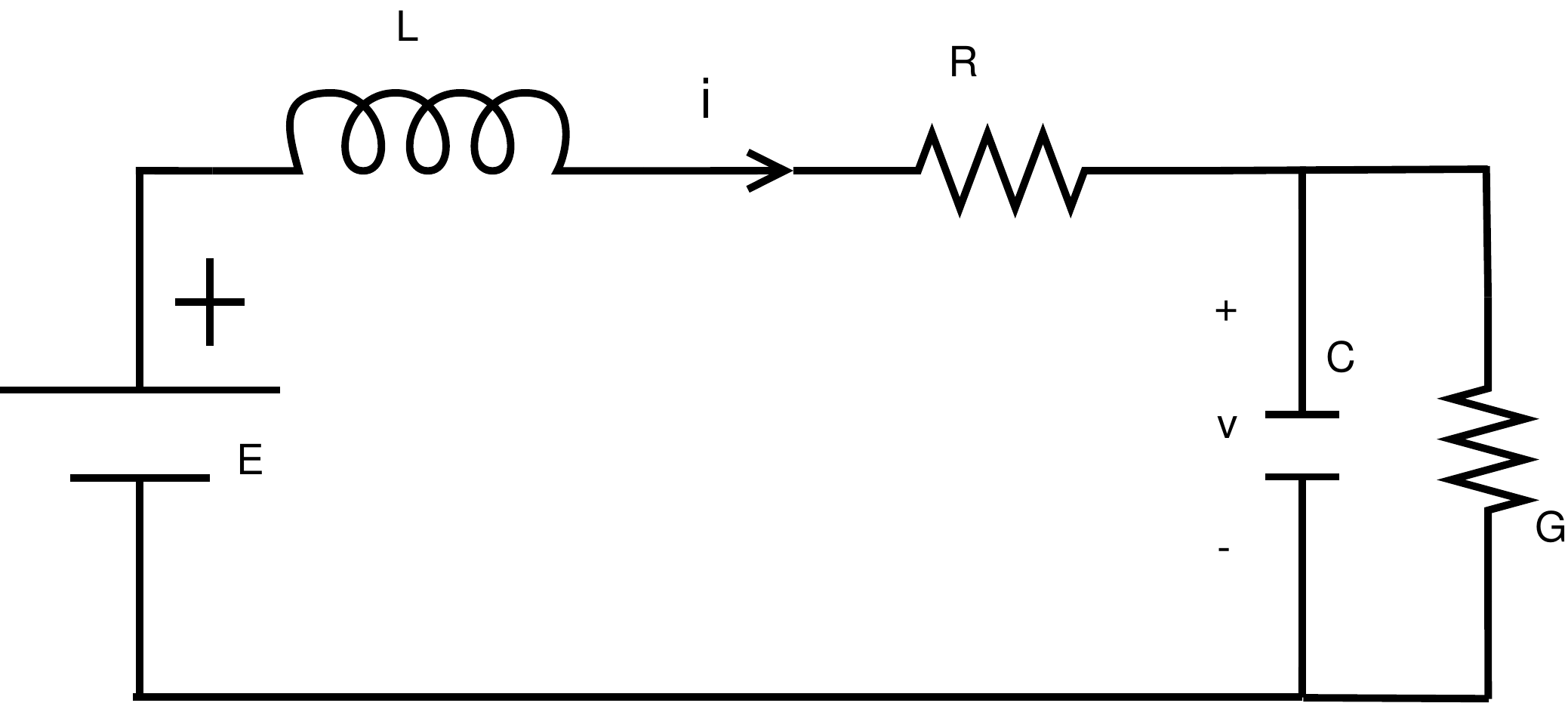}
		\caption{Parallel RLC circuit}
		\label{fig:pRLC}
	\end{figure}
 	Consider the parallel RLC circuit (as shown in Figure \ref{moti::example_pRLC}) with charge $q$ across the capacitor $C$ and flux $\phi$ through the inductor $L$ as the state variables. The dynamics of this system in port-Hamiltonian formulation \eqref{PHDS} with state variables $(q,\phi)$ is
	\beq\label{moti::pRLC_eqn}
	\begin{bmatrix}
		\dot{q}\\ \dot{\phi}
	\end{bmatrix}=\left(\begin{matrix}
	\begin{bmatrix}
		0&-1\\1&0
	\end{bmatrix}&-&\begin{bmatrix}
	R&0\\0&G
\end{bmatrix}
\end{matrix}\right)\begin{bmatrix}
\dfrac{q}{C}\\ \dfrac{\phi}{L}
\end{bmatrix}+\begin{bmatrix}
0\\1
\end{bmatrix}V_s
	\eeq
where $R$ is the series resistance of the inductor $L$, $G$ is the conductance of the capacitor $C$ and $V_s$ is the voltage source. It can be shown that this system is passive with total energy 
\beq \label{ex2.1::ham}
H(q,\phi)=\dfrac{q^2}{2C}+\dfrac{\phi^2}{2L},
\eeq
as storage function and port variables being input $V_s$ and output $i=\dfrac{\phi}{L}$, that is, 
\begin{eqnarray}\label{ex2.1::Hdot}
\dot H\leq V_si.
\end{eqnarray}
We now have the following dissipation inequality
\beq
\underbrace{H(t_1)-H(t_0)}_{stored\;\; energy}= \underbrace{-\int_{t_0}^{t_1}\left(Ri(\tau)^2+Gv(\tau)^2\right)d\tau}_{dissipated\;\; energy} + \underbrace{\int_{t_0}^{t_1}V_s(\tau)i(\tau)d\tau}_{supplied\;\; energy}
\eeq
where $i$ denotes the current through the inductor $L$, and $v$ denotes the voltage across the capacitor $C$. Further, at a non-zero operating point $(v^\ast,i^\ast)$, we have a non-zero supply rate $V_s^\ast i^\ast \neq 0$. This implies that the energy supplied through the controller at the operating point is non-zero, given by
\beq \label{ex2.1::diss_ineq}
\underbrace{\int_{t_0}^{t_1}V^{\ast}_s(\tau)i^{\ast}(\tau)d\tau}_{supplied\;\; energy}=\underbrace{\int_{t_0}^{t_1}\left(Ri^{\ast}(\tau)^2+Gv^{\ast}(\tau)^2\right)d\tau}_{dissipated\;\; energy} .
\eeq
This further indicates that the controller should have an unbounded energy to stabilize the system, violating the assumption that the controller is a passive dynamical system. 
\begin{remark}
From a physics point of view, regulating the current in the inductor $L$ to $i^\ast$ is equivalent to storing $\dfrac{1}{2}L(i^\ast)^2$ energy. Similarly, for the capacitor $C$, regulating voltage to $v^\ast$ is equivalent to storing $\dfrac{1}{2}C(v^\ast)^2$ energy. For instance, let us assume that we have pumped enough energy through the controller to a point where the capacitor and inductor have stored the desired energy, and disconnected the controller. Due to the existence of the resistive elements $R$ and $G$ in the circuit, the energy stored in the circuit dissipates through them. We therefore need to compensate the dissipated energy by supplying it through the controller (given in \eqref{ex2.1::diss_ineq}). This analysis indicates that the limitations in control by interconnection methodology is predicated by resistive structure in the plant. We can now corroborate this from necessary conditions presented in \eqref{cas::diss_obs} for the existence of closed-loop Casimir functional, that is,
\beqn
\begin{bmatrix}
	R&0\\0& G
\end{bmatrix}\begin{bmatrix}
\frac{\partial C}{\partial q}\\\frac{\partial C}{\partial \phi}
\end{bmatrix}=0
\eeqn
which implies $C$ should be independent of the state variable $(q,\phi)$. 
\end{remark}
\end{example}
\begin{remark}
Equation \eqref{ex2.1::diss_ineq} points that, at the operating point, the controller is supplying unbounded energy to the plant but a constant power ($V^\ast_si^\ast$). 
This motivated researchers to look for passive maps with power as storage function. Brayton-Moser is one such framework that provides storage functions related to power. In the next section, we briefly outline the modeling and control aspects of finite-dimensional systems in Brayton Moser formulation.
\end{remark}
\section{Brayton-Moser formulation}
It is well-known that port-Hamiltonian formulation naturally arises as a modeling framework for larger class of physical systems, such as mechanical, electrical and electro-mechanical systems. Another important modeling methodology that has been widely used for RLC networks is Brayton-Moser framework. In this framework, we model the system in pseudo-gradient form using a function, called mixed potential function, which has units of power. The advantage of modeling systems in this framework is that, it presents us a new family of storage functions (derived from mixed-potential function), that can be used to obtain new passive maps. In this section, we present modeling and control of finite dimensional systems in Brayton-Moser framework. The exposition presented here is extracted from \cite{bookgeo,Guido,van2014port,blankenstein2003joined,ElosiaDimOrt,van2011relation}, will be helpful in presenting the Brayton-Moser formulation of infinite dimensional systems in Chapter 5.
\subsection{Energy to co-energy formulation}
In port-Hamiltonian modeling, the dynamics are derived using energy variables; where as in Brayton Moser framework, we model the system using co-energy variables. In the case of network theory, generalized flux and charge represent energy variables; where as, generalized voltages and currents denote the co-energy variables. In this aspect, Brayton-Moser formulation is usually called as co-energy formulation \cite{jeltsema2009multidomain}. Given a port-Hamiltonian system \eqref{PHDS} with energy variable $x$ and Hamiltonian $H(x)$, we define the co-energy variable 
\beq z:=\dfrac{\partial H}{\partial x}.\eeq
Suppose that the mapping between energy variable $x$ and co-energy variables $z$ is invertible, such that
\beq\label{e-coe}
x=\dfrac{\partial H^\ast}{\partial z}(z)
\eeq
where $H^\ast(z)$ represents the co-Hamiltonian, defined through the Legendre transformation of $H(x)$, given by
\beqn
H^\ast(z)&=&z^\top x-H(x).
\eeqn
Differentiating \eqref{e-coe} and using \eqref{PHDS}, we get
\begin{align}\label{co-e_form1}
\begin{split}
\dfrac{\partial^2H^\ast}{\partial z^2}(z)\dot{z}&=[J(x)-R(x)]z+g(x)u\\
y&=g^\top(x) z
\end{split}
\end{align}
 Assume that there exist coordinates $x_1$ and $x_2$ ($x=(x_1,x_2)$), such that the Hamiltonian $H(x)$ can be split as $H_1(x_1)+H_2(x_2)$. Consequently the co-Hamiltonian can also be split as $H_1^\ast(z_1)+H^\ast_2(z_2)$ (where $z=(z_1, z_2)$). Further, assume that 
\beqn
J(x)=\begin{bmatrix}
	0&-B(x)\\B^\top(x) &0
\end{bmatrix}, R(x)=\begin{bmatrix}
R_1(x)&0\\0&R_2(x)
\end{bmatrix}, g(x)=\begin{bmatrix}
g_1(x)\\0
\end{bmatrix}
\eeqn
and there exist functions $P_1(z_1)$ and $P_2(z_2)$ such that
\beqn
R_1(x)z_1&=& \dfrac{\partial P_1}{\partial z_1}(z_1),\\
-R_2(x)z_2&=& \dfrac{\partial P_2}{\partial z_2}(z_2).
\eeqn
Then the system of equations \eqref{co-e_form1} can be written in the pseudo-gradient form
\begin{align}\label{co-e_form2}
\begin{split}
\begin{bmatrix}
-\frac{\partial^2H_1^\ast}{\partial z_1^2}(z)&0\\0&\frac{\partial^2H_2^\ast}{\partial z_2^2}(z)
\end{bmatrix}\begin{bmatrix}
\dot{z}_1\\ \dot{z}_2
\end{bmatrix}&=\begin{bmatrix}
\frac{\partial P}{\partial z_1}\\
\frac{\partial P}{\partial z_2}
\end{bmatrix}+\begin{bmatrix}
-g_1(x)\\0
\end{bmatrix}u\\
y&=g_1^\top(x) z_1
\end{split}
\end{align}
where $P(z)=P_1(z_1)+P_2(z_2)+z_1^\top B(x)z_2$. If $g_1(x)$ and $B(x)$ are constant, then the equations \eqref{co-e_form2} are independent of the energy variable. In this case, the above system of equations closely represents a pseudo-gradient structure.
%
%
%
\subsection{Topologically complete RLC circuits}
\label{sec::BM form of fin RLC}
In this section, we briefly outline the Brayton-Moser formulation of topologically complete RLC circuits and present the underlying geometric structures. The word `topologically complete', indicates that the state space representation of the RLC circuit is completely determined by inductor currents and capacitor voltages.
%
Brayton and Moser in the early sixties \citep{BraMos1964I, BraMos1964II} showed that the dynamics of a 
class (topologically complete) of nonlinear $RLC$-circuits can be
written as
\begin{align}
A (i_L,v_C)
\begin{bmatrix}
\frac{di_L}{dt} \\ \frac{dv_c}{dt}
\end{bmatrix} =
\begin{bmatrix}
\frac{\partial P}{\partial i_L} \\
\frac{\partial P}{\partial v_C}
\end{bmatrix} +
\begin{bmatrix}
B_{E_c}^\top E_c \\
-B_{J_c}^\top J_C
\end{bmatrix}
\label{eq_bm_fd1}
\end{align}
where $i_L\in \mathbb{R}^{n_L}$ and $v_C\in \mathbb{R}^{n_C}$ represent vectors of currents through inductors and voltages across capacitors respectively. $n_L$ and $n_C$ denote the number of inductors and capacitors in the network. $E_C, J_C$ are
respectively the controlled voltage and current sources respectively.  $A(i_L,v_C) = \text{diag} \{L(i_L), -C(v_C) \}$ where $L(i_L)\in \mathbb{R}^{n_L\times n_L}$ and $C(v_C)\in \mathbb{R}^{n_C\times n_C}$ denote inductance and capacitance matrices respectively (both are positive definite matrices).
%
The input matrices $B_{E_c}\in \mathbb{R}^{n_E\times n_L}, B_{J_c}\in \mathbb{R}^{n_J\times n_C}$ (containing elements from the set $\{-1, 0 , 1
\}$) are given by Kirchoff's voltage and current laws. $n_J$ and $n_E$ denote the number of current and voltage sources in the network respectively. $E_C, J_C$ are
respectively the controlled voltage and current sources. $P(i_L,v_C):\mathbb{R}^{n_L\times n_C}\rightarrow \mathbb{R}$ is
called the mixed potential function, defined by
\[
P(i_L,v_C) = F(i_L) - G(v_C) + i_L^\top \gamma\; v_c
\]
Here, $F$ denotes the content of
all the current controlled resistors, $G$ denotes the co-content of all
voltage controlled resistors and $\gamma$ is a skew-symmetric matrix containing elements from $\{-1, 0 , 1 \}$, and represents the network topology.
%
%
As an example, we next present the Brayton-Moser formulation of parallel RLC circuit given in Figure \ref{fig:pRLC}.
\begin{example}\textbf{(Parallel RLC circuit cont'd)}.\label{Example::ch2::prlc::BM}
	Consider the parallel RLC circuit of Figure \ref{moti::example_pRLC}. Let $i$ denote the current through the inductor $L$ and $v$ denote the voltage across the capacitor. The pair $(i,v)$ denotes the co-energy variables.  The kirchhoff voltage and current laws  
		\beq \label{moti::pRLC_eqn}
	\begin{matrix}
		-L\dfrac{di}{dt}&=& Ri+v-V_s\\C\dfrac{dv}{dt}&=&i-Gv
	\end{matrix}
	\eeq
	can be written in Brayton Moser form \eqref{eq_bm_fd1} as
	\begin{align}
	A \begin{bmatrix}
	\frac{di}{dt} \\ \frac{dv}{dt}
	\end{bmatrix} =
	\begin{bmatrix}
	\frac{\partial  P}{\partial i} \\
	\frac{\partial  P}{\partial v}
	\end{bmatrix} +
	\begin{bmatrix}
	-1\\0
	\end{bmatrix}V_s
	\label{BM finite dimenional example}
	\end{align}
	with $A=\text{diag}~\{ -L,C \}$ and
	$P(i,v)$ as the mixed potential function (power function) given by
	\beqn
	P(i,v)&=& -\dfrac{1}{2}Gv^2+vi+\dfrac{1}{2}Ri^2.
	\eeqn
    where $\dfrac{1}{2}Ri^2$ denotes the content of the current controller resistor $R$, $\dfrac{1}{2}Gv^2$ denotes the co-content of the voltage controlled resistor $\dfrac{1}{G}$, and $vi$ represents the instantaneous power transfer between the capacitor and the inductor.
    \end{example}
    \subsection{ Dirac formulation}
   We now present the equivalent Dirac formulation of Brayton-Moser equations of finite-dimensional RLC circuits given in \eqref{eq_bm_fd1} \cite{Guido,fortney2010dirac}.
    Denote by $f\in \mathcal{F}\in \mathbb{R}^{n_L+n_C}$ the space of flows,  $e\in \mathcal{E}:=\mathcal{F}^{\ast}$ the space of efforts, $u\in \mathcal{U}$ the space of input port-variables and  $y\in \mathcal{Y}:=\mathcal{U}^\ast$ the space of output port-variables. 
    Consider the following subspace
\begin{align}\label{Dirac_gen_struc_finite_dim}
\mathcal D = \left \{(f,u, e,y) \in \mathcal F   \times \mathcal U\times \mathcal E \times \mathcal Y :  -A f =  e+B u,~y= -B^\top f\right \}
\end{align}
where  $A = \text{diag} \{L(i_L), -C(v_C) \}$, $B=\text{diag} \{B^\top_{E_c}, -B^\top_{J_c} \}$.
The above defined subspace constitutes a noncanonical Dirac structure, that is $\mathcal{D}=\mathcal{D}^\perp$, $\mathcal{D}^{\perp}$ is the orthogonal complement of $\mathcal{D}$ with respect to the noncanonical bilinear form\\
$<<(f^1,u^1,e^1,y^1),(f^2,u^2,e^2,y^2) >>$
\beq
\hspace{-6mm}= &\hspace{-2mm}\left<e^1|f^2\right>+ \left<e^2|f^1\right>+ \left <f^1 |   (A+A^\top)f^2 \right > + \left<u^1|y^2\right>+\left<u^2|y^1\right>
\label{bileniarform_fin_dim}
\eeq
for $i=1,2~$;
$
	\begin{matrix}
		f^i  \in \mathcal F,  &u^i\in  \mathcal U ,&	e^i  \in \mathcal E, &  y^i\in  \mathcal Y 
	\end{matrix}.
	$
    
    The Brayton-Moser equations \eqref{eq_bm_fd1} can be equivalently described as a dynamical system with respect to the noncanonical Dirac structure $\mathcal{D}$ in \eqref{Dirac_gen_struc_finite_dim} by setting 
    \begin{align}
f=- \begin{bmatrix}
\frac{di_L}{dt} \\ \frac{dv_c}{dt}
\end{bmatrix}, u=\begin{bmatrix}
E_c \\ J_c
\end{bmatrix}, e =
\begin{bmatrix}
\frac{\partial P}{\partial i_L} \\
\frac{\partial P}{\partial v_C}
\end{bmatrix}\;\textrm{and}\; y=
\begin{bmatrix}
-\frac{di_{E_c}}{dt} \\ \frac{dv_{J_c}}{dt}
\end{bmatrix}
\label{eq_bm_fd1a}
\end{align}
where $i_{E_c}$ denotes the current through the voltage sources $E_c$ and $v_{J_c}$ denotes the voltage across the current sources $J_c$. Since the bilinear form \eqref{bileniarform_fin_dim} is non-degenerate, $\mathcal{D}=\mathcal{D}^\perp$ implies 
\beq
<<(f,u,e,y),(f,u,e,y) >>~=~0, ~~\forall (f,u,e,y)\in \mathcal{D}.
\eeq
The bilinear form can further be simplified as
\beq\label{fin_dim_sec_2.43}
\left<e|f\right>+ \left<e|f\right>+ \left <f |   (A+A^\top)f \right > + \left<u|y\right>+\left<u|y\right>&=&0\nonumber \\
\left<e|f\right>+\dfrac{1}{2} \left <f |   (A+A^\top)f \right > + \left<u|y\right>&=&0.
\label{bileniarform_fin_dima}
\eeq
Further, using \eqref{eq_bm_fd1a} in \eqref{bileniarform_fin_dima} gives us the ``balance equation''
\beqn
-\begin{bmatrix}
\frac{\partial P}{\partial i_L} \\
\frac{\partial P}{\partial v_C}
\end{bmatrix}^\top 
\begin{bmatrix}
\frac{di_L}{dt} \\ \frac{dv_c}{dt}
\end{bmatrix}+\dfrac{1}{2}\begin{bmatrix}
\frac{di_L}{dt} \\ \frac{dv_c}{dt}
\end{bmatrix}^\top (A+A^\top)\begin{bmatrix}
\frac{di_L}{dt} \\ \frac{dv_c}{dt}
\end{bmatrix}+\begin{bmatrix}
E_c \\ J_c
\end{bmatrix}^\top \begin{bmatrix}
-\frac{di_{E_c}}{dt} \\ \frac{dv_{J_c}}{dt}
\end{bmatrix}=0
\eeqn
i.e.,
\beq \label{Pdot_fin_dim_ex}
\dot{P}&=&\dfrac{1}{2}\dot{x}^\top (A(x)+A^\top(x) )\dot{x}+u^\top y
\eeq
where $x=(i_L,v_C)$. 
\begin{remark}
In the case of parallel RLC circuit considered in Example \ref{Example::ch2::prlc::BM}, the time derivative of the mixed potential function yields $$\dot{P}=-L\dfrac{di}{dt}^2+C\dfrac{dv}{dt}^2-V_s\dfrac{di}{dt}.$$ One can note that this is not a conserved quantity, not even for $R=G=0$ and $u=0$. That is, mixed potential functional is not conserved, even with zero dissipation and zero power supply.
\end{remark}
%
%
%
%
\subsection{Admissible pairs and stability}
In general, systems in Brayton-Moser framework are modeled as pseudo-gradient systems. The standard representation of a pseudo-gradient system is 
\beq\label{psuedo_grad_sec2.4.4} 
A(x)\dot{x}=\nabla_xP+g(x)u
\eeq
where $x$ denotes the state vector, $A(x)\in \mathbb{R}^{n\times n}$ denotes a pseudo-Riemannian metric (indefinite), $P(x):\mathbb{R}^n\rightarrow \mathbb{R}$, the matrix $g(x)\in \mathbb{R}^{n\times m}$ denotes the input matrix and $u\in \mathbb{R}^m$. If $A(x)$ is positive definite, we call the system \eqref{psuedo_grad_sec2.4.4} a gradient system. One can note that the topologically complete RLC circuits given in \eqref{eq_bm_fd1} take the pseudo-gradient  structure \eqref{psuedo_grad_sec2.4.4} with  $x=(i_L,v_C)$, $g(x)=B$ 
and $u=(E_c,J_c)$. The benefit of modeling a system in pseudo-gradient form is that the function $P$ can be used as a Lyapunov candidate. The time-derivative of $P$ along the trajectories of \eqref{psuedo_grad_sec2.4.4} is
\beq\label{dot_P_sec_2.2.4}
\dfrac{d}{dt}P(x)&=&\nabla_xP ^\top \dot{x}\nonumber\\
&=& \left(A(x)\dot{x}-g(x)u\right)^\top \dot{x}\nonumber \\
&=& \dot{x}^\top A(x)\dot{x}+u^\top y
\eeq
where $y=-g^\top(x) \dot{x}$. From equation \eqref{dot_P_sec_2.2.4}, we can conclude that the system is passive if $P\geq 0$ and $(A(x) + A^\top(x))\leq 0$, with $P$ as the storage function and $u^\top y$ as the supply rate. 
%
In case, $(A(x) + A^\top(x))\le 0$ is not satisfied (see $P$ and $A$ in  parallel RLC circuit  Example \ref{Example::ch2::prlc::BM}), then it is possible
to find new $(\tilde A, \tilde P)$, called an ``admissible pair", (refer \cite{jeltsema2003passivity})
satisfying $(\tilde A(x) + \tilde A^\top(x))\le 0$.  The dynamics \eqref{psuedo_grad_sec2.4.4} can then be
equivalently be written as
\beq\label{admissible_sec_2.4.4}
\tilde{A}\dot{x}&=&\nabla_x\tilde{P}+\tilde{g}(x)u
\eeq
The authors in \cite{BraMos1964I,BraMos1964II,jeltsema2003passivity} have shown that 
\beq\label{ptilde_sec_2.4.4}
\tilde{P}=\lambda P+\dfrac{1}{2}\nabla_xP^\top M \nabla_xP
\eeq 
and 
\beq\label{atilde_sec_2.4.4}
\tilde{A}=\left(\lambda I+\nabla_x^2PM\right)A
\eeq
satisfy the gradient structures \eqref{admissible_sec_2.4.4}. Further, $\lambda \in \mathbb{R}$ and $M\in \mathbb{R}^{n \times n}$ are chosen such that $\tilde{P} \geq 0$ and $\tilde{A}+\tilde{A}^\top \leq 0$.
%
We now	present results on control by power shaping, by finding admissible pairs for the parallel RLC circuit in Example \ref{Example::ch2::prlc::BM} \citep{bookgeo}.
\begin{example}\textbf{(Parallel RLC circuit cont'd)}.\label{Example::ch2::prlc::BM_pass}
In the Brayton-Moser formulation of parallel RLC circuit presented in Example \ref{Example::ch2::prlc::BM}, $P$ and $A$ are both indefinite. 
To deduce the new passivity property (with respect to $V_s$ and $\frac{di}{dt}$), we need to find admissible pairs $\tilde{A}$ and $\tilde{P}$ such that 
\begin{align}
\tilde A \begin{bmatrix}
\frac{di}{dt} \\ \frac{dv}{dt}
\end{bmatrix} =
\begin{bmatrix}
\frac{\partial  \tilde P}{\partial i} \\
\frac{\partial \tilde P}{\partial v}
\end{bmatrix} +
\begin{bmatrix}
-1\\0
\end{bmatrix}V_s.
\label{BM finite dimensional example}
\end{align}
As shown in \cite{ElosiaDimOrt}, the following choice of $\lambda =1$ and $M=\textrm{diag}\{0,\dfrac{2C}{G}\}$ results in
\beqn
\tilde{A}= \begin{bmatrix}
	-L & \dfrac{2C}{G}\\0 & -C
\end{bmatrix} \text{\;\;and\;\;}
\tilde{P}= \dfrac{1}{2G}\left(Gv-i\right)^2+\dfrac{1}{2}\left(R+\dfrac{1}{G}\right)i^2.
\eeqn
This yields the desired dissipation inequality $\dot{\tilde{P}}\leq \frac{di}{dt}V_s$. 
Further, we can achieve the required stabilization via the control voltage \citep{bookgeo,ElosiaDimOrt}
\beq \label{fin_V_s}
V_s=-K(i-i^\ast)+(R+\dfrac{1}{G})i_L^\ast
\eeq
with $K\geq 0$ as a tuning parameter. This controller globally stabilizes the system with Lyapunov function
\beq\label{fin_P_d}
\tilde{P}_d=\dfrac{1}{2G}\left(Gv-i\right)^2+\dfrac{1}{2}(R+\dfrac{1}{G}+K)(i-i^\ast)^2.
\eeq
\end{example}
\begin{remark}
Note that the symmetric part of $\tilde{A}$ is negative definite if and only if $G^2 L \geq C$. Hence, any passivity/stability properties derived using this pair holds only under these constraints. In Chapter 3, we present an alternate methodology that avoids finding admissible pairs, thus eliminating these parameter constraints.
 \end{remark}
\section{Infinite-dimensional port-Hamiltonian systems}\label{Sec:: From inf dim pH-BM}
In this section, we present the Hamiltonian formulation of a distributed parameter system that includes the boundary energy flows. The basic concept needed in the formulation of a port-Hamiltonian system is that of a Dirac structure, which is a geometric object formalizing general power conserving interconnections. To incorporate the power exchanges through boundary, the authors in \cite{stokes} make use of Stokes' theorem along with the properties of exterior derivatives in defining the Dirac structures. Hence the name Stokes-Dirac structure. We start by presenting notation and some key properties in exterior algebra that also help us present our results in Chapter 5.

%
\textbf{\em Notation}: Let $Z$ be an $n$ dimensional Riemannian manifold with a smooth $(n-1)$ dimensional boundary $\partial Z$.  $\Omega^k(Z)$, $k=0,1,\djc{\ldots} ,n$ denotes the space of all exterior $k$-forms on $Z$. The dual space $\left(\Omega^k(Z)\right)^{\ast}$ of $\Omega^k(Z)$ can be identified with $\Omega^{n-k}(Z)$ with a pairing between $\alpha \in \Omega^k(Z)$ and $\beta \in \left(\Omega^k(Z)\right)^{\ast} $  given by $\left<\beta | \alpha \right>=\int_Z \beta \wedge \alpha$. Here, $\wedge$ is the usual wedge product of differential forms, resulting in the $n$\djc{-}form $\beta\wedge \alpha$. Similar pairing\djc{s} can be established between the boundary variables. Further, we denote $\alpha|_{\partial Z}$ to be the $k$-form $\alpha$ evaluated at boundary $\partial Z$. Let $\alpha=(\alpha_1,\alpha_2)\in\mathcal{F}:=\Omega^k(Z)\times \Omega^{l}(\partial Z)$ and $\beta=(\beta_1,\beta_2)\in\mathcal{F}^\ast=\Omega^{n-k}(Z)\times \Omega^{n-1-l}(\partial Z)$. Then, we define the following pairing between $\mathcal{F}$ and $\mathcal{F}^\ast$
\begin{eqnarray}\label{wedge_operation}
\int_{(Z+ \partial Z)}\alpha \wedge \beta:=\int_{Z}\alpha_1 \wedge \beta_1+\int_{\partial Z}\alpha_2 \wedge \beta_2
\end{eqnarray}
\djc{The operator} `$\mathrm d$' denotes the  exterior derivative and maps $k$ forms on $Z$ to $k+1$ forms on $Z$. The Hodge star operator $\ast$ (corresponding to Riemannian metric on $Z$) converts $p$ forms to $(n-p)$ forms.
Given  $\alpha, \beta \; \in \Omega^k(Z) $ and $\gamma \in \Omega^l(Z)$, the wedge product $\alpha \wedge \gamma \in \Omega^{k+l}(Z)$. We additionally have the following properties:
\beq
\alpha \wedge \gamma &=& (-1)^{kl}\gamma \wedge \alpha ~, ~
\ast \ast \alpha = (-1)^{k(n-k)}\alpha \label{Ac},\\
\int_z \alpha \wedge \ast \beta &=& \int_z \beta \wedge  \ast \alpha \label{Ab},\\
\mathrm{d}\left(\alpha \wedge \gamma\right)&=& \mathrm{d}\alpha \wedge \gamma+(-1)^{k} \alpha \wedge \mathrm{d}\gamma. \label{Ad}
\eeq
\djc{For details on the theory of differential forms we refer to \citep{AbrMarRat88}}.
Given a functional $H(\alpha_p,\alpha_q)$, we compute its variation as
\beq \partial H &=& H(\alpha_p+\partial \alpha_p,\alpha_q+\partial\alpha_q)-H(\alpha_p,\alpha_q)\\
&=&   \int_z\left( \delta_{p}H \wedge \partial \alpha_p  + \delta_{q}H \wedge \partial \alpha_q \right)+ \int_{\partial z}\left( \delta_{ \alpha_p|_{\partial z}}H \wedge \partial \alpha_p  + \delta_{\alpha_q|_{\partial z}}H \wedge \partial \alpha_q \right),\nonumber
\label{delta}
\eeq
where $\alpha_p,\;\partial \alpha_p \in \Omega^p(Z)$ and $\alpha_q, \; \partial \alpha_q\in \Omega^q(Z)$; and \\$\delta_{p}H \in \Omega^{n-p}(Z)$, $\delta_{q}H \in \Omega^{n-q}(Z)$ are variational derivative of $H(\alpha_p,\alpha_q)$ with respect to $\alpha_p$ and $\alpha_q$; and $\delta_{\alpha_p|_{\partial z}}H \in \Omega^{n-p-1}(\partial Z)$, $\delta_{\alpha_q|_{\partial z}}H \in \Omega^{n-q-1}(\partial Z)$ constitute variations at boundary. Further, the time derivatives of $H(\alpha_p,\alpha_q)$ is
\beqn \dfrac{dH}{dt}     &=&    \int_Z\left ( \delta_{p}H \wedge  \dfrac{\partial \alpha_p}{\partial t} +\delta_{q}H \wedge  \dfrac{\partial \alpha_q}{\partial t} \right )+ \int_{\partial Z}\left( \delta_{ \alpha_p|_{\partial z}}H \wedge  \dfrac{\partial \alpha_p}{\partial t}  + \delta_{\alpha_q|_{\partial z}}H \wedge \dfrac{\partial \alpha_q}{\partial t} \right).
\eeqn
Let $G: \Omega^{n-p}(Z)\rightarrow \Omega^{n-p}(Z)$ and $R: \Omega^{n-q}(Z)\rightarrow \Omega^{n-q}(Z)$. We call $G\geq 0$, if and only if $\forall \alpha_p \in \Omega^p(Z)$
\beq\label{eqn::G}
\int_Z \left ( \alpha_p \wedge \ast G \alpha_p \right )=\int_Z \left<\alpha_p,G\alpha_p\right>\text{Vol} \geq 0
\eeq
where the inner product is induced by the Riemmanian metric on $Z$ and $\text{Vol}\in \Omega^n(Z)$ such that $\int_Z \left ( \text{Vol} \wedge \ast \text{Vol} \right )=1$. 
$G$ is said to be symmetric if $\left<\alpha_p |  G \alpha_p\right>=\left<G\alpha_p |\alpha_p\right>$. Given $u(z,t):Z \times \mathbb{R}\rightarrow \mathbb{R}$, we denote $\frac{\partial u}{\partial t}(z,t)$ as $u_t$, similarly $\frac{\partial u}{\partial z}(z,t)$ as $u_z$ and $u^\ast(z)$ represents the value of $u(z,t)$ at equilibrium. Furthermore, for $P(z,u,u_z):Z\times \mathbb{R}\times \mathbb{R}^n\rightarrow \mathbb{R}$, we denote $\frac{\partial P}{\partial u_z}$ as $P_{u_z}$.

\textbf{\em Stokes-Dirac structure}: 
Define the linear space $\mathcal{F}_{p,q}=\Omega^p(Z)\times \Omega^q(Z)\times \Omega^{n-p}(\partial Z)$ called the space of flows and $\mathcal{E}_{p,q}=\Omega^{n-p}(Z)\times \Omega^{n-q}(Z)\times \Omega^{n-q}(\partial Z)$, the space of efforts, with integers $p,q$ satisfying $p+q=n+1$. \ijc{Let $(f_p,f_q,f_b)\in \mathcal{F}_{p,q}$ and $(e_p,e_q,e_b)\in \mathcal{E}_{p,q}$}.
Then, the linear subspace $\mathcal{D}\subset \mathcal{F}_{p,q}\times \mathcal{E}_{p,q}$
\beq\label{ph_2_stokes}
\mathcal{D}=\left\{
\left(f_p,f_q,f_b,e_p,e_q,e_b\right)\in  \mathcal{F}_{p,q}\times \mathcal{E}_{p,q}\, \djc{\bigg|}\right.&&\left. \,\begin{bmatrix}
	f_p\\ f_q
\end{bmatrix}=    \begin{bmatrix}
0 & (-1)^r\mathrm{d}\\\mathrm{d} & 0
\end{bmatrix}\djc{\begin{bmatrix}
	e_p\\ e_q
\end{bmatrix}},\right.\\&&\left.
\begin{bmatrix}
	f_b\\ e_b
\end{bmatrix}=\begin{bmatrix}
1 & 0\\0 & -(-1)^{n-q}
\end{bmatrix}\begin{bmatrix}
e_p|_{\partial Z}\\ e_q|_{\partial Z}
\end{bmatrix}
\right\},  \nonumber
\eeq
\djc{with} $r=pq+1$, \djc{is a Stokes-Dirac} structure, \citep{stokes}  with respect to the \djc{bilinear} form 
\beqn 
\left< \left< \left(f_p^1,f_q^1,f_b^1,e_p^1,e_q^1,\right.\right.\right.&&\left.\left.\left.\hspace{-9mm}e_b^1\right),\left(f_p^2,f_q^2,f_b^2,e_p^2,e_q^2,e_b^2\right) \right> \right>  =  \\&& \left< e_p^2|f_p^1\right>+\left< e_p^1|f_p^2\right>+\left< e_q^2|f_q^1\right>+\left< e_q^1|f_q^2\right>+\left< e_b^2|f_b^1\right>+\left< e_b^1|f_b^2\right>
\eeqn
\ijc{where $
\begin{matrix}
	(f_p^i,f_q^i,f_b^i)  \in \mathcal F_{p,q}& \text{and}  & (e_p^i,e_q^i,e_b^i)\in  \mathcal E_{p,q} &\text{for} & i=1,2 
\end{matrix}$}.

\textbf{\em Infinite-Dimensional Port-Hamiltonian Systems}: 
Consider a distributed\djc{-}parameter port\djc{-}Hamiltonian system on $ \Omega^p(Z)\times \Omega^q(Z)\times \Omega^{n-p}(\partial Z)$, with energy variables $\left(\alpha_p ,\alpha_q\right) \in \Omega^p(Z)\times \Omega^q(Z)$ representing two different physical energy domains interacting with each other. The \djc{total stored energy is defined as} 
$$
H:=\int_Z \text{H} \in \mathbb{R}, 
$$
where $\text H$ is the Hamiltonian density \djc{(energy per volume element)}. Let $G\geq 0$ and $R \geq 0$ (satisfying \eqref{eqn::G}) represent the dissipative terms in the system.  \djc{Then, setting $f_p = -(\alpha_{p})_t $, $f_q = -(\alpha_q)_t$, and  $e_p = \delta_p H$, $e_q = \delta_q H$, the system}
\beq
-\frac{\partial}{\partial t}\left[\begin{matrix}
	\alpha_p\\ \alpha_q
\end{matrix}\right]=\begin{bmatrix}
\ast G & (-1)^r\mathrm{d}\\\mathrm{d} & \ast R
\end{bmatrix}\begin{bmatrix}
\delta_pH\\ \delta_qH
\end{bmatrix} \label{pH_2},
\left[\begin{matrix}
	f_b\\ \ e_b
\end{matrix}\right]=\begin{bmatrix}
1 & 0\\0 & -(-1)^{n-q}
\end{bmatrix}\begin{bmatrix}
\delta_pH\djc{|_{\partial Z}}\\ \delta_qH\djc{|_{\partial Z}}
\end{bmatrix}, 
\eeq
\djc{with $r=pq+1$}, represents an infinite-dimensional port-Hamiltonian system with dissipation.
%
The time\djc{-}derivative of the Hamiltonian is computed as
\beq \label{inf_Hdot_sec2.5}
\frac{dH}{dt} \le \int_{\djc{\partial Z}} e_b \wedge f_b.
\eeq
 \begin{remark}
 Equation \eqref{inf_Hdot_sec2.5} means that the increase in energy in the spatial domain is less than or equal to power supplied to the system through its boundary.  This implies that the system is passive with respect to the boundary variables $e_b$, $f_b$ and \djc{storage function $H$} (where $H$ is assumed to be bounded from below).
 \end{remark}
\newpage
\section{Stability of Infinite dimensional systems}\label{sec::stability}
In the case of infinite dimensional systems, it is not sufficient enough to show the positive definiteness of the Lyapuov function and the negative definiteness of its time derivative (as in the case of finite dimensional systems), to prove Lyapunov stability. In infinite dimensional systems, one must specify the norm associated with stability argument because stability with respect to a norm does not imply that it is stable with respect to another norm. \ijc{Let $\mathcal{U}_{\infty}$ be the configuration space of a distributed parameter system, and $\norm{\cdot}$ be a norm on $\mathcal{U}_{\infty}$}.
\begin{definition}\label{def::stability}
	Denote by  $\;U^{\ast}\in \mathcal{U}_{\infty}$ an equilibrium configuration for a distributed parameter system on $\mathcal{U}_{\infty}$. Then, $U^\ast$ is said to be stable in the sense of Lyapunov with respect to the norm $\norm{\cdot}$ if, for every $\epsilon\geq 0$ there exist $\delta\geq 0$ such that, 
	\beqn
	\begin{matrix}
		\norm{U(0)-U^\ast}\leq \delta & \implies & \norm{U(t)-U^\ast}\leq\epsilon
	\end{matrix}
	\eeqn
	for all $t\geq 0$, where $U(0)\in \mathcal{U}_\infty$ is the initial configuration of the system. \kcn{We state the following stability theorem for infinite-dimensional systems, which is also referred to as Arnold's theorem for stability of infinite-dimensional systems.}
\end{definition}
\begin{theorem}\label{thm::stab}\textit{(Stability of an infinite-dimensional system \citep{swaters}):} Consider a dynamical system $\dot{U}=f(U)$ on a linear space $\mathcal{U}_\infty$, where $U^{\ast}\in \mathcal{U}_{\infty}$ is an equilibrium. Assume there exists a solution to the system and suppose there exists function $P_d:\mathcal{U}_\infty\rightarrow \mathbb{R}$ such that
	\beq
	\begin{matrix}
		\delta_{U}P_d(U^\ast)=0 & \text{and} & \dfrac{\partial P_d}{\partial t} \leq 0.
	\end{matrix}
	\eeq
	Denote $\Delta U=U-U^\ast$ and $\mathcal{N}(\Delta U)= P_d(U^\ast +\Delta U)-P_d(U^\ast)$. Suppose that there exists a positive triplet $\alpha$, $\gamma_1$ and $\gamma_2$ satisfying 
	\beq
	\gamma_1\norm{\Delta U}^2 \leq  \mathcal{N}(\Delta U) \leq \gamma_2\norm{\Delta U}^\alpha.
	\eeq 
	Then $U^\ast $ is a stable equilibrium.
\end{theorem}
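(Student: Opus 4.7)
The plan is to execute the textbook Lyapunov-function argument using $P_d$ as the Lyapunov functional and the relative functional $\mathcal{N}$ as the positive-definite measure of deviation from $U^\ast$; the two-sided coercivity estimate $\gamma_1 \norm{\Delta U}^2 \leq \mathcal{N}(\Delta U) \leq \gamma_2\norm{\Delta U}^\alpha$ furnishes the radial unboundedness and upper continuity that are automatic in finite dimensions but must be hypothesized here. Fix an arbitrary $\epsilon > 0$; the objective is to exhibit $\delta > 0$ such that $\norm{U(0) - U^\ast} \leq \delta$ implies $\norm{U(t) - U^\ast} \leq \epsilon$ for all $t \geq 0$, matching the Definition stated just above.

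First I would pick $\delta$ small enough that the upper estimate forces a useful initial bound on $\mathcal{N}$. Specifically, choosing
\beqn
\delta := \left(\frac{\gamma_1}{\gamma_2}\right)^{1/\alpha}\epsilon^{2/\alpha}
\eeqn
(or anything strictly smaller) gives, whenever $\norm{\Delta U(0)} \leq \delta$,
\beqn
\mathcal{N}(\Delta U(0)) \;\leq\; \gamma_2\norm{\Delta U(0)}^{\alpha} \;\leq\; \gamma_2\delta^{\alpha} \;=\; \gamma_1\epsilon^2.
\eeqn
Next, the monotonicity hypothesis $\partial P_d/\partial t \leq 0$ along trajectories yields $P_d(U(t)) \leq P_d(U(0))$ for every $t \geq 0$, and subtracting the constant $P_d(U^\ast)$ gives the key comparison $\mathcal{N}(\Delta U(t)) \leq \mathcal{N}(\Delta U(0))$. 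Finally, the lower coercivity estimate converts this back into a norm bound:
\beqn
\gamma_1 \norm{\Delta U(t)}^2 \;\leq\; \mathcal{N}(\Delta U(t)) \;\leq\; \gamma_1\epsilon^2,
\eeqn
so $\norm{\Delta U(t)} \leq \epsilon$ for all $t \geq 0$, as required.

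The hypothesis $\delta_{U}P_d(U^\ast) = 0$ plays the auxiliary role of certifying that $U^\ast$ is a critical point of $P_d$, consistent with $\mathcal{N}$ attaining its minimum value $0$ at $\Delta U = 0$; without this, $P_d$ would not be a legitimate Lyapunov candidate and the lower bound $\gamma_1\norm{\Delta U}^2 \leq \mathcal N(\Delta U)$ could not even hold in a neighbourhood of $U^\ast$. The main subtlety, which distinguishes this statement from its finite-dimensional analogue, is precisely the norm-dependence emphasized just before the Definition: mere continuity of $P_d$ on an infinite-dimensional $\mathcal{U}_\infty$ does not automatically produce the pair of estimates $\gamma_1\norm{\cdot}^2 \leq \mathcal{N} \leq \gamma_2\norm{\cdot}^\alpha$, and the stability conclusion is inextricably tied to whichever norm makes both inequalities hold simultaneously. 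Existence of the trajectory $U(t)$ itself is supplied by hypothesis, which sidesteps the otherwise non-trivial well-posedness question for the underlying evolution equation, so the core of the argument reduces to the three chained inequalities above.
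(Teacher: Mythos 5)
Your proof is correct. The paper states this theorem purely as a background result, citing it to the literature (Arnold's theorem, \citep{swaters}) without giving any proof, so there is no in-paper argument to compare against; your chain of inequalities — choose $\delta=(\gamma_1/\gamma_2)^{1/\alpha}\epsilon^{2/\alpha}$ so that $\gamma_2\delta^{\alpha}=\gamma_1\epsilon^2$, propagate $\mathcal{N}(\Delta U(t))\leq \mathcal{N}(\Delta U(0))$ via monotonicity of $P_d$ along trajectories, then invert through the lower coercivity bound $\gamma_1\norm{\Delta U(t)}^2\leq \mathcal{N}(\Delta U(t))$ — is exactly the canonical energy-Casimir argument that the cited reference uses, and it is complete. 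Your side remark is also accurate: the hypothesis $\delta_U P_d(U^\ast)=0$ is in fact implied by the lower estimate (a global minimum of a differentiable functional is a critical point), so it serves as a consistency condition rather than an independent ingredient of the stability argument.
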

\newpage
\section{Notes on Chapter 2}

\vspace{0.5cm}

\noindent\begin{minipage}{0.05\linewidth}
\vspace{-.8cm}
(i)
\end{minipage}\begin{minipage}{0.95\linewidth}
Port-Hamiltonian formulation presented in \eqref{iPHDS} is called as `Input-State-Output port-Hamiltonian system'. For a general overview see \cite[Chapter 6]{l2gain}.
\end{minipage}
\vspace{0.5cm}

\noindent\begin{minipage}{0.05\linewidth}
    \vspace{-1.5cm}
    (ii)
\end{minipage}
\begin{minipage}{0.95\linewidth}
Brayton-Moser formulation is one alternative framework that gives port-variables, that are not power-conjugates. Regarding more information on these alternate passive maps, see \cite{venkatraman2009energy,venkatraman2010energy}.
\end{minipage}
\vspace{.5cm}
    
\noindent\begin{minipage}{0.05\linewidth}
\vspace{-0.8cm}
(iii)
\end{minipage}
\begin{minipage}{0.95\linewidth}
The analysis presented in the first part of Section 2.4, Brayton-Moser formulation, is extracted from \cite[Chaper 11]{van2014port}. 
\end{minipage}
\vspace{.5cm}

\noindent\begin{minipage}{0.05\linewidth}
\vspace{-1.6cm}
(iv)
\end{minipage}
\begin{minipage}{0.95\linewidth}
Brayton-Moser formulation of topologically complete RLC circuits and their Dirac formulation can be found in \cite{Guido}. For more on control by power shaping methodology see \cite{ElosiaDimOrt,ortega2003power}.
\end{minipage}
\vspace{.5cm}

\noindent\begin{minipage}{0.05\linewidth}
	\vspace{-0.8cm}
	(v)
\end{minipage}
\begin{minipage}{0.95\linewidth}
For more on distributed-parameter port-Hamiltonian systems defined using Stokes-Dirac structure, see \cite{stokes}.
\end{minipage}
\vspace{.5cm}

\noindent\begin{minipage}{0.05\linewidth}
	\vspace{-1.5cm}
	(vi)
\end{minipage}
\begin{minipage}{0.95\linewidth}
	Infinite dimensional stability theorem presented in Theorem \ref{thm::stab} is taken from \cite{ZhenBaoOmer}. For LaSalle's invariance principle for infinite-dimensional systems see  \citep[Theorem 5.19]{RamThe}.
\end{minipage}

\afterpage{\blankpage}

%
%
	\addtocounter{page}{1}%
 \chapter{Control of finite dimensional system}\label{ch:con_fin_dim_sys}
Energy shaping methods for designing controllers often suffer from dissipation obstacle. The Brayton-Moser formulation is a possible alternative to get around this problem. However, even in this framework, designing controllers leads to two chief difficulties. The first one involves solving partial differential equations, which might be a herculean task. Particularly, with regard to energy shaping methods, the authors of \cite{donaire2016shaping} demonstrated that the need for solving these can be eliminated by finding new passive maps whose output port-variable is integrable. This idea motivates us to look for power based passive maps whose output port variable is integrable; we show that the output port-variable derived from Brayton-Moser formulation is integrable under the assumption that the co-vectors corresponding to the columns of the input matrix are all closed. It is worth noting that this methodology eventually leads to a PI-type controller. The second difficulty lies in the fact that the \thesis{power shaping} methodology requires one to find storage functions satisfying the gradient structure. This difficulty is again alleviated by finding novel passive maps using Karsovskii type storage functions. Precisely, we show that for a class of systems modeled in Brayton-Moser form, this idea leads to a passivity property with ``differentiation at both the ports''. In addition, we also  generalize these results to a larger class of non-linear systems. The details of the aforementioned issues, and their resolutions, will form the body of this chapter. 
%
%
%
%
%
%
\section{Power Shaping}\label{powershaping}
 Power shaping stabilization is a method where the storage function is derived from power of the system instead of the total stored energy.
The first step in this framework is to prove that the plant is passive, which requires finding {\em admissible pairs}. In the context of electrical networks, the passivity property is now established with respect to voltage and derivative of current, or current and derivative of voltage (see Example \ref{Example::ch2::prlc::BM}). The next key step in control by power shaping is to `assign a desired power-like function' to the closed-loop system through control, such that the closed-loop system is passive. Similar to the case of energy shaping, this often requires solving of partial differential equations.
%
%
However, this method has natural advantages over practical drawbacks of energy shaping methods like speeding up the transient response (as derivatives of currents and voltages are used as outputs) and also help overcome the ``dissipation obstacle''.
\subsection{Brayton-Moser formulation}
Physical systems in Brayton-Moser framework are modeled as pseudo-gradient systems using a function called mixed potential function which has units of power. 
In the case of RLC networks, the mixed potential function is the sum of the content of the current carrying resistors, co-content of the voltage controlled resistors and instantaneous power transfer between storage elements \cite{jeltsema2009multidomain}.
Consider the standard representation of a system in Brayton-Moser formulation
\beq
Q(x)\dot{x}=\nabla_x P(x)+G(x)u\label{BM1}
\eeq
where $x \in \mathbb{R}^{n}$ denotes the system state vector and $u \in \mathbb{R}^{m}$ denotes the input vector ($m \leq n$). $P: \mathbb{R}^{n} \rightarrow \mathbb{R}$ is a scalar function of the state, which has the units of power also referred to as mixed potential function, 
$Q(x): \mathbb{R}^{n} \rightarrow \mathbb{R}^{n} \times \mathbb{R}^{n}$ and $G(x): \mathbb{R}^{n} \rightarrow \mathbb{R}^{n} \times \mathbb{R}^{m}$. The time derivative of the mixed potential functional is
 \beqn
\dfrac{d}{dt}P(x) &=& \nabla_xP(x)\cdot \dot x\\
&=& (Q(x)\dot x-G(x)u)\cdot \dot x\\
&=& \dot{x}^\top Q(x)\dot{x}-u^\top G(x)^\top \dot{x}
\eeqn
This suggests that if $P(x)\geq 0$ and $Q(x)\leq 0$, the system \eqref{BM1} is passive with storage function $P(x)$ and  power variables are  $u$, $y=-G(x)^\top \dot{x}$.
 But, in general $P(x)$ and $Q(x)$ can be indefinite \cite{ortega2003power}. In that case one needs to find a new $(\tilde{P}, \tilde{Q})$ called ``admissible pair'', satisfying the pseudo-gradient structure \eqref{BM1}.
 
We state this formally in the following assumption. Towards the end of the chapter, we aim to relax this by finding new passive maps.
 \begin{assumption}\label{assum:admisible_pairs}
 	For the given system in Brayton-Moser form \eqref{BM1}, there exists $\tilde{P}(x)\geq0$ and $\tilde{Q}(x)\leq 0$ such that
 	\begin{equation}\label{BM_adm}
 	\tilde{Q}(x)\dot{x}=\nabla_x \tilde{P}(x)+\tilde{G}(x)u.
 	\end{equation}
 	 Such pairs of $\tilde{P}$ and $\tilde{Q}$ are called {\em admissible pairs} for \eqref{BM1}.
 \end{assumption}
%
\noindent This assumption leads to the following passivity property, which also helps us avoid the dissipation obstacle problem.
%
 %
 %
 %
 \begin{proposition} \label{prop:: power_BM_passivity}
 	Consider the system in Brayton-Moser form \eqref{BM1} satisfying Assumption \ref{assum:admisible_pairs}.  Then the system is passive with input $u$, output given by $y_{PB}=-\tilde{G}(x)^{T}\dot{x}$ and storage function $\tilde{P}$.
 \end{proposition}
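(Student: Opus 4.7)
The plan is to mimic the computation used in Section 2.4.4 (equation \eqref{dot_P_sec_2.2.4}) to derive the passivity inequality, now applied to the admissible pair $(\tilde{P},\tilde{Q},\tilde{G})$ guaranteed by Assumption \ref{assum:admisible_pairs}. Since $\tilde{P}$ is a candidate storage function that is bounded below (in fact $\tilde{P}\ge 0$), the only thing to verify is the dissipation inequality $\dot{\tilde{P}}\le u^{\top}y_{PB}$ along trajectories of \eqref{BM1}.

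First I would differentiate $\tilde{P}(x)$ along the trajectories and use the chain rule to write $\dot{\tilde{P}}=\nabla_x\tilde{P}(x)^{\top}\dot{x}$. Next I would substitute the gradient from the admissible-pair form \eqref{BM_adm}, namely $\nabla_x\tilde{P}(x)=\tilde{Q}(x)\dot{x}-\tilde{G}(x)u$, to obtain
\begin{equation*}
\dot{\tilde{P}} \;=\; \dot{x}^{\top}\tilde{Q}(x)\dot{x} \;-\; u^{\top}\tilde{G}(x)^{\top}\dot{x}.
\end{equation*}
The quadratic form $\dot{x}^{\top}\tilde{Q}(x)\dot{x}$ depends only on the symmetric part of $\tilde{Q}$, and the assumption $\tilde{Q}(x)\le 0$ (interpreted as the symmetric part being negative semi-definite, consistent with the usage in \eqref{admissible_sec_2.4.4}) forces this term to be non-positive.

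Setting $y_{PB}:=-\tilde{G}(x)^{\top}\dot{x}$, the remaining cross term is exactly $u^{\top}y_{PB}$, so that $\dot{\tilde{P}}\le u^{\top}y_{PB}$. Integrating from $t_0$ to $t_1$ yields
\begin{equation*}
\tilde{P}(x(t_1))-\tilde{P}(x(t_0)) \;\le\; \int_{t_0}^{t_1} u^{\top}(\tau)\,y_{PB}(\tau)\,\mathrm{d}\tau,
\end{equation*}
which, together with $\tilde{P}\ge 0$, establishes passivity with storage function $\tilde{P}$, input $u$, and output $y_{PB}$. There is no genuine obstacle here; the only minor point to keep track of is that the definiteness condition on $\tilde{Q}$ is to be understood on its symmetric part (as in the construction \eqref{atilde_sec_2.4.4}), so that the quadratic form $\dot{x}^{\top}\tilde{Q}\dot{x}$ is truly non-positive, and that $\tilde{P}$ (rather than the original indefinite $P$) is what qualifies as a valid storage function.
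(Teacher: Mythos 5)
Your proposal is correct and follows essentially the same route as the paper's own proof: differentiate $\tilde{P}$ along trajectories, substitute $\nabla_x\tilde{P}=\tilde{Q}\dot{x}-\tilde{G}u$ from the admissible-pair form, use negative semi-definiteness of (the symmetric part of) $\tilde{Q}$ to drop the quadratic term, and identify the cross term as $u^{\top}y_{PB}$. Your added remarks on the symmetric part of $\tilde{Q}$ and the integrated dissipation inequality are fine elaborations but do not change the argument.
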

\begin{proof}
	Time differential of $\tilde{P}$ is given by
	\begin{eqnarray}\label{pass}
	\dot{\tilde{P}}=(\nabla_x \tilde{P})^{T} \dot{x}
	&=&\dot{x}^{T}\tilde{Q}\dot{x}+u^{T}y_{PB}\nonumber\\
	&\leq& u^{T}y_{PB},
	\end{eqnarray}
	where $y_{PB}$ is given by
	\begin{equation}\label{PBO}
	y_{PB}=-\tilde{G}(x)^{T}\dot{x}
	\end{equation} which is referred as power balancing (shaping) output \cite{ortega2003power}.
\end{proof}
\noindent In the next subsection, we present \thesis{a} different approach for power shaping by utilizing the ``differentiation'' on output port-variable. 
 \subsection{Control methodology using integral outputs}
The word `shaping' in `energy shaping' and `power shaping' methods, which fall under passivity based control (PBC) methodologies, refers to the modification of closed-loop storage function through control. There are several ways to achieve this shaping and one among them is called control by interconnection (CBI).
 %
To begin with, in CBI, it is assumed that the controller is a passive dynamical system interconnected to the physical system. Then, closed-loop invariant functions called Casimirs are determined which relate the system and controller state variables. The closed-loop Hamiltonian is thus restricted to the level-sets given by Casimir functionals. Now, one seeks a Casimir functional such that the minima of the closed-loop Hamiltonian coincides with the desired operating point of the system. Finding such Casimir functionals is yet another hindrance, apart from the dissipation obstacle drawback mentioned earlier, as it involves solving a system of partial differential equations. We would also like to point out that other PBC methodologies, such as `energy/power balancing' and `interconnection and damping assignment', are also plagued by similar difficulties. 
\begin{definition}\label{def::inte}(Integrable) Consider $x\in \mathbb{R}^n$ and  $g(x)\in \mathbb{R}^{n\times m}$. Let $g_k(x)$ be the $k^{th}$ column of $g(x)$ and $g_{kl}(x)$ denotes the $l^{th}$ element of vector $g_k(x)$ where, $k\in\{1\cdots n\}$ and $l\in\{1\cdots m\}$. Denote $g^k(x)\deff \sum_{l=1}^{m}g_{kl}dx^l$, $k\in\{1\cdots n\}$.	
 We call the matrix $g(x)$ integrable if $1-forms$ $g^k(x)$, $\forall k\in\{1\cdots n\}$ are closed. This is equivalent to the following: the matrix $g(x)$ is integrable if $\nabla_xg_k(x)=(\nabla_xg_k(x))^\top $, $\forall k\in\{1\cdots n\}$.
\end{definition}
In this subsection, we make use of a method which was proposed for energy shaping in \cite{donaire2016shaping, mehra2017control,gogte2012passivity,satpute2014geometric, borja2015shaping}. The authors of these papers overcome  technical difficulties, essentially similar to the ones just mentioned in the last paragraph, resulting from Interconnection and Damping Assignment-Passivity based Control (IDA-PBC) methodology. This is accomplished in two steps; as a first step, the authors find new passive maps whose output port variable is integrable and secondly, they relax the assumption that the closed-loop system adheres to port-Hamiltonian structure. Adopting this idea to the case of power shaping, we first show that the output port-variable $y_{PB}$, derived in Proposition \ref{prop:: power_BM_passivity}, is integrable under the assumption that the input matrix $\tilde{G}$ is integrable. 
Secondly, we do not constrain the closed-loop storage function to the gradient equation \eqref{BM_adm}. We begin by restating the following assumption:
\begin{assumption}\label{assum:int_G}
	The new input matrix $\tilde{G}(x)$ is Integrable.
\end{assumption}

\begin{lemma}\label{prop:int_output}
	Consider the system in Brayton-Moser form \eqref{BM1} satisfying Assumption \ref{assum:int_G}. The power balancing output $y_{PB}$ given in equation \eqref{PBO} is integrable.
\end{lemma}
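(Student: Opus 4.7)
The plan is to show that $y_{PB}$ can be rewritten as $\dot{\Gamma}(x)$ for some smooth vector-valued function $\Gamma : \mathbb{R}^n \to \mathbb{R}^m$, so that integrability of the output port-variable follows directly by integration in time.

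First I would write out $y_{PB}$ componentwise. Let $\tilde{g}_k(x) \in \mathbb{R}^n$ denote the $k$-th column of $\tilde{G}(x)$, for $k=1,\ldots,m$. Then
\begin{equation*}
(y_{PB})_k \;=\; -\tilde{g}_k(x)^\top \dot{x} \;=\; -\sum_{l=1}^{n} \tilde{g}_{kl}(x)\,\dot{x}_l.
\end{equation*}
To identify this with the time derivative of a scalar $\Gamma_k(x)$, I need to produce $\Gamma_k$ such that $\nabla_x \Gamma_k(x) = -\tilde{g}_k(x)$. Existence of such a potential for each $k$ is equivalent (by Poincaré's lemma on a simply connected chart) to closedness of the 1-form $\tilde{g}^k(x) = \sum_l \tilde{g}_{kl}(x)\,dx^l$, i.e.\ to the symmetry condition $\partial \tilde{g}_{kl}/\partial x_j = \partial \tilde{g}_{kj}/\partial x_l$ for all $j,l$.

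This is exactly what Assumption \ref{assum:int_G} gives us through Definition \ref{def::inte}: integrability of the input matrix $\tilde{G}(x)$ means that $\nabla_x \tilde{g}_k(x) = (\nabla_x \tilde{g}_k(x))^\top$ for every $k$, so each 1-form $\tilde{g}^k$ is closed, hence locally exact. Pick (locally) primitives $\Gamma_k(x)$ with $\nabla_x \Gamma_k(x) = -\tilde{g}_k(x)$, and set $\Gamma(x) := (\Gamma_1(x),\ldots,\Gamma_m(x))^\top$. Stacking the gradient relations one obtains
\begin{equation*}
\dot{\Gamma}(x) \;=\; \begin{bmatrix} \nabla_x \Gamma_1(x)^\top \\ \vdots \\ \nabla_x \Gamma_m(x)^\top \end{bmatrix}\dot{x} \;=\; -\tilde{G}(x)^\top \dot{x} \;=\; y_{PB},
\end{equation*}
which is exactly the claim that $y_{PB}$ is integrable, with integral $\Gamma(x)$.

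There is no real obstacle beyond being careful with the indexing: the step that does all the work is the application of Poincaré's lemma, and the assumption has been tailored precisely so that its hypothesis is met column-by-column. The only caveat worth flagging is that the construction of the $\Gamma_k$ is inherently local (global existence would require a topological assumption on the state manifold), but for the purposes of using $\Gamma(x)$ as the building block of the closed-loop storage function later in the chapter this suffices.
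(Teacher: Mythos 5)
Your proposal is correct and follows essentially the same route as the paper's own proof: invoking Assumption \ref{assum:int_G} via Definition \ref{def::inte} to get closedness of the column $1$-forms, applying Poincar$\acute{\text{e}}$'s lemma to obtain $\Gamma$ with $\nabla_x\Gamma(x)=-\tilde{G}(x)$, and then computing $\dot{\Gamma}=-\tilde{G}(x)^\top\dot{x}=y_{PB}$. Your componentwise bookkeeping and the remark about the local nature of the primitives are just slightly more careful renderings of the same argument.
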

\begin{proof}
	From Assumption \ref{assum:int_G}, we have that $\tilde{G}(x)$ is integrable. From Definition \ref{def::inte}, the $1-forms$'s corresponding to column vectors of $\tilde{G}(x)$ are closed. 
	Therefore, Poincar$\acute{\text{e}}$'s Lemma ensures the existence of a function $\Gamma(x):\mathbb{R}^n\rightarrow\mathbb{R}^m$ such that $\tilde{G}(x)=-\nabla_x\Gamma(x)$. The time derivative of $\Gamma(x)$ is
	\beq
	\dot{\Gamma}&=&\nabla_x\Gamma^\top \dot{x}
	=-\tilde{G}(x)^\top \dot{x}=y_{PB}. \label{gamma_dot}
	\eeq
	Using equation \eqref{PBO} we conclude the proof.
\end{proof}
\begin{ctrlobj}
	The objective is to stabilize the system \eqref{BM_adm} at the equilibrium point $(x^\ast, u^\ast)$ satisfying
	\beq
	\nabla_x \tilde{P}(x^\ast )+\tilde{G}(x^\ast)u^\ast=0. \label{operating point}
	\eeq
\end{ctrlobj}
\noindent The usual methodology to achieve this objective involves finding a new storage function $P_d$ for the closed loop system satisfying
\beq
\tilde{Q}\dot{x}=\nabla_x P_d ~~\text{and}~~
x^\ast= \text{arg\;min}_xP_d \label{clp0}
\eeq
%
where closed loop potential function $P_d$ is \thesis{the} difference of power function $\tilde{P}$  and power supplied by the controller. This was employed in \cite{ElosiaDimOrt}, where the power supplied by controller is found by solving PDE's. \thesis{As stated earlier, we adopt a similar procedure given in \cite{donaire2016shaping, mehra2017control,gogte2012passivity,satpute2014geometric},  in which, the authors have utilized the integrability of the output port-variable in energy shaping of mechanical systems. Also recently in \cite{borja2015shaping} similar idea is used for systems in the port-Hamiltonian form.} 
As mentioned earlier, we do not restrain the closed-loop system to satisfy the gradient structure \eqref{clp0}. Instead, we desire to find a closed loop storage function $P_d$ satisfying
\beq
\dot{P}_d\leq 0 ~~\text{and}~~
x^\ast= \text{arg\;min}_xP_d. \label{clp}
\eeq
\begin{remark}A remark on equations \eqref{clp0} and \eqref{clp}. In \eqref{clp}, we are looking for a Lyapunov function that helps us prove stability. Where as in \eqref{clp0}, we want a Lyapunov function that satisfies the gradient structure. Note that, having the closed-loop system withholding this gradient structure automatically leads to stability, but this may results in solving for partial differential equations and hence not desirable. 
\end{remark}
%
%
In lemma \ref{prop:int_output} we have proved that the power balancing output is integrable. Using this, the desired closed loop potential function $P_d$ is constructed in the following way
\begin{equation}\label{Pd}
P_{d}=k\tilde{P}+\frac{1}{2}||\Gamma(x)+a||_{k_{I}}^{2}
\end{equation}
where $ k>0$, $a \in \mathbb{R}^{m}$, $k_{I} \in \mathbb{R}^{m \times m}$ with $k_{I}>0$. Further  $a$ is chosen such that \eqref{clp} is satisfied, which implies
\beq\label{def}
\nabla_xP_d(x^\ast)=0 ~~~~~ \nabla^2_xP_d(x^\ast)\geq 0
\eeq
which upon solving gives
\begin{equation}\label{const}
a:=kk_{I}^{-1}\tilde{G}^{\dagger}(x^\ast)\nabla_x \tilde{P}(x^\ast)-\Gamma (x^\ast)
\end{equation}
where $\tilde{G}^\dagger$ represents pseudoinverse of $\tilde{G}$.
\begin{proposition}\label{prop::BM_pass_main_res}
	 Consider the system (\ref{BM1}) satisfying the Assumptions \ref{assum:admisible_pairs} and \ref{assum:int_G}. We define the mapping $u:\mathbb{R}^{n} \rightarrow \mathbb{R}^{m}$
	\beq\label{cont}
	u:=\dfrac{1}{k}\left(v+\alpha \tilde{G}^\top\dot{x}-k_I(\Gamma(x)+a)\right)
	\eeq
	where $\alpha >0$, $\nabla \Gamma(x):=-\tilde{G}(x)$ and $v\in \mathbb{R}^{m}$.
	Then system \eqref{BM1} in closed loop is passive with storage function $P_d$ \eqref{Pd} satisfying \eqref{def}, input $v$ and output $y_{PB}$. Further with $v=0$ the system \eqref{BM1} is stable with Lyapunov function $P_d(x)$ and $x^\ast$ as stable equilibrium point. Furthermore, if $y_{PB}=0 \implies \lim\limits_{t\rightarrow \infty}x(t)\rightarrow x^\ast$, then $x^\ast$ is asymptotically stable.
\end{proposition}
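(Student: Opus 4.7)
The plan is to verify the three claims in sequence by a direct Lyapunov computation, exploiting Proposition~3.1 for the open-loop passivity estimate and Lemma~3.1 for the integrability relation $\dot{\Gamma} = y_{PB}$. First I would differentiate the proposed storage function
\[
P_d = k\tilde P + \tfrac{1}{2}\|\Gamma(x)+a\|_{k_I}^2
\]
along trajectories of \eqref{BM_adm}. Using $\dot{\tilde P}=\dot x^\top \tilde Q\dot x+u^\top y_{PB}$ from \eqref{pass} together with $\dot{\Gamma}=-\tilde G^\top\dot x=y_{PB}$ from \eqref{gamma_dot}, this yields
\[
\dot P_d = k\dot x^\top \tilde Q\dot x + k u^\top y_{PB} + (\Gamma(x)+a)^\top k_I\, y_{PB}.
\]

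Next I would substitute the proposed controller \eqref{cont}. Since $\tilde G^\top\dot x=-y_{PB}$, multiplying \eqref{cont} by $k$ gives $ku = v-\alpha y_{PB}-k_I(\Gamma(x)+a)$, so the term $(\Gamma(x)+a)^\top k_I y_{PB}$ cancels exactly with its negative appearing in $k u^\top y_{PB}$. What remains is the clean dissipation inequality
\[
\dot P_d = k\,\dot x^\top \tilde Q\dot x - \alpha\|y_{PB}\|^2 + v^\top y_{PB} \;\le\; v^\top y_{PB},
\]
which uses Assumption~\ref{assum:admisible_pairs} ($\tilde Q\le 0$) and $\alpha>0$. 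This establishes passivity of the closed-loop system from $v$ to $y_{PB}$ with storage function $P_d$.

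For the second claim I would verify that $P_d$ satisfies \eqref{def} at $x^\ast$. Computing $\nabla_x P_d = k\nabla_x\tilde P - \tilde G\,k_I(\Gamma(x)+a)$ and substituting the choice of $a$ in \eqref{const}, one gets $k_I(\Gamma(x^\ast)+a)=k\tilde G^\dagger(x^\ast)\nabla_x\tilde P(x^\ast)$, so
\[
\nabla_x P_d(x^\ast) = k\nabla_x\tilde P(x^\ast) - k\,\tilde G(x^\ast)\tilde G^\dagger(x^\ast)\nabla_x\tilde P(x^\ast).
\]
Here the equilibrium relation \eqref{operating point} forces $\nabla_x\tilde P(x^\ast)\in\operatorname{range}(\tilde G(x^\ast))$, so the pseudoinverse identity $\tilde G\tilde G^\dagger\nabla_x\tilde P = \nabla_x\tilde P$ collapses the above to zero. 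With $\nabla^2_xP_d(x^\ast)\ge 0$ built into the construction (tuning $k$ and $k_I$ appropriately), $x^\ast$ is a local minimum and $\dot P_d\le 0$ with $v=0$ yields Lyapunov stability.

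Finally, asymptotic stability under the detectability hypothesis follows by LaSalle's invariance principle: on the largest invariant set within $\{\dot P_d=0\}$ one must have $y_{PB}\equiv 0$, which by assumption drives $x(t)\to x^\ast$. The main subtlety I anticipate is the positive semidefiniteness of $\nabla^2_xP_d(x^\ast)$: the construction of $a$ only kills the gradient, so the Hessian condition is effectively a standing assumption on the tuning parameters $k,k_I$ (and implicitly on the second-order information of $\tilde P$ and $\Gamma$ at $x^\ast$) rather than an automatic consequence of the choice of $a$. Making this precise, and ensuring that the matrix-sign assumptions on $\tilde Q$ do not merely yield marginal stability, is the most delicate part of the argument.
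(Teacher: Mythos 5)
Your proposal is correct and follows essentially the same route as the paper's proof: differentiate $P_d$, use the open-loop inequality \eqref{pass} together with $\dot\Gamma=y_{PB}$, substitute the controller \eqref{cont} so the $k_I(\Gamma+a)$ terms cancel, and conclude passivity, stability for $v=0$, and asymptotic stability via LaSalle. Your explicit pseudoinverse/range verification that $\nabla_x P_d(x^\ast)=0$ is just a more detailed rendering of the paper's check that $(x^\ast,u^\ast)$ satisfies \eqref{operating point} via \eqref{const} and the equilibrium input $u^\ast=-\tfrac{k_I}{k}(\Gamma(x^\ast)+a)$, and your remark that the Hessian condition is a hypothesis on the tuning parameters matches the proposition's phrasing that $P_d$ satisfies \eqref{def}.
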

\begin{proof}
	The time derivative of closed loop potential function \eqref{Pd} along the trajectories of \eqref{BM1} is
	\begin{eqnarray}
	\dot{P_{d}}&=&k\dot{\tilde{P}}+y_{PB}^{T}k_{I}(\Gamma(x)+a)\nonumber\\
	&\leq&y_{PB}^{T}[ku+k_{I}(\Gamma(x)+a)]\nonumber\\
	&=&y_{PB}^{T}(v-\alpha y_{PB})\nonumber\\
&= &y_{PB}^{T}v-\alpha y_{PB}^Ty_{PB}\nonumber\\
	&\leq &y_{PB}^{T}v,\nonumber
	\end{eqnarray}
	where we used equations \eqref{pass}, \eqref{PBO}, \eqref{cont} in arriving at the result. This proves that the closed loop is passive with storage function $P_d$ \eqref{Pd}, input $v$ and output $y_{PB}$. Further for $v=0$ we have
	\beqn
	\dot{P}_d\leq -\alpha y_{PB}^Ty_{PB}
	\eeqn
	
\noindent	and at equilibrium
	\beq
	u^\ast = -\dfrac{k_I}{k}\left(\Gamma(x^\ast)+a\right). \label{ustar}
	\eeq
	Finally from \eqref{const} and $\eqref{ustar}$ we can show that $(x^\ast, u^\ast)$ satisfy \eqref{operating point}. This concludes the system \eqref{BM1} is stable with Lyapunov function $P_d$ and $x^\ast$ as equilibrium point \cite{khalil1996noninear}. Furthermore, if $\dot{P}_d=0 \implies y_{PB}=0 \implies \lim\limits_{t\rightarrow \infty}x(t)\rightarrow x^\ast$. Finally, we conclude the proof by invoking LaSalle's invariance principle. 
\end{proof}
\begin{remark}The choice of closed loop potential function is obviously not unique. Instead of \eqref{clp} we can have $P_d$ in the following way:
\beq\label{genPd}
P_d(x)&=& k\tilde{P}(x)+f(\Gamma)
\eeq
where $f(\Gamma):\mathbb{R}^m\rightarrow \mathbb{R}$ has to be chosen such that \eqref{def} is satisfied. One such choice for $f(\Gamma)$ is $\frac{1}{2}||\Gamma(x)+a||_{k_{I}}^{2}$. For general $P_d$ of the form  \eqref{genPd}, the control $u$ in \eqref{cont} will take the form
\beq
u=\dfrac{1}{k}\left(v+\alpha \tilde{G}^\top\dot{x}-\nabla_{\Gamma}f(\Gamma)\right).
\eeq
Further one can choose $f(\Gamma)$ such that the controller gives the desired performance.
\end{remark}
\noindent We now present a physical example and demonstrate the control methodology developed in this subsection .
\begin{example}\textbf{(Building Temperature control.)}\label{buildingzone}
Thermal zone is an important component of HVAC subsystem. Although, there are different zone modeling strategies, for control purpose, lumped parameter models are commonly used \cite{ma2012predictive}. Lumped parameter models have resistance-capacitance (RC) interconnected network which represents interaction between zones and between zone and ambient. The capacitances represent the total thermal capacity of the wall and  zone. The resistances are
used to represent the total resistance that the wall offers to
the flow of heat from one side to other. To illustrate the proposed approach, we consider a simple two-zone case separated by a wall, where the surface is modeled as a 3R2C \cite{deng2010building} network as shown in Fig. \ref{fig:zonemodel}.
\begin{figure}[h!]
	\centering
	\includegraphics[width=0.8\linewidth]{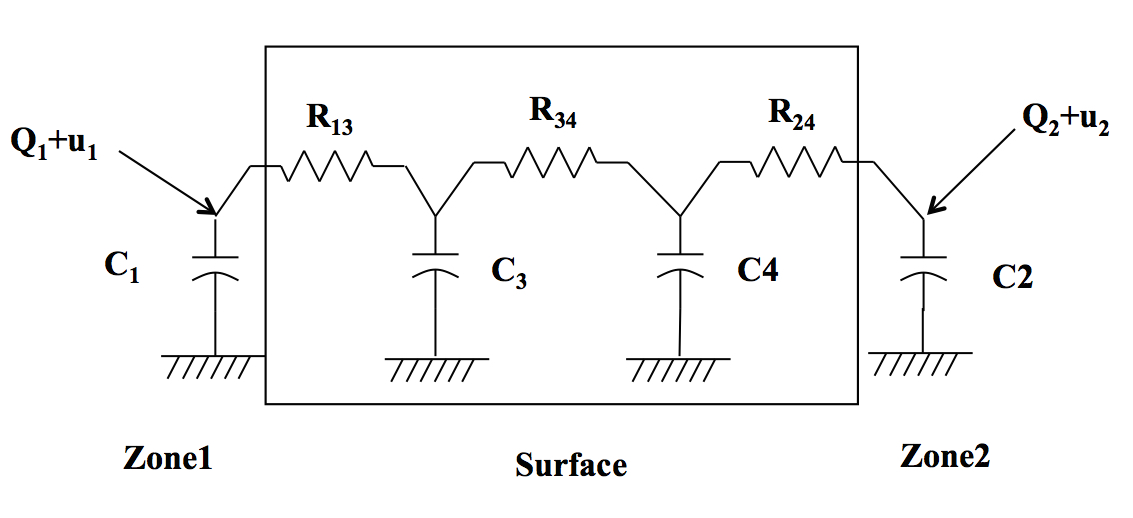}
	\caption{Lumped RC network model: Two zone case}
	\label{fig:zonemodel}
\end{figure}
	The nonlinear thermal model for the two zone case is given by \cite{chinde2016building}
	\begin{eqnarray}\label{dyn}
	C_1\dot{T_1}&=&\dfrac{T_3-T_1}{R_{31}}+\dfrac{(T_{\infty}-T_1)}{R_{10}}+u_1c_p(T_s-T_1)\nonumber\\		C_2\dot{T_2}&=&\dfrac{T_4-T_2}{R_{42}}+\dfrac{(T_{\infty}-T_2)}{R_{10}}+u_2c_p(T_s-T_2)\nonumber\\			
	C_3\dot{T_3}&=&\dfrac{T_1-T_3}{R_{13}}+\dfrac{(T_{4}-T_3)}{R_{34}}\\	C_4\dot{T_4}&=&\dfrac{T_2-T_4}{R_{42}}+\dfrac{(T_{3}-T_4)}{R_{34}}\nonumber
	\end{eqnarray}
In the above model, the inputs $u_1$ and $u_2$ denotes the mass flow rates. $T_{\infty}$, $T_s$ are ambient and supply air temperatures. Note that the inputs are coupled with the states (Temperatures $T_{1}$,$T_{2}$).
\end{example}
\noindent The above system of equations (\ref{dyn}) can be written in the Brayton-Moser form (\ref{BM_adm}) with
$x=\begin{bmatrix}
T_{1},
T_{2},
T_{3},
T_{4}
\end{bmatrix}^{\top}$,
%
and
\beq\label{PQG}
P(x)&=&\frac{(T_{3}-T_{1})^{2}}{2R_{31}}+\frac{(T_{4}-T_{2})^{2}}{2R_{42}}+\frac{(T_{3}-T_{4})^{2}}{2R_{34}}\nonumber\\&&+
\frac{(T_{\infty}-T_{1})^{2}}{2R_{10}}+\frac{(T_{\infty}-T_{2})^{2}}{2R_{20}}.\\
Q(x)&=& \text{diag}[-C_1,-C_2,-C_3,-C_4]~ \text{and}~ \nonumber\\
G(x)&=&\begin{bmatrix}
	-c_p(T_s-T_1)& 0 & 0 & 0\\
	0 & -c_p(T_s-T_2)& 0 & 0 \\
\end{bmatrix}^\top.\nonumber
\eeq
It is easily verified  $P(x)$, $Q(x)$ and $G(x)$ defined in \eqref{PQG} satisfy Assumption \ref{assum:admisible_pairs} and \ref{assum:int_G}. From Proposition \ref{prop:: power_BM_passivity}, system \eqref{dyn} is passive with input $u=[u_1,u_2]^\top$ and power balancing output
 \beqn
 y=\begin{bmatrix}
 	c_p(T_{s1}-T_1)\dot{T}_1 &c_p(T_{s2}-T_2)\dot{T}_2
 \end{bmatrix}^\top,
 \eeqn
 further from lemma \ref{prop:int_output} we have \beqn
 \Gamma_i=-\dfrac{c_p}{2}(T_{si}-T_{i})^2 \;\;\text{for $i$\;=\;1,2}.
 \eeqn
 \begin{ctrlobj}\label{ctrlobj::build_sys}
 	The control objective is to stabilize a given equilibrium point $[T_1^\ast,T_2^\ast]$ satisfying  \eqref{operating point} where
 	\beq\label{uast11}
 	\begin{matrix}
 		u_{1}^{*}&=&-\frac{1}{c_{p}(T_{s1}-T_{1}^{*})}\left(\frac{(T_{3}^{*}-T_{1}^{*})}{R_{31}}+\frac{(T_{\infty}-T_{1}^{*})}{R_{10}}\right)\\
 		u_{2}^{*}&=&-\frac{1}{c_{p}(T_{s2}-T_{2}^{*})}\left(\frac{(T_{4}^{*}-T_{2}^{*})}{R_{42}}+\frac{(T_{\infty}-T_{2}^{*})}{R_{20}}\right).
 	\end{matrix}
 	\eeq
 \end{ctrlobj}
From Proposition \ref{prop::BM_pass_main_res}, we can show that for $a=-kk_{I}^{-1}u^{*}-\Gamma(x^\ast)$ the control input \eqref{cont} takes the form
\beq
\begin{matrix}
	u_{1}&=&-\frac{\alpha}{k}c_{p}(T_{s1}-T_{1})\dot{T_{1}}-\frac{k_{1}}{{k}}\left(\Gamma_{1}-\Gamma_{1}^{*}-\frac{k}{k_1}u_1^{*}\right)\\
	u_{2}&=&-\frac{\alpha}{k}c_{p}(T_{s2}-T_{2})\dot{T_{2}}-\frac{k_{2}}{{k}}\left(\Gamma_{2}-\Gamma_{2}^{*}-\frac{k}{k_2}u_2^{*}\right)
\end{matrix}\label{cont2}
 \eeq
and asymptotically stabilizes the system \eqref{dyn} to equilibrium $[T_1^\ast,T_2^\ast]$ using $P_d$ \eqref{Pd} as lyapunov function.

\begin{remark}
	To achieve the results presented in this section, we principally made two assumptions, (i) Admissible pairs $(\tilde{P},\tilde{A})$ satisfying equation \eqref{BM_adm} exist, (ii) the new input matrix $\tilde{G}$ is integrable. In the following section, we aim to relax the former assumption by finding new passive maps.
\end{remark}
\section{Differential passivity like properties}

 Finding admissible pairs is not always a feasible task. Existence needs sufficient dissipation at all storage elements. (For example: Admissible pairs for series RLC circuits do not exist as there is no dissipation across capacitor \cite{garcia2004new}.) \blue{ Higher the number of storage elements, the more difficult it is to find the admissible pairs. Additionally, the results thus obtained \blue{would} be conservative in light of the restrictions they put on the elements of the system (eg: $L\geq R^2C$ \cite{garcia2004new} and $G^2L\geq C$ in Example \ref{Example::ch2::prlc::BM}).} 
 These restrictions are mainly due to imposing `gradient structure'  when all we need is passivity with differentiation at at least one of the  port variables (see Example \ref{Example::ch2::prlc::BM}). This has led us to search for new storage functions. In any stabilization problem, whether it is to stabilize a system to equilibrium point or to an operating point, the velocities have to go to zero. This has motivated us to look for storage functions in terms of velocities, with its minimum at $zero$. 
A good candidate would be a quadratic function (sum of squared velocities). In this section, we will show that such kind of storage functions ultimately leads to a new passivity property with differentiation at both the ports variables. Using these new passive map, a PI \blue{like} controller is constructed to solve the stabilization problem. To begin with, we start with the parallel RLC circuit presented in Example \ref{moti::example_pRLC}.
\begin{example}\textbf{(Parallel RLC circuit cont'd)}.
	Consider the following storage function for parallel RLC circuit in Example \ref{moti::example_pRLC} \beqn
	S(i_t,v_t)=\dfrac{1}{2}pi_t^2+\dfrac{1}{2}qv_t^2
	\eeqn
	where $p,q\geq0$,  $i_t=\dfrac{di}{dt}$ and $v_t=\dfrac{dv}{dt}$. The time differential of $S(i_t, v_t)$ along the trajectories of \eqref{moti::pRLC_eqn} is
	\beqn
	\dfrac{d}{dt}S(i_t, v_t)&=&pi_ti_{tt}+qv_tv_{tt}\\
	&=& -\frac{p}{L}Ri_t^2-\frac{q}{C}Gv_t^2+\left(\frac{q}{C}-\frac{p}{L}\right)i_tv_t+E_ti_t
	\eeqn
	In first equality we substituted for $\dfrac{d^2i}{dt^2}\deff i_{tt}$ and $\dfrac{d^2v}{dt^2}\deff v_{tt}$ from \eqref{moti::pRLC_eqn}. In second we just rearranged the terms. It can now be easily seen that for the choice $q=C$ and $p=L$ we can write $\dot{S}$ as
	\beq\label{ch3:passivity}
	\dfrac{d}{dt}S(i_t, v_t)&=&-Ri_t^2-Gv_t^2+i_t\dfrac{d}{dt}V_s
	\leq \frac{dV_s}{dt}\frac{di}{dt},
	\eeq
	with the above choice of $p,q$ we can rewrite the storage function $S(i_t, v_t)$ as
	\beq\label{storage}
	S(i_t, v_t)=\dfrac{1}{2}Li_t^2+\dfrac{1}{2}Cv_t^2
	\eeq
	 We now have the following proposition. 
\end{example}
\begin{proposition}\label{prop_pRLC}
The parallel RLC circuit with dynamics \blue{\eqref{moti::pRLC_eqn}} is passive with storage function $S(i_t, v_t)$ \eqref{storage} and \blue{port variables} $\frac{dV_s}{dt}$ and $\frac{di}{dt}$.
\end{proposition}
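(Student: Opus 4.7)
The plan is to verify the claim directly by differentiating the candidate storage function $S(i_t,v_t)=\tfrac{1}{2}L i_t^2 + \tfrac{1}{2}C v_t^2$ along the trajectories of the circuit dynamics \eqref{moti::pRLC_eqn} and exhibiting a dissipation inequality with supply rate $\tfrac{dV_s}{dt}\tfrac{di}{dt}$. Most of the computation has already been displayed in the discussion preceding the proposition; the proof really just has to organize those steps and observe that the choice of coefficients making the cross term disappear is precisely $p=L$, $q=C$, which is what led to the storage function \eqref{storage}.

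First I would differentiate both dynamical equations in \eqref{moti::pRLC_eqn} once more in time, so as to obtain expressions for $i_{tt}$ and $v_{tt}$ in terms of $i_t$, $v_t$ and $\tfrac{dV_s}{dt}$. Then I would substitute these into
\[
\dot S \;=\; L\, i_t\, i_{tt} + C\, v_t\, v_{tt},
\]
and collect terms. The coefficient in front of the cross-term $i_t v_t$ vanishes identically, because the off-diagonal entries of the interconnection part of \eqref{moti::pRLC_eqn} are $\pm 1$ and are scaled by exactly $L$ and $C$ in the storage function; this is the key structural observation. What remains is
\[
\dot S \;=\; -R\, i_t^2 \;-\; G\, v_t^2 \;+\; i_t\,\frac{dV_s}{dt},
\]
exactly as in \eqref{ch3:passivity}.

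From here the passivity inequality is immediate: since $R,G\ge 0$, the first two terms are nonpositive and can be discarded, giving $\dot S \le \tfrac{dV_s}{dt}\tfrac{di}{dt}$. Integrating this from $t_0$ to $t_1$ and noting that $S\ge 0$ yields the standard dissipation inequality with storage function $S$, input $\tfrac{dV_s}{dt}$ and output $\tfrac{di}{dt}$, establishing passivity with respect to the ``differentiated'' port variables.

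I do not expect any real obstacle here; the only subtlety, and the reason this result is worth stating, is recognizing that the \emph{particular} weighting $(p,q)=(L,C)$ in the quadratic in the velocities is what makes the cross term cancel and turns the right-hand side into a sign-definite dissipation plus a scalar supply rate. This is precisely the Krasovskii-type construction advertised in the section introduction, and it sidesteps the need for admissible pairs $(\tilde P,\tilde A)$ and the associated parameter restriction $G^2 L \ge C$ noted after Example 2.3.
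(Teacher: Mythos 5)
Your proposal is correct and is essentially the paper's own argument: the paper's proof simply points back to the computation in the preceding example, which is exactly the differentiation of $S(i_t,v_t)$ along \eqref{moti::pRLC_eqn}, the cancellation of the cross term under the weighting $(p,q)=(L,C)$, and the resulting inequality \eqref{ch3:passivity}. Your added remarks on why the cross term cancels and on discarding the nonpositive dissipation terms $-Ri_t^2-Gv_t^2$ are faithful elaborations of the same steps, so there is nothing to correct.
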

\begin{proof}
	The proof of the proposition directly follows from \eqref{ch3:passivity} and \eqref{storage}.
\end{proof}
\begin{remark}
		It is noteworthy to mention that, $S(i_t, v_t)$ is defined on the tangent space of \eqref{moti::pRLC_eqn} and has units Power/Time. Where as the mixed potential function in equation \eqref{BM1} has units of Power.
\end{remark}
%
%
%

 %
 %
 Systems with `dissipation obstacle'\cite{ortega2001putting} can be stabilized using the Brayton Moser framework, where passivity is obtained by {\em differentiating one of the port variables}. This has led to power shaping methods for control, but the solutions (if exists) obtained impose constraints on the physical parameters of the system (as shown in Example \ref{Example::ch2::prlc::BM}). The passivity property presented in Proposition \ref{prop_pRLC} have {\em differentiation at both the port variables}, does not impose any constraints on system parameters.
%
%
%
%
%
%
Using this new passive map (in Proposition \ref{prop_pRLC}), we present a control methodology for solving the stabilization problem.
\begin{ctrlobj}
Regulate the voltage across the capacitor of the parallel RLC circuit \eqref{moti::pRLC_eqn} to $v^\ast$.
At this operating point we have:
\beq \label{eqili}
v^\ast +Ri^\ast =V_s^\ast \;\;\;\;\;
i^\ast=Gv^\ast.
\eeq	
\end{ctrlobj}
%
\begin{proposition}\label{prop_control_finite}
	The state feed back controller of the form
	\beq\label{cont_pRLC}
	V_s&=&-\underbrace{K_P(i(t)-i^\ast)}_{\text{Proportional}}-\underbrace{K_I\int_0^t\left(i(\tau)-i^\ast\right)d\tau}_{Integral}
	+Ri^\ast+v^\ast, \;\;\  K_P,\;K_I\geq 0.
	\eeq 
	asymptotically stabilizes the system \eqref{moti::pRLC_eqn} at equilibrium \eqref{eqili}. 
	\beqn\label{cont_storage}
	S_d&=& \dfrac{1}{2}Li_t^2+\dfrac{1}{2}Cv_t^2+\dfrac{K_I}{2}(i-i^\ast)^2,\;\;\; K_I \geq 0.
	\eeqn
	 as Lyapunov function.
\end{proposition}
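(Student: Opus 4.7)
The plan is to exploit the new passivity inequality \eqref{ch3:passivity} established in Proposition \ref{prop_pRLC}, namely $\dot{S}\le \frac{dV_s}{dt}\frac{di}{dt}$, and to design the time derivative of the closed-loop supply rate so that the cross term in $\dot{S}_d$ telescopes. The first step is to differentiate the controller expression \eqref{cont_pRLC} with respect to time, which (since $i^\ast$, $v^\ast$ are constants) yields
\beq\label{Vsdot_plan}
\frac{dV_s}{dt}=-K_P\frac{di}{dt}-K_I(i-i^\ast).
\eeq
Substituting \eqref{Vsdot_plan} into the passivity inequality of Proposition \ref{prop_pRLC} gives
$\dot{S}\le -K_P\dot{i}^2-K_I(i-i^\ast)\dot{i}$.

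Next I would observe that the cross term is precisely the time derivative of the additional term in $S_d$: $\frac{d}{dt}\bigl[\tfrac{K_I}{2}(i-i^\ast)^2\bigr]=K_I(i-i^\ast)\dot{i}$. Adding this to the above, the cross term cancels and I obtain the clean bound
\beqn
\dot{S}_d \;\le\; -K_P\dot{i}^2 \;\le\; 0,
\eeqn
which is the central inequality. This immediately establishes Lyapunov stability of the closed loop (in the augmented state that includes the integrator state $z(t)=\int_0^t(i(\tau)-i^\ast)d\tau$), since at the operating point \eqref{eqili} one may check that $\dot{i}=\dot{v}=0$, $i=i^\ast$, $v=v^\ast$ and $z=0$ is an equilibrium of the augmented system, and $S_d$ is nonnegative with a strict minimum there after absorbing the integrator state appropriately.

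For asymptotic convergence I would invoke LaSalle's invariance principle. On the set $\{\dot{S}_d=0\}$ we have $\dot{i}\equiv 0$. Substituting $\dot{i}=0$ into \eqref{moti::pRLC_eqn} forces $v=V_s-Ri$; differentiating again and using \eqref{Vsdot_plan} together with $C\dot{v}=i-Gv$ one is led to $K_I(i-i^\ast)=0$, hence $i=i^\ast$ (whenever $K_I>0$), and then $v=v^\ast$ from the equilibrium relations \eqref{eqili}. Thus the largest invariant set contained in $\{\dot{S}_d=0\}$ is the singleton $(i^\ast,v^\ast)$ (with $z=0$), yielding asymptotic stability.

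The main obstacle I anticipate is the Lyapunov-function status of $S_d$: the terms $\tfrac{1}{2}Li_t^2+\tfrac{1}{2}Cv_t^2$ are quadratic in the velocities, which are functions of the state through \eqref{moti::pRLC_eqn} with $V_s$ replaced by \eqref{cont_pRLC}; so one has to verify properness/positive definiteness of $S_d$ in the augmented coordinates $(i-i^\ast,\,v-v^\ast,\,z)$ in order to guarantee boundedness of trajectories and legitimately apply LaSalle. The degenerate case $K_I=0$ (pure proportional action) also needs to be treated separately, since then the invariant-set argument recovers only $\dot{i}=0$ and one must rely on the remaining dynamics to conclude.
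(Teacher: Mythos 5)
Your proposal follows essentially the same route as the paper's proof: differentiate the controller to obtain $\dot V_s = -K_P\dot i - K_I(i-i^\ast)$, cancel the resulting cross term against the derivative of the $\tfrac{K_I}{2}(i-i^\ast)^2$ term in $S_d$, and finish with LaSalle's invariance principle. The one substantive difference is that you invoke the passivity \emph{inequality} $\dot S \le \dot V_s\,\dot i$ of Proposition \ref{prop_pRLC}, thereby discarding the circuit dissipation, whereas the paper works with the underlying equality $\dot S = -R\dot i^2 - G\dot v^2 + \dot i\,\dot V_s$ from \eqref{ch3:passivity}. Keeping those terms, the paper obtains $\dot S_d = -(R+K_P)\dot i^2 - G\dot v^2 \le -\alpha\bigl(\dot i^2+\dot v^2\bigr)$ for some $\alpha>0$ coming from $R,G>0$ rather than from the gains, so on the LaSalle set both $\dot i$ and $\dot v$ vanish at once, and the remainder is simple algebra on the constants $i_c$, $v_c$.

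This difference matters: your bound $\dot S_d \le -K_P\dot i^2$ cannot support the claim that $\dot i\equiv 0$ on $\{\dot S_d = 0\}$ when $K_P = 0$, a case the proposition explicitly allows ($K_P\ge 0$); you flag $K_I=0$ as the degenerate case, but the actual hole in your argument sits at $K_P=0$. The fix is precisely what the paper does: do not throw away $-R\dot i^2 - G\dot v^2$. With those terms retained, the extra invariant-set work you sketch (recovering $K_I(i-i^\ast)=0$ by differentiating the dynamics once more) also becomes unnecessary, since $\dot i = \dot v = 0$ makes $i$, $v$, and hence $V_s$ constant, and then $\dot V_s = 0$ forces $K_I(i-i^\ast)=0$ directly. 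Your closing concern about positive definiteness and properness of $S_d$ in the augmented coordinates is legitimate, and is a point the paper's own proof glosses over as well.
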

\begin{proof}
	The time derivative of $S_d$ along the trajectories of \eqref{moti::pRLC_eqn} is
	\beq \label{Timeder_Pd}
	\dfrac{d}{dt}S_d&=&\dfrac{d}{dt}S+K_I(i-i^\ast)i_t\nonumber \\
	&=&-Ri_t^2-Gv_t^2+i_t\dfrac{d}{dt}V_s+K_I(i-i^\ast)i_t\\
	&\leq & \left(\dfrac{d}{dt}V_s+K_I(i-i^\ast)\right)i_t\nonumber \\
	&=& \left(\dfrac{d}{dt}V_s+K_I(i-i^\ast)\right)i_t.\nonumber 
	\eeq
	Choosing $V_s$ of the form \eqref{cont_pRLC} gives in
	 \beq\label{{cont_pRLC}_der} \dfrac{d}{dt}V_s=-K_I(i-i^\ast)-K_Pi_t,\eeq and using this in $\dfrac{d}{dt}S_d$ results in
	\beq
	\frac{d}{dt}S_d &\leq & -K_Pi_t^2.
	\eeq
	Further from \eqref{Timeder_Pd}, we can say that $\exists \alpha>0$ satisfying 
	\beq
	\frac{d}{dt}S_d &\leq & -\alpha\left(i_t^2+v_t^2\right).
	\eeq
	Which implies $\dfrac{d}{dt}S_d= 0$ $\implies$  $i(t)=i_c$ and $v(t)=v_c$ (where $i_c$ and $v_c$ are constant), from \eqref{moti::pRLC_eqn} $V_s(t)=Ri_c+v_c$ is a constant. Substituting this in \eqref{{cont_pRLC}_der}, we get $i_c=i^{\ast}$, $\implies$ $v_c=v^\ast$ and $V_s=V_s^\ast$.
%
\end{proof}
%
%
\subsection{Topologically complete RLC circuits}\label{sec::Topologically complete RLC circuits}
We now present the new passivity property for a larger class of RLC circuits called Topologically complete RLC circuits \cite{jeltsema2003passivity}.  Let the column vectors $i$ and $v$ denote the currents passing though all the inductors and voltage across all the capacitors respectively. 
The dynamics of a complete RLC circuit with regulated
voltage sources in series with inductors is described by
\beq
-L\frac{di}{dt}&=& \frac{\partial P}{\partial i}-B_sV_s\nonumber\\
C\frac{dv}{dt}&=& \frac{\partial P}{\partial v} \label{com_RLC}
\eeq
where the mixed potential function $P(i, v)$ is given by
\beq \label{Mixpot}
P(i,v)&=&i^\top\Gamma v+G( i)-J( v)
\eeq
where $G(i)\geq 0$, $J(v)\geq 0$ represent (possibly) non linear dissipative elements and $\Gamma+\Gamma^\top =0$. Note that $L$ and $C$ are assumed to be constant. Consider the following storage function
\beq\label{com_storage}
S(i_t, v_t)=\dfrac{1}{2}\dfrac{di}{dt}^\top L\dfrac{di}{dt}+\dfrac{1}{2}\dfrac{dv}{dt}^\top C\dfrac{dv}{dt}
\eeq
\begin{proposition}\label{prop_main}
	Let $\nabla_{i}^2G, \nabla_{v}^2J$ be positive semidefinite, then we have the following.
	 The system of equations \eqref{com_RLC} representing the dynamics of a complete RLC circuit, is passive with respective to the storage function $S(i_t, v_t)$ defined in \eqref{com_storage} and ports $B_s^\top\dfrac{di}{dt}$ and $\dfrac{dV_s}{dt}$.
\end{proposition}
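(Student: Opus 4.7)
The plan is to compute $\dot{S}$ along trajectories of \eqref{com_RLC} and show that all interior terms either cancel (by skew-symmetry of $\Gamma$) or are sign-definite (by the positive semidefiniteness of $\nabla_i^2 G$ and $\nabla_v^2 J$), leaving only the supply term $\dot{i}^\top B_s \dot{V}_s$.

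First I would differentiate \eqref{com_RLC} once more in time to obtain expressions for $L\ddot{i}$ and $C\ddot{v}$. Using $P(i,v)=i^\top\Gamma v+G(i)-J(v)$, the gradients are $\partial_i P=\Gamma v+\nabla_i G$ and $\partial_v P=-\Gamma i-\nabla_v J$, so time-differentiating gives
\begin{align*}
-L\ddot{i}&=\Gamma\dot v+\nabla_i^2 G\,\dot i-B_s\dot V_s,\\
C\ddot{v}&=-\Gamma\dot i-\nabla_v^2 J\,\dot v.
\end{align*}

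Next I would differentiate the candidate storage function $S(i_t,v_t)=\tfrac12 i_t^\top L i_t+\tfrac12 v_t^\top C v_t$ along trajectories, which gives $\dot S=\dot i^\top L\ddot i+\dot v^\top C\ddot v$. Substituting the expressions above yields four kinds of terms: a cross-term $-\dot i^\top\Gamma\dot v-\dot v^\top\Gamma\dot i$, two quadratic dissipative terms, and the supply term. The cross-terms cancel since $\Gamma$ is skew-symmetric: $\dot v^\top\Gamma\dot i=\dot i^\top\Gamma^\top\dot v=-\dot i^\top\Gamma\dot v$. What remains is
\begin{equation*}
\dot S=-\dot i^\top\nabla_i^2 G\,\dot i-\dot v^\top\nabla_v^2 J\,\dot v+\dot i^\top B_s\dot V_s.
\end{equation*}

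Invoking the hypothesis that $\nabla_i^2 G\succeq 0$ and $\nabla_v^2 J\succeq 0$, the first two terms are nonpositive, so $\dot S\leq(B_s^\top\dot i)^\top \dot V_s$, which is exactly the claimed passivity inequality with port variables $B_s^\top\tfrac{di}{dt}$ and $\tfrac{dV_s}{dt}$. Since $L,C$ are constant and positive definite, $S$ is a bona fide (nonnegative) storage function, completing the proof.

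I do not expect a genuine obstacle here: the argument is essentially a direct generalization of the parallel RLC calculation in Proposition~\ref{prop_pRLC} (equation \eqref{ch3:passivity}), with the skew-symmetry of $\Gamma$ playing the role that the specific interconnection signs played in the scalar case. The one place requiring some care is book-keeping of the second-derivative matrices $\nabla_i^2 G$ and $\nabla_v^2 J$ versus the mixed partials $\partial_{iv}^2 P=\Gamma^\top$; getting these signs right is what makes the skew cancellation come out cleanly, and it is this cancellation that ultimately explains why passivity can be recovered without passing through admissible pairs or any parameter restriction of the form $L\geq R^2 C$.
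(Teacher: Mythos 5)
Your proposal is correct and follows essentially the same route as the paper: differentiate the dynamics once in time, compute $\dot S$ along trajectories, observe that the cross-terms cancel, and invoke positive semidefiniteness of $\nabla_i^2 G$ and $\nabla_v^2 J$ to obtain $\dot S \leq i_t^\top B_s \dot V_s$. The only cosmetic difference is that the paper cancels the cross-terms via the symmetry of mixed partials, $\nabla_{vi}^2 P = (\nabla_{iv}^2 P)^\top$, before substituting the explicit form of $P$, whereas you expand $P$ first and cancel via skew-symmetry of $\Gamma$ --- the same cancellation in a different order.
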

\begin{proof}
	The time derivative of $S(i_t, v_t)$ can be simplified as
	\beqn
\dfrac{d}{dt}	S &=& i_t^\top\left(-\nabla_{i}^2Pi_t-\nabla_{vi}^2Pv_t+B_s\dfrac{dV_s}{dt}\right)+v_t^\top\left(\nabla_{iv}^2Pi_t+\nabla_{v}^2Pv_t\right)\\
	&=& -i_t^\top \nabla_{i}^2Pi_t+v_t^\top \nabla_{v}^2Pv_t+i_t^\top B_s\dfrac{dV_s}{dt}
	\eeqn
	From \eqref{Mixpot} we get
	\beq\label{com_Sdot}
\dfrac{d}{dt}S(i_t, v_t)&=&-i_t^\top \nabla_{i}^2Gi_t-v_t^\top \nabla_{v}^2Jv_t+i_t^\top B_s\dfrac{dV_s}{dt}
	\eeq
	From \eqref{com_RLC}, \eqref{com_storage}, \eqref{Mixpot} and \eqref{com_Sdot}, we have
	\beq
	\dfrac{d}{dt}S(i_t, v_t)&\leq&i_t^\top B_s\frac{dV_s}{dt}.
	\eeq
\end{proof}
\begin{remark}
In the Proposition \ref{prop_main}:
The nonlinear dynamical system given by \eqref{com_RLC} with input $V_s=0$ is contracting with metric diag$\{L,C\}$ \cite{lohmiller1998contraction,forni2014differential}. In the next section, we will utilize this for generalizing this passivity property to a class of contracting nonlinear systems. 
\end{remark}
\begin{remark}
	 In deriving the result of Proposition \ref{prop_main}, we assumed that the input matrix $B_s$ as constant. The result is not obvious for a system with a state dependent input matrix $B_s$. That is the system represented by equations \eqref{com_RLC} is not passive with port variables $B_s(x)^\top\frac{di}{dt}$ and $\frac{dV_s}{dt}$.
\end{remark}
\section{A class of nonlinear system}
So far we were looking at systems that have been modeled in Brayton-Moser formulation. A different, but related, class of systems are contracting systems (a term coined in the seminal paper \cite{lohmiller1998contraction}). The analysis in these systems pertains to, study the convergence between two trajectories rather than a trajectory to a particular solution. This notion of convergence/stability has given rise to a new passivity concept called as differential passivity \cite{forni2013differential,van2013differential,crouch1987variational}, which is similar to the passive maps derived in Proposition \ref{prop_pRLC} and \ref{prop_main}. Consider a nonlinear system of the form
\begin{equation}\label{gen_sys}
\dot{x}=f(x)+g(x)u
\end{equation}	
where $x\in \mathbb{R}^n$ is the state vector , $u\in \mathbb{R}^m$ ($m<n$) is the control input. $f(x):\mathbb{R}^n\rightarrow \mathbb{R}^n$ and $g(x):\mathbb{R}^n\rightarrow \mathbb{R}^m$ are smooth functions. In this subsection, we aim to derive the passive maps presented in Proposition \ref{prop_pRLC} and \ref{prop_main} to a class of nonlinear systems characterized by the following assumptions.
%
%
%
\begin{assumption}\label{ass::A1}
	For a given $f(x)$, there exist a symmetric positive definite matrix $M\in \mathbb{R}^{n \times n}$ satisfying
	\begin{equation}\label{A1}
	M\dfrac{\partial f}{\partial x}+\dfrac{\partial f}{\partial x}^\top M < 0
	\end{equation}
	This implies the dynamical system $\dot{x}=f(x)$ is contracting.
\end{assumption}
\begin{assumption}\label{ass::A2}
	The full-rank left annihilator of input matrix also left annihilates its Jacobian. If $g^{\perp}$ denotes left annihilator of the input matrix $g(x)$, that is, $ g^{\perp}g=0$ then 
  	\begin{equation}\label{A2}
	g^{\perp}\dfrac{\partial g}{\partial x}=0
	\end{equation}
\end{assumption}
\begin{assumption}\label{ass::A3}
	$Mg(x)$ is Integrable.
\end{assumption}
\noindent The second method of Lyapunov has been widely used for stability analysis of dynamical systems \cite{khalil1996noninear}. This method revolves around finding a suitable Lyapunov function that decreases along the system trajectories. Further, positive definite quadratic functions of state variables are usually a good candidates. The classical Krasovskii's method \cite{krasovskiicertain} of generating Lyapunov functions also bears a similar form in terms of velocities (instead of states) and forms a candidate function for stability analysis. 
In the following proposition, we show that the nonlinear systems with $u=0$ satisfying Assumption \ref{ass::A1} is contracting with a Krasovskii-type Lyapunov function \cite{lohmiller1998contraction,forni2014differential}.
\begin{proposition}\cite{lohmiller1998contraction}\label{prop::Contrating}
Consider system \eqref{gen_sys} with input $u=0$ satisfying Assumption \ref{ass::A1}. Then the resulting dynamical system is contracting.
\end{proposition}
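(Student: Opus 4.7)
The plan is to invoke the standard contraction argument of Lohmiller and Slotine directly: set $u=0$, write the variational (prolonged) dynamics, and use Assumption \ref{ass::A1} to exhibit a Riemannian metric with respect to which infinitesimal displacements decay.

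Concretely, I would first specialize \eqref{gen_sys} to $\dot{x}=f(x)$ and linearize along an arbitrary trajectory $x(t)$ to obtain the variational equation
\begin{equation*}
\delta\dot{x}=\frac{\partial f}{\partial x}(x)\,\delta x,
\end{equation*}
where $\delta x$ is a virtual displacement (a tangent vector transported along the flow). I would then propose the Krasovskii-type differential length
\begin{equation*}
V(x,\delta x):=\delta x^\top M\,\delta x,
\end{equation*}
with $M=M^\top>0$ the matrix furnished by Assumption \ref{ass::A1}. This is a legitimate infinitesimal squared distance because $M$ is constant positive definite.

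Next I would differentiate $V$ along the prolonged flow. Since $M$ is constant, only $\delta x$ evolves, and a direct computation gives
\begin{equation*}
\frac{d}{dt}V=\delta x^\top\!\left(M\,\frac{\partial f}{\partial x}+\frac{\partial f}{\partial x}^{\!\top}\!M\right)\!\delta x.
\end{equation*}
Assumption \ref{ass::A1} says precisely that the symmetric matrix in parentheses is negative definite, so $\tfrac{d}{dt}V<0$ whenever $\delta x\neq 0$. Hence every infinitesimal displacement contracts; integrating the inequality along any smooth path connecting two initial conditions in state space, one concludes that the geodesic distance (with respect to the constant metric $M$) between any two solutions is strictly decreasing in time, which is exactly the definition of a contracting system in the sense of \cite{lohmiller1998contraction,forni2014differential}.

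The only real subtlety, rather than an obstacle, is that Assumption \ref{ass::A1} is stated as strict negative definiteness pointwise; to extract a uniform exponential contraction rate one would want $M\tfrac{\partial f}{\partial x}+\tfrac{\partial f}{\partial x}^{\!\top}M\le -2\lambda M$ for some $\lambda>0$ (which is automatic on compact invariant sets and customarily absorbed into the definition used here). I would simply note this and state the conclusion as contraction in the metric $M$, leaving the rate implicit, consistent with the formulation used in the rest of the chapter.
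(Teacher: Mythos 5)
Your proposal is correct, and it rests on exactly the same computation as the paper: a quadratic form built from the metric $M$ of Assumption \ref{ass::A1}, differentiated along the flow, with the matrix inequality $M\frac{\partial f}{\partial x}+\frac{\partial f}{\partial x}^\top M<0$ delivering the decay. The one difference is the object you differentiate. You work with genuine virtual displacements $\delta x$ of the prolonged (variational) system, which is the definitionally correct quantity for contraction in the sense of Lohmiller--Slotine; the paper instead takes the Krasovskii function $V(x,\dot{x})=\frac{1}{2}\dot{x}^\top M\dot{x}$ and differentiates it along a single trajectory, implicitly using the fact that the velocity $\dot{x}$ is itself a particular solution of the variational equation $\frac{d}{dt}\dot{x}=\frac{\partial f}{\partial x}\dot{x}$. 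The two computations are formally identical, but your phrasing is the logically complete one: showing that $\lVert\dot{x}\rVert_M$ decreases along one trajectory is a consequence of contraction, whereas showing that every $\lVert\delta x\rVert_M$ decreases is contraction. What the paper's shortcut buys is a seamless tie-in with the Krasovskii-type storage functions $\frac{1}{2}\dot{x}^\top M\dot{x}$ used for the passivity results later in the chapter. Two minor remarks: the paper's displayed derivative drops a factor of $\frac{1}{2}$ (since $\dot{x}^\top M\frac{\partial f}{\partial x}\dot{x}=\frac{1}{2}\dot{x}^\top(M\frac{\partial f}{\partial x}+\frac{\partial f}{\partial x}^\top M)\dot{x}$), which is immaterial to the sign; and your closing caveat about extracting a uniform exponential rate is well taken and is indeed glossed over in the paper's statement.
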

	\begin{proof}
	Consider the Krasovskii Lyapunov function
	\begin{equation}\label{kras_lypunov}
	    V(x,\dot{x})=\dfrac{1}{2}\dot{x}^\top M \dot{x}.
	\end{equation}
	Then the time derivative of \eqref{kras_lypunov} along the trajectories of \eqref{gen_sys} with $u=0$ is
	\begin{align*}
	\small
	    \dfrac{d}{dt}V &= \dot{x}^\top M\ddot{x}= \dot{x}^\top M\left(\dfrac{\partial f}{\partial x}\dot{x}\right)= \dot{x}^{\top}\left(M\dfrac{\partial f}{\partial x}+\dfrac{\partial f}{\partial x}^\top M\right)\dot{x} \leq 0
	\end{align*}
	    This implies the dynamical system $\dot{x}=f(x)$ is contracting in $\mathbb{R}^n$ with respect to the metric $M$. 
	 \end{proof}
\begin{remark}
	    In Assumption \ref{ass::A1}, one can consider a state dependent Riemannian metric $M(x)$, and replace equation \eqref{A1} with
	 \begin{eqnarray*}
	 M\dfrac{\partial f}{\partial x}+\dfrac{\partial f}{\partial x}^\top M +\dot{M}< 0.
	 \end{eqnarray*}
\end{remark}

 Static and dynamic feedback techniques \cite{l2gain, nijmeijer1990nonlinear} are well-known methodologies that are widely used in deriving passivity properties. For a class of nonlinear systems \eqref{gen_sys}, we use this dynamic feedback techniques and storage functions of Krasovskii-type \eqref{kras_lypunov} to achieve passive maps similar to the ones derived in Proposition \eqref{prop_main}.  The following lemma will be instrumental in formulating our result .
 	\begin{lemma}\label{prop::alpha}
 	Consider an input matrix $g(x)$ satisfying Assumption \ref{ass::A2} and an $\alpha\in \mathbb{R}^{m\times m}$. Then
	\begin{equation}\label{stab_cond}
	    	\dot{g}+g \alpha=0
	\end{equation}
	if and only if $\alpha$ satisfies 
		\begin{equation}\label{alpha_1}
	\alpha=-\left(g^\top g\right)^{-1}g^\top \dot{g}.
	\end{equation}
	\end{lemma}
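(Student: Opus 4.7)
The statement is an if-and-only-if, so my plan is to handle each direction separately, with the forward direction being essentially algebraic and the reverse direction being where Assumption \ref{ass::A2} is actually used. First I would observe that the existence of a full-rank left annihilator $g^{\perp}$ for $g(x)$ forces $g$ to have full column rank $m$, so that $g^{\top}g \in \mathbb{R}^{m\times m}$ is invertible and the expression \eqref{alpha_1} is well-defined.

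For the forward direction, the plan is a one-line manipulation: assuming $\dot{g}+g\alpha=0$, premultiply by $g^{\top}$ to obtain $g^{\top}g\,\alpha = -g^{\top}\dot{g}$, and then invert $g^{\top}g$ to recover \eqref{alpha_1}. No assumption beyond full column rank of $g$ is required here.

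For the reverse direction, the key step is to show that $\dot{g}$ lies in the column range of $g$, and this is precisely where Assumption \ref{ass::A2} enters. Writing $\dot{g} = \sum_{i} \frac{\partial g}{\partial x_i}\dot{x}_i$ and applying the annihilator componentwise, Assumption \ref{ass::A2} yields $g^{\perp}\dot{g} = \sum_i g^{\perp}\frac{\partial g}{\partial x_i}\dot{x}_i = 0$. Since $g^{\perp}$ is a full-rank left annihilator, this means every column of $\dot{g}$ lies in the range of $g$.

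Finally, substituting \eqref{alpha_1} into $\dot{g}+g\alpha$ gives
\[
\dot{g}+g\alpha = \left(I_n - g(g^{\top}g)^{-1}g^{\top}\right)\dot{g}.
\]
The matrix $I_n - g(g^{\top}g)^{-1}g^{\top}$ is the orthogonal projector onto the orthogonal complement of the range of $g$, so it annihilates any matrix whose columns lie in the range of $g$. Combined with the previous step, this immediately yields $\dot{g}+g\alpha=0$. I do not anticipate a genuine obstacle here: the only subtlety is being careful that the notation $g^{\perp}\frac{\partial g}{\partial x}=0$ in Assumption \ref{ass::A2} is read componentwise in the state index $i$, which is what lets us pass the annihilator through the time derivative.
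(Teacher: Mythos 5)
Your proof is correct and takes essentially the same route as the paper's: the direction \eqref{stab_cond} $\Rightarrow$ \eqref{alpha_1} is the identical premultiplication by $g^\top$, and the converse uses Assumption \ref{ass::A2} in exactly the same way, through $g^{\perp}\dot{g}=g^{\perp}\frac{\partial g}{\partial x}\dot{x}=0$. The only difference is cosmetic: you recognize $I-g\left(g^\top g\right)^{-1}g^\top$ as the projector annihilating the range of $g$, whereas the paper left-multiplies $\dot{g}+g\alpha$ by the full-rank matrix obtained by stacking $g^{\perp}$ on top of $g^\top$ and checks that both blocks vanish --- two packagings of the same range/complement decomposition.
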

	\begin{proof}
	The {\em only if} part of the proof: consider the following full rank matrix $\begin{bmatrix}g^{\perp}\\g^\top \end{bmatrix}$. Now left multiplying $\left(\dot{g}+g \alpha\right)$ in \eqref{stab_cond} by $\begin{bmatrix}g^{\perp}\\g^\top \end{bmatrix}$ yields
	\begin{eqnarray*}
	\begin{bmatrix}g^{\perp}\\g^\top \end{bmatrix}\left(\dot{g}+g \alpha\right)=\begin{bmatrix}g^{\perp}\left(\dot{g}+g \alpha\right)\\g^\top \left(\dot{g}+g \alpha\right)\end{bmatrix}
	&=&\begin{bmatrix}g^{\perp}\dot{g}\\g^\top \left(\dot{g}-g \left(g^\top g\right)^{-1}g^\top \dot{g}\right)\end{bmatrix}\\
	&=&\begin{bmatrix}g^{\perp}\dfrac{\partial g}{\partial x}\dot{x}\\ \left(g^\top\dot{g}-g^\top g \left(g^\top g\right)^{-1}g^\top \dot{g}\right)	\end{bmatrix}\\
	&=&\begin{bmatrix}0\\ \left(g^\top\dot{g}-g^\top \dot{g}\right)	\end{bmatrix}\\
	&=&0
	\end{eqnarray*}
	By construction $\begin{bmatrix}g^{\perp}\\g^\top \end{bmatrix}$ is full rank matrix, hence $\dot{g}+g \alpha=0$.\\ \noindent The {\em if} part of the proof:
$$
	\dot{g}+g\alpha=0 \implies g^\top g \alpha=g^\top \dot{g}\implies \alpha=-(g^\top g)^\top g^\top \dot{g}.
$$
hence $\alpha=-(g^\top g)^\top g^\top \dot{g}\iff \dot{g}+g \alpha=0$.
	\end{proof}
\begin{figure}
	\centering
	\includegraphics[width=.75\linewidth]{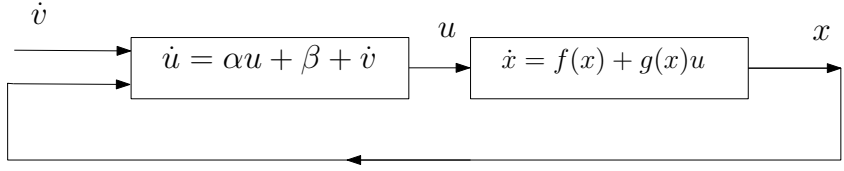}
	\caption{Interconnection of  dynamic state feedback \eqref{input_dyn}  to system \eqref{gen_sys}.}
	\label{fig:dyn_state_feedback}
\end{figure}
%
%
	Consider the following dynamic state feedback \cite{nijmeijer1990nonlinear} for system \eqref{gen_sys} (see Fig. \ref{fig:dyn_state_feedback})
	\begin{equation}\label{input_dyn}
	    \dot{u}= \alpha u+\beta +\dot{v}
	\end{equation}
	with $\alpha$ defined as in lemma \ref{prop::alpha}, $\beta=-g^\top M \dot{x}$ and $v\in \mathbb{R}^m$. The use of $\dot{v}$ in \eqref{input_dyn} rather than $v$ as new port variable will evident in the later part of the subsection. We now have the following passivity property for the nonlinear system \eqref{gen_sys}.
	\begin{theorem}\label{thm::main}
	Let the Assumptions \ref{ass::A1} and \ref{ass::A2} are satisfied. Then the system \eqref{gen_sys} together with \eqref{input_dyn} is passive with input $\dot{v}$ and output $y=g^\top M \dot{x}$.
	\end{theorem}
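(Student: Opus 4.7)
The plan is to use the Krasovskii-type storage function $S(\dot x) = \tfrac{1}{2}\dot{x}^\top M\dot{x}$ that was already shown (in Proposition \ref{prop::Contrating}) to make the unforced system contracting under Assumption \ref{ass::A1}. The strategy is to differentiate $S$ along trajectories of the \emph{augmented} closed-loop system (plant plus dynamic feedback) and show that the cross-term coming from the input $u$ is killed by the specific choice of $\alpha$ supplied by Lemma \ref{prop::alpha}, after which the remaining terms split cleanly into a nonpositive dissipation part plus the desired supply rate $y^\top \dot v$.

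Concretely, I would first differentiate $\dot{x}=f(x)+g(x)u$ once more in time to get
\[
\ddot{x} \;=\; \frac{\partial f}{\partial x}\dot{x} + \dot{g}\,u + g\,\dot{u},
\]
and then substitute the dynamic feedback $\dot{u}=\alpha u+\beta+\dot v$ to obtain
\[
\ddot{x} \;=\; \frac{\partial f}{\partial x}\dot{x} + \bigl(\dot{g}+g\alpha\bigr)u + g\beta + g\dot v .
\]
The central observation is that the coefficient of $u$ vanishes: by Lemma \ref{prop::alpha}, the choice $\alpha=-(g^\top g)^{-1}g^\top\dot g$ (which is well defined because $g$ has full column rank, and which annihilates $\dot g$ in the direction of $g$ precisely because Assumption \ref{ass::A2} makes $g^\perp\dot g=g^\perp(\partial g/\partial x)\dot x=0$) gives $\dot g+g\alpha=0$. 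This is the only place where Assumption \ref{ass::A2} is used, and it is the key step that removes the dependence on $u$ in $\ddot x$.

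From here the calculation is routine. Using $\dot S=\dot{x}^\top M\ddot{x}$ and symmetrising,
\[
\dot S \;=\; \tfrac{1}{2}\dot{x}^\top\!\Bigl(M\tfrac{\partial f}{\partial x}+\tfrac{\partial f}{\partial x}^\top M\Bigr)\dot{x} \;+\; \dot{x}^\top Mg\beta \;+\; \dot{x}^\top Mg\,\dot v.
\]
Assumption \ref{ass::A1} renders the first quadratic form nonpositive; substituting the damping term $\beta=-g^\top M\dot x$ yields $\dot{x}^\top Mg\beta=-\|g^\top M\dot x\|^{2}\le 0$; and the last term is exactly $(g^\top M\dot x)^\top \dot v = y^\top \dot v$ for the claimed output. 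Combining these three observations gives $\dot S\le y^\top \dot v$, which is the required passivity inequality.

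The only subtle step is the cancellation of the $u$-term via Lemma \ref{prop::alpha}; everything else is a direct computation. Note that Assumption \ref{ass::A3} (integrability of $Mg$) is not needed to establish passivity itself, which is consistent with its absence from the hypotheses of this theorem; it would become relevant only later, when one wants to construct an integrated output in order to build a Lyapunov function for the closed loop along the lines of Section \ref{powershaping}.
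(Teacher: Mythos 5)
Your proof is correct and follows essentially the same route as the paper's: the Krasovskii storage function $\tfrac{1}{2}\dot{x}^\top M\dot{x}$, differentiation of the plant dynamics, substitution of the dynamic feedback, cancellation of the $u$-term via Lemma \ref{prop::alpha}, nonpositivity of the symmetrized Jacobian term from Assumption \ref{ass::A1}, and the damping term from $\beta=-g^\top M\dot{x}$. Your write-up is in fact slightly more careful than the paper's (you include the factor $\tfrac{1}{2}$ in the symmetrization and make the $-\|g^\top M\dot{x}\|^2$ term explicit), and your closing remark that Assumption \ref{ass::A3} is not needed here is accurate.
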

	\begin{proof}
	Consider  storage function of the form \eqref{kras_lypunov}. The time derivative of \eqref{kras_lypunov} along the trajectories of \eqref{gen_sys} and \eqref{input_dyn} is
	\begin{eqnarray*}
	\dfrac{d}{dt}V&=& \dot{x}^\top M\ddot{x}\\
	    &=& \dot{x}^\top M\left(\dfrac{\partial f}{\partial x}\dot{x}+\dot{g}u+g\dot{u}\right)\\
	    &=& \dot{x}^\top M\left(\dfrac{\partial f}{\partial x}\dot{x}+\dot{g}u+g\left(\alpha u+\beta +\dot v\right)\right)\\
	    &=& \dot{x}^{\top}\left(M\dfrac{\partial f}{\partial x}+\dfrac{\partial f}{\partial x}^\top M\right)\dot{x}+\dot{x}^\top M\left(\left(\dot{g}+g \alpha)\right)u+g\beta+g\dot v \right)\\
	    &\leq & \dot{v}^\top y
	\end{eqnarray*}
	where $y=g^\top M \dot{x}$ is also referred to as power shaping output. In step 1 and 2 we use system dynamics \eqref{gen_sys} and controller dynamics \eqref{input_dyn} respectively. In step 4 and 5 we used Proposition \ref{prop::Contrating} and lemma \ref{prop::alpha} respectively.
	\end{proof}
	\begin{lemma}\label{prop::output_integrability}
	The output $y=g^\top M \dot{x}$ given in Theorem \eqref{thm::main} is integrable.
\end{lemma}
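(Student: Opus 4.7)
The plan is to mirror the argument of Lemma \ref{prop:int_output} almost verbatim, now applied to the matrix $Mg(x)$ in place of the input matrix $\tilde{G}(x)$ used there. First I would invoke Assumption \ref{ass::A3}, which says that $Mg(x)$ is integrable in the sense of Definition \ref{def::inte}: for each column $(Mg)_k$ of $Mg$, the associated $1$-form $\sum_l (Mg)_{lk}\,dx^l$ is closed on $\mathbb{R}^n$. Poincar\'e's Lemma then supplies a smooth function $\Gamma:\mathbb{R}^n\to\mathbb{R}^m$ whose $k$-th component $\Gamma_k$ satisfies $\nabla_x \Gamma_k = (Mg)_k$; collecting columns, $\nabla_x \Gamma = Mg$.

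The next step is a one-line chain-rule computation to identify $\dot\Gamma$ with the output $y$. Since $M$ is symmetric by Assumption \ref{ass::A1}, we have $(Mg)^\top = g^\top M$, so
\begin{eqnarray*}
\dot\Gamma \;=\; (\nabla_x\Gamma)^\top \dot{x} \;=\; (Mg)^\top \dot{x} \;=\; g^\top M\, \dot{x} \;=\; y,
\end{eqnarray*}
which exhibits $y$ as the time derivative of a state function and thereby establishes integrability.

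There is no real obstacle in this lemma, since the only nontrivial hypothesis (closedness of the columns of $Mg$) is posited as an assumption and the remainder is standard calculus. The only bookkeeping to double-check is the column-wise reading of Definition \ref{def::inte} applied to $Mg$ rather than $g$, and the use of $M^\top=M$ to pass from $Mg$ to $g^\top M$. I anticipate that $\Gamma$ (possibly modified by an affine term chosen to align its minimum with the desired equilibrium) will play exactly the role it played in Proposition \ref{prop::BM_pass_main_res}: providing an integral-action term for a PI-type controller built around the new passive map of Theorem \ref{thm::main}.
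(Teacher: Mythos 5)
Your proposal is correct and follows essentially the same route as the paper's own proof: invoke Assumption \ref{ass::A3}, apply Poincar\'e's Lemma to obtain $\Gamma$ with $\nabla_x\Gamma = Mg$, and conclude via the chain rule that $\dot\Gamma = (Mg)^\top\dot{x} = g^\top M\dot{x} = y$. Your explicit remark that $M^\top = M$ is needed to pass from $(Mg)^\top$ to $g^\top M$ (and your codomain $\mathbb{R}^m$ for $\Gamma$, where the paper writes $\mathbb{R}^n$) are minor points of added care, not a different argument.
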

\begin{proof}
		From Assumption \ref{ass::A3}, we have that $Mg(x)$ is integrable. From Definition \ref{def::inte}, the $1-forms$'s corresponding to column vectors of matrix $Mg(x)$ are closed. 
	Therefore, Poincar$\acute{\text{e}}$'s Lemma ensures the existence of a function
%
	$\Gamma(x):\mathbb{R}^n\rightarrow \mathbb{R}^n$ such that
	\beqn
	\nabla_x\Gamma(x)=Mg(x).
	\eeqn
	This implies,
	\begin{eqnarray*}
	\dot{\Gamma}= \nabla_x\Gamma(x)^\top \dot{x}
	=(Mg)^\top \dot{x}
	= g^\top M \dot{x}
	= y
	\end{eqnarray*}
This implies $y$ in Theorem \ref{thm::main} is integrable.
\end{proof}
	\begin{remark}
		From Theorem \ref{thm::main} and lemma \ref{prop::output_integrability}, we can say that nonlinear systems \eqref{gen_sys} satisfying Assumptions \ref{ass::A1}-\ref{ass::A3} are passive with port variables $\dot{u}$ and $\dot \Gamma$. Similar kind passivity properties exists in literature, namely differential passivity \cite{van2013differential} and incremental passivity \cite{l2gain}.
	\end{remark}
	%
	\begin{ctrlobj}
		To stabilize the system \eqref{gen_sys} at an non-trivial operating point $(x^{\ast},u^{\ast})$ satisfying
		\begin{equation}\label{Control_objective}
		f(x^\ast)+g(x^\ast)u^\ast =0
		\end{equation}
	\end{ctrlobj}
To achieve this control objective we follow a similar methodology proposed in Proposition \ref{prop::BM_pass_main_res}. That is, we start with finding a closed-loop storage function $V_d$ satisfying 
\beq
\dot{V}_d\leq 0 ~~\text{and}~~
x^\ast= \text{arg\;min}_xV_d \label{clp_d}
\eeq
	%
and one relevant choice would be
	\begin{equation}\label{clP_str}
	    V_d(x)=\dfrac{1}{2}k_1\dot{x}^\top M \dot{x}+\dfrac{1}{2}||\Gamma(x)-\Gamma(x^\ast)||^2_{k_i}.
	\end{equation}
	\begin{figure}
	\centering
	\includegraphics[width=0.75\linewidth]{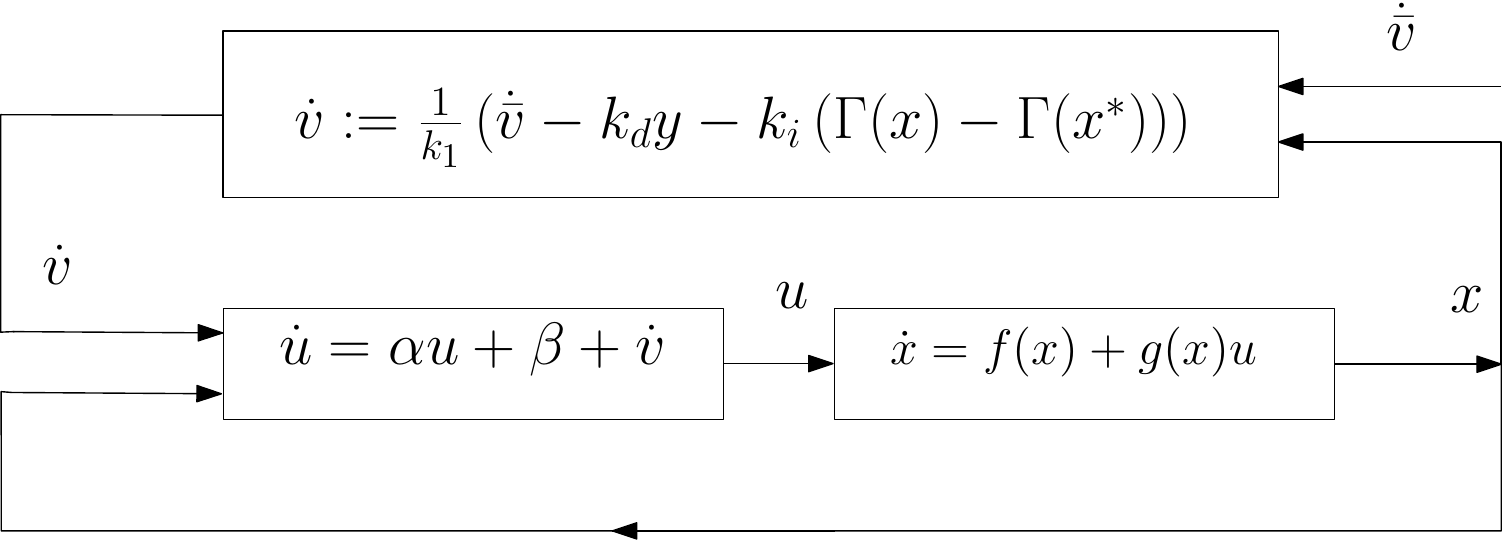}
	\caption{Interconnecting the controller \eqref{input_v_dyn} to dynamic state feedback system in Fig. \ref{fig:dyn_state_feedback}.}
	\label{fig:dyn_state_feedback_controller}
\end{figure}
	\begin{proposition}\label{porp::control}
	Suppose the system \eqref{gen_sys} together with \eqref{input_dyn} satisfies Assumptions \ref{ass::A1}-\ref{ass::A3}. We define the mapping $\dot v:\mathbb{R}^n\rightarrow\mathbb{R}^m$
	\begin{equation}\label{input_v_dyn}
	    \dot v:=\dfrac{1}{k_1}\left(\dot{\bar{v}}-k_d y-k_i \left(\Gamma(x)-\Gamma(x^{\ast})\right)\right)
	\end{equation}
	where $y=g^\top M \dot{x}$ and $\bar{v}\in \mathbb{R}^m$. Then the system of equation \eqref{gen_sys} and \eqref{input_dyn} are passive with port variables $\dot{\bar{v}}$ and $\dot{\Gamma}(x)$ (see Fig. \ref{fig:dyn_state_feedback_controller}). Further for $\dot{\bar{v}}=0$, the system is stable and $x^\ast$  as the stable equilibrium point. Furthermore if $y=0 \implies \lim_{t\rightarrow \infty}x(t)\rightarrow x^{\ast}$, then $x^\ast$ is asymptotically stable.
	\end{proposition}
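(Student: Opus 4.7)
The plan is to mimic the storage-function manipulation used in Proposition~\ref{prop::BM_pass_main_res}, but now with the Krasovskii-type term $\tfrac{1}{2}k_1\dot{x}^\top M\dot{x}$ playing the role of the ``plant'' storage and the quadratic term in $\Gamma(x)-\Gamma(x^\ast)$ playing the role of the ``controller'' storage. First I would differentiate $V_d$ in \eqref{clP_str} along closed-loop trajectories and handle the two summands separately. By Theorem~\ref{thm::main} the Krasovskii term satisfies $\tfrac{d}{dt}\bigl(\tfrac{1}{2}\dot{x}^\top M\dot{x}\bigr)\le \dot{v}^\top y$ with $y=g^\top M\dot{x}$. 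By Lemma~\ref{prop::output_integrability}, $\dot{\Gamma}=y$, hence $\tfrac{d}{dt}\bigl(\tfrac{1}{2}\|\Gamma(x)-\Gamma(x^\ast)\|^2_{k_i}\bigr)=(\Gamma(x)-\Gamma(x^\ast))^\top k_i\,y$. Combining,
\begin{eqnarray*}
\dot{V}_d &\le& k_1\dot{v}^\top y+(\Gamma(x)-\Gamma(x^\ast))^\top k_i\,y\\
&=& y^\top\bigl(k_1\dot{v}+k_i(\Gamma(x)-\Gamma(x^\ast))\bigr).
\end{eqnarray*}
Substituting the controller \eqref{input_v_dyn} collapses the bracket to $\dot{\bar v}-k_d y$, giving $\dot{V}_d\le \dot{\bar v}^\top y-k_d\|y\|^2\le \dot{\bar v}^\top y$, which is exactly passivity with port variables $\dot{\bar v}$ and $\dot{\Gamma}=y$.

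Next I would verify that $(x^\ast,u^\ast)$ is indeed an equilibrium of the closed loop with $\dot{\bar v}=0$. At $\dot{x}=0$ we have $y=0$, $\beta=-g^\top M\dot{x}=0$, and $\dot{g}=\partial_x g\,\dot{x}=0$, hence $\alpha=-(g^\top g)^{-1}g^\top\dot{g}=0$, so \eqref{input_dyn} yields $\dot{u}=\dot{v}$. Under \eqref{input_v_dyn} with $\Gamma(x)=\Gamma(x^\ast)$ and $y=0$ this gives $\dot{u}=0$, so the pair $(x^\ast,u^\ast)$ satisfying \eqref{Control_objective} is a genuine equilibrium.

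For Lyapunov stability with $\dot{\bar v}=0$ the inequality above reduces to $\dot{V}_d\le -k_d\|y\|^2\le 0$. Viewing the closed-loop state on the extended manifold carrying $(x,\dot{x})$ (equivalently $(x,u)$, since $\dot{x}$ is a smooth function of $(x,u)$), the Krasovskii term $\tfrac{1}{2}k_1\dot{x}^\top M\dot{x}$ is positive definite in $\dot{x}$ by Assumption~\ref{ass::A1}, while $\tfrac{1}{2}\|\Gamma(x)-\Gamma(x^\ast)\|^2_{k_i}$ is non-negative and, because $\nabla_x\Gamma=Mg$ is full column rank (since $M>0$ and $g$ has full column rank), strictly positive in directions transverse to the level set $\Gamma(x)=\Gamma(x^\ast)$. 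Hence $V_d$ attains its minimum on the equilibrium set and a standard Lyapunov argument gives stability.

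For the final claim, I would apply LaSalle's invariance principle: trajectories converge to the largest invariant set contained in $\{\dot{V}_d=0\}\subseteq\{y=0\}$. By hypothesis, $y\equiv 0$ along a closed-loop trajectory forces $x(t)\to x^\ast$, so the largest such invariant set is $\{x^\ast\}$ and asymptotic stability follows. The main obstacle I anticipate is subtle rather than computational: one has to be careful that $V_d$ really is a legitimate Lyapunov candidate on the augmented state space, because the Krasovskii contribution $\dot{x}^\top M\dot{x}$ is a function of $(x,u)$ rather than of $x$ alone, and one must justify that its positivity (from Assumption~\ref{ass::A1}) together with the coercivity of $\|\Gamma(x)-\Gamma(x^\ast)\|_{k_i}^2$ really locates the minimum at the desired equilibrium in the extended state. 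The integrability ensured by Assumption~\ref{ass::A3} and Lemma~\ref{prop::output_integrability} is the key ingredient that makes the cross-terms cancel cleanly and permits the PI-like interpretation of \eqref{input_v_dyn}.
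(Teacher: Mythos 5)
Your proposal is correct and follows essentially the same route as the paper's own proof: differentiate $V_d$, bound the Krasovskii term via Theorem~\ref{thm::main}, use $\dot{\Gamma}=y$ from Lemma~\ref{prop::output_integrability}, and substitute the controller \eqref{input_v_dyn} to obtain $\dot{V}_d\le \dot{\bar{v}}^\top y-k_d\,y^\top y$. The only difference is that you spell out the equilibrium verification (showing $\alpha$, $\beta$, and hence $\dot{u}$ vanish at $x^\ast$), the positivity of $V_d$ on the augmented state, and the LaSalle step in more detail than the paper, which treats these points tersely.
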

	\begin{proof}
	The time derivative of the closed loop storage function \eqref{clP_str} is
	\begin{eqnarray*}
	\dfrac{d}{dt}V_d&=& k_1\dot{V}+y^\top k_i(\Gamma(x)-\Gamma(x^\ast))\\
	&\leq & y^\top\left(k_1\dot{v}+ k_i(\Gamma(x)-\Gamma(x^\ast))\right)\\
	&\leq & y^\top \dot{\bar{v}}
	\end{eqnarray*}
		This proves that the closed loop system is passive with storage function $V_d$, input $\dot{\bar{v}}$ and output $y$. Further for $\dot{\bar{v}}=0$ we have 
	\begin{equation*}
	    \dot{V}_d\leq -k_d y^\top y
	\end{equation*}
	and at equilibrium $x=x^\ast$ we have $\dot{v}=0$, further using this in \eqref{input_dyn} we can show that $\dot u=0$. This implies $(x^\ast,u^\ast)$ satisfy the control objective \eqref{Control_objective}, further concluding that system \eqref{gen_sys} is asymptotically stable with Lyapunov function $V_d$ and $x^\ast$ as the equilibrium point.
	\end{proof}
	\begin{remark}
At the desired operating point one can show that  $\dot{u}-\alpha u-\beta=0$.  Hence, we have considered $\dot{u}= \alpha u+\beta +\dot{v} $, instead of $\dot{u}= \alpha u+\beta +v$ in equation \eqref{input_dyn}.
	\end{remark}
\begin{remark}
	Systems that are contracting always forget their initial conditions. That is, their final behaviour is always independent of the initial conditions. Hence, one need not worry about the initial conditions of the control input $u$ while implementing the control law \eqref{input_dyn} together with \eqref{input_v_dyn}.
\end{remark}
We now illustrate this methodology using building HVAC system in Example \ref{buildingzone}.
\begin{proposition}
The systems of equations \eqref{dyn} and \eqref{input_dyn}, with $\alpha$ and $\beta$ defined as 
\begin{eqnarray}\label{alphabeta}
\alpha=\begin{bmatrix}
\frac{\dot{T}_1}{(T_s-T_1)} &0\\0&\frac{\dot{T}_2}{(T_s-T_2)}
\end{bmatrix}\text{and}\;\; \beta=\begin{bmatrix}
c_p(T_1-T_s)\dot{T}_1\\c_p(T_2-T_s)\dot{T}_2
\end{bmatrix}
\end{eqnarray}
respectively, are passive with port variables $\dot{v}$ and $y$. where
\begin{eqnarray}
y(T)&=& c_p\begin{bmatrix}
\left(T_s-T_1\right)\dot{T}_1\\ \left(T_s-T_1\right)\dot{T}_2
\end{bmatrix}.
\end{eqnarray}
\end{proposition}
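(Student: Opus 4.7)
The plan is to interpret this proposition as a direct specialization of Theorem \ref{thm::main} to the two-zone building system. First I would rewrite \eqref{dyn} in the standard form \eqref{gen_sys} by setting $x=[T_1,T_2,T_3,T_4]^\top$, absorbing the capacitances so that
\[
g(x)=\mathrm{diag}(1/C_1,1/C_2,1/C_3,1/C_4)\,\tilde g(x),\qquad \tilde g(x)=\begin{bmatrix} c_p(T_s-T_1)&0\\0&c_p(T_s-T_2)\\0&0\\0&0\end{bmatrix},
\]
and letting $f(x)$ collect the RC-network terms. The natural candidate for the Riemannian metric is $M=\mathrm{diag}(C_1,C_2,C_3,C_4)$, so that $Mg(x)=\tilde g(x)$ becomes column-wise a gradient (each column depends only on the corresponding temperature), giving Assumption \ref{ass::A3} essentially for free, with $\Gamma(T)=-\tfrac{c_p}{2}[(T_s-T_1)^2,(T_s-T_2)^2]^\top$ — matching the formula already noted for $\Gamma$ in Example \ref{buildingzone}.

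Next I would check Assumption \ref{ass::A1}. With the above choice of $M$, one has $M\,\partial f/\partial x = A$, where $A$ is the linear RC-conductance matrix obtained by reading off the right-hand side of \eqref{dyn}. Because the resistors satisfy $R_{ij}=R_{ji}$, the matrix $A$ is symmetric, with diagonal entries given by the (negative) sum of all conductances incident to each node and off-diagonal entries equal to the pairwise conductances; this is a symmetric, weakly diagonally dominant Laplacian minus the strictly positive ambient-coupling terms on nodes $1$ and $2$, hence $A+A^\top=2A<0$. For Assumption \ref{ass::A2}, the left annihilator is simply $g^\perp=\left[\begin{smallmatrix}0&0&1&0\\0&0&0&1\end{smallmatrix}\right]$ (up to scaling), and since the only state-dependence of $g$ lives in its first two rows, $g^\perp\,\partial g/\partial x=0$ holds trivially.

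Now I would compute $\alpha$ and $\beta$ from the formulas in Theorem \ref{thm::main} and Lemma \ref{prop::alpha}. A direct calculation gives
\[
g^\top g=\mathrm{diag}\!\left(\tfrac{c_p^2(T_s-T_1)^2}{C_1^2},\tfrac{c_p^2(T_s-T_2)^2}{C_2^2}\right),\quad g^\top\dot g=-\mathrm{diag}\!\left(\tfrac{c_p^2(T_s-T_1)\dot T_1}{C_1^2},\tfrac{c_p^2(T_s-T_2)\dot T_2}{C_2^2}\right),
\]
so $\alpha=-(g^\top g)^{-1}g^\top\dot g=\mathrm{diag}\!\left(\dot T_1/(T_s-T_1),\,\dot T_2/(T_s-T_2)\right)$, exactly as in \eqref{alphabeta}. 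Similarly, $\beta=-g^\top M\dot x=-\tilde g^\top\dot x=[c_p(T_1-T_s)\dot T_1,\,c_p(T_2-T_s)\dot T_2]^\top$, again matching \eqref{alphabeta}. The output $y=g^\top M\dot x=\tilde g^\top\dot x$ is then exactly the vector given in the statement.

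Finally I would invoke Theorem \ref{thm::main} to conclude passivity from input $\dot v$ to output $y$ with the Krasovskii-type storage $V(x,\dot x)=\tfrac12\dot x^\top M\dot x=\tfrac12\sum_i C_i\dot T_i^{\,2}$. The main (and essentially only) obstacle is verifying $A<0$ for the assembled RC matrix rigorously, which I would handle by writing $-A$ as a weighted graph Laplacian of the two-zone network augmented with grounded ambient edges at nodes $1$ and $2$; positive-definiteness then follows from the presence of at least one grounded node in each connected component. Everything else is bookkeeping that transcribes the general theorem to the present example.
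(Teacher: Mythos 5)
Your proof is correct and follows essentially the same route as the paper's: choose $M=\mathrm{diag}(C_1,C_2,C_3,C_4)$, verify Assumptions \ref{ass::A1}--\ref{ass::A3} using the annihilator $g^{\perp}=\left[\begin{smallmatrix}0&0&1&0\\0&0&0&1\end{smallmatrix}\right]$, obtain $\alpha$ and $\beta$ from Lemma \ref{prop::alpha} and $\beta=-g^{\top}M\dot{x}$, and invoke Theorem \ref{thm::main} with the Krasovskii storage $\tfrac{1}{2}\dot{T}^{\top}M\dot{T}$. The only difference is that your grounded-Laplacian argument for $M\,\partial f/\partial x<0$ actually supplies a detail the paper merely asserts when checking Assumption \ref{ass::A1}.
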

\begin{proof}
Let $C=\text{diag}\left\{C_1,C_2,C_3,C_4\right\}$.
 One can prove that the system \eqref{dyn} satisfies Assumption \ref{ass::A1} given in equation \eqref{A1} by choosing $M=\text{diag}\left\{C_1,C_2,C_3,C_4\right\}$. \\
 The input matrix of \eqref{dyn} is $g(T)=[g_1(T),g_2(T)]$, where 
\begin{eqnarray*}
g_1(T)=\begin{bmatrix}
\dfrac{c_p}{C_1}(T_s-T_1)&0&0&0
\end{bmatrix}^\top,\;\;
g_2(T)=\begin{bmatrix}
0&\dfrac{c_p}{C_2}(T_s-T_2)&0&0
\end{bmatrix}^\top.
\end{eqnarray*}
Using left annihilator of $g(T)$, that is 
$$g^{\perp}(T)=\begin{bmatrix}0&0&1&0\\ 0&0&0&1\end{bmatrix}$$
one can show that
\begin{eqnarray}\label{inpu_matrix}
\begin{matrix}
g^{\perp}\dfrac{\partial g_1 }{\partial T}=0 &
g^{\perp}\dfrac{\partial g_2 }{\partial T}=0
\end{matrix}
\end{eqnarray}
Hence the input matrix $g(T)$ satisfies Assumption \ref{ass::A2}. Now, we can use lemma \ref{prop::alpha} and show that $\alpha$ takes the same form, given in \eqref{alphabeta}. Finally from Theorem \ref{thm::main}, using
 \begin{eqnarray}\label{Storage_fun_1}
V(T)&=&\dfrac{1}{2}\dot{T}^\top M\dot{T}\\
&=&\dfrac{1}{2}\left(C_1\dot{T}_1^2+C_2\dot{T}_2^2+C_3\dot{T}_3^2+C_4\dot{T}_4^2\right)\nonumber
\end{eqnarray}
as storage function, the system of equations \eqref{dyn}, together with input dynamics \eqref{input_dyn} given by
\begin{eqnarray}\label{udot}
\begin{matrix}
\dot{u}_1=\left(\dfrac{u_1}{(T_s-T_1)}-c_p(T_s-T_1)\right)\dot{T}_1+\dot{v}_1\\
\dot{u}_2=\left(\dfrac{u_2}{(T_s-T_2)}-c_p(T_s-T_1)\right)\dot{T}_1+\dot{v}_2
\end{matrix}
\end{eqnarray}
are passive with port variables $\dot{v}$ and $y$.
\end{proof}
Now we can consider $v=[v_1,v_2]^\top$ as input for the combined equations \eqref{dyn}, \eqref{udot} and provide a control strategy using Proposition \eqref{porp::control}. Consider $a_1=(T_1^\ast-T_s)^2$, $a_2=(T_2^\ast-T_s)^2$, $k_d\geq0$ and $k_i>0$.
\begin{proposition}
The state feedback controller
\begin{eqnarray}\label{Control_port}
\begin{matrix}
\dot{v}_1\hspace{-3mm}&=&\hspace{-3mm}-k_dc_p\left(T_s-T_1\right)\dot{T}_1+\dfrac{1}{2}k_ic_p\left(\left(T_s-T_1\right)^2-a_1\right)\\ 
\dot{v}_2\hspace{-3mm}&=&\hspace{-3mm}-k_dc_p\left(T_s-T_1\right)\dot{T}_2+\dfrac{1}{2}k_ic_p\left(\left(T_s-T_2\right)^2-a_2\right)
\end{matrix}
\end{eqnarray}
asymptotically stabilizes the system of equations \eqref{dyn} and \eqref{udot} to the operating point $(T^\ast,u^\ast)$ satisfying \eqref{Control_objective}.
\end{proposition}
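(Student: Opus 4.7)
The plan is to recognize that the stated controller (\ref{Control_port}) is precisely an instantiation of the generic dynamic-feedback controller (\ref{input_v_dyn}) from Proposition \ref{porp::control} with $k_1=1$ and $\dot{\bar v}=0$, so that asymptotic stability will follow once I verify the required structural hypotheses and then apply LaSalle. Concretely, I would proceed in four steps: (i) identify the integrated output $\Gamma$; (ii) rewrite (\ref{Control_port}) as (\ref{input_v_dyn}); (iii) invoke Proposition \ref{porp::control} to get stability and closed-loop passivity with storage function $V_d$ as in (\ref{clP_str}); (iv) perform a LaSalle argument to upgrade stability to asymptotic stability.

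For step (i), since the previous example has already established that Assumptions \ref{ass::A1}--\ref{ass::A2} hold with $M=\mathrm{diag}\{C_1,C_2,C_3,C_4\}$, I only need to check integrability of $Mg(T)$ (Assumption \ref{ass::A3}) and construct $\Gamma$. A direct computation gives $Mg_i(T)$ supported only in its $i$-th entry, equal to $c_p(T_s-T_i)$, so the candidate
\begin{equation*}
\Gamma_i(T) \;=\; -\tfrac{c_p}{2}(T_s-T_i)^2,\qquad i=1,2,
\end{equation*}
satisfies $\nabla_T \Gamma = Mg(T)$. Evaluating at the operating point yields $\Gamma_i(T^\ast)=-\tfrac{c_p}{2}a_i$, so $-k_i(\Gamma_i(T)-\Gamma_i(T^\ast))=\tfrac{k_ic_p}{2}\bigl((T_s-T_i)^2-a_i\bigr)$, which is exactly the second term of $\dot v_i$ in (\ref{Control_port}). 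Combining this with the output $y_i = c_p(T_s-T_i)\dot T_i$ from Theorem \ref{thm::main} reveals that (\ref{Control_port}) coincides with (\ref{input_v_dyn}) for the choices $k_1=1$ and $\dot{\bar v}=0$.

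Proposition \ref{porp::control} then immediately yields $\dot V_d \le -k_d\,y^\top y \le 0$, Lyapunov stability of $T^\ast$, and the candidate Lyapunov function $V_d = \tfrac12 \dot T^\top M \dot T + \tfrac12\|\Gamma(T)-\Gamma(T^\ast)\|_{k_i}^2$. It remains, per the final clause of Proposition \ref{porp::control}, to verify the LaSalle hypothesis: $y\equiv 0 \Rightarrow T(t)\to T^\ast$. From $y=0$ and $T_s\neq T_i$ (ensured by staying in a neighborhood of $T^\ast$), one gets $\dot T_1=\dot T_2=0$; substituting into the first two rows of (\ref{dyn}) fixes $u_1,u_2$ as algebraic functions of $(T_1,T_2,T_3,T_4)$. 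Invariance also forces $\dot v_1=\dot v_2=0$, and since $\dot T_1=\dot T_2=0$ the controller equation collapses to $(T_s-T_i)^2 = a_i$, which in a neighborhood of $T^\ast$ gives $T_i = T_i^\ast$. The remaining $T_3, T_4$ dynamics are then a linear, diagonally dominant conduction system driven by the fixed values $T_1^\ast,T_2^\ast$; it has $(T_3^\ast,T_4^\ast)$ as its unique exponentially stable equilibrium, so the largest invariant set inside $\{y=0\}$ reduces to the single point $T^\ast$.

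The main obstacle I anticipate is this last LaSalle step: one must be a bit careful to confine the argument to a neighborhood where $T_s - T_i \neq 0$ (so that $y_i=0$ actually implies $\dot T_i=0$) and where the square-root branch in $(T_s-T_i)^2=a_i$ selects $T_i=T_i^\ast$ rather than the spurious reflected value $2T_s - T_i^\ast$. Beyond that, the proof is essentially a verification that Assumptions \ref{ass::A1}--\ref{ass::A3} hold for this system and that the stated $\dot v_i$ matches (\ref{input_v_dyn}), after which Proposition \ref{porp::control} does all the work.
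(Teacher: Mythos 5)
Your overall strategy --- identify $\Gamma$, recognize \eqref{Control_port} as \eqref{input_v_dyn} with $k_1=1$, $\dot{\bar v}=0$, invoke Proposition \ref{porp::control}, then LaSalle --- is exactly the paper's first (sketched) route, and your steps (i)--(iii) are correct. The gap is in step (iv). Working only with the generic bound $\dot V_d \le -k_d\, y^\top y$, the zero-dissipation set gives you just $\dot T_1 = \dot T_2 = 0$, and your claim that ``invariance also forces $\dot v_1 = \dot v_2 = 0$'' is asserted, not proved --- yet this is precisely the nontrivial content of the LaSalle step. On that set $T_1,T_2$ are constant, so each $\dot v_i = \tfrac12 k_i c_p\bigl((T_s-T_i)^2-a_i\bigr)$ is a (possibly nonzero) constant; since $\dot u_i = \dot v_i$ there, you must rule out solutions in which $u_i$ ramps linearly in time while $T_3$ ramps along with it through the algebraic relation $u_1 c_p(T_s-T_1) = -(T_3-T_1)/R_{31} - (T_\infty - T_1)/R_{10}$ forced by $\dot T_1 \equiv 0$. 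This can be repaired: differentiating that relation shows $\dot T_3$ is constant, and the $T_3,T_4$ equations of \eqref{dyn} then give $\dot T_4=(1+R_{34}/R_{13})\dot T_3$ and $\dot T_3=(1+R_{34}/R_{42})\dot T_4$, which force $\dot T_3=\dot T_4=0$ and hence $\dot v_i=0$ --- but none of this is in your write-up. Your final step also conflates attractivity with invariance: exponential stability of the $(T_3,T_4)$ subsystem does not by itself shrink the largest invariant set to a point (the whole state space is invariant under a stable linear flow); what does the work is the constancy of $u_i$, which pins $T_3$, or a backward-boundedness argument that would in turn need the compactness LaSalle requires.

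The paper's own detailed proof sidesteps all of this: instead of quoting the generic proposition, it recomputes $\dot V_d$ along the closed loop \emph{keeping the conduction dissipation terms}, obtaining
\begin{equation*}
\dot V_d \le -\alpha\bigl((\dot T_1-\dot T_3)^2+(\dot T_2-\dot T_4)^2+(\dot T_3-\dot T_4)^2+\dot T_1^2+\dot T_2^2\bigr)
\end{equation*}
for some $\alpha>0$. Then $\dot V_d=0$ forces all four velocities to vanish pointwise, so on the invariant set every $T_i$ is constant, $u_1,u_2$ are constant from \eqref{dyn}, $\dot v_i=0$ from \eqref{udot}, and \eqref{Control_port} yields $T_1=T_1^\ast$, $T_2=T_2^\ast$ by pure algebra. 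That stronger dissipation inequality is the idea your proof is missing. On the positive side, you correctly flag two points the paper glosses over: the need for $T_s-T_i\neq 0$ so that $y_i=0$ implies $\dot T_i=0$, and the reflected root $T_i=2T_s-T_i^\ast$ of $(T_s-T_i)^2=a_i$, both of which are handled by restricting to a neighborhood of $T^\ast$.
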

\begin{proof}
With $M=\text{diag}\{C_1,C_2,C_3,C_4\}$ and input matrix $g(T)$ in \eqref{inpu_matrix}, one can verify Assumption \ref{ass::A3}. Hence from lemma  \ref{prop::output_integrability}, we can show that 
\begin{eqnarray}\label{gamma}
\Gamma(T)=-\dfrac{1}{2}c_p\begin{bmatrix}(T_1-T_s)^2\\(T_2-T_s)^2\end{bmatrix} 
\end{eqnarray}
satisfies $\dot{\Gamma}(T)=y(T)$. Further proof directly follows from Proposition \ref{porp::control} using $\Gamma(T)$ in \eqref{gamma}. It can also be proved by taking the time derivative of Lyapunov function \eqref{clP_str} along the trajectories of \eqref{dyn} and \eqref{udot} as shown below
\begin{eqnarray*}
\dot{V}_d&=&k_1\dot{T}^\top M\ddot{T}+k_i(\Gamma(T)-a)^\top \dot{\Gamma}(T)\\
&=& -\dfrac{k_1}{R_{13}}\left(\dot{T}_1-\dot{T}_3\right)^2-\dfrac{k_1}{R_{24}}\left(\dot{T}_2-\dot{T}_4\right)^2\\&&-\dfrac{k_1}{R_{34}}\left(\dot{T}_3-\dot{T}_4\right)^2-\dfrac{k_1}{R_{10}}\left(\dot{T}_1^2+\dot{T}_2^2\right)\\
&&+\dot{T}^\top M\dfrac{d}{dt}\left(g(T)u\right)+k_i(\Gamma(T)-a)^\top y(T)\\
&\leq &\dot{T}^\top\left(\dot{g}u+g\dot{u}\right)+k_i(\Gamma-a)^\top y\\
&= &\dot{T}^\top M\left(\dot{g}u+g(\alpha u+\beta+v)\right)+k_i(\Gamma-a)^\top y\\
&\leq &\dot{T}^\top M\left((\dot{g}+g\alpha)u+gv\right)+k_i(\Gamma-a)^\top y\\
&=& \dot{T}^\top Mg v+k_i(\Gamma-a)^\top y\\
&=& y^\top\left( v+k_i(\Gamma-a)\right)\\
&=& -k_dy^\top y.
\end{eqnarray*}
In step 2 and 4 we use system dynamics \eqref{dyn} and controller dynamics respectively. In step 5 we used $\dot{g}+g\alpha=0$ given in Proposition \ref{stab_cond}. Finally in step 6 we have used the control strategy \eqref{Control_port}. Now one can infer that there exist an $\alpha>0$, such that 
\begin{eqnarray*}
\dot{V}_d&\leq& -\alpha \left(\left(\dot{T}_1-\dot{T}_3\right)^2+\left(\dot{T}_2-\dot{T}_4\right)^2+\left(\dot{T}_3-\dot{T}_4\right)^2\right.\\&&\left.+\dot{T}_1^2+\dot{T}_2^2\right).
\end{eqnarray*}
 $\dot{V}_d=0$ implies $\dot{T}_1$, $\dot{T}_2$, $\dot{T}_3$ and $\dot{T}_4$ are identically zero. Using this in \eqref{dyn}, we get $u_1$ and $u_2$ as constant. From \eqref{udot} we get $v=0$, substituting this in \eqref{Control_port} we get that $T_1=T_1^\ast$, and $T_2=T_2^\ast$. Finally, we conclude the proof by invoking LaSalle's invariance principle.
\end{proof}
{\em Simulation results:}
The parameter values used for the simulation study are given in \cite{deng2010building}. The trajectories of zone temperatures for the two zone case is shown in Fig. \ref{fig:tres} and the effectiveness of controller is shown by zone temperatures reach their respective reference temperature values. The control inputs to the zones and the time evolution of port variables is shown in Fig. \ref{fig:control} and Fig. \ref{fig:ports}. Zone 2 needs higher control effort to reach reference temperature compared to zone 1 due to the higher difference in initial and reference values.
\begin{figure}[h!]
	\centering
\includegraphics[width=0.9\linewidth]{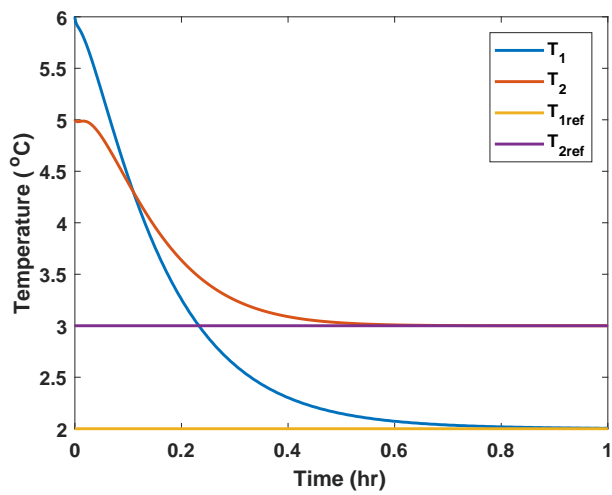}
	\caption{Trajectories of zone temperatures ($T_{\text{1ref}}=2.5$, $T_{\text{2ref}}=6$)}
	\label{fig:tres}
\end{figure}
\begin{figure}[h!]
	\centering
	\includegraphics[width=0.9\linewidth]{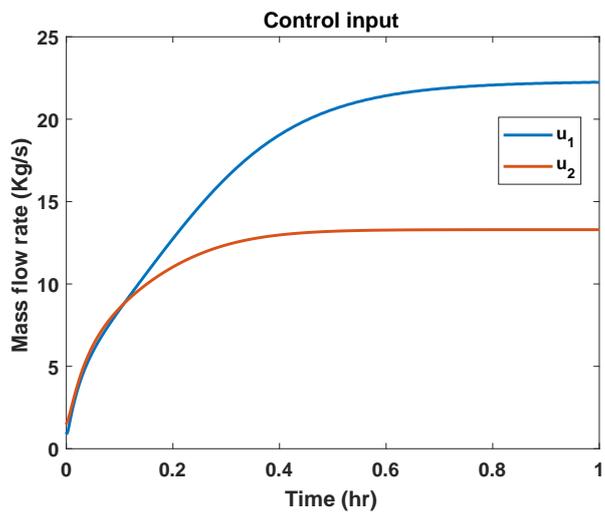}
	\caption{Time evolution of mass flow rate $u$.}
	\label{fig:control}
\end{figure}
\begin{figure}[h!]
	\centering
	\includegraphics[width=0.9\linewidth]{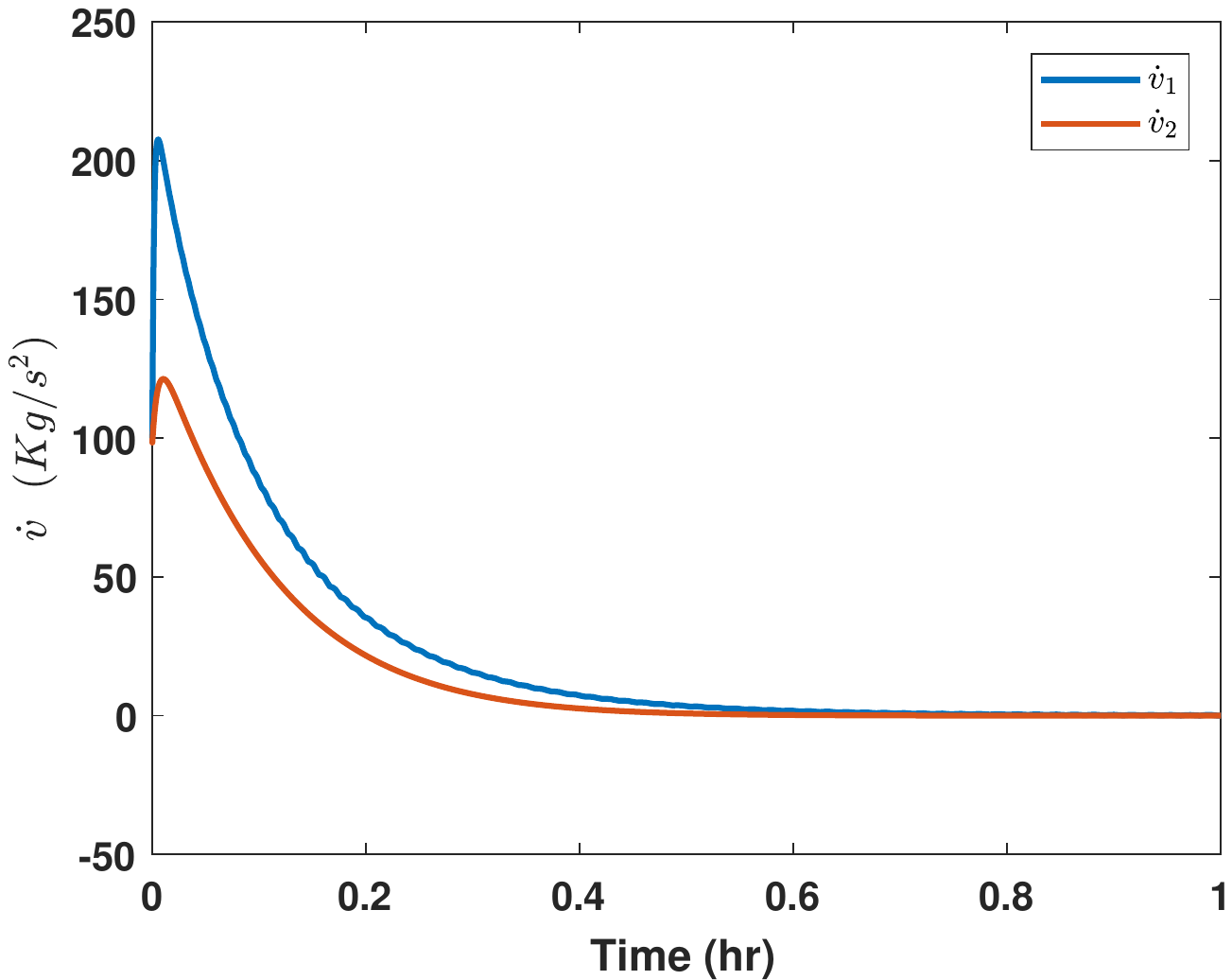}
	\caption{Time evolution of port variable $\dot{v}$ .}
	\label{fig:ports}
\end{figure}
\section{Final Remarks}
In this chapter, we  discuss the issues of finding closed-loop storage function and admissible pairs in control by power shaping. Firstly, we  present a methodology for constructing closed-loop storage function by utilizing the assumption that input matrix is integrable. Secondly, the need for finding admissible pairs is addressed by introducing storage functions similar to Krasovskii-type Lyapunov functions. The use of such storage functions has led to new passive maps, which are used for controller design. These passive maps have differentiation on both the port variables, hence the controller resulted also helped us avoid dissipation obstacle problem. 

%
%
\afterpage{\blankpage}
	\addtocounter{page}{1}%
\chapter{Infinite dimensional system}
 Modeling electrical networks in Brayton-Moser framework is a well-established theory \citep{BraMos1964I,BraMos1964II} and has proven useful in studying the Lyapunov stability of RLC networks. The formulation was extended in \citep{BraMir1964}, to the infinite-dimensional case where the authors developed a pseudo gradient framework to analyze the stability of a transmission line with non-zero boundary conditions. Later control theorists borrowed this framework to generate new passive maps \citep{jeltsema2003passivity,ortega2003power,Jeltsemaa06apower,jeltsema2003dual,DimCleOrtSch} when usual passive maps with energy as storage function render ineffective due to {\em pervasive dissipation} \citep{ElosiaDimOrt}.  Even though BM formulation is well established in finite dimensional systems, it is not  fully extended to the infinite dimensional case.  The existing literature on boundary control of infinite dimensional systems by energy shaping  (in the Hamiltonian case), deals with either lossless systems \citep{Hugo} or partially lossless systems as in \citep{AleCla}, and thus avoids dissipation obstacle issues. Recently in \citep{DimVan07}, the authors presented Brayton-Moser formulation of Maxwell's equations with zero boundary energy flows. However, the admissible pairs given impose restrictions on their spatial domain (such as $\norm{\frac{\partial }{\partial z}}\leq $ 1). \\The main contributions of this chapter are as follows:
\vspace{0.5cm}

\hspace{-0.5cm}\begin{minipage}{0.05\linewidth}
\vspace{-3.8cm}
(i)
\end{minipage}\begin{minipage}{0.95\linewidth}
 {\em {BM formulation:}} In this chapter, we first motivate the need for BM formulation by proving the existence of dissipation obstacle in infinite-dimensional systems using transmission line system as an example. Thereafter, we begin with Brayton-Moser formulation of port-Hamiltonian system defined using Stokes' Dirac structure. In the process, we present its Dirac formulation with a non-canonical bilinear form, similar to the finite dimensional case \citep{Guido}.
\end{minipage}
\vspace{0.5cm}

\hspace{-0.5cm}
\begin{minipage}{0.05\linewidth}
\vspace{-3.1cm}
(ii)
\end{minipage}\begin{minipage}{0.95\linewidth}
  {\em Zero boundary energy flows:} Analogous to the finite-dimensional system, identifying the underlying gradient structure of the system is crucial in analyzing the stability. 
 Therefore we identify alternative Brayton-Moser formulations called admissible pairs, that helps in the stability analysis, with Maxwell's equations as an example.
\end{minipage}

\hspace{-0.5cm}
\begin{minipage}{0.05\linewidth}
\vspace{-5.4cm}
(iii)
\end{minipage}\begin{minipage}{0.95\linewidth}
 {\em {Non-zero boundary energy flows and passivity:}} In case of infinite-dimensional systems with nonzero boundary energy flows, to find admissible pairs for the overall interconnected system, we have to find these admissible pairs for all individual  subsystems, that is, spatial domain and boundary, while preserving the interconnection between these subsystems. To illustrate this, we use the transmission line system (modeled by Telegrapher's equations) where the boundary is connected to a finite dimensional circuit at both ends. This ultimately leads to a new passive map with controlled current and derivatives of the voltage at boundary as port variables.
\end{minipage}
\vspace{0.5cm}

\hspace{-0.5cm}
\begin{minipage}{0.05\linewidth}
\vspace{-3.1cm}
(iv)
\end{minipage}\begin{minipage}{0.95\linewidth}
{\em Boundary control:} Using the new passive map, a {\em passivity based controller} is constructed to solve a boundary control problem (employing control by interconnection), where the original passive maps derived using energy as storage function does not work due to the existence of pervasive dissipation. The control objective is achieved by  generating Casimir functions of the overall systems. 
\end{minipage}
\vspace{0.5cm}

\hspace{-0.5cm}
\begin{minipage}{0.05\linewidth}
\vspace{-3.1cm}
(v)
\end{minipage}\begin{minipage}{0.95\linewidth}
 {\em {Alternative passive maps:}} The passive maps obtained from Brayton Moser formulation, as we have seen earlier in finite-dimensional systems (presented in Chapter 3), impose constraints on systems parameters. We therefore extend the alternative maps methodology developed in Chapter 3.2 (for infinite dimensional systems), and present boundary control methodology using Maxwell's equations.
\end{minipage}
\section{Motivation/Examples}\label{sec::moti}
In this section we show the existence of dissipation obstacle in infinite-dimensional systems, using transmission line system (with non-zero boundary conditions) as an illustrating example.

    \begin{example}\label{moti::example_TL}
    Let $0<z<1$ represent the spatial domain of the transmission line with $L$, $C$, $R$, and $G$ denoting the specific inductance, capacitance, resistance, and conductance respectively. We further assume that these are independent of the spatial variable $z$. Denote by $i(z,t)$ and $v(z,t)$ the line current and line voltage of the transmission line system. Consider the transmission line system (modeled using telegraphers equations) interconnected to the boundary as shown in Figure \ref{fig::abc}. The dynamics of this system are
	\begin{figure}[htp]
		\begin{center}
			\includegraphics[scale=.53]{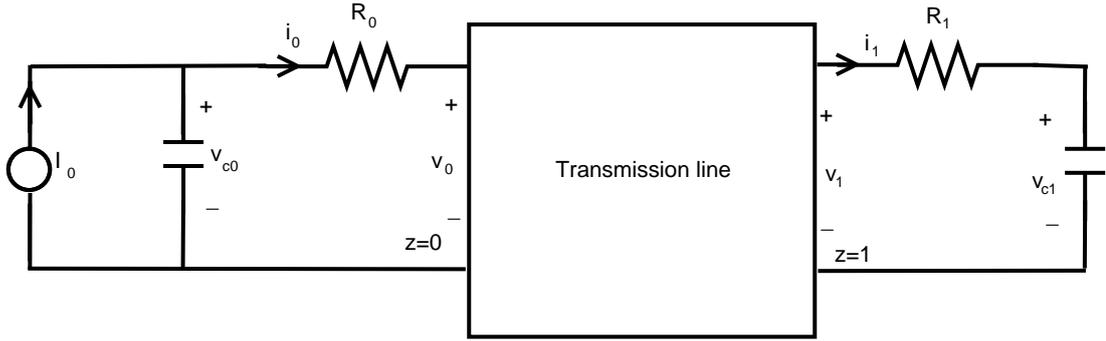} \caption{Transmission line system} \label{fig::abc}
		\end{center}
	\end{figure}
	\begin{align}
	\begin{matrix}
	-Li_t&=& v_z+Ri\\
	Cv_t&=& -Gv-i_z
	\end{matrix}\hspace{8mm}0<z<1\label{telegraphers}\\
	\begin{matrix}
	I_0&=& C_0v_{0t}+i_0\\
	v_0	&=& v_{C_0}-i_0R_0
	\end{matrix}\hspace{13mm}z=0\label{z0}\\
	\begin{matrix}
	i_1&=& C_1v_{C_1t}\\
	v_1&=& R_1i_1+v_{C_1}
	\end{matrix}\hspace{12mm}z=1.\label{z1}
	\end{align}
    	where $v_{C0}$ and $v_{C1}$ denote voltages across the capacitors $C_0$ and $C_1$ respectively and  $I_0$ represents the current source at $z=0$. Additionally, the boundary voltages and currents are denoted by $v_0=v(0,t)$, $i_0=i(0,t)$, $v_1=v(1,t)$ and $i_1=i(1,t)$.
\end{example}
   
   \begin{proposition}\label{prop::dissp}
	The transmission line system described by (\ref{telegraphers}-\ref{z1}) cannot be stabilized \kcn{at} any non-trivial equilibrium point with passive maps obtained by using the total energy
	\beq\label{moti::energy}
	E=\dfrac{1}{2}\int_{0}^{1}\left(Li^2+Cv^2\right)dz+\dfrac{1}{2}C_0v_{c_0}^2+\dfrac{1}{2}C_1v_{c_1}^2
	\eeq
	as the storage function.
\end{proposition}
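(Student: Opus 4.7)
The strategy mirrors the finite-dimensional dissipation-obstacle argument from Example \ref{moti::example_pRLC}: compute the power balance along trajectories, identify the natural passive port, and show that the supplied power cannot vanish at a non-trivial equilibrium.

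\emph{Step 1: Power balance in the spatial domain.} I would differentiate $E$ in \eqref{moti::energy} along the trajectories of \eqref{telegraphers}--\eqref{z1}. Substituting the telegraphers equations into $\int_0^1 (L i\, i_t + C v\, v_t)\,dz$ and recognising that $i v_z + v i_z = (iv)_z$, integration by parts yields
\begin{equation*}
\int_0^1 (L i\, i_t + C v\, v_t)\,dz = -\int_0^1 (R i^2 + G v^2)\,dz + i_0 v_0 - i_1 v_1.
\end{equation*}

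\emph{Step 2: Power balance at the boundary.} I would next use the boundary relations \eqref{z0} and \eqref{z1} to simplify the two boundary contributions. From \eqref{z0}, $C_0 v_{C_0}\dot v_{C_0} = v_{C_0}(I_0 - i_0) = (v_0 + i_0 R_0)(I_0 - i_0)$, which equals $I_0 v_{C_0} - i_0 v_0 - R_0 i_0^2$. From \eqref{z1}, $C_1 v_{C_1}\dot v_{C_1} = v_{C_1} i_1 = (v_1 - R_1 i_1) i_1 = v_1 i_1 - R_1 i_1^2$. Adding these to the spatial part, the boundary terms $i_0 v_0$ and $i_1 v_1$ cancel and one obtains the dissipation inequality
\begin{equation*}
\dot E = -\int_0^1 (R i^2 + G v^2)\,dz - R_0 i_0^2 - R_1 i_1^2 + I_0 v_{C_0} \le I_0 v_{C_0}.
\end{equation*}
This exhibits the natural passive map with input $I_0$, output $v_{C_0}$, and storage function $E$.

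\emph{Step 3: The obstacle at a non-trivial equilibrium.} At any equilibrium $(i^\ast,v^\ast,v_{C_0}^\ast,v_{C_1}^\ast,I_0^\ast)$ we have $\dot E = 0$, so
\begin{equation*}
I_0^\ast v_{C_0}^\ast = \int_0^1 \bigl(R (i^\ast)^2 + G (v^\ast)^2\bigr)\,dz + R_0 (i_0^\ast)^2 + R_1 (i_1^\ast)^2.
\end{equation*}
For any non-trivial equilibrium the right-hand side is strictly positive, hence $I_0^\ast v_{C_0}^\ast>0$. Integrating \eqref{moti::energy} over $[t_0,t_1]$ shows that the energy supplied by the source grows without bound as $t_1\to\infty$, exactly reproducing the dissipation-obstacle mechanism illustrated in Remark following \eqref{ex2.1::diss_ineq}. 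A controller interconnected via the port $(I_0, v_{C_0})$ that is itself passive has bounded available energy, and therefore cannot sustain the required unbounded supply; hence stabilisation at a non-trivial equilibrium using $E$ as the storage function is impossible.

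\emph{Main obstacle.} The calculation itself is routine; the only non-trivial point is making sure the boundary bookkeeping is correct so that the spatial boundary terms $i_0 v_0$ and $i_1 v_1$ exactly cancel the contributions coming from $C_0 v_{C_0}\dot v_{C_0}$ and $C_1 v_{C_1}\dot v_{C_1}$, leaving a clean expression in the chosen port variables $(I_0,v_{C_0})$ together with strictly dissipative quadratic terms. Once this is done, the non-vanishing of the supplied power at any non-trivial equilibrium follows immediately from positivity of $R$, $G$, $R_0$, $R_1$.
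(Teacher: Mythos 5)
Your proof is correct, and it arrives at the same passive map $\dot E \le I_0 v_{C_0}$ as the paper, but it settles the decisive step by a genuinely different route. The paper, after stating the inequality \eqref{mot_pmaps}, writes down the equilibrium relations (\ref{eq_telegraphers}--\ref{eq_z1}) and then \emph{explicitly solves} the equilibrium boundary value problem, obtaining the hyperbolic profiles $i^\ast(z)=\tfrac{G}{\omega}v_{C_1}^\ast\sinh(\omega(1-z))$, $v^\ast(z)=v_{C_1}^\ast\cosh(\omega(1-z))$ with $\omega=\sqrt{RG}$ in \eqref{sol_trans_eq}, and from these closed-form expressions concludes $I_0^\ast v_{C_0}^\ast\neq 0$. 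You instead retain the exact power balance $\dot E=-\int_0^1(Ri^2+Gv^2)\,dz-R_0i_0^2-R_1i_1^2+I_0v_{C_0}$ and evaluate it at an equilibrium, where $\dot E=0$ forces the supply rate to equal the total dissipated power; positivity of $R,G,R_0,R_1$ then yields $I_0^\ast v_{C_0}^\ast>0$ without ever solving the PDE. Your route is more elementary and makes the pervasive-dissipation mechanism transparent; the paper's route buys explicit equilibrium profiles, which in particular demonstrate that non-trivial equilibria actually exist (parametrized by $v_{C_1}^\ast$) and quantify the supply rate concretely. One small point you should add to Step 3: ``non-trivial equilibrium'' must be seen to imply that the distributed variables $(i^\ast,v^\ast)$ are not identically zero, or else the dissipation integral could vanish. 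This is immediate from \eqref{eq_telegraphers}: $i^\ast\equiv 0$ forces $Gv^\ast=-i^\ast_z=0$, hence $v^\ast\equiv 0$, and then \eqref{eq_z0}, \eqref{eq_z1} give $v_{C_0}^\ast=v_{C_1}^\ast=0$, i.e.\ the trivial equilibrium; so your strict-positivity claim holds, but it deserves this one-line justification. (Also, in Step 3 you mean integrating the power balance, not \eqref{moti::energy}, over $[t_0,t_1]$.)
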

\begin{proof}Differentiating \eqref{moti::energy} along the trajectories of (\ref{telegraphers}-\ref{z1}), 
we arrive at the following inequality
	\beq\label{mot_pmaps}
	\dot E \leq I_0 v_{C_0}.
	\eeq
	{\em Equilibrium points:} At equilibrium, equations \ (\ref{telegraphers}-\ref{z1}) evaluate to
		\begin{align}
		\begin{matrix}
		i^\ast_z +Gv^\ast&=&0, \hspace{2mm}& Ri^\ast +v^\ast_z&=&0
		\end{matrix}\hspace{10mm}0<z<1\label{eq_telegraphers}\\
		\begin{matrix}
		I_0^\ast &=& i_0^\ast, \hspace{2mm} & v_0^\ast	&=& v_{C_0}^\ast-i_0^\ast R_0
		\end{matrix}\hspace{11mm}z=0\label{eq_z0}\\
		\begin{matrix}
		i_1^\ast &=& 0, \hspace{2mm} & v_1^\ast&=& v_{C_1}^\ast
		\end{matrix}\hspace{12mm}z=1.\label{eq_z1}
		\end{align}
		Finally, solving the partial differential equations in \eqref{eq_telegraphers}, and using the boundary conditions \eqref{eq_z0} and \eqref{eq_z1}, the solution for $i^\ast(z), v^\ast(z)$ takes the form
		\beq
		\begin{matrix}\label{sol_trans_eq}
			i^\ast(z)=\dfrac{G}{\omega}v_{C_1}^\ast \sinh(\omega (1-z)), &
			v^\ast(z)= v_{C_1}^\ast \cosh(\omega (1-z)
		\end{matrix}
		\eeq
		where $\omega =\sqrt{RG}$.
		Using equations  (\ref{eq_z0}-\ref{sol_trans_eq}) it can be shown that the supply rate $I_0^\ast v_{C_0}^\ast  \neq 0$ at equilibrium. This implies \kcn{that at} the equilibrium, the system extracts infinite energy from the controller, thus proving the existence of dissipation obstacle \cite{ortega2001putting}. 
\end{proof}
This problem can be circumvented either by relaxing the assumption that controller has to be passive \citep{KoopDim} or by finding new passive maps \citep{ElosiaDimOrt,venkatraman2009energy,venkatraman2010energy}. In this chapter, we use the latter approach. 
It can be seen from \eqref{mot_pmaps} that ``adding a differentiation" on the output port variable obviates the dissipation obstacle. Recall from Chapter 3, that the port-variables realized from Brayton-Moser framework has this property. 
We hence start with Brayton-Moser formulation of an infinite-dimensional port-Hamiltonian system and derive their admissible pairs, which aids in establishing stability.
%
%
%
%
\section{The Brayton-Moser formulation}
\label{Sec:: From inf dim pH-BM}
In this section\footnote{The notation used in this section, is introduced in Chapter 2.5.}, we present Brayton-Moser formulation of infinite-dimensional port-Hamiltonian system \eqref{pH_2} defined using Stokes' Dirac structure \eqref{ph_2_stokes}, thereby giving its Dirac formulation with a non-canonical bilinear form (refer \citep{Guido} for the finite dimensional equivalent).
To begin with, we assume that the mapping from the energy variables $(\alpha_p, \alpha_q)$ to the co-energy variables
$(e_p, e_q) = (\delta_p H, \delta_q H)$ is invertible. This means the inverse transformation from the co-energy variables
to the energy variables can be written as $(\alpha_p, \alpha_q) =( \delta_{e_p}H^{\ast}, \delta_{e_q}H^{\ast})$.
$H^{\ast}$ is the co-energy of $H$ obtained by $H^{\ast}(e_p, e_q) = \int_Z \left ( e_p \wedge \alpha_p + e_q \wedge \alpha_q\right ) - H(\alpha_p, \alpha_q) $.
Further, assume that the Hamiltonian $H$ splits as $H(\alpha_p, \alpha_q) = H_p(\alpha_p)+ H_q(\alpha_q)$, with the co-energy variables given by
$e_p = \delta_p H_p,~ e_q = \delta_q H_q$. Consequently the co-Hamiltonian can also be split as $H^{\ast}(e_p, e_q) = H_p^{\ast}(e_p) + H_q^{\ast}(e_q)$. We can now rewrite the spatial dynamics of the infinite-dimensional port-Hamiltonian system, in terms of the co-energy variables as
\beq
\begin{bmatrix}
	\ast \delta^2_{e_p} H^{\ast} & 0 \\
	0 & \ast \delta^2_{e_q} H^{\ast}
\end{bmatrix}
\left[\begin{matrix}
	-\frac{\partial e_p}{\partial t} \\ -\frac{\partial e_q}{\partial t}
\end{matrix}\right]&=&\begin{bmatrix}
\ast G & (-1)^r\mathrm{d}\\\mathrm{d} & \ast R
\end{bmatrix}\begin{bmatrix}
e_p\\ e_q
\end{bmatrix}.
\label{proof_Mix_1}
\eeq
For simplicity, we assume that the relation between the energy and co-energy variables is linear and is given as
\begin{equation}
\alpha_p = \ast \epsilon \;  e_p \; \text{and}\;  \alpha_q = \ast \mu\; e_ q \label{rel_e,co-e} 
\end{equation}
where $\mu(=\delta^2_{e_q} H^{\ast})$, $\epsilon(= \delta^2_{e_p} H^{\ast}) \in \mathbb{R}$. Applying \djc{the} Hodge star \djc{operator to} both sides \djc{of} \eqref{proof_Mix_1} and arranging terms using \eqref{rel_e,co-e}, \djc{we get}
\beq-\epsilon \dot{e}_p &=& \ast \left((-1)^{r} \mathrm{d} e_q+G  \ast  e_p\right)(-1)^{(n-p)\times p},\nonumber \\
-\mu \dot{e}_q &= &\ast \left(\mathrm{d}e_p+R \ast e_q\right)(-1)^{(n-q)\times q}. \label{proof_Mix_a} \eeq
Next, we find a mixed-potential function $P=\int_Z\text{P}(e_p,e_q)$ such that \eqref{proof_Mix_a} can take the pseudo-gradient structure \citep{DimVan07}. \\
{\em The lossless case:} We  first consider the case of a system \djc{that} is lossless, that is, when $R$ and $G$ are identically equal to zero in \eqref{pH_2}. To begin with, we also neglect the boundary terms by setting them to zero. Define $P$ to be a functional of the form $P=\int_Z\text{P}(e_p,e_q)$, where
\beq
\text{P}(e_p,e_q):= e_q\wedge de_p. 
\eeq 
Its variation is given as
\begin{align*} 
\delta P &= 
\int_Z\left(\text{P}(e_p+\partial e_p, e_q+\partial e_q)-\text{P}(e_p,e_q)\right)
= \int_Z\left(e_q \wedge \mathrm{d}\partial e_p +\partial e_q \wedge \mathrm{d} e_p +\cdots\right). 
\end{align*}
Using the relation  $e_q\wedge \mathrm{d} \partial e_p=(-1)^{pq}\partial e_p \wedge \mathrm{d}e_q+(-1)^{n-q}\mathrm{d}\left(e_q \wedge \partial e_p\right)$, and the identity $\eqref{delta}$, we have 
$$ 
\delta_{e_q}P=\mathrm{d}e_p(-1)^{(n-q)\times q}, ~~ \delta_{e_p}P=(-1)^{pq}\mathrm{d}e_q(-1)^{(n-p)\times p}.
$$
Finally, utilizing the above equation; \eqref{proof_Mix_1} can be rewritten in the BM-type as
\beq 
\begin{bmatrix}
	-\mu & 0\\ 0 & \epsilon
\end{bmatrix}\frac{\partial}{\partial t}\begin{bmatrix}
e_q\\ \ e_p
\end{bmatrix}&=& \begin{bmatrix}
\ast \delta_{e_q}P\\ \ast \delta_{e_p}P
\end{bmatrix}.
\label{proof_Mix}
\eeq
{\em Including dissipation:} One may allow for dissipation by defining the content and co-content functions as follows. Consider instead a functional $P=\int_Z\text{P}$ defined as
\beq
\text{P}(e_p,e_q)= e_q \wedge \mathrm{d}e_p+\underbrace{\text F(e_q)\text{Vol}}_\text{content}- \underbrace{\text G(e_p)\text{Vol}}_\text{co-content} \label{mix_pot_1}
\eeq
where $\text{Vol}\in \Omega^n(Z)$ such that $\int_Z\text{Vol}\wedge \ast \text{Vol} =1$, the content $\text F(e_q)$ and the co-content $\text G(e_p)$ functions are defined respectively as
\begin{eqnarray}
\begin{matrix}
\text F(e_q)&=&\int_0^{e_q}\left<\hat{e}_p(e_q^{'}),de_q^{'}\right>, &
\text G(e_p)&=&\int_0^{e_p}\left<\hat{e}_q(e_p^{'}),de_p^{'}\right> 
\end{matrix}
\end{eqnarray}
where the inner product $\left<\cdot ,\cdot\right>$ is induced by the Riemannian metric defined on $Z$. In the case of linear dissipation  \eqref{pH_2}, that is $\hat{e}_p(e_q)=Re_q$ and $\hat{e}_q(e_p)=Ge_p$ we have
\beq
\text{P}(e_p,e_q)&=&e_q \wedge \mathrm{d}e_p+\int_0^{e_q}\left<Re_q^{'},de_q^{'}\right>\text{Vol}-\int_0^{e_p}\left<Ge_p^{'},de_p^{'}\right>\text{Vol} \nonumber\\
&=&e_q \wedge \mathrm{d}e_p+\dfrac{1}{2}\left<Re_q,e_q\right>\text{Vol}-\dfrac{1}{2}\left<Ge_p,e_p\right>\text{Vol} \nonumber\\
&=& e_q \wedge \mathrm{d}e_p+\underbrace{\frac{1}{2}R e_q \wedge \ast e_q}_\text{content}- \underbrace{\frac{1}{2}G e_p \wedge \ast e_p}_\text{co-content} \label{mix_pot}
\eeq
where in the third step we have used \eqref{eqn::G}. The variation in $P$  is computed as
\beqn
\delta P &=&\hspace{-1mm} \int_Z\left(e_q \wedge \mathrm{d}\partial e_p +\partial e_q \wedge \mathrm{d} e_p + \frac{1}{2}(e_q \wedge R \ast \partial e_q+\partial e_q \wedge \ast e_q)\right.\\&&- \left.\frac{1}{2}(e_p \wedge G \ast \partial e_p+\partial e_p \wedge \ast e_p\right) \\
&=& \hspace{-1mm}\int_Z\left( \partial e_q \wedge \mathrm{d}e_p+\partial e_p \wedge (-1)^{pq} \mathrm{d} e_q+ \frac{1}{2}(e_q \wedge R \ast \partial e_q\hspace{-1mm}+\hspace{-1mm}\partial e_q \wedge \ast e_q)\hspace{-1mm}\right. \\&&\left.- \frac{1}{2}(e_p \wedge G \ast \partial e_p\hspace{-1mm}+\hspace{-1mm}\partial e_p \wedge \ast e_p)+(-1)^{n-q}\mathrm{d}\left(e_q \wedge \partial e_p\right) \right)\\
&=&\int_Z \partial e_q \wedge \left(\mathrm{d} e_p+R \ast e_q \right)+\partial e_p \wedge \left((-1)^{pq} \mathrm{d} e_q-G \ast e_p\right)\\&&+(-1)^{n-q}\int_{\partial Z}\left(e_q \wedge \partial e_p\right)
\eeqn
where we have used the relation  $e_q\wedge \mathrm{d} \partial e_p=(-1)^{pq}\partial e_p \wedge \mathrm{d}e_q+(-1)^{n-q}\mathrm{d}\left(e_q \wedge \partial e_p\right)$, together with properties of the wedge and the Hodge star operator defined in \eqref{Ab} and \eqref{Ad}. 
Finally, by making use of  \eqref{delta} we can write 
\beq
\begin{bmatrix}
	\delta_{e_p}P\\ \delta_{e_q}P\\\delta_{e_p|_{\partial z}}P\\ \delta_{e_q|_{\partial z}}P
\end{bmatrix} &=& \begin{bmatrix}
\left((-1)^{pq} \mathrm{d} e_q-G \ast e_p\right)(-1)^{(n-p)\times p}\\(\mathrm{d}e_p +R\ast e_q)(-1)^{(n-q)\times q}\\(-1)^{n-q}e_q|_{\partial z}\\0
\end{bmatrix}.
\label{proof_Mix_2}
\eeq
The system of equations (\ref{proof_Mix_1}) can be written in a concise way, similar to \eqref{proof_Mix} as
\begin{align}
A \djc{u_t} = \ast \delta_u P
\label{eq_BMI}
\end{align}
where $u=\djc{[}e_p,e_q\djc{]}^\top$ and 
$ A = \djc{\text{diag}(\epsilon,-\mu)}$. Note that if the linearity between energy and co-energy variables is not assumed \eqref{rel_e,co-e} then $A$ takes the form $\text{diag}(-\delta^2_{e_q} H^{\ast},\delta^2_{e_p} H^{\ast})$.\\
{\em Including boundary energy flow:} The system of equations \eqref{pH_2} together with boundary terms can be rewritten as
\beq \label{BM_boundary}
\begin{matrix}
	\mathcal{A}U_t&=&\ast \delta_UP+B\ast e_b\\
	\dot{f}_b&=&B^\top U_t\left(=\dot{e}_p|_{\partial z}\right)
\end{matrix}
\eeq
where $U=[u;u|_{\partial z}]$, $B=[O_1\; I\; O_2]^\top$ and $\mathcal{A}=diag\{A,O_3\}$ with $O_1,\;O_2,\;O_3$ denoting zero matrices of order $(n+1\times n-q),\; (n-p \times n-q),\; (n+1 \times n+1)$ respectively and $I$ identity matrix of order $(n-q)$.
\subsection {\kcw{The Dirac formulation}}\label{subsec::dirac formulation}
 In this section, we aim to find an equivalent Dirac structure formalism of the Brayton-Moser equations of the infinite-dimensional system \eqref{BM_boundary},  (for an overview of Dirac structure of infinite dimensional systems we refer to \citep{YanhanMas}). As we shall see such a formulation would result in a noncanonical Dirac structure. 
Denote by $f  \in \mathcal F:=\Omega^{n-p}(Z)\times\Omega^{n-q}(Z)\times\Omega^{n-p}(\partial Z)\times\Omega^{n-q}(\partial Z)$ as the space of flows and $e \in \mathcal{E}:= \mathcal{F}^\ast$, as the space of effort variables.
\begin{theorem}\label{noncan_dirac}
    Consider the following subspace
\begin{align}\label{Dirac_gen_struc}
\mathcal D = \left \{(f,f_y, e,e_u) \in \mathcal F   \times \mathcal Y\times \mathcal E \times \mathcal S :  -\mathcal A f =\ast  e+Be_u,~f_y=\ast B^\top f\right \}
\end{align}
where $\mathcal{S}$, $\mathcal{Y}$ represents space of port variables  $e_{u}$ and $f_y$ respectively defined on $\partial Z$. 
The subspace $\mathcal{D}$ constitutes a noncanonical Dirac structure, that is $\mathcal{D}=\mathcal{D}^\perp$, $\mathcal{D}^{\perp}$ is the orthogonal complement of $\mathcal{D}$ with respect to the bilinear form\\
$<<(f^1,f_{y}^1,e^1,e_u^1),(f^2,f_{y}^2,e^2,e_u^2) >>$
\beq
\hspace{-6mm}= &\hspace{-2mm}\left<e^1|f^2\right>+ \left<e^2|f^1\right>+ \int_{\left(Z+\partial Z\right)}   \left (f^1 \wedge \ast \mathcal A  f^2+f^2 \wedge \ast \mathcal A  f^1 \right ) + \left<e_u^1|f_y^2\right>+\left<e_u^2|f_y^1\right>
\label{bileniarform}
\eeq
\kc{	where $\mathcal{A}:\mathcal F\rightarrow \mathcal{F}$, for $i=1,2~$;
$
	\begin{matrix}
		f^i  \in \mathcal F,  &f_y^i\in  \mathcal Y ,&	e^i  \in \mathcal E, &  e_{u}^i\in  \mathcal S 
	\end{matrix}.
	$}
\end{theorem}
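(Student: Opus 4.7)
The plan is to verify $\mathcal{D}=\mathcal{D}^{\perp}$ by establishing the two inclusions separately, in the same spirit as the finite-dimensional noncanonical Dirac formulation of Section 2.4.3, see \eqref{bileniarform_fin_dim}--\eqref{fin_dim_sec_2.43}. The only novelty relative to that reference is the appearance of the Hodge star and of boundary forms over $\partial Z$, so the calculation must be carried out in the language of differential forms rather than of vectors, and the pairing $\int_{Z+\partial Z}\alpha\wedge\beta$ introduced in \eqref{wedge_operation} plays the role of the Euclidean inner product.

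First, to show $\mathcal{D}\subseteq \mathcal{D}^{\perp}$, I would fix two elements $(f^{i},f_y^{i},e^{i},e_u^{i})\in \mathcal{D}$, $i=1,2$, and substitute the defining relations $-\mathcal{A}f^{i}=\ast e^{i}+Be_u^{i}$ and $f_y^{i}=\ast B^{\top}f^{i}$ into the bilinear form \eqref{bileniarform}. Using the wedge-commutation identity $\alpha\wedge\beta=(-1)^{kl}\beta\wedge\alpha$ and the involution property $\ast\ast\alpha=(-1)^{k(n-k)}\alpha$ recalled in \eqref{Ac}, the terms $\int_{Z+\partial Z}f^{i}\wedge\ast\mathcal{A}f^{j}$ can be rewritten so that the contributions coming from $\ast\ast e^{j}$ exactly cancel the pairings $\langle e^{j}|f^{i}\rangle$, while the contributions coming from $\ast Be_u^{j}$ cancel against $\langle e_u^{j}|f_y^{i}\rangle=\langle e_u^{j}|\ast B^{\top}f^{i}\rangle$ by the adjointness of $B$ and $B^{\top}$ with respect to the duality pairing on $\mathcal{F}$ and its boundary counterpart. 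This mirrors the one-line simplification \eqref{bileniarform_fin_dima} in the finite-dimensional case, now performed over $Z+\partial Z$.

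Next, for the reverse inclusion $\mathcal{D}^{\perp}\subseteq \mathcal{D}$, I would take a candidate $(\bar f,\bar f_y,\bar e,\bar e_u)$ annihilating every element of $\mathcal{D}$ under $\langle\langle\cdot,\cdot\rangle\rangle$. The idea is to feed in judiciously chosen test elements of $\mathcal{D}$: first, pick elements with $e_u=0$ and interior effort $e$ ranging over a dense subspace of $\mathcal{E}$, producing the corresponding $f$ via $f=-\mathcal{A}^{-1}\ast e$ on the non-degenerate block of $\mathcal{A}$, to recover $-\mathcal{A}\bar f=\ast\bar e+B\bar e_u$ from the non-degeneracy of the interior wedge pairing; then vary $e_u$ independently and appeal to Stokes' theorem to separate the boundary contribution and extract the remaining constraint $\bar f_y=\ast B^{\top}\bar f$.

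The main obstacle is this second step, because in infinite dimensions the finite-dimensional dimension count $\dim\mathcal{D}=\dim\mathcal{F}$ cannot be invoked directly; one must instead produce a rich enough class of test elements in $\mathcal{D}$ and lean on the non-degeneracy of $\int_{Z+\partial Z}\alpha\wedge\beta$ on the relevant pairs of form spaces. A related subtlety is that $\mathcal{A}=\mathrm{diag}\{A,O_{3}\}$ is block-diagonal with a zero block on the boundary components of $\mathcal{F}$, so one has to be careful to invert $\mathcal{A}$ only where it is invertible and to treat the boundary degrees of freedom separately via the second defining equation, so that interior and boundary constraints are isolated cleanly without spurious cross-terms from integration by parts.
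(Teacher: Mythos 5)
Your first inclusion $\mathcal{D}\subseteq\mathcal{D}^{\perp}$ proceeds exactly as in the paper: substitute the defining relations of both elements into the bilinear form and cancel terms using the identities $f\wedge\ast\ast e=e\wedge f$ and $f\wedge\ast Be_u=e_u\wedge\ast B^{\top}f$ from \eqref{Ac}--\eqref{Ad}. That part is sound.

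The genuine gap is in the reverse inclusion, where you parametrize the test elements of $\mathcal{D}$ by the effort $e$, setting $e_u=0$ and $f=-\mathcal{A}^{-1}\ast e$. This is backwards, and it fails precisely at the point you flag: in the intended application $\mathcal{A}=\text{diag}\{A,O_3\}$ has a zero block, so $\mathcal{A}^{-1}$ does not exist; worse, with $e_u=0$ the defining relation $-\mathcal{A}f=\ast e$ forces $\ast e\in \mathrm{ran}\,\mathcal{A}$, so the achievable efforts are confined to a proper closed subspace of $\mathcal{E}$ (they vanish identically on the components where $\mathcal{A}$ does) --- they are \emph{not} dense, and no density argument can recover the constraint on those components. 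The resolution, which is what the paper's proof uses, is that $\mathcal{D}$ is freely parametrized by the pair $(f,e_u)$ rather than by $e$: for arbitrary $f^2\in\mathcal{F}$ and $e_u^2\in\mathcal{S}$, setting $\ast e^2:=-\mathcal{A}f^2-Be_u^2$ and $f_y^2:=\ast B^{\top}f^2$ produces an element of $\mathcal{D}$ with no inversion of $\mathcal{A}$ whatsoever. Substituting such test elements into $\left<\left<\cdot,\cdot\right>\right>=0$ and regrouping with the same wedge identities collapses everything to
\[
\int_{\left(Z+\partial Z\right)} f^2\wedge\ast\left(\mathcal{A}\bar f+\ast\bar e+B\bar e_u\right)+\left<e_u^2|\left(\bar f_y-\ast B^{\top}\bar f\right)\right>=0
\qquad \forall\, f^2,\ e_u^2,
\]
and nondegeneracy of the pairing then yields both defining relations simultaneously. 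Note also that your appeal to Stokes' theorem is misplaced: the relations defining $\mathcal{D}$ are purely algebraic (the exterior derivative never appears in them, unlike in the Stokes--Dirac structure \eqref{ph_2_stokes}), and the interior and boundary contributions are already separated in the bilinear form through the pairing $\int_{Z+\partial Z}$ of \eqref{wedge_operation}; no integration by parts occurs anywhere in the paper's argument.
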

\begin{proof}
    We follow a similar procedure as in \citep{stokes}. We first show that $\mathcal{D}\subset \mathcal{D}^{\perp}$, and secondly $\mathcal{D}^{\perp}\subset \mathcal{D}$. 
\newline
{\underline{\em Case (i)} $\mathcal{D}\subset \mathcal{D}^{\perp}$} :\\
 Consider $(f^1,f_{y}^1,e^1,e_u^1)\in \mathcal{D}$, it suffices to show $(f^1,f_{y}^1,e^1,e_u^1)\in \mathcal{D}^{\perp}$ then $\mathcal{D}\subset \mathcal{D}^{\perp}$. Now consider any $(f^2,f_{y}^2,e^2,e_u^2)\in \mathcal{D}$ i.e. satisfying \eqref{Dirac_gen_struc}, substituting in the bilinear form \eqref{bileniarform} gives
        $<<(f^1,f_{y}^1,e^1,e_u^1),(f^2,f_{y}^2,e^2,e_u^2) >>$
        \beqn
&= &\left<e^1|f^2\right>+ \left<e^2|f^1\right>+ \int_{\left(Z+\partial Z\right)}   \left (f^1 \wedge \ast \mathcal A  f^2+f^2 \wedge \ast \mathcal A  f^1 \right ) + \left<e_u^1|f_y^2\right>+\left<e_u^2|f_y^1\right>\\
&= &\left<e^1|f^2\right>+ \left<e^2|f^1\right>- \int_{\left(Z+\partial Z\right)}   \left (f^1 \wedge \ast\left(\ast e^2+Be_u^2\right)+f^2 \wedge \ast\left(\ast e^1+Be_u^1\right) \right ) \\
&&+ \left<e_u^1|\ast B^\top f^2\right>+\left<e_u^2|\ast B^\top f^1\right>\\
&= &\left<e^1|f^2\right>+ \left<e^2|f^1\right>-\left<e^1|f^2\right>- \left<e^2|f^1\right> -\left<e_u^1\ast B^\top f^2\right>-\left<e_u^2|\ast B^\top f^1\right> \\ &&+\left<e_u^1|\ast B^\top f^2\right>+\left<e_u^2|\ast B^\top f^1\right>\\
&=&0
\eeqn
where in step 2 we used the properties of wedge product  \eqref{Ac} and \eqref{Ad}, that is,
\beq\label{dric_proof_prop}
\begin{matrix}
    f^1\wedge\ast \ast e^2=e^2\wedge f^1\\
    f^2\wedge\ast \ast e^1=e^1\wedge f^2\\
    f^1\wedge \ast B e^2_u=B e^2_u\wedge \ast f^1=e^2_u\wedge \ast B^\top f^1\\
     f^2\wedge \ast B e^1_u=B e^1_u\wedge \ast f^2=e^1_u\wedge \ast B^\top f^2
\end{matrix}
\eeq
This implies $(f^1,f_{y}^1,e^1,e_u^1)\in \mathcal{D}^{\perp}$ implying $\mathcal{D}\subset \mathcal{D}^{\perp}$.\\
{\underline{\em Case (ii)} $\mathcal{D}^{\perp}\subset \mathcal{D}$}  :\\
 Consider $(f^1,f_{y}^1,e^1,e_u^1)\in \mathcal{D}^{\perp}$ and  if we show that  $(f^1,f_{y}^1,e^1,e_u^1)\in \mathcal{D}$ then we are through. Now consider any $(f^2,f_{y}^2,e^2,e_u^2)\in \mathcal{D}$, implies
\beq \label{case2_dirac_proof}<<(f^1,f_{y}^1,e^1,e_u^1),(f^2,f_{y}^2,e^2,e_u^2) >>=0 \eeq
which upon simplifying the left hand side of \eqref{case2_dirac_proof} we get
\beqn
&= &\left<e^1|f^2\right>+ \left<e^2|f^1\right>+ \int_{\left(Z+\partial Z\right)}   \left (f^1 \wedge \ast \mathcal A  f^2+f^2 \wedge \ast \mathcal A  f^1 \right ) + \left<e_u^1|f_y^2\right>+\left<e_u^2|f_y^1\right>\\
&= &\left<e^1|f^2\right>+ \left<e^2|f^1\right>- \int_{\left(Z+\partial Z\right)}   \left (f^1 \wedge \ast \left(\ast e^2+Be_u^2\right) \right ) + \int_{\left(Z+\partial Z\right)}   \left (f^2 \wedge \ast \mathcal A  f^1 \right )\\&&+ \left<e_u^1|\ast B^\top f^2\right>+\left<e_u^2|f_y^1\right>\\
&=& \int_{\left(Z+\partial Z\right)}   \left (f^2 \wedge \ast \left(\mathcal A  f^1 +\ast e^1+Be_u^1\right)\right )+\left<e_u^2|\left(f_y^1-\ast B^\top f^1\right)\right>
\eeqn
where in step 2 we used the fact that $(f^2,f_{y}^2,e^2,e_u^2)\in \mathcal{D}$, and in step 3 we used the wedge operator properties in \eqref{dric_proof_prop}. From \eqref{case2_dirac_proof}, for all $f^2$, $e^2_u$
\beq
\int_{\left(Z+\partial Z\right)}   \left (f^2 \wedge \ast \left(\mathcal A  f^1 +\ast e^1+Be_u^1\right)\right )+\left<e_u^2|\left(f_y^1-\ast B^\top f^1\right)\right>=0.
\eeq
This clearly implies 
\beqn
\mathcal A  f^1 +\ast e^1+Be_u^1=0\\
f_y^1-\ast B^\top f^1=0
\eeqn
proving that $(f^1,f_{y}^1,e^1,e_u^1)\in \mathcal{D}$.
\end{proof}
\begin{proposition}
The port-Hamiltonian system \eqref{pH_2} or  Brayton-Moser equations \eqref{BM_boundary} can be equivalently written as a dynamical system with respect to the noncanonical Dirac structure $\mathcal{D}$ in Theorem \ref{noncan_dirac} by setting
\beq\label{ffyeeu}
(f,f_y,e,e_u)=\left ( -U_t,~-\ast \dot{f}_b,~ \delta_U\text{P},~\ast e_b\right ).
\eeq
Moreover, the noncanonical bilinear form \eqref{bileniarform} evaluates to the ``power balance equation'' 
\beq\label{gen::bal eqn}
\dfrac{\partial}{\partial t}\mathcal P &=&  \int_Z u_t\wedge \ast A u_t - \int_{\partial Z}\left(e_b\wedge \dot{f}_b\right).
\eeq
\end{proposition}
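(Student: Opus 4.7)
The plan is to break the proof into two clean pieces: a verification that the stated substitution turns the constraint relations of the noncanonical Dirac structure into the Brayton–Moser equations \eqref{BM_boundary}, followed by an evaluation of the self-pairing of the bilinear form (which vanishes because $\mathcal{D}=\mathcal{D}^\perp$) to read off the power balance equation.

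For the first piece, I would simply substitute $(f,f_y,e,e_u)=(-U_t,\,-\ast\dot f_b,\,\delta_U\mathrm{P},\,\ast e_b)$ into the two defining relations of $\mathcal{D}$ in \eqref{Dirac_gen_struc}. The first relation $-\mathcal{A}f=\ast e+Be_u$ becomes $\mathcal{A}U_t=\ast\delta_U\mathrm{P}+B\ast e_b$, which is exactly the spatial/boundary dynamics in \eqref{BM_boundary}. The second relation $f_y=\ast B^\top f$ becomes $-\ast\dot f_b=-\ast B^\top U_t$; applying $\ast$ (and using $\ast\ast=\pm\mathrm{id}$ on the appropriate degree) yields $\dot f_b=B^\top U_t$, which is the collocated boundary-flow relation in \eqref{BM_boundary}. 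So the identification is consistent.

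For the second piece, since Theorem \ref{noncan_dirac} establishes $\mathcal{D}=\mathcal{D}^\perp$, every element $d\in\mathcal{D}$ satisfies $\langle\langle d,d\rangle\rangle=0$. Writing out \eqref{bileniarform} with both slots equal and dividing by two gives
\begin{equation*}
\langle e\,|\,f\rangle+\int_{(Z+\partial Z)}f\wedge\ast\mathcal{A}f+\langle e_u\,|\,f_y\rangle=0.
\end{equation*}
I would then evaluate each of the three terms under the substitution. The middle term is immediate: $\int_{(Z+\partial Z)}(-U_t)\wedge\ast\mathcal{A}(-U_t)=\int_Z u_t\wedge\ast A u_t$, because $\mathcal{A}=\mathrm{diag}\{A,O_3\}$ annihilates the boundary block. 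For the first term, the variational-derivative identity from Section 2.5 together with \eqref{wedge_operation} gives $\langle\delta_U\mathrm{P}\,|\,-U_t\rangle=-\int_Z\delta_u\mathrm{P}\wedge u_t-\int_{\partial Z}\delta_{u|_{\partial Z}}\mathrm{P}\wedge u_t|_{\partial Z}=-\partial_t P$. For the third term, the Hodge-star identities \eqref{Ac}–\eqref{Ab} applied to $(\ast e_b)\wedge(-\ast\dot f_b)$ on the $(n-1)$-dimensional boundary reduce it to $-\int_{\partial Z}e_b\wedge\dot f_b$. Assembling the three pieces yields precisely \eqref{gen::bal eqn}.

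The one step that needs genuine care is the Hodge-star bookkeeping in the boundary pairing: one must check that the degrees $(e_b\in\Omega^{n-q}(\partial Z),\ f_b\in\Omega^{n-p}(\partial Z))$ satisfy $(n-q)+(n-p)=n-1$ so that $e_b\wedge\dot f_b$ is a top form on $\partial Z$, and then track the sign produced by $\ast\ast=(-1)^{k(n-1-k)}$ to justify $(\ast e_b)\wedge(-\ast\dot f_b)=-e_b\wedge\dot f_b$ up to an overall sign that is absorbed into the conventions already fixed in \eqref{ph_2_stokes}. Everything else is bookkeeping, parallel to the finite-dimensional derivation leading to \eqref{Pdot_fin_dim_ex}.
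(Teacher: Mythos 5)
Your proposal is correct and follows essentially the same route as the paper's proof: the first part by direct substitution of \eqref{ffyeeu} into the defining relations of $\mathcal{D}$ in \eqref{Dirac_gen_struc}, and the second part by invoking non-degeneracy of the bilinear form so that $\mathcal{D}=\mathcal{D}^{\perp}$ forces the self-pairing to vanish, which after halving and substituting \eqref{ffyeeu} yields \eqref{gen::bal eqn}. The only difference is that you spell out the term-by-term evaluation (the $\mathcal{A}=\mathrm{diag}\{A,O_3\}$ block structure, the variational identity giving $-\dot{\mathcal{P}}$, and the boundary Hodge-star bookkeeping) that the paper leaves implicit in its one-line conclusion.
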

\begin{proof}
    The first part of the Proposition can be verified by using \eqref{ffyeeu} in the Dirac structure \eqref{Dirac_gen_struc}. For the second part, consider the following. The bilinear form \eqref{bileniarform} is assumed to be non-degenerate, hence $\mathcal D=\mathcal{D}^{\perp}$ implies
    \beqn
    <<(f,f_y,e,e_u),(f,f_y,e,e_u)>>=0,~~~\forall\;\; (f,f_y,e,e_u)\in \mathcal{D} 
    \eeqn
and can be simplified to
\beq\label{Power_balence}
\left<e|f\right>+ \int_{\left(Z+ \partial Z\right)}   \left (f \wedge \ast \mathcal A  f\right ) + \left<e_u|f_y\right>&=&0
\eeq
finally using \eqref{ffyeeu} we arrive at the power balance equation \citep{Guido,blankenstein2003joined}, given in \eqref{gen::bal eqn}. We can now interconnect \eqref{BM_boundary} to other BM systems defined at the boundary $\partial Z$ using these new port variables $e_b$ and $-\dot{f}_b$.
\end{proof}
\subsection{A Passivity argument} 
Once we have written down the equations in the BM framework (sometimes also referred to as the pseudo gradient form \new{\citep{DimVan07}}) we can pose the following question; does the mixed potential function serve as a storage function (or a \kcn{Lyapunov} function) to infer passivity (or equivalently stability) properties of the system? A first look at the balance equation \eqref{gen::bal eqn} might suggest that the system in the BM form \eqref{BM_boundary} is passive with $\mathcal P$ serving as the storage function and port variables $-\dot f_{b}$ and $e_b$.
Similar to the exposition presented in Chapter 3 for finite-dimensional systems and also see \citep{BraMir1964,DimVan07} for infinite-dimensional systems, this is not the case, as the mixed potential function $\mathcal P$, and its time derivative \eqref{gen::bal eqn} are  sign in-definite and hence does not serve as a storage function.
This motivates our quest for finding a new $\mathcal{P}\geq 0$ and $\mathcal{A}\leq 0$, called as admissible pairs, enabling us to derive certain new passivity/stability properties (analogous to the ones presented in Equation \eqref{BM_adm} for finite-dimensional systems). \ijc{This work} aims to answer these issues.
\section{Systems without boundary interaction}\label{sec:WOB}
To infer stability properties of the system \eqref{eq_BMI}, let us begin with the case of zero energy flow through the boundary of the system. The mixed-potential function \eqref{mix_pot} is not positive definite. Hence, we cannot use it as a Lyapunov or storage functional. Moreover, the \djc{rate of change} of this function is computed as
$$
\dot{\djc{P}}= \int_{\djc{Z}} \left ( -\mu \dot{e_p} \wedge \ast \dot{e_p} +\epsilon \dot{e_q} \wedge \ast \dot{e_q}  \right ), \nonumber
$$
\djc{which clearly is not sign-definite.}
We thus need to look for other {\em admissible pairs} $(\tilde{A}$, $\tilde{P})$ like in the case of
	finite-dimensional systems \eqref{BM_adm} \citep{jeltsema2003passivity} that can be used to
	prove stability of the system while preserving the
	dynamics of \eqref{eq_BMI}. 
	Moreover, the admissible pair should be such
that the symmetric part of $\tilde{A}$ is negative semi-definite. This
can be achieved in the following way \citep{BraMir1964,DimVan07}. 
\subsection{Admissible pairs} \label{infi_admissible_pairs}
Consider the functional $\tilde{P}=\int_Z\tilde{\text{P}}$ of the form
\beq\label{mix_pot_adm_gen}
\tilde{P} = \lambda P+\frac{1}{2}\int_Z \left(\delta_{e_p}P \wedge M_1\ast \delta_{e_p}P+\delta_{e_q}P \wedge M_2\ast \delta_{e_q}P\right),
\eeq
with $\lambda\in \mathbb{R}$ is \djc{an} arbitrary constant and the symmetric \djc{mappings} $M_1:\Omega^p(Z) \rightarrow \Omega ^p(Z)$ and $M_2:\Omega^q(Z) \rightarrow \Omega ^q(Z)$ are linear. Here, the aim is to find $\lambda$, $M_1$ and $M_2$ such that 
\beq
\dot{\tilde{P}} = u_t^\top \tilde{A} u_t  \le  -K ||u_t||^2 \le 0,\label{BM_admis}
\eeq
\kc{where $K\ge0$ represents the magnitude of smallest eigenvalue of $\tilde{ A}$}. 
If we can find such a \djc{pair} $(\tilde { P}, \tilde { A})$, which satisfies \eqref{BM_admis}, then we can conclude stability of the system \eqref{eq_BMI}.
\begin{theorem}\label{genPA} The system of equations \eqref{eq_BMI} have the alternative BM representation 
	$\tilde{A}u_t=\ast \delta_u \tilde{P}$ with $\tilde{P}$ defined as in \eqref{mix_pot_adm_gen} and
	\beq\label{genricA}
	\tilde{A}  &&\deff \begin{bmatrix}
		-\mu\left(\lambda I+	R^\top M_1\right) & \epsilon M_2 \ast \mathrm{d} (-1)^{(n-p)\times p}\\
		-\mu (-1)^qM_1\ast \mathrm{d} & \epsilon\left(\lambda I-G^\top M_2\right)
	\end{bmatrix}.
	\eeq
	The new mixed potential function satisfies, $\tilde{P}\geq 0$ for $-\norm{M_1R}_s < \lambda  <\norm{M_2 G}_s$, where $\norm{\cdot}_s$ denotes the spectral norm. Additionally, for systems with $p=q$ and $\epsilon M_2 =\mu M_1$; symmetric part of $\tilde{A}$ is negative definite.
\end{theorem}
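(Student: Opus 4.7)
The plan is to adapt the finite-dimensional recipe that produced \eqref{ptilde_sec_2.4.4}-\eqref{atilde_sec_2.4.4} to the distributed-parameter setting, by computing $\delta_u \tilde{P}$ directly from \eqref{mix_pot_adm_gen} and then identifying $\tilde{A}$ through substitution of the original Brayton--Moser equations \eqref{eq_BMI}. First, the $\lambda P$ piece contributes $\lambda\,\delta_u P$, which, using $\ast \delta_{e_p}P = \epsilon\,(e_p)_t$ and $\ast \delta_{e_q}P = -\mu\,(e_q)_t$, yields the $\lambda A$ diagonal portion of $\tilde{A}$. For the two quadratic terms, I would invoke symmetry of $M_1, M_2$ together with the chain rule, noting that from \eqref{proof_Mix_2} the form $\delta_{e_p}P$ depends on both $e_p$ (through $G\ast e_p$) and $e_q$ (through $\mathrm{d}e_q$), and symmetrically for $\delta_{e_q}P$; this mixing is exactly what generates the off-diagonal $\ast\mathrm{d}$ operators in \eqref{genricA}. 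The exterior derivatives hitting the variations $\partial e_p, \partial e_q$ are transferred via Stokes' theorem (boundary contributions vanish by assumption), and collecting coefficients of $\partial e_p, \partial e_q$ produces $\delta_{e_p}\tilde{P}$ and $\delta_{e_q}\tilde{P}$; rewriting these through $\ast\delta_u P = A u_t$ reads off the claimed matrix.

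For the bound $\tilde{P}\ge 0$, I would substitute \eqref{proof_Mix_2} into \eqref{mix_pot_adm_gen} and expand the squares. The indefinite cross term $\int_Z e_q\wedge \mathrm{d}e_p$ that appears in $\lambda P$ can be handled by Stokes' theorem together with the $\mathrm{d}e_p, \mathrm{d}e_q$ contributions coming from squaring $\delta_{e_p}P$ and $\delta_{e_q}P$. After this rearrangement the integrand should collapse into non-negative $L^2$ pieces of $\mathrm{d}e_p$ and $\mathrm{d}e_q$, plus purely $e_p$- and $e_q$-quadratic terms whose coefficients read (up to signs) $\tfrac{1}{2}G^\top M_1 G - \lambda G$ and $\tfrac{1}{2}R^\top M_2 R + \lambda R$. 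Non-negativity of these coefficients is precisely the spectral bound $-\|M_1 R\|_s < \lambda < \|M_2 G\|_s$. The main obstacle here is the careful tracking of the $(-1)^{pq}, (-1)^{(n-p)p}, (-1)^{(n-q)q}$ sign factors and the repeated use of \eqref{Ac}-\eqref{Ad} to verify that the indefinite cross-coupling actually reassembles into a sum of squares rather than leaving residual sign-indefinite contributions.

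Finally, for negative definiteness of $\tilde{A}+\tilde{A}^\top$ under $p=q$ and $\epsilon M_2 = \mu M_1$: in this case the two off-diagonal blocks in \eqref{genricA} reduce to $\mu M_1 \ast \mathrm{d}(-1)^{(n-p)p}$ and $-\mu(-1)^q M_1 \ast \mathrm{d}$, which, upon taking $L^2$-adjoints of $\ast\mathrm{d}$ on forms of equal degree (using \eqref{Ab} and integration by parts with zero boundary data), are negatives of each other's adjoints and therefore cancel in the symmetrization. What survives are the diagonal blocks $-\mu(\lambda I + R^\top M_1)$ and $\epsilon(\lambda I - G^\top M_2)$, whose symmetric parts are strictly negative definite precisely under the same spectral-norm hypothesis employed to secure $\tilde{P}\ge 0$.
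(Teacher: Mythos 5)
Your first and third paragraphs are fine. The derivation of $\tilde{A}$ — compute the variation of \eqref{mix_pot_adm_gen}, use symmetry of $M_1,M_2$ and Stokes' theorem to move exterior derivatives off the variations, express $\delta_u\tilde{P}$ as a constant-coefficient operator matrix acting on $\delta_u P$, and then substitute $\ast\delta_u P = A u_t$ to read off \eqref{genricA} — is exactly the paper's argument. Your argument for the last claim (the off-diagonal $\ast\mathrm{d}$ blocks are skew-adjoint with respect to the $L^2$ pairing when $p=q$ and $\epsilon M_2=\mu M_1$, hence drop out of the symmetrization, leaving the diagonal blocks) actually supplies detail that the paper only asserts.

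The gap is in your second paragraph, the proof of $\tilde{P}\geq 0$. The paper never expands the squares $\delta_{e_p}P\wedge M_1\ast\delta_{e_p}P$ and $\delta_{e_q}P\wedge M_2\ast\delta_{e_q}P$; instead it completes the square \emph{inside} $\lambda P$ itself, in two different ways depending on the sign of $\lambda$: for $\lambda<0$ the cross term of \eqref{mix_pot} is absorbed into the $R$-quadratic term, writing $\mathrm{P}$ as $\tfrac{1}{2}R^{-1}$ times the square of $(\mathrm{d}e_p+R\ast e_q)$ minus $\tfrac{1}{2}R^{-1}\,\mathrm{d}e_p\wedge\ast\mathrm{d}e_p$ minus $\tfrac{1}{2}Ge_p\wedge\ast e_p$, so that $\tilde{P}$ becomes a combination of perfect squares with coefficients whose non-negativity is precisely the lower bound on $\lambda$; for $\lambda>0$ one application of Stokes flips the cross term and the square is completed with the $G$-term instead. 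Your route — fully expanding both $M$-weighted squares from \eqref{proof_Mix_2} and claiming the cross terms are disposed of ``by Stokes' theorem'' — does not go through: Stokes only converts $\int_Z e_q\wedge\mathrm{d}e_p$ into $\pm\int_Z e_p\wedge\mathrm{d}e_q$, it cannot annihilate it. After bringing the three cross terms (one from $\lambda P$, one from each expanded square) to a common form, a residual term proportional to $\int_Z e_q\wedge\left(\lambda I + RM_2 - GM_1\right)\ast\mathrm{d}e_p$ (schematically) survives; it is sign-indefinite and vanishes only for one exceptional value of $\lambda$, not on an interval. This is why your claimed decomposition into decoupled non-negative $\mathrm{d}e_p,\mathrm{d}e_q$ pieces plus pure $e_p$-, $e_q$-quadratic terms cannot be correct as stated: at $\lambda=0$, where $\tilde{P}\geq 0$ is trivially a sum of squares, your bookkeeping would require the two remaining cross terms to cancel outright, which they do not unless $RM_2=GM_1$. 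Salvaging the expansion route requires absorbing the residual cross term via Young's inequality, which produces a coupled and strictly more restrictive condition on $(\lambda,M_1,M_2,R,G)$ than the stated spectral-norm bounds (your coefficients also carry a spurious factor of $\tfrac{1}{2}$ and swap the roles of $M_1$ and $M_2$ relative to the theorem, though the paper itself is not entirely consistent on that index pairing). The fix is to follow the paper: keep $\delta_{e_p}P$ and $\delta_{e_q}P$ as the squared objects and treat the cases $\lambda<0$ and $\lambda>0$ by completing the square against the $R$-term and the $G$-term respectively.
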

\begin{proof}
We start with finding the variational derivative of $\tilde{P}$. Consider the term $\delta_{e_p}P \wedge M_1\ast \delta_{e_p}P$
\begin{eqnarray*}
&=&\left((-1)^{pq} \mathrm{d} e_q -\ast G e_p\right) \wedge M_2 \ast \left((-1)^{pq} \mathrm{d} e_q -\ast G e_p\right)\\
&=& \mathrm{d} e_q \wedge M_2 \ast \mathrm{d}e_q -(-1)^{pq} \mathrm{d}e_q \wedge M_2 \ast \ast G e_p -(-1)^{pq} \ast G e_p \wedge M_2 \ast \mathrm{d}e_q\\&& +\ast G e_p \wedge M_2 \ast \ast G e_p\\
&=&\mathrm{d} e_q \wedge M_2 \ast \mathrm{d}e_q-(-1)^p\mathrm{d}e_q \wedge M_2 G e_p+e_p\wedge\ast G^\top M_2Ge_p.
\end{eqnarray*}
The variation in first term $\mathrm{d} e_q \wedge M_2 \ast \mathrm{d}e_q$ is
\begin{flalign*}
&\mathrm{d}(e_q+\partial e_q) \wedge \ast M_2 \mathrm{d}(e_q+\partial e_q)-\mathrm{d} e_q \wedge M_2 \ast \mathrm{d}e_q\\&=\mathrm{d} \partial e_q \wedge \ast M_2 \mathrm{d} e_q + \mathrm{d} e_q \wedge \ast M_2 \mathrm{d} \partial e_q\textbf{}+ \cdots&&\\
&= 2 \mathrm{d} \partial e_q \wedge \ast M_2 \mathrm{d} e_q+\cdots&&
\end{flalign*}
the variation in the second term $\mathrm{d}e_q \wedge M_2 G e_p$ is
\begin{flalign*}
&\mathrm{d}(e_q+\partial e_q) \wedge M_2 G (e_p+\partial e_p)- \mathrm{d} e_q \wedge M_2 G e_p\\
&=\mathrm{d}e_q \wedge M_2 G \partial e_p + \mathrm{d}\partial e_q \wedge M_2 G e_p+\cdots&&\\
&=\partial e_p \wedge G^ \top M_2 \mathrm{d}e_q (-1)^{(n-p)\times p} + \mathrm{d}\partial e_q \wedge M_2 G e_p+\cdots&
\end{flalign*}
and finally the variation in the last term $e_p\wedge\ast G^\top M_2Ge_p$ is given by
\begin{flalign*}
&(e_p+\partial e_p) \wedge \ast G^\top M_2 G (e_p+\partial e_p)- e_p \wedge \ast G^\top  M_2 G e_p =\partial e_p \wedge 2  \ast G^\top M_2 G e_p.&&
\end{flalign*}
By the properties of the exterior derivative,
\begin{flalign*}
\mathrm{d}(\partial e_q \wedge \ast M_2 \mathrm{d}e_q) &=\mathrm{d} \partial e_q \wedge \ast M_2 \mathrm{d} e_q+\partial e_q \wedge (-1)^{(n-q)} \mathrm{d} \ast \mathrm{d} M_2 e_q&&\\
\mathrm{d}(\partial e_q \wedge M_2 G e_p) &= \mathrm{d}\partial e_q \wedge M_2 G e_p + (-1)^{n-q} \partial e_q \wedge M_2 G \mathrm{d} e_p&&
\end{flalign*}
the variation in $\delta_{e_p}P \wedge M_1\ast \delta_{e_p}P$ can be simplified to as
\begin{flalign*}  
&\partial e_q \wedge 2 \left(  (-1)^{p} \mathrm{d} \ast \mathrm{d} M_2 e_q -M_2 G\mathrm{d}e_p\right) +\partial e_p \wedge 2 \left( (-1)^{pq+1}G^\top M_2 \mathrm{d}e_q + \ast G^\top M_2 G e_p  \right)\\
&=\partial e_q \wedge 2(-1)^{(n-p)\times p}M_2 \mathrm{d} \ast  \left( (-1)^{pq} \mathrm{d}e_q - \ast G e_p  \right) &&\\&\hspace{0.5cm}+\partial e_p \wedge  -2G^\top M_2\left( (-1)^{pq} \mathrm{d}e_q - \ast G e_p  \right).&&
\end{flalign*}
Similarly the variation in $\delta_{e_q}P \wedge M_1\ast \delta_{e_q}P$ is calculated as 
\begin{flalign*}  
&\partial e_q \wedge 2 \left( R^\top M_1 \mathrm{d}e_p +\ast R^\top M_1 R e_q \right) +\partial e_p \wedge 2 \left((-1)^q \mathrm{d} \ast \mathrm{d} M_1 e_p +(-1)^{pq} M_1 R \mathrm{d}e_q \right)\\
&= \partial e_q \wedge 2R^\top M_1 \left( \mathrm{d}e_p +\ast  R e_q \right) +\partial e_p \wedge 2(-1)^qM_1\mathrm{d} \ast \left(\mathrm{d}  e_p + \ast  R \mathrm{d}e_q \right).&&
\end{flalign*}  
Together the variational derivative of $\tilde{P}$ can be computed as
\beqn
\delta\tilde{P} \deff \begin{bmatrix}
	\lambda I+	R^\top M_1 & M_2 \mathrm{d} \ast (-1)^{(n-p)\times p}\\
	(-1)^qM_1\mathrm{d} \ast & \lambda I-G^\top M_2
\end{bmatrix}\begin{bmatrix}
	\delta_{e_p}P\\
	\delta_{e_q}P
\end{bmatrix}.
\eeqn
Further
\beqn
\ast \delta\tilde{P} &=& \begin{bmatrix}
	\lambda I+	R^\top M_1 & M_2 \ast \mathrm{d} (-1)^{(n-p)\times p}\\
	(-1)^qM_1\ast \mathrm{d} & \lambda I-G^\top M_2
\end{bmatrix}\left(\ast \begin{bmatrix}
	\delta_{e_p}P\\
	\delta_{e_q}P
\end{bmatrix}\right)\\
&=& \begin{bmatrix}
	\lambda I+	R^\top M_1 & M_2 \ast d (-1)^{(n-p)\times p}\\
	(-1)^qM_1\ast d & \lambda I-G^\top M_2
\end{bmatrix}\begin{bmatrix}
	-\mu & 0\\ 0 & \epsilon
\end{bmatrix}\begin{bmatrix}
	\dot{e}_q\\\dot{e}_p
\end{bmatrix}\\
&=& \begin{bmatrix}
	-\mu\left(\lambda I+	R^\top M_1\right) & \epsilon M_2 \ast \mathrm{d} (-1)^{(n-p)\times p}\\
	-\mu (-1)^qM_1\ast \mathrm{d} & \epsilon\left(\lambda I-G^\top M_2\right)
\end{bmatrix}\begin{bmatrix}
	\dot{e}_q\\\dot{e}_p
\end{bmatrix}
=\tilde{A}u_t.
\eeqn
This concludes the first part of the proof. We next to show the positive definiteness of $\tilde{P}$. Before that we simplify $P$ in \eqref{mix_pot} as follows:
\beqn
\text P(e_p,e_q)&=&e_q \wedge \mathrm{d}e_p+\frac{1}{2}R e_q \wedge \ast e_q- \frac{1}{2}G e_p \wedge \ast e_p  \\
&=&\frac{R^{-1}}{2}\left(\ast R e_q \wedge \ast \ast  R e_q  +\mathrm{d}e_p \wedge \ast \ast R e_q + \ast Re_q \wedge \ast \mathrm{d}e_p\right.\\&&\left. +\mathrm{d}e_p \wedge \ast \mathrm{d}e_p- \mathrm{d}e_p \wedge \ast \mathrm{d}e_p\right)-\frac{1}{2} Ge_p \wedge \ast e_p  \\
&=& \frac{R^{-1}}{2}\left(\delta_{e_p}P \wedge \ast   \delta_{e_p}P \right) -\frac{R^{-1}}{2}\mathrm{d}e_p \wedge \ast \mathrm{d}e_p -\frac{1}{2} Ge_p \wedge \ast e_p 
\label{p1}
\eeqn
for $-\norm{M_1R}_s < \lambda  <0$ we have
\beqn
\tilde{\text P}  &=&  \frac{\lambda R^{-1}+ M_1 }{2} \left(\delta_{e_p}P  \wedge  \ast   \delta_{e_p}P \right) - \frac{\lambda R^{-1}}{2}\mathrm{d}e_p  \wedge  \ast \mathrm{d}e_p   - \frac{\lambda I}{2} Ge_p  \wedge  \ast e_p  \\&&+ \frac{M_2}{2}  \left( \delta_{e_q}P  \wedge  \ast  \delta_{e_q}P \right) \\&>&  0.
\eeqn
In a similar way we can show that
\beq \text P(e_p,e_q)= -\frac{G^{-1}}{2}\left( \delta_{e_q}P \wedge \ast  \delta_{e_q}P \right) +\frac{G^{-1}}{2}\mathrm{d}e_q \wedge \ast \mathrm{d}e_q  +\frac{1}{2} Re_q \wedge \ast e_q \label{p2}
\eeq
hence for $0 < \lambda  <\norm{M_2G}_s$ we have
\beqn \tilde{\text P}  &=&   -\frac{\lambda G^{-1}  -  M_2 }{2}\left( \delta_{e_q}P \wedge \ast  \delta_{e_q}P \right)  + \frac{\lambda G^{-1}}{2}\mathrm{d}e_q  \wedge  \ast \mathrm{d}e_q  + \frac{\lambda}{2} Re_q  \wedge  \ast e_q \\&&+ \frac{M_1}{2} \left( \delta_{e_p}P  \wedge  \ast   \delta_{e_p}P \right) \\&>&  0 
\eeqn
concluding that $\tilde P$ is positive definite for $-\norm{M_1R}_s < \lambda  <\norm{M_2 G}_s$. Furthermore, with $p=q$ and $\epsilon M_2 =\mu M_1$ one can prove that symmetric part of $\tilde{A}$ is negative definite.
\end{proof}
\begin{remark}
 	Note that, if we do not restrict $M_1$ and $M_2$ such that $\epsilon M_2 =\mu M_1$ in Theorem \ref{genPA}, then for the symmetric part of $\tilde{A}\leq 0$ will lead to constraints on spatial domain like $	\sigma^{-1}\sqrt{\epsilon \mu^{-1}}\norm{\ast \mathrm{d}}<1$, as given in \citep{DimVan07}. 
\end{remark}
\subsection{Stability of Maxwell's equations}
\begin{example}[Maxwell's equations] \label{Example::Maxwell}
    \djc{Consider an electromagnetic medium with spatial domain $Z \subset \mathbb R^3$ with a smooth two-dimensional boundary $\partial Z$. The energy variables \ijc{($2$-forms on $Z$)} are the \djc{electric field} induction $\mathcal{D}=\frac{1}{2}\mathcal{D}_{ij}z_i \wedge z_j$ and the magnetic field induction $\mathcal{B}= \frac{1}{2}\mathcal{B}_{ij}z_i \wedge z_j$ on $Z$. The associated co-energy variables are electric field intensity $\mathcal{E}$ and magnetic field intensity $\mathcal{H}$. These co-energy variables ($1$-forms) are linearly related to the energy variables through the constitutive relationships of the medium as}
\beq 
\ast \mathcal{D} = \epsilon \mathcal{E}, \
\ast \mathcal{B} = \mu \mathcal{H},
\label{Max_en_coen_rel}
\eeq
\djc{where $\epsilon(z,t)$ and $\mu(z,t)$ denote the electric permittivity and the magnetic permeability, respectively.}\\
{\em Hamiltonian formulation:} The Hamiltonian $H$ can be written as 
\beq
H(\mathcal{D},\mathcal{B})&=&\int_Z \frac{1}{2}\left(\mathcal{E}\wedge \mathcal{D}+ \mathcal{H}\wedge \mathcal{B}\right).\label{H_maxwell}
\eeq
Therefore, $\delta_{\mathcal{D}}H=\mathcal{E}$ and $\delta_{\mathcal{B}}H=\mathcal{H}$.
Taking into account dissipation in the system, the dynamics can be written in the port-Hamiltonian form as
\beq
\hspace*{5mm}-\frac{\partial }{\partial t}\begin{bmatrix}
	\mathcal{D}\\    \mathcal{B}
\end{bmatrix}
&=&\begin{bmatrix}
	0& -\mathrm{d}\\ \mathrm{d}& 0
\end{bmatrix}\begin{bmatrix}
\delta_{\mathcal{D}}H\\ \delta_{\mathcal{B}}H
\end{bmatrix}\!+\!\begin{bmatrix}
J_d\\0
\end{bmatrix}=\begin{bmatrix}
	\ast \sigma & -\mathrm{d}\\ \mathrm{d}& 0
\end{bmatrix}\begin{bmatrix}
\delta_{\mathcal{D}}H\\ \delta_{\mathcal{B}}H
\end{bmatrix}
\label{Maxwell eqn_dirac}
\eeq
where $\ast J_d= \sigma \mathcal{E}$, $J_d$ denotes the current density and $\sigma(z,t)$ is the specific conductivity of the material.
In addition, we define the boundary variables as $f_b = \delta_D H|_{\partial Z}$ and $e_b = \delta_B H|_{\partial Z}$. Hence, we obtain $\frac{d}{dt} H \le \int_{\partial Z} \mathcal H \wedge \mathcal E$.
For $n=3$, $p=q=2$, and $\alpha_p=\mathcal{D}$, $\alpha_q=\mathcal{B}$ with $H$ given in \eqref{H_maxwell}, Maxwell\djc{'s} equations given in \eqref{Maxwell eqn_dirac} forms a \djc{Stokes-Dirac} structure. \\
{\em The Brayton-Moser form of Maxwell's equations:} \djc{In order to write the Maxwell's equations in BM form, we proceed with defining the corresponding mixed-potential functional}
\begin{equation} 
P=\int_{\djc{Z}} \mathcal{H}\wedge \mathrm{d}\mathcal{E}-\dfrac{1}{2}\sigma \mathcal{E} \wedge \ast \mathcal{E}, \label{P_maxwell}
\end{equation} 
which \djc{yields} the following BM form 
\begin{align}
\begin{bmatrix}
-\mu I_3 & 0\\ 0 & \epsilon I_3
\end{bmatrix}\begin{bmatrix}
\mathcal{H}_t\\ \mathcal{E}_t
\end{bmatrix}=\begin{bmatrix}
\ast \mathrm{d} \mathcal{E}\\ -\sigma \mathcal{E}+\ast \mathrm{d} \mathcal{H}\end{bmatrix} 
=\begin{bmatrix}
\ast \delta_{\mathcal{H}}P\\ \ast \delta_{\mathcal{E}} P\end{bmatrix}. 
\label{grad_maxwell}
\end{align}
Next, we present the stability analysis of Maxwell's equations \eqref{grad_maxwell}, using the admissible pairs provided in Section \ref{infi_admissible_pairs}.
\begin{proposition}
The system of equations \eqref{grad_maxwell} constitute alternate Brayton Moser formulation  $\tilde{A}\dot{x}=\ast\delta_u\tilde{P}$, where $\tilde{P}$ is as defined in \eqref{mix_pot_adm_gen} and  $\tilde{A}$ is defined as
 \begin{align}\label{Maxwell_Atilde}
\tilde{A}= \begin{bmatrix}
-\mu \lambda I & \epsilon M_2 \ast  \mathrm{d}  \\
-\mu    M_1 \ast \mathrm{d} &  \epsilon \left(\lambda I-\sigma M_2\right)
\end{bmatrix}.
\end{align} Additionally, \eqref{grad_maxwell} is stable 
if $\lambda$, $M_1>0$, and $M_2>0$ are selected such that $\epsilon M_2 =\mu M_1$ and  $0 < \lambda < \sigma \norm{M_2}_s$.
\end{proposition}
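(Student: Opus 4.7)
The proof reduces to specializing Theorem \ref{genPA} to the Maxwell setting and then invoking the infinite-dimensional Lyapunov theorem. First, I would match \eqref{grad_maxwell} against the template \eqref{proof_Mix}--\eqref{eq_BMI}: the identifications are $n=3$, $p=q=2$, $e_p=\mathcal{E}$, $e_q=\mathcal{H}$, $R=0$ and $G=\sigma$. The sign factors in \eqref{genricA} simplify to $(-1)^{(n-p)p}=1$ and $(-1)^{q}=1$, and direct substitution yields exactly the matrix stated in \eqref{Maxwell_Atilde}. The corresponding $\tilde{P}$ comes directly from \eqref{mix_pot_adm_gen}, establishing the alternate BM representation $\tilde{A}u_t=\ast\delta_u\tilde{P}$.

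For the stability conclusion, I would verify the two structural conditions that appear in Theorem \ref{genPA}. Positive-definiteness of $\tilde{P}$ is immediate: the generic range $-\|M_1R\|_s<\lambda<\|M_2G\|_s$ collapses, under $R=0$ and $G=\sigma$, to $0<\lambda<\sigma\|M_2\|_s$, which is exactly the hypothesis. Negative-definiteness of the symmetric part of $\tilde{A}$ (understood through the pairing of Section \ref{subsec::dirac formulation}) follows from the second conclusion of Theorem \ref{genPA} since $p=q=2$ and $\epsilon M_2=\mu M_1$ is assumed. Concretely, evaluating $\dot{\tilde{P}}=\int_Z u_t\wedge\ast\tilde{A}u_t$ and integrating by parts via Stokes' theorem under the zero-boundary-flow assumption, the off-diagonal $\ast d$ terms cancel using $\epsilon M_2=\mu M_1$, leaving only the diagonal contributions $-\mu\lambda\,\mathcal{H}_t\wedge\ast\mathcal{H}_t+\epsilon(\lambda I-\sigma M_2)\,\mathcal{E}_t\wedge\ast\mathcal{E}_t\leq 0$.

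To finish, I would invoke Arnold's theorem (Theorem \ref{thm::stab}) with $P_d=\tilde{P}$ and the natural energy norm on the joint $(\mathcal{H},\mathcal{E})$ space. The variation $\delta_U\tilde{P}$ vanishes at any equilibrium by construction, $\dot{\tilde{P}}\leq 0$ was just established, and the coercivity sandwich $\gamma_1\|\Delta U\|^2\leq\mathcal{N}(\Delta U)\leq\gamma_2\|\Delta U\|^2$ required by Theorem \ref{thm::stab} follows from the quadratic structure of $\tilde{P}$ exhibited in the alternate forms \eqref{p1}--\eqref{p2}.

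The main obstacle will be the off-diagonal cancellation in the computation of $\dot{\tilde{P}}$: because $\ast d$ does not map $2$-forms to $2$-forms, the phrase ``symmetric part of $\tilde{A}$'' must be interpreted through the non-canonical bilinear pairing of Theorem \ref{noncan_dirac} (equivalently, via integration by parts with vanishing boundary). Keeping the sign conventions and degree shifts consistent so that the cross terms $\epsilon M_2\ast d\mathcal{E}_t$ and $-\mu M_1\ast d\mathcal{H}_t$ genuinely cancel under $\epsilon M_2=\mu M_1$ is the most delicate bookkeeping in the argument; everything else is a routine specialization of Theorem \ref{genPA}.
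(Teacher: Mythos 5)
Your proposal follows the paper's proof essentially step for step: the specialization of Theorem \ref{genPA} with $e_p=\mathcal{E}$, $e_q=\mathcal{H}$, $R=0$, $G=\sigma$, $n=3$, $p=q=2$ (so both sign factors in \eqref{genricA} equal $+1$), the collapse of the admissible range to $0<\lambda<\sigma\norm{M_2}_s$, and the cancellation of the off-diagonal $\ast\mathrm{d}$ terms in $\dot{\tilde{P}}$ under $\epsilon M_2=\mu M_1$ with zero boundary energy flow are exactly what the paper does; the paper likewise reports only the surviving diagonal contributions, $\dot{\tilde{P}}=-\int_Z\left(\mu\lambda\,\mathcal{H}_t\wedge\ast\mathcal{H}_t+\mathcal{E}_t\wedge\ast(\sigma M_2-\lambda I)\mathcal{E}_t\right)\leq 0$, which matches your expression.

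The one step that would fail as you wrote it is the last one. You invoke Theorem \ref{thm::stab} with ``the natural energy norm on the joint $(\mathcal{H},\mathcal{E})$ space'' and assert that the sandwich $\gamma_1\norm{\Delta U}^2\leq\mathcal{N}(\Delta U)\leq\gamma_2\norm{\Delta U}^2$ follows from the quadratic structure of $\tilde{P}$. With an $L^2$-type energy norm this fails in both directions: the upper bound fails because $\mathcal{N}$ contains $\mathrm{d}\Delta\mathcal{H}\wedge\ast\,\mathrm{d}\Delta\mathcal{H}$ and $\mathrm{d}\Delta\mathcal{E}\wedge\ast\,\mathrm{d}\Delta\mathcal{E}$, which no multiple of an $L^2$ norm dominates; the lower bound fails because $\tilde{P}$ has no zeroth-order term in $\mathcal{H}$ --- for instance $\Delta\mathcal{E}=0$ with $\Delta\mathcal{H}$ closed and nonzero gives $\mathcal{N}(\Delta U)=0$ while the energy norm is strictly positive. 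This is precisely why the paper does not use the energy norm: it defines the adapted norm \eqref{maxwell_norm}, built from exactly the three quadratic quantities appearing in its decomposition of $\tilde{P}$ (namely $\Delta\mathcal{E}-\ast\mathrm{d}\Delta\mathcal{H}$, $\mathrm{d}\Delta\mathcal{H}$ and $\mathrm{d}\Delta\mathcal{E}$), and then reads off $\gamma_1,\gamma_2$ as the minimum and maximum of $\frac{1}{2\sigma}$ and of the eigenvalues of $\frac{\sigma M_2-\lambda I}{2\sigma}$ and $\frac{1}{2}M_1$, so that the sandwich with $\alpha=2$ becomes immediate. Since, as Definition \ref{def::stability} emphasizes, stability of an infinite-dimensional system is always relative to a chosen norm, selecting this adapted norm is not cosmetic bookkeeping but the actual content of the final step; with that replacement your argument coincides with the paper's.
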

\begin{proof}
	The first part of the proof is straight forward from Theorem \ref{genPA}. The positive definiteness of $\tilde{P}$ can be seen by rewriting it as
	\begin{align*} 
	\tilde{P}=& \int_z    \delta_{\mathcal{E}}P \wedge  \frac{\sigma M_2  -\lambda I}{2 \sigma} \ast \delta_{\mathcal{E}}P +\frac{1}{2 \sigma}d\mathcal{H} \wedge \ast d \mathcal{H} + \frac{1}{2} \left(\delta_{\mathcal{H}}P \wedge M_1\ast \delta_{\mathcal{H}}P\right)
	\geq  0. 
	\end{align*}
	Under the condition $0 < \lambda < \sigma \norm{M_2}_s$, the \djc{time-derivative} of $\tilde P$ is
	$$
	\dot{\tilde{P}} =-\int_Z \left(\mu \lambda \mathcal{H}_t \wedge \ast \mathcal{H}_t+\mathcal{E}_t \wedge \ast (\sigma M_2-\lambda I)\mathcal{E}_t\right) \leq 0.
	$$
	Denote $U=(\mathcal{H},\mathcal{E})$, $\Delta U =(\Delta \mathcal{H}, \Delta\mathcal{E} )$ and consider the norm
	\beq\label{maxwell_norm}
	\norm{\Delta U}^2 &=&\int_Z\left((\Delta \mathcal{E}-\ast \mathrm{d}\Delta \mathcal{H})\wedge \ast( \Delta \mathcal{E}-\ast \mathrm{d}\Delta \mathcal{H})+\mathrm{d}\Delta\mathcal{H}\wedge \ast \mathrm{d}\Delta\mathcal{H}\right.\\&&\left.+\mathrm{d}\Delta\mathcal{E}\wedge \ast \mathrm{d}\Delta\mathcal{E}\right).\nonumber
	\eeq
	One can easily show that the system of equations \eqref{grad_maxwell} are stable at equilibrium $U^\ast=(0,0)$ by invoking Theorem \ref{thm::stab}  with respect to the above defined norm \eqref{maxwell_norm}, for $\alpha=2$ and 
	\beqn
	\begin{matrix}
		\gamma_1 &=& \min\{\dfrac{1}{2\sigma},\lambda_1^{min},\lambda_2^{min}\}, &
		\gamma_2 &=& \max\{\dfrac{1}{2\sigma},\lambda_1^{max},\lambda_2^{max}\}	
	\end{matrix}
	\eeqn
	where $\lambda_1^{min},\lambda_1^{max}$ are the minimum and maximum eigen values of $\frac{\sigma M_2  -\lambda I}{2 \sigma}$ respectively and similarly $\lambda_2^{min},\lambda_2^{max}$ for $\frac{1}{2}M_1$.
\end{proof}
\end{example}
\section{Systems with boundary interaction: Example of a transmission line system}\label{sec:TLS}
Boundary control of infinite dimensional systems is a well-studied topic. A significant advance in the port-Hamiltonian setting was presented in \citep{stokes}, where the authors extended the classical Hamiltonian formulation of infinite dimensional systems to incorporate boundary energy flow. Most infinite dimensional systems interact with the environment through its boundary, and hence such a formulation has an immediate impact on boundary control of infinite dimensional systems by energy shaping \citep{RamThe}.
In this section, we present the Brayton Moser formulation of infinite dimensional port Hamiltonian systems that interact through boundary. 
We derive admissible pairs and present a new passivity property for the transmission line system described in Example \ref{moti::example_TL}.
\subsection{The Brayton Moser form:}
{\em Spatial domain dynamics:}
The dynamics of the transmission line \eqref{telegraphers} can be written in an equivalent  Brayton Moser form as follows: define a functional $P^a=\int_0^1\text{P}^adz$ where 
\begin{flalign}
\text{P}^a=-\dfrac{1}{2}Ri\wedge \ast i+\dfrac{1}{2}Gv\wedge \ast v-i\wedge \mathrm{d}v =\left(-\frac{1}{2}Ri^2+\frac{1}{2}Gv^2-iv_z\right)dz.
\label{P_bm}
\end{flalign}
\kcn{In order to \ijc{simplify the notation}, we avoid using the differential geometric notation\footnote{\kcn{Note that  the Transmision line system \eqref{telegraphers} can be written in infinite dimensional port Hamiltonian formulation \eqref{pH_2} with $n=p=q=1$, this give rise to real valued ($0-forms$) coenergy variable $i(z,t)$ and $v(z,t)$, which are just functions.}}.} Using the line voltage and current as the state variables, we can rewrite the dynamics of the spatial domain as follows
\beq
\begin{bmatrix}
	-L & 0\\0 & C
\end{bmatrix}\begin{bmatrix}
i_t\\
v_t
\end{bmatrix}
&=& \begin{bmatrix}
	\delta_i P^a\\
	\delta_v P^a
\end{bmatrix} =
\begin{bmatrix}
	-Ri -v_z\\	
	Gv+i_z
\end{bmatrix}.
\label{tele_BM}
\eeq
The above equations, with  $A \text{ diag}~ \{ -L,C\} $, and
$u = ( i(z,t)~ v(z,t) )^{\top}$, can be written in a gradient form
\beq
Au_t=\delta_uP^a. \label{gradform}
\eeq
{\em Boundary dynamics}:
The spatial domain of the transmission line system is represented by a 1-D manifold $Z=(0,1)\in \mathbb{R}$ with point boundaries $\partial Z=\{0,1\}$.
In order to incorporate boundary conditions, we consider the interconnection of the infinite-dimensional system with finite-dimensional systems, via each of the boundary ports. This type of interconnected system is usually referred to as a {\em mixed finite and infinite-dimensional} system. Next, we aim to represent the overall system in BM formulation given in equation \eqref{BM_boundary}. 
Consider now a mixed potential function of the form
\beq
\mathcal{P}(U) &=& P^a(u)+P^0(u_0)+P^1(u_1) \label{P_bm bd}
\eeq
where $U = [u ~ u_0 ~ u_1]^{\top}$, $P^0$ and $P^1$ are the contributions to the mixed potential function arising form the boundary dynamics at $z=0$ and $z=1$ respectively. Similar to \eqref{BM_boundary}, we represent the overall dynamics of mixed finite and infinite-dimensional system in Brayton Moser form. The dynamics evolving on the spatial domain (for $0<z<1$) are given by \eqref{tele_BM} (equivalently \eqref{gradform}). At  $z=0$ the dynamics are
\beq\label{bndry z0}
A_0u_{0t}=\left.\left(\dfrac{\partial P^0}{\partial u_0}-\text{P}^a_{u_z}\right)\right|_{z=0} +B_0 I_0
\eeq
where \[\begin{matrix}
u_0=[i_0,v_0,v_{C0}]^\top, & P^0(u_0)=(v_{C0}-v_0)i_0-\frac{1}{2}R_0i_0^2, &
A^0= \text{ diag} ~ \{ 0,0,-C_0\},
\end{matrix}\]  
with $B_0=[0,0,-1]^\top$ as the input matrix, $I_0$ as input, $\text{P}^a_{u_z}=\dfrac{\partial \text{P}^a}{\partial u_z}$ and $u_{0t}=\dfrac{du_0}{dt}$. 
\\ The dynamics at boundary $z=1$ are 
\beq
A_1u_{1t}=\left.\left(\dfrac{\partial P^1}{\partial u_1}+\text{P}^a_{u_z}\right)\right|_{z=1}  \label{bndry z1}
\eeq
where $u_1=[i_1,v_1,v_{C1}]^\top$, $ P^1= (v_1-v_{C1})i_1-\frac{1}{2}R_1i_1^2$ and $A^1=\text{diag} \{ 0,0,-C_1\}.
$
Together they can be written compactly in the Brayton Moser form as
\beq
\mathcal{A}U_t=\delta_U\mathcal{P}+BI_0 \label{bm_total}
\eeq
$\mathcal{A}=diag\{A,A_0,A_1\}$, $B = \begin{bmatrix} 0 & 0 & B_0^\top &  \boldsymbol{O}_{3} \end{bmatrix}^{\top}$ and  $\boldsymbol{O}_3=[0\;0\;0]$.
\beqn
\delta_U \mathcal{P} = \begin{bmatrix}
	\delta_uP & \left.\left(\dfrac{\partial P^0}{\partial u_0}-\text{P}_{u_z}\right)\right|_{z=0} & \left.\left(\dfrac{\partial P^1}{\partial u_1}+\text{P}_{u_z}\right)\right|_{z=1}
\end{bmatrix}^\top .
\eeqn
\begin{remark}
	Note that the mixed potential functional is not unique. Another choice is $P^b=\int_0^1 \text{P}^bdz$ where
	\beq
	\text{P}^b=-\frac{1}{2}Ri^2+\frac{1}{2}Gv^2+i_zv\label{Pb}.
	\eeq
	This choice of $P^a$ or $P^b$ does not have any effect on spatial domain since it preserve the dynamics \eqref{tele_BM} and \eqref{gradform}, as $\delta_uP^a=\delta_uP^b$.
	If we use $P^b$ as mixed potential function instead of $P^a$, then we need to change  $P^0$ and $P^1$  to $v_{C_0}i_0-\frac{1}{2}R_0i_0^2$ and $-\frac{1}{2}R_1i_1^2-v_{C_1}i_1$ respectively in \eqref{bndry z0}, \eqref{bndry z1}.
\end{remark}

\subsection{ Dirac formulation} The transmission line system in  Brayton-Moser equations \eqref{bm_total} can be equivalently written as
	\beqn
	\left ( \left(-u_t, -u_{0t}, -u_{1t}\right),~B_0^\top u_{0t},~\left( \delta_u \text{P},~P^0_{u_0}-\text{P}_{u_z}|_{z=0}, P^1_{u_1}+\text{P}_{u_z}|_{z=1}\right),~-I_0\right ) \in \mathcal{D}
	\eeqn
	with subspace $D$ defined in Section \ref{subsec::dirac formulation}.
	This gives us the ``power balance equation"  
	\beqn 
	\dfrac{d}{dt}\mathcal{P} &=& \int^1_0 \left(Au_t \cdot u_t\right)dz +A_0u_{0t}\cdot u_{0t}+A_1u_{1t}\cdot u_{1t}+f^\top_{u0} y_0\label{td P+bd}\\
	&=& \int_0^1 \left(u_t^\top\dfrac{A+A^\top}{2}u_t\right)dz+ u_{0t}^\top\dfrac{A_0+A_0^\top}{2}u_{0t}\nonumber 
	+u_{1t}\dfrac{A_1+A_1^\top}{2}u_{1t}+f^\top y \label{td P+bd}\\
	&=&\int_0^1\left(-Li^2_t+Cv^2_t\right)dz-C_0\left(\dfrac{dv_{C0}}{dt}\right)^2-C_1\left(\dfrac{dv_{C1}}{dt}\right)^2+I_0\dfrac{dv_{C0}}{dt}.\label{time der of net P}
	\eeqn
\subsection{Admissible pairs}\label{section:Admissible pairs and stability}
To find admissible pairs for transmission line system with non zero boundary conditions, we need to define $\tilde A$ as the following, which will be clear in the subsequent section.
In general, new $\tilde A$ may contain  $\partial /\partial z$ in its entries (similar to $\ast \mathrm{d}$ in \eqref{genricA} and Remark \ref{genPA}). In this case there will be an additional contribution to the terms in the boundary from $\tilde A$, which will be clear in Proposition \ref{propo:Admissible}. To account this contribution we split such $\tilde A$ as $\tilde{A}_{nd}+\tilde{A}_d\frac{\partial}{\partial z}$.
\begin{definition}{\bf Admissible Pairs.}\label{Admissble_cond}
	Denote $\tilde{\mathcal{P}}=\int_Z\tilde{\text{P}^a}+\tilde{P}^0+\tilde{P}^1$ and $\tilde{\mathcal{A}}=\text{diag}~\{\tilde{A}, \tilde{A}_0, \tilde{A}_1\}$, further $\tilde{A}$ is $\tilde{A}_{nd}+\tilde{A}_d\frac{\partial}{\partial z}$. We call $\tilde{\mathcal{P}}$ and $\tilde{\mathcal{A}}$ admissible pairs if they satisfy the following:
	\begin{itemize}
		\item[(a)] ~\;\;\;\;\; $\tilde{P}^a\geq 0$, $\tilde A_d^\top=\tilde A_d$ 
		and $u_t^\top\tilde{A}_{nd}u_t\leq 0$ such that 
		\beq \tilde{A}u_t=\delta_u\tilde{\text{P}^a} \label{def1a}\eeq
		\item[(b)]  ~\;\;\;\;\;$\tilde{P}^0\geq 0$ and $u_{0t}^\top\tilde{A}_0u_{0t}\leq 0$ such that 
		\beq \left(\tilde{A}_0+\dfrac{1}{2}\tilde{A}_d\right)u_{0t}=\left.\left(\dfrac{\partial \tilde{P}}{\partial u_0}-\tilde{\text{P}^a}_{u_z}\right)\right|_{z=0}+\tilde B_0I_0 \label{def1b} \eeq
		\item[(c)]  ~\;\;\;\;\; $\tilde{P}^1\geq 0$ and $u_{1t}^\top\tilde{A}_1u_{1t}\leq 0$ such that 
		\beq \left(\tilde{A}_1-\dfrac{1}{2}\tilde{A}_d\right)u_{1t}=\left.\left(\dfrac{\partial \tilde{P}}{\partial u_1}+\tilde{\text{P}}_{u_z}^a\right)\right|_{z=1}  \label{def1c}\eeq
		\item[(d)] Together we can write them as 
		\beq
		\tilde{\mathcal{A}}U_t = \begin{matrix}
			\delta_U \tilde{\mathcal{P}}+ \tilde BI_0, & y_0 = - \tilde B^\top_0 u_{0t}.
		\end{matrix} \label{def1d}
		\eeq
	\end{itemize}
\end{definition}
We now have the following Proposition.
\begin{proposition}\label{propo:Admissible}
	If $\tilde{\mathcal{P}}=\int_Z\tilde{\text{P}^a}+\tilde{P}^0+\tilde{P}^1$ and $\tilde{\mathcal{A}}= \text{diag}~\{\tilde{A}, \tilde{A}_0, \tilde{A}_1\}$ satisfy the Definition \ref{Admissble_cond} then $\dot{\tilde{\mathcal{P}}}\leq I_0^\top y_0$, that is the system is passive with  port variables $I_0$ and $y_0$.
\end{proposition}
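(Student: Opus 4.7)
The plan is to compute $\dot{\tilde{\mathcal{P}}}$ directly by the chain rule, exploit the four defining relations (a)--(d) of an admissible pair, and verify that the boundary contributions created during the calculation cancel against the boundary dynamics, leaving only a nonpositive quadratic plus the supply rate $I_0^\top y_0$. The only genuinely delicate step is that $\tilde{A}$ contains the differential piece $\tilde{A}_d \partial/\partial z$, so the spatial calculation will unavoidably generate boundary evaluations at $z=0$ and $z=1$ that must be matched carefully against (b) and (c).

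First I would write
\begin{eqnarray*}
\dot{\tilde{\mathcal{P}}} &=& \int_0^1 \left(\tilde{\text{P}}^a_u \, u_t + \tilde{\text{P}}^a_{u_z} \, u_{zt}\right) dz + \tfrac{\partial \tilde P^0}{\partial u_0}\,u_{0t} + \tfrac{\partial \tilde P^1}{\partial u_1}\,u_{1t},
\end{eqnarray*}
and integrate the $u_{zt}$ term by parts in $z$ to obtain the pointwise variational derivative $\tilde{\text{P}}^a_u - \tfrac{d}{dz}\tilde{\text{P}}^a_{u_z}$, which by (a) equals $\tilde{A}u_t$, plus a boundary contribution $[\tilde{\text{P}}^a_{u_z}\,u_t]_0^1$. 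Next I would split $u_t^\top \tilde{A} u_t = u_t^\top \tilde{A}_{nd} u_t + u_t^\top \tilde{A}_d u_{zt}$; since $\tilde{A}_d^\top = \tilde{A}_d$ (assumed in (a)), the second term is a total $z$-derivative $\tfrac{1}{2}\partial_z(u_t^\top \tilde{A}_d u_t)$, producing the additional boundary contribution $\tfrac{1}{2} u_{1t}^\top \tilde{A}_d u_{1t} - \tfrac{1}{2} u_{0t}^\top \tilde{A}_d u_{0t}$.

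Second I would substitute (b) and (c) into the two boundary ODE terms:
\begin{eqnarray*}
\tfrac{\partial \tilde P^0}{\partial u_0}\,u_{0t} &=& u_{0t}^\top(\tilde{A}_0 + \tfrac{1}{2}\tilde{A}_d)u_{0t} + u_{0t}^\top\,\tilde{\text{P}}^a_{u_z}|_{z=0} - u_{0t}^\top \tilde{B}_0 I_0,\\
\tfrac{\partial \tilde P^1}{\partial u_1}\,u_{1t} &=& u_{1t}^\top(\tilde{A}_1 - \tfrac{1}{2}\tilde{A}_d)u_{1t} - u_{1t}^\top\,\tilde{\text{P}}^a_{u_z}|_{z=1}.
\end{eqnarray*}
The $\tilde{\text{P}}^a_{u_z}|_{z=0,1}$ pieces exactly cancel the boundary term $[\tilde{\text{P}}^a_{u_z}\,u_t]_0^1$ from step one (using that the first two components of $u_{0t},u_{1t}$ coincide with $u_t|_{z=0,1}$ by definition of $i_0,v_0,i_1,v_1$), while the $\pm\tfrac{1}{2}\tilde{A}_d$ pieces absorb the boundary terms produced by the total-derivative identity. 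What remains is
\begin{eqnarray*}
\dot{\tilde{\mathcal{P}}} = \int_0^1 u_t^\top \tilde{A}_{nd} u_t\,dz + u_{0t}^\top \tilde{A}_0 u_{0t} + u_{1t}^\top \tilde{A}_1 u_{1t} - u_{0t}^\top \tilde{B}_0 I_0.
\end{eqnarray*}

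Finally, each of the three quadratic forms on the right is nonpositive by (a)--(c), so $\dot{\tilde{\mathcal{P}}} \leq -u_{0t}^\top \tilde{B}_0 I_0 = I_0^\top(-\tilde{B}_0^\top u_{0t}) = I_0^\top y_0$ by the definition of $y_0$ in (d). The main obstacle is purely bookkeeping: one must keep track of three separate sources of boundary terms (integration by parts in $z$, the $\tfrac{1}{2}\tilde{A}_d$ split in (b) and (c), and the explicit $\tilde{\text{P}}^a_{u_z}|_{\partial Z}$ appearing in the boundary dynamics) and verify they cancel pairwise. The very asymmetric-looking $\pm\tfrac{1}{2}\tilde{A}_d$ correction in Definition~4.1(b,c) is precisely engineered so that this cancellation occurs, which is the reason for writing the admissibility conditions in that form.
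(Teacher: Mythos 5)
Your proof is correct and takes essentially the same route as the paper's: differentiate $\tilde{\mathcal{P}}$, substitute the admissibility relations (a)--(c), use the symmetry of $\tilde{A}_d$ to turn $u_t^\top\tilde{A}_d u_{zt}$ into the total derivative $\tfrac{1}{2}\partial_z\bigl(u_t^\top\tilde{A}_d u_t\bigr)$, and let the resulting boundary evaluations cancel against the $\pm\tfrac{1}{2}\tilde{A}_d$ corrections built into conditions (b) and (c), leaving the nonpositive quadratic forms plus the supply rate $I_0^\top y_0$. The only difference is presentational: the paper's first line already states the time derivative in variational-derivative form with the integration-by-parts boundary terms folded into the combined expressions, whereas you spell out that chain-rule/integration-by-parts step explicitly.
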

\begin{proof}
	The time derivative of $\tilde{P}_d\geq 0$ along the trajectories of (\ref{def1a}-\ref{def1c}) is
\beqn
	\dot{\tilde{P}}&=& \int_0^1\left(\delta_u\tilde{P^a}.u_t\right)dz+\left.\left(\dfrac{\partial \tilde{P}}{\partial u_0}-\tilde{\text{P}^a}_{u_z}\right)\right|_{z=0}\cdot u_{0t}+\left.\left(\dfrac{\partial \tilde{P}}{\partial u_1}+\tilde{\text{P}^a}_{u_z}\right)\right|_{z=1}\cdot u_{1t}\\
	&=& \int_0^1\left(\tilde{A}u_t.u_t\right)dz+u_{01}^\top\left(\tilde{A}_0+\dfrac{1}{2}\tilde{A}_d\right) u_{0t}+u_{1t}^\top\left(\tilde{A}_1-\dfrac{1}{2}\tilde{A}_d\right) u_{1t}+I_0^\top y_0\\
	&=& \int_0^1u_t^\top\left(\tilde{A}_{nd}+\tilde{A}_d\dfrac{\partial}{\partial z}\right)u_t dz +I_0^\top y_0+u_{01}^\top\left(\tilde{A}_0+\dfrac{1}{2}\tilde{A}_d\right) u_{0t}\\&&+u_{1t}^\top\left(\tilde{A}_1-\dfrac{1}{2}\tilde{A}_d\right)u_{1t}\\
	&=& \int_0^1\left(u_t^\top\tilde{A}_{nd}u_t\right)dz+\int_0^1\left(u_t^\top\tilde{A}_d\dfrac{\partial}{\partial z}u_t\right)dz+I_0^\top y_0+u_{01}^\top\left(\tilde{A}_0+\dfrac{1}{2}\tilde{A}_d\right) u_{0t}\\&&+u_{1t}^\top\left(\tilde{A}_1-\dfrac{1}{2}\tilde{A}_d\right)u_{1t}\\
	&=& \int_0^1\left(u_t^\top\tilde{A}_{nd}u_t\right)dz+\dfrac{1}{2}\int_0^1\dfrac{\partial}{\partial z}\left(u_t^\top\tilde{A}_du_t\right)dz+I_0^\top y_0+u_{0t}^\top\left(\tilde{A}_0+\dfrac{1}{2}\tilde{A}_d\right) u_{0t}\\&&+u_{1t}^\top\left(\tilde{A}_1-\dfrac{1}{2}\tilde{A}_d\right)u_{1t}\\
	&=& \int_0^1\left(u_t^\top\tilde{A}_{nd}u_t\right)dz+u_{01}^\top\tilde{A}_0 u_{0t}+u_{1t}^\top\tilde{A}_1 u_{1t}+I_0^\top y_0\\
	&\leq&  I_0^\top y_0.
\eeqn
\end{proof}
{\em Admissible pairs for the spatial domain:} First we derive admissible pairs for spatial domain of the transmission line, that is we find $(\tilde{P}^a,\tilde{A})$ satisfying Definition \eqref{def1a}. Next, we find suitable $(\tilde{P}^0,\tilde{A}_0)$ and $(\tilde{P}^1,\tilde{A}_1)$ satisfying \eqref{def1b} and \eqref{def1c} respectively so that we achieve the passivity as stated in Proposition \ref{propo:Admissible}.\\
We construct a new mixed potential $\tilde P$ (for spatial domain) in a similar procedure as followed in \citep{BraMos1964II}
\beq
\tilde{P}^a &=& \lambda P^a +\frac{1}{2} \int_0^1 \delta_u {P}^{a\top} M \delta_u {P}^a dz.\label{Tilde_P}
\eeq
We choose
$M=\begin{bmatrix}
\frac{\alpha}{R} & m_2\\
m_2 & \frac{\beta}{G}
\end{bmatrix}$  
where $\alpha , \beta,m_2$ are positive constants satisfying $\alpha\frac{L}{R}=\beta \frac{C}{G}$ and $\lambda$ is a dimensionless constant. Such a choice will be clear in the following discussions, which will eventually lead to a stability criterion. Note that $\tilde{P}$ still have units of power. To simplify the calculations we define new positive constants $\theta$, $\gamma$ and $\zeta$ as follows:
\beq
\begin{matrix}
	\theta\deff \alpha\frac{L}{R}=\beta \frac{C}{G},& m_2\deff \frac{2\gamma}{CR+LG}, &
	\zeta \deff  \frac{2\gamma}{\sqrt{LC}(\alpha+\beta)} \implies & m_2= \frac{\zeta \theta}{\sqrt{LC}}
\end{matrix}. \label{var}
\eeq 	
To show that $\tilde{P}^a\geq 0$ we start by simplifying the right hand side of \eqref{Tilde_P} in the following way.
Define 
\beq
\begin{matrix}
	\Delta \deff \left( \zeta \sqrt{\frac{C}{2}}(Ri+v_z)- {\sqrt{\frac{L}{2}}}(Gv+i_z) \right).
\end{matrix} \label{delta12}
\eeq
Using \eqref{var},\eqref{delta12}, and after some calculations, we can show that
\begin{eqnarray*}
	\frac{1}{2}\left<\delta_u P,M \delta_u P\right>	&=& \Delta^2+\frac{\alpha}{2R}(1-\zeta^2)(Ri+v_z)^2.
\end{eqnarray*} 
With $P^a$ as the mixed potential functional for transmission line, we calculate $\tilde{P}^a$ using \eqref{Tilde_P} as follows
\beqn \tilde{\text{P}}^a&=&\lambda \text{P}^a+\Delta^2+\frac{\alpha}{2R}(1-\zeta^2)(Ri+v_z)^2\\
&=& \lambda\left(-\frac{1}{2R}\left[\left(Ri+v_z\right)^2-v_z^2\right]+\frac{1}{2}Gv^2\right)+\Delta_2^2+\frac{\alpha}{2R}(1-\zeta^2)(Ri+v_z)^2\\
&=&\frac{\alpha(1-\zeta^2)-\lambda}{2R}(Ri+v_z)^2+\Delta^2+\frac{\lambda}{2R}v_z^2+\frac{\lambda}{2}Gv^2.
\eeqn
This means $\tilde{P}^a = \int_0^1 \text{P}^a dz\ge 0$,  for 
%
\begin{align}
0\leq \lambda\leq \alpha(1-\zeta^2), ~~
0 \leq \zeta^2 \leq 1.
\label{imp_cond}
\end{align}
Further, if we choose $\tilde A$ as
\begin{eqnarray}
\tilde{A}=\begin{bmatrix}
L(\lambda -\alpha- m_2 \frac{\partial }{\partial z} )& C(Rm_2+\frac{\beta}{G}\frac{\partial }{\partial z})\\
L(Gm_2+\frac{\alpha}{R}\frac{\partial }{\partial z})& -C(\lambda+\beta+ m_2\frac{\partial }{\partial z})
\end{bmatrix} \label{Anew}
\end{eqnarray}
then, this  $\tilde A$  together with $\tilde{P}^a$  will satisfy the gradient form \eqref{def1a}. Next we can decompose $\tilde{A}=\tilde{A}_{nd}+A_d\frac{\partial }{\partial z}$ with 
\beq\label{A_nd and A_d}
\tilde{A}_{nd}=\begin{bmatrix}
	L(\lambda -\alpha )& CRm_2\\
	LGm_2& -C(\lambda+\beta)
\end{bmatrix},\tilde{A}_d=\begin{bmatrix}
- m_2L  &\beta\frac{C}{G}\\
\alpha\frac{L}{R}& - m_2C
\end{bmatrix}
\eeq
and $\tilde{A}_{nd}$ is negative semi definite as long as
\beq 
-\beta \leq \lambda \leq \alpha, \text {and }(\lambda -\alpha)(\lambda+\beta)+\dfrac{(\alpha+\beta)^2}{4}\zeta^2 \leq 0\label{lambda}\eeq 
and noting that $\alpha\frac{L}{R}=\beta \frac{C}{G}$ from \eqref{var}, we can show that $\tilde{A}_d$ is symmetric.
\begin{proposition}\label{prop bm}
	If there exist  non zero constants $\alpha, \beta, \lambda$ and $ \zeta$ satisfying \eqref{var}, \eqref{imp_cond}, and \eqref{lambda} then $(\tilde{P^a}, \tilde A)$  is an admissible pair for the transmission line. The transmission line system with zero boundary energy flow is thus stable. 
\end{proposition}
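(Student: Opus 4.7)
The plan is to verify the three ingredients of Definition~\ref{Admissble_cond}(a) for the pair $(\tilde P^a,\tilde A)$ and then translate the resulting dissipation inequality into Lyapunov stability via Theorem~\ref{thm::stab}. The construction of $\tilde P^a$ in \eqref{Tilde_P} together with the explicit decomposition already displayed in \eqref{A_nd and A_d} does most of the bookkeeping for us, so the proof will essentially be a checklist.

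First I would compute $\delta_u\tilde P^a$ directly from \eqref{Tilde_P} using the identity $\delta_u\tilde P^a=\lambda\,\delta_u P^a+(\delta_u^2 P^a)\,M\,\delta_u P^a$. Substituting $\delta_u P^a=(-Ri-v_z,\;Gv+i_z)^\top$ from \eqref{tele_BM} and the chosen $M$, and grouping the terms that do and do not carry $\partial/\partial z$, I would recover precisely the block matrix $\tilde A$ in \eqref{Anew}. This establishes the gradient structure $\tilde A u_t=\delta_u\tilde P^a$ required by \eqref{def1a}. The conditions on $\lambda,\alpha,\beta,\zeta$ collected in \eqref{imp_cond} immediately give $\tilde P^a\ge 0$ from the completed-square expression displayed just above \eqref{imp_cond}, while \eqref{var} makes $\tilde A_d=\tilde A_d^{\!\top}$ symmetric, and \eqref{lambda} makes the symmetric part of $\tilde A_{nd}$ negative semi-definite (a $2\times 2$ determinant/trace check). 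This verifies all three requirements of Definition~\ref{Admissble_cond}(a).

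Next, for the stability conclusion with zero boundary energy flow, I would mirror the computation in the proof of Proposition~\ref{propo:Admissible} but restricted to the spatial part: differentiating $\tilde P^a$ along trajectories of \eqref{tele_BM} gives
\begin{equation*}
\frac{d}{dt}\tilde P^a=\int_0^1 u_t^{\!\top}\tilde A_{nd}\,u_t\,dz+\tfrac{1}{2}\bigl[u_t^{\!\top}\tilde A_d u_t\bigr]_0^1.
\end{equation*}
Zero boundary energy flow forces $i_t(0,t)=i_t(1,t)=0$ and $v_t(0,t)=v_t(1,t)=0$ (equivalently $I_0\equiv 0$ and the boundary dynamics are absent), so the boundary term vanishes and $\dot{\tilde P}^a\le 0$.

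Finally, to close the argument via Theorem~\ref{thm::stab}, I would take the trivial equilibrium $U^*=(i^*,v^*)\equiv 0$ and use $\mathcal{N}(\Delta U):=\tilde P^a(U^*+\Delta U)-\tilde P^a(U^*)$ together with the natural norm
\begin{equation*}
\|\Delta U\|^2:=\int_0^1\!\bigl((R\Delta i+\Delta v_z)^2+(G\Delta v+\Delta i_z)^2+\Delta v_z^2+G^2\Delta v^2\bigr)\,dz,
\end{equation*}
whence the completed-square form of $\tilde P^a$ supplies constants $\gamma_1,\gamma_2>0$ with $\gamma_1\|\Delta U\|^2\le\mathcal{N}(\Delta U)\le\gamma_2\|\Delta U\|^2$. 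The main obstacle I anticipate is not any single inequality but rather the algebraic juggling needed to simultaneously respect $\alpha L/R=\beta C/G$, the bounds \eqref{imp_cond}, and the determinant condition \eqref{lambda}; one has to show these constraints are jointly feasible, which ultimately reduces to a nonempty interval for $\lambda$ once $\zeta\in(0,1)$ is fixed. Everything else is routine substitution.
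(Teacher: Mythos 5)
Your proposal is correct as a proof of the literal conditional statement, but it takes a genuinely different route from the paper. You verify the checklist of Definition~\ref{Admissble_cond}(a) --- the gradient structure \eqref{def1a} via the infinite-dimensional analogue of $\tilde A=(\lambda I+\mathrm{Hess}(P^a)M)A$, positivity of $\tilde P^a$ from the completed square under \eqref{imp_cond}, symmetry of $\tilde A_d$ from \eqref{var}, negative semi-definiteness of $\tilde A_{nd}$ from \eqref{lambda} --- and then make the stability conclusion self-contained by differentiating $\tilde P^a$, discarding the boundary term, and invoking Theorem~\ref{thm::stab} with an explicit norm. The paper instead treats all of that as already established in the text preceding the proposition (the completed-square computation, the construction \eqref{Anew} and its decomposition \eqref{A_nd and A_d}, and a citation to \cite{BraMir1964} for the stability conclusion), and devotes its proof entirely to the question you relegate to a closing aside: the \emph{joint feasibility} of \eqref{var}, \eqref{imp_cond} and \eqref{lambda}. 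It normalizes by $\beta$, sets $\tau:=\alpha/\beta=RC/(LG)$ and $\lambda':=\lambda/\beta$, rewrites the constraints as \eqref{qqq1}--\eqref{qqq2}, and shows by a sign-of-roots argument that a solution exists whenever $\zeta^2\leq 4\tau/(1+\tau)^2$, a condition that is satisfiable for every $\tau\geq 0$ since $4\tau/(1+\tau)^2\leq 1$. So your route buys a rigorous, self-contained stability argument (which the paper outsources), while the paper's route buys the guarantee that the hypothesis of the proposition is never vacuous: every transmission line admits such constants.

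Two corrections to your remarks. First, your anticipated ``main obstacle'' is mischaracterized: feasibility does \emph{not} reduce to a nonempty interval for $\lambda$ once $\zeta\in(0,1)$ is fixed. The condition \eqref{lambda} (equivalently \eqref{qqq2}) constrains $\zeta$ itself: a valid $\lambda$ exists only if $\zeta^2\leq 4\tau/(1+\tau)^2$, so when $\tau$ is far from $1$ most of the interval $(0,1)$ is infeasible for $\zeta$. Second, your two-sided bound $\gamma_1\norm{\Delta U}^2\leq\mathcal N(\Delta U)\leq\gamma_2\norm{\Delta U}^2$ requires the strict inequalities $0<\lambda<\alpha(1-\zeta^2)$; at either endpoint allowed by \eqref{imp_cond} one of the coefficients in $\tilde P^a$ vanishes (e.g.\ with $\lambda=\alpha(1-\zeta^2)$ one can choose $v\equiv 0$ and $i$ an exponential making $\Delta\equiv 0$, so $\tilde P^a=0$ while your norm is positive), and $\gamma_1>0$ fails. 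This is compatible with the ``non zero constants'' hypothesis but must be stated, since \eqref{imp_cond} as written permits the degenerate endpoints.
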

\begin{proof}
	From \eqref{var} we define $\tau \deff \dfrac{\alpha}{\beta}=\dfrac{RC}{LG}$. Given a transmission line, $R,C,L$ and $G$ are fixed. $\tau \geq 0$ is now related to system parameters and thus can be treated as one. Let $\lambda^{'}=\dfrac{\lambda}{\beta}$. Using this in \eqref{imp_cond} and \eqref{lambda} we get
	\beq 0\leq &\lambda^{'}&\leq \tau(1-\zeta^2)\label{qqq1}\\
	(\lambda^{'} -\tau)(\lambda^{'}+1)&+&\frac{(\tau+1)^2}{4}\zeta^2 \leq 0. \label{qqq2}
	\eeq
	Now we have to show that for all $\tau \geq 0$, there exists a pair of $\lambda^{'}$ and $\zeta$ that satisfies equation \eqref{qqq1} and \eqref{qqq2}. Given a $\zeta \in (0,1)$, we obtain $\lambda^{'}\in[0,\tau(1-\zeta^2)]$ (using equation \eqref{qqq1}). Showing that \eqref{qqq2} has one positive and one negative root concludes the proof. Using the fact that a quadratic equation with roots $r_1$ and $r_2$ have opposite signs iff $r_1r_2\leq 0$, equation \eqref{qqq2} leads to
	\beqn
	\frac{(\tau+1)^2}{4}\zeta^2-\tau \leq 0 \Rightarrow \zeta^2 \leq \dfrac{4\tau}{(1+\tau)^2}.
	\eeqn
	Note that this is a valid condition on $\zeta$ since $\forall\; \tau \geq 0$, $\dfrac{4\tau}{(1+\tau)^2}\leq 1$.  
	Therefore  $\forall \;\zeta \in [0,\frac{4\tau}{(1+\tau)^2}]$ there exists a $\lambda^{'}$ which satisfies \eqref{qqq1} and \eqref{qqq2}. Consequently,  $(\tilde{P^a}, \tilde A)$ satisfies the admissible pair's Definition \ref{Admissble_cond}a. This implies stability of transmission line system with zero boundary conditions \citep{BraMir1964}.	
\end{proof}

{\em Admissible pairs for boundary dynamics: }
Assume that $m_2$ and $\theta$ satisfy $m_2=\frac{C_1R_1^2}{L}=\frac{C_1}{C}$ and $\theta=C_1R_1=C_0R_0$. Next we show that $(\tilde{P}^a, \tilde{A})$, together with 
\beq \label{admissible::boundary}
&&\begin{matrix}
	\tilde{P}^0 = \frac{1}{2R_0}(v_0-v_{C_0})^2 & \tilde{P}^1 = \frac{1}{2R_1}(v_1-v_{C_1})^2
\end{matrix}\nonumber \\
&&\tilde{A}_0= \begin{bmatrix}
	-(m_2L+R_0^2C_0) & 0 & R_0C_0 \\0& -(C_0+m_2C)& C_0 \\-R_0C_0 & -C_0 &0
\end{bmatrix},\tilde{A}_1=\begin{bmatrix}
0 & 0 & -C_1R_1 \\0& 0& C_1 \\C_1R_1 & -C_1 &0
\end{bmatrix}\nonumber
\eeq
satisfy Definition \ref{Admissble_cond}.
Now considering the left hand side of \eqref{def1c} with $\lambda=1$
\beqn
\left.\left(\dfrac{\partial \tilde{P^1}}{\partial u_1}+\tilde{\text{P}}^a_{u_z}\right)\right|_{z=1}
&=&\begin{bmatrix}
	m_2Li_{1t} -\theta v_{1t}\\
	m_2Cv_{1t}-\theta i_{1t}\\
	-i_1
\end{bmatrix}=\begin{bmatrix}
m_2Li_{1t} -\theta v_{1t}\\
m_2Cv_{1t}-\theta i_{1t}\\
-C_1v_{C_1t}
\end{bmatrix}
=\begin{bmatrix}
	-C_1R_1v_{C_1t}\\
	C_1v_{C_1t}\\
	-C_1v_t+C_1R_1i_t
\end{bmatrix}\\&=&\begin{bmatrix}
0 & 0 & -C_1R_1 \\0& 0& C_1 \\C_1R_1 & -C_1 &0
\end{bmatrix}\begin{bmatrix}
i_{1t}\\v_{1t}\\v_{C_1t}
\end{bmatrix}.
\eeqn
We can see that $\tilde{A}^1$ is skew symmetric. Similarly we can show that $\tilde{P}^0$ and $\tilde{A}^0$ preserves boundary and satisfies \eqref{def1b}. 
\subsection{Passivity}
\begin{proposition}
	Transmission line system defined by (\ref{telegraphers}-\ref{z1})  is passive with storage function   $\tilde{P}=\tilde{P}^a+\tilde{P}^0+\tilde{P^1}$ and port variables $I_0$ and  $\frac{dv_{C_0}}{dt}$.
\end{proposition}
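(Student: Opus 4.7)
The plan is to invoke Proposition~\ref{propo:Admissible} with the admissible triple $(\tilde{P}^a,\tilde{P}^0,\tilde{P}^1)$ and $(\tilde A,\tilde A_0,\tilde A_1)$ that has been assembled over the preceding two subsections. Specifically, once I show that these functions and operators together verify conditions (a)--(d) of Definition~\ref{Admissble_cond}, the proposition immediately yields $\dot{\tilde{P}}\le I_0^\top y_0$, and since $B_0=[0,0,-1]^\top$ and $u_0=(i_0,v_0,v_{C_0})^\top$, the output port-variable reduces to $y_0=-B_0^\top u_{0t}=\dfrac{dv_{C_0}}{dt}$, which is exactly the claim.

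First I would collect the positivity and sign conditions. Positive semidefiniteness of $\tilde{P}^a$ follows from the decomposition
\[
\tilde{\text{P}}^a=\frac{\alpha(1-\zeta^2)-\lambda}{2R}(Ri+v_z)^2+\Delta^2+\frac{\lambda}{2R}v_z^2+\frac{\lambda}{2}Gv^2
\]
together with the range of admissible parameters identified in Proposition~\ref{prop bm} (equations~\eqref{imp_cond}, \eqref{lambda}), which also guarantees $\tilde A_{nd}\le 0$ and the symmetry $\tilde A_d^\top=\tilde A_d$. Positivity of $\tilde{P}^0=\frac{1}{2R_0}(v_0-v_{C_0})^2$ and $\tilde{P}^1=\frac{1}{2R_1}(v_1-v_{C_1})^2$ is automatic, and a direct inspection shows $\tilde A_0\le 0$ (the only nonzero diagonal entries are $-(m_2L+R_0^2C_0)$ and $-(C_0+m_2C)$, while the off-diagonal block is skew) and $\tilde A_1$ is skew so $u_{1t}^\top\tilde A_1 u_{1t}=0$.

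Second, I would check the three gradient identities. For the spatial domain, \eqref{def1a} with $\tilde A$ as in \eqref{Anew} and $\tilde{P}^a$ as in \eqref{Tilde_P} is exactly what was established in Proposition~\ref{prop bm}. For the boundary at $z=1$, I would compute $\left(\partial\tilde{P}^1/\partial u_1+\tilde{\text{P}}^a_{u_z}\right)\big|_{z=1}$, using the explicit form of $\tilde{\text{P}}^a_{u_z}$ (which produces the terms $m_2 L i_{1t}-\theta v_{1t}$ and $m_2 C v_{1t}-\theta i_{1t}$ after substituting the boundary equations $C_1 v_{C_1 t}=i_1$, $v_1=R_1 i_1+v_{C_1}$ from \eqref{z1}, together with the parameter choices $m_2=C_1R_1^2/L=C_1/C$ and $\theta=C_1R_1$), and verify that the result is precisely $\bigl(\tilde A_1-\tfrac{1}{2}\tilde A_d\bigr)u_{1t}$; this is the calculation sketched in Section~4.3.3. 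The analogous computation at $z=0$ yields \eqref{def1b} with $\tilde B_0=B_0$, so that in the compact form \eqref{def1d} the output $y_0=-\tilde B_0^\top u_{0t}=\dfrac{dv_{C_0}}{dt}$.

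Third, having verified (a)--(d), Proposition~\ref{propo:Admissible} applies directly and gives
\[
\dot{\tilde P}\;\le\;I_0\,\frac{dv_{C_0}}{dt},
\]
establishing passivity with storage function $\tilde P=\tilde P^a+\tilde P^0+\tilde P^1$ and port variables $\bigl(I_0,\tfrac{dv_{C_0}}{dt}\bigr)$. The main obstacle, as in the finite-dimensional story, is not the passivity argument itself but the bookkeeping tying the boundary contribution of $\tilde A_d\,\partial/\partial z$ (which produces the $\pm\frac{1}{2}\tilde A_d$ terms in \eqref{def1b}--\eqref{def1c}) to the finite-dimensional matrices $\tilde A_0,\tilde A_1$ through the parameter matching $m_2L=C_1R_1^2=\theta R_1$, $m_2 C=C_1$, $\theta=C_0R_0=C_1R_1$; once this alignment is in place, every cross term cancels and the dissipation inequality follows cleanly.
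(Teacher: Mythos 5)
Your proposal is correct and follows essentially the same route as the paper: the paper's own proof simply invokes Definition~\ref{Admissble_cond} (whose conditions were verified for $(\tilde{P}^a,\tilde{A})$ in Proposition~\ref{prop bm} and for the boundary pairs $(\tilde{P}^0,\tilde{A}_0)$, $(\tilde{P}^1,\tilde{A}_1)$ in the preceding subsection) together with Proposition~\ref{propo:Admissible} to conclude $\dot{\tilde{P}}\leq I_0\,\frac{dv_{C_0}}{dt}$, exactly as you do. If anything, your bookkeeping of where the $\pm\frac{1}{2}\tilde{A}_d$ terms enter \eqref{def1b}--\eqref{def1c} and your explicit identification $y_0=-B_0^\top u_{0t}=\frac{dv_{C_0}}{dt}$ are spelled out more carefully than in the paper's terse proof.
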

\begin{proof}
	From definition \eqref{Admissble_cond} the time derivative of $\tilde{P}$ along the trajectories of (\ref{telegraphers}-\ref{z1}) gives
	\beq \label{passivity}
	\dot {\tilde{P}} &\leq&  I_0\frac{dv_{C_0}}{dt}
	\eeq
	which concludes the proof.
\end{proof}
\section{Casimirs and conservation laws}\label{seccas}
We obtain conservation laws which are independent from the mixed potential function, as follows:
For simplicity, we consider the case of systems without dissipation. We further assume that the energy and the
co-energy variables are related via a linear relation, given by
\beq
\alpha_p = \ast \epsilon \;  e_p \; \text{and}\;  \alpha_q = \ast \mu\; e_ q
\eeq
we can write \eqref{proof_Mix_1} in the following way
\beq
\begin{bmatrix}
	-\mu & 0\\ 0 & \epsilon
\end{bmatrix}\begin{bmatrix}
\dot{e}_q\\\dot{e}_p
\end{bmatrix}&=& \begin{bmatrix}
\ast \delta_{e_q}P\\ \ast \delta_{e_p}P
\end{bmatrix}.
\label{eq_cas1}
\eeq
Consider a function $ C  : \Omega^{n-p}(Z) \times \Omega^{n-q}(Z) \times Z \rightarrow \mathbb R$, which satisfies
\beq
\mathrm d (\ast \delta_{e_p}  C) = 0, ~~ \mathrm d (\ast \delta_{e_q}  C) = 0.
\eeq
The time derivative of $C(e_p,e_q)=\int_Z\text{C}(e_p,e_q)$ along the trajectories of \eqref{eq_cas1} is
\beqn
&&\dfrac{d}{dt} C(e_q,e_p) = \int_Z \left(\delta_{e_q}C \wedge \dot e_q+\delta_{e_p}C \wedge \dot e_p\right) \\
&=& \int_Z \left(-\delta_{e_q}C \wedge  \ast \dfrac{1}{\mu}\mathrm{d}e_p(-1)^{(n-q)\times q}+\delta_{e_p}C \wedge \ast \dfrac{1}{\epsilon}(-1)^{pq}\mathrm{d}e_q(-1)^{(n-p)\times p}\right) \\
&=& \int_Z \left((-1)^{(n-q).q+1}\dfrac{1}{\mu}\mathrm{d}e_p \wedge  \ast \delta_{e_q}C+(-1)^p \dfrac{1}{\epsilon}\mathrm{d}e_q \wedge \ast \delta_{e_p}C\right) \\
&=& \int_Z \left((-1)^{(n-q). q +1}\dfrac{1}{\mu}[\mathrm{d}(e_p \wedge  \ast \delta_{e_q}C)+(-1)^qe_p \wedge  \mathrm{d}(\ast \delta_{e_q}C)]\right.\\&&\left.+(-1)^p \dfrac{1}{\epsilon}[\mathrm{d}(e_q \wedge \ast \delta_{e_p}C)+(-1)^pe_p \wedge \mathrm{d}(\ast \delta_{e_p}C)]\right)\\
& = & \int_{\partial Z} \left (e_q \wedge \ast \delta_{e_p}C) \mid_{\partial Z}+(e_p \wedge  \ast \delta_{e_q}C) \mid_{\partial Z} \right ).
\eeqn
This implies that $\dot{ C}$ is function of boundary elements, representing a conservation law.\\ Additionally, if $\ast \delta_{e_p}C = \ast \delta_{e_q}C =0 $, then $d C/ dt = 0$. $ C$ is then called a Casimir function.
\subsection{Example: Transmission Line}
In case of the lossless transmission line, the total current and voltage
\begin{align}
\begin{matrix}
 C_I = \int_0^1 i(t,z) dz \hspace{3mm}&  C_v = \int_0^1 v(t,z) dz
\end{matrix}
\end{align}
are the systems conservation laws.
This can easily be inferred by the following
\begin{align*}
\frac{d}{dt} C_I &= -\int_0^1 \frac{1}{L}\frac{\partial v}{\partial z} = \left.\frac{v}{L} \right|_0 -  \left.\frac{v}{L}\right|_1\\
\frac{d}{dt}  C_v &= -\int_0^1 \frac{1}{C}\frac{\partial i}{\partial z} = \left.\frac{i}{C}\right|_0 -  \left.\frac{i}{C}\right|_1.
\end{align*}
\subsection*{Lossy Transmission line($R\neq 0,G\neq0$):}
Consider a functional $ C=\int_0^1 \bar{\text{C}}(i,v)dz$, where $\bar{\text{C}}(i,v)$ satisfies
\beq\label{A}
\begin{matrix}
	\frac{R}{L}\delta_i  C&=&\frac{1}{C}\frac{\partial }{\partial z}\delta_v  C, &\;\;\;\ & &
	\frac{G}{C}\delta_v  C&=&\frac{1}{L}\frac{\partial }{\partial z}\delta_i  C
\end{matrix}
\eeq
such as:
\beqn
 C(i,v)&=& \int_0^1 \left(\dfrac{\sqrt{G}}{C} \cosh(\omega z)i+\dfrac{\sqrt{R}}{L} \cosh(\omega z)v\right) dz
\eeqn
where $\omega =\sqrt{RG}$. It can be shown that the above functional satisfying \eqref{A} is a conservation law for lossy transmission line system $(R\neq0, G\neq 0)$ by evaluating the time derivative of $ C$ , that is
\beqn
\dfrac{d}{dt} C(i,v) &=& -\left.\left(\delta_iC v+\delta_vC i \right)\right|^1_0.
\eeqn

\subsection{Example: Maxwell's equations}
In case of Maxwell's equations with no dissipation terms, it can easily be checked that that the magnetic field
intensity $\int_Z \mathcal H$ and the electric field intensity $\int_Z \mathcal B$ constitute the conserved quantities. This can be seen via the following expressions:
\begin{align*}
& \int_Z\frac{d}{dt} \mathcal H = -\int_{\partial Z} \frac{1}{\mu} \mathcal E \\
& \int_Z\frac{d}{dt} \mathcal E = \int_{\partial Z} \frac{1}{\epsilon} \mathcal H.
\end{align*}
Another class of conserved quantities can be identified in the following way: Using \eqref{proof_Mix}, the system of equations can be rewritten as (when $R=0,G=0$)
\beq
\begin{bmatrix}
	-\mu & 0\\ 0 & \epsilon
\end{bmatrix}\begin{bmatrix}
\dot{e}_q\\  \dot{e}_p
\end{bmatrix}=
\begin{bmatrix}
	\ast \mathrm{d}e_p(-1)^{(n-q)\times q} \\
	\ast (-1)^{pq}\mathrm{d}e_q(-1)^{(n-p)\times p}
\end{bmatrix}.
\label{eq_cas3}
\eeq
Note that
\begin{align*}
\mathrm d \left ( {\mu} \ast \dot e_q\right ) & = \mathrm d (\ast \ast\mathrm{d}e_p)(-1)^{(n-q)\times q}  = 0\\
\mathrm d \left ({\mu} \ast \dot e_p\right ) & = \mathrm d (\ast \ast \mathrm{d}e_q)(-1)^{(n-p)\times p+pq} = 0.
\end{align*}
This means that  $\mathrm d (\mu \ast  e_q), ~ \mathrm d (\epsilon \ast  e_p)$ are differential forms which do not vary with time.\\
In terms of {\em Maxwells Equations} this would mean $\mathrm d (\mu \ast \mathcal H )$  is a constant three-form representing the charge density and $\mathrm d ( \epsilon \ast \mathcal E )$ is actually zero. In standard electromagnetic texts these would mean $\nabla \cdot \mathcal  D = J$, and $\nabla \cdot \mathcal B = 0$, representing respectively the Gauss' electric and magnetic law.
\section{Boundary control of transmission line system}\label{sec:control}
In this section we consider the stabilization problem of transmission line system in Example \ref{moti::example_TL} at a nontrivial equilibrium point via boundary control. The control objective is to regulate the voltage at the capacitor $C_1$ to $v_{C1}^\ast$ using the current source $I_0$ connected at $z=0$. 
We use the new passivity property \eqref{passivity} derived in Proposition \ref{prop bm}, that is
\beq\label{control_passivity}
\dfrac{d}{dt}\tilde{\mathcal{P}} \leq I_0\dfrac{dv_{C0}}{dt}
\eeq
in achieving the boundary control objective.\\
{\em Boundary control:}
The argument used here is same as that presented in \citep{mtns,Hugo}, where
the authors have presented a boundary control law for a
mixed finite and infinite-dimensional system via energy
shaping methods. But in this case, the passive maps $I_0$ and $v_{C_0}$ (obtained using energy as storage function) do not work due to dissipation obstacle as shown in Proposition \ref{prop::dissp}.  Therefore we propose a boundary
control law via shaping the power of the infinite-dimensional
system. Towards achieving this, we adopt control by interconnection methodology using the passivity property \eqref{passivity}.
As in the finite dimensional case, the method relies
on finding Casimir functions for the closed-loop system \citep{pasumarthy2007achievable}. Consider the controller of the form:
\beq \label{controller_states}\begin{matrix}
	\dot \eta=u_c,\hspace{5mm} & y_c=\dfrac{\partial H_c(\eta)}{\partial \eta}
\end{matrix} \eeq
where $\eta$, $u_c$ and $y_c$ are respectively the state, input and output of the controller. $H_c(\eta)$ denotes the power function of the controller. The interconnection between the system and controller is given by
\beq\label{interconnection}
\begin{bmatrix}
	I_0\\u_c
\end{bmatrix}&=& \begin{bmatrix}
0 & 1\\-1 &0
\end{bmatrix}\begin{bmatrix}
\dfrac{dv_{C0}}{dt}\\ y_c
\end{bmatrix}.
\eeq
%
%
{\em Casimirs:}
It can be easily shown that functions $C(\eta,v_{C_0})=\eta + v_{C_0}$ is Casimir for the closed loop system. Time differential of $C(\eta,v_{C_0})$ is (along \eqref{controller_states} and \eqref{interconnection})
\beqn
\dot C &=& \dot \eta+\dfrac{dv_{C_0}}{dt}= 0
\eeqn
Now the plant state and controller state are related by $\eta =-v_{C_0}+c$, $c$ is a constant(we can take it to be zero if the initial condition of the plant is known). Using this we choose the Hamiltonian of the controller to be
\beqn
H_c(\eta)&=&-v_{C_0}i_0^\ast+\frac{1}{2}K_I(v_{C_0}-v_{C_0}^\ast)^2
  \eeqn
where $K_I\geq 0$ is tuning parameter. We further modify this in the following way (such modification will be useful in power shaping)
\beqn
H_c(\eta)&=& -v_{C_0}i_0^\ast\pm v_0i_0^\ast\pm v_1i_1^\ast+\frac{1}{2}K_I(v_{C_0}-v_{C_0}^\ast)^2\\
&=& (v_0-v_{C_0})i_0^\ast +\left(v_1i_1^\ast -v_0i_0^\ast\right) -v_1i_1^\ast +\frac{1}{2}K_I(v_{C_0}-v_{C_0}^\ast)^2\\
&=& -i_0i_0^\ast R_0+\int_0^1 \dfrac{\partial }{\partial z}\left(vi^\ast \right)dz-v_1i_1^\ast +\frac{1}{2}K_I(i_0-i_0^\ast)^2\\
&=&  -i_0i_0^\ast R_0+\int_0^1 \left(v_zi^\ast+vi^\ast_z \right)dz-v_1i_1^\ast+\frac{1}{2}K_I(v_{C_0}-v_{C_0}^\ast)^2. 
\eeqn
Using this controller Hamiltonian, we will shape the closed-loop mixed mixed potential functional. Let $c\in \mathbb{R}$ be a constant. Consider
\beqn
&&P_d = \tilde{P}^a+\tilde{P}^0+\tilde{P}^1+H_c(\eta)+c \\
&=&\hspace{-3mm} \int_0^1\left(\frac{\alpha(1-\zeta^2)-1}{2R}(Ri+v_z)^2+\Delta_2^2+\frac{1}{2R}v_z^2+\frac{1}{2}Gv^2\right)dz+\dfrac{1}{2R_0}(v_0-v_{C_0})^2\\&&+\dfrac{1}{2R_1}(v_1-v_{C_1})^2-i_0i_0^\ast R_0+\int_0^1 \left(v_zi^\ast+vi^\ast_z \right)dz+\frac{1}{2}K_I(i_0-i_0^\ast)^2+c\\
&=& \int_0^1\left(\frac{\alpha(1-\zeta^2)-1}{2R}(Ri+v_z)^2+\Delta^2+\frac{1}{2R}v_z^2+v_zi^\ast+\frac{1}{2}Gv^2+vi^\ast_z\right)dz\\
&&+\dfrac{1}{2}R_0i_0^2+\dfrac{1}{2}R_1i_1^2-i_0i_0^\ast R_0+c\pm\int_0^1\left(Ri^{\ast 2}+\dfrac{i_z^{\ast 2}}{2G}\right)dz\pm\dfrac{1}{2}R_0i_0^{\ast 2} 
\\&&+\frac{1}{2}K_I(v_{C_0}-v_{C_0}^\ast)^2\\
&=& \int_0^1\left(\frac{\alpha(1-\zeta^2)-1}{2R}(Ri+v_z)^2+\Delta^2+\dfrac{1}{2R}\left(v_z^2+2v_zRi^\ast +R^2i^{\ast 2}\right)\right.\\&&\left.+\frac{1}{2G}\left(Gv+i_z^{\ast }\right)^2\right)dz+\dfrac{1}{2}R_0\left(i_0^2-2i_0i_0^\ast +i_0^{\ast 2}\right)+\dfrac{1}{2}R_1i_1^2-i_0i_0^\ast R_0+c\\&&-\int_0^1\left(Ri^{\ast 2}+\dfrac{i_z^{\ast 2}}{2G}\right)dz-\dfrac{1}{2}R_0i_0^{\ast 2}+\frac{1}{2}K_I(v_{C_0}-v_{C_0}^\ast)^2 .\\
&=& \hspace{-3mm}\int_0^1\left(\frac{\alpha(1-\zeta^2)-1}{2R}(Ri+v_z)^2+\Delta^2+\dfrac{1}{2R}\left(v_z+Ri^\ast \right)^2+\frac{1}{2G}\left(Gv+i_z^\ast \right)^2\right)dz\\&&+\frac{1}{2}R_0\left(i_0-i_0^\ast \right)^2+\dfrac{1}{2}R_1i_1^2. 
\eeqn
By choosing $c=\int_0^1\left(Ri^{\ast 2}+\dfrac{i_z^{\ast 2}}{2G}\right)dz+\dfrac{1}{2}R_0i_0^{\ast 2}$, we can see that 
\beq\label{PD}
P_d\hspace{-3mm}&=&\hspace{-3mm} \int_0^1\left(\frac{\alpha(1-\zeta^2)-1}{2R}(Ri+v_z)^2+\Delta^2+\dfrac{1}{2R}\left(v_z+Ri^\ast \right)^2+\frac{1}{2G}\left(Gv+i_z^\ast \right)^2\right)dz\nonumber\\&&+\dfrac{1}{2}R_0\left(i_0-i_0^\ast \right)^2+\dfrac{1}{2}R_1i_1^2+\frac{1}{2}K_I(v_{C_0}-v_{C_0}^\ast)^2
\eeq
has a minimum at equilibrium of \eqref{eq_telegraphers}, \eqref{eq_z0} and \eqref{eq_z1}. 
The time derivative of $P_d$ along (\ref{telegraphers}-\ref{z1}), \eqref{controller_states} and \eqref{interconnection} is
\beq \label{Pddot}
\dfrac{d}{dt}P_d&\leq&\left(I_s+K_I(v_{C_0}-v_{C_0}^\ast)-i_0^\ast\right)\frac{dv_{C_0}}{dt}
\eeq
\subsection*{Stability analysis}
Denote $\Delta U=(\Delta i, \Delta v,\Delta i_0,\Delta v_0,\Delta v_{C_0},\Delta i_1,$ $\Delta v_1,\Delta v_{C_1})$. Consider the following norm
\beq\label{norm_trans}
\norm{\Delta U}^2&=&\int_0^1\left((R\Delta i+\Delta v_z)^2+\Delta v_z^2+\Delta v^2\right)dz+\Delta i_0^2+\Delta i_1^2 +\Delta v_{C_0}^2
\eeq
\begin{proposition}\label{prop::TL_control_stabl}
	The transmission line system \eqref{bm_total} in closed-loop with control
	\beq\label{PI}
	I_0&=& i_0^\ast-K_P\frac{dv_{C_0}}{dt}-K_I(v_{C_0}-v_{C_0}^\ast),\;\;\; K_P,K_I \geq 0
	\eeq
	is asymptotically stable at the operating point $U^\ast=(i^\ast, v^\ast,i_0^\ast,v_0^\ast,v_{C_0}^\ast,i_1^\ast,v_1^\ast,v_{C_1}^\ast)$ as defined in \eqref{eq_telegraphers}, \eqref{eq_z0} and \eqref{eq_z1}. 
\end{proposition}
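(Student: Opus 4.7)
The plan is to use the closed-loop storage functional $P_d$ in \eqref{PD} as a Lyapunov candidate and invoke the infinite-dimensional stability theorem (Theorem \ref{thm::stab}) with respect to the norm \eqref{norm_trans}, followed by a LaSalle-type argument for asymptotic convergence.

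First, I would substitute the proposed controller \eqref{PI} into the bound \eqref{Pddot} for $\dot P_d$. Since $I_0 + K_I(v_{C_0}-v_{C_0}^\ast) - i_0^\ast = -K_P\,\dfrac{dv_{C_0}}{dt}$, this collapses to
\begin{equation*}
\dfrac{d}{dt}P_d \;\leq\; -K_P\left(\dfrac{dv_{C_0}}{dt}\right)^{\!2} \;\leq\; 0,
\end{equation*}
establishing monotone decrease of $P_d$ along closed-loop trajectories.

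Next, I would verify the hypotheses of Theorem \ref{thm::stab}. Checking $\delta_U P_d(U^\ast)=0$ is routine: every integrand/term in \eqref{PD} is a non-negative quadratic in a shifted variable that vanishes at the equilibrium characterized by \eqref{eq_telegraphers}--\eqref{eq_z1} (the terms $v_z+Ri^\ast$, $Gv+i_z^\ast$, $i_0-i_0^\ast$, $i_1$, $v_{C_0}-v_{C_0}^\ast$ all vanish there, and so do $Ri+v_z$ and $\Delta$ from \eqref{delta12}). The main technical step is to sandwich $\mathcal N(\Delta U):=P_d(U^\ast+\Delta U)-P_d(U^\ast)$ between two multiples of $\|\Delta U\|^2$ defined in \eqref{norm_trans}. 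Expanding $P_d$ around $U^\ast$, each quadratic block contributes a sum of squares in $R\Delta i+\Delta v_z$, $\Delta v_z$, $\Delta v$, $\Delta i_0$, $\Delta i_1$, $\Delta v_{C_0}$, together with the cross-coupling captured by $\Delta^2$. Using \eqref{var}, \eqref{imp_cond}, and \eqref{lambda} (the constraints ensuring $\tilde P^a\ge 0$ from Proposition \ref{prop bm}), together with the elementary bound $(a+b)^2\le 2(a^2+b^2)$, each quadratic in $P_d-P_d(U^\ast)$ can be bounded from above by a constant times a combination of the squares appearing in \eqref{norm_trans}; for the lower bound, I would group $(R\Delta i+\Delta v_z)^2$ and $\Delta v_z^2$ separately (as they already appear in the norm) and use the boundary traces $\Delta v_0,\Delta v_1,\Delta v_{C_1}$ together with the algebraic relations \eqref{z0}--\eqref{z1} to dominate the remaining terms. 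This yields
\begin{equation*}
\gamma_1 \|\Delta U\|^2 \;\le\; \mathcal N(\Delta U) \;\le\; \gamma_2 \|\Delta U\|^2
\end{equation*}
for some $\gamma_1,\gamma_2>0$, giving Lyapunov stability via Theorem \ref{thm::stab} with $\alpha=2$.

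Finally, for asymptotic stability I would invoke the infinite-dimensional LaSalle principle referenced in the Notes on Chapter 2. On the largest invariant set contained in $\{\dot P_d = 0\}$ we have $\dfrac{dv_{C_0}}{dt}\equiv 0$; plugging this into the boundary dynamics \eqref{z0} and the controller \eqref{PI} forces $I_0=i_0^\ast$ and hence $i_0=i_0^\ast$ and $v_{C_0}=v_{C_0}^\ast$. Propagating this constraint through the Telegrapher's equations \eqref{telegraphers} (together with the boundary condition at $z=1$ from \eqref{z1}), the only invariant solution is $U\equiv U^\ast$, giving convergence to the desired equilibrium. The main obstacle I anticipate is the sandwich bound on $\mathcal N(\Delta U)$: because $P_d$ mixes interior integrals with boundary-point contributions and the norm \eqref{norm_trans} is not directly quadratic in all the variables appearing in $P_d$, one needs a careful Poincar\'e/trace-type estimate (exploiting the algebraic boundary relations) to recover $\Delta v_0$, $\Delta v_1$ and $\Delta v_{C_1}$ from the interior and boundary data already controlled, so that the lower bound $\gamma_1\|\Delta U\|^2\le\mathcal N(\Delta U)$ holds uniformly.
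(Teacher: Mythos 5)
Your overall strategy---$P_d$ as the Lyapunov candidate, the sandwich bound feeding Theorem \ref{thm::stab}, then LaSalle---is exactly the paper's. However, there is a concrete gap in your LaSalle step, and it traces back to your first line: you estimate $\dot P_d$ from the already-weakened inequality \eqref{Pddot}, which retains only the boundary supply term, so all you obtain is $\dot P_d \le -K_P\left(\frac{dv_{C_0}}{dt}\right)^2$. Your candidate invariant set is then only $\{\frac{dv_{C_0}}{dt}=0\}$, and to conclude you must propagate this single boundary constraint through the Telegrapher's equations. Your chain ``forces $I_0=i_0^\ast$ and hence $i_0=i_0^\ast$ and $v_{C_0}=v_{C_0}^\ast$'' does not follow in the order stated: from the controller \eqref{PI}, $I_0=i_0^\ast-K_I(v_{C_0}-v_{C_0}^\ast)$ with $v_{C_0}$ merely constant, not yet known to equal $v_{C_0}^\ast$; and the propagation of constancy from one boundary point into the whole line is a unique-continuation/observability statement for a lossy hyperbolic PDE that you assert but do not prove. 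The paper avoids all of this by computing $\dot P_d$ from the admissible-pair structure (Proposition \ref{propo:Admissible}) rather than from \eqref{Pddot}: keeping the interior and boundary dissipation terms, \eqref{stab_proof_3} gives $\dot P_d = -K\left(\int_0^1(i_t^2+v_t^2)dz+i_{0t}^2+v_{C_0t}^2\right)\le 0$, so on $\{\dot P_d=0\}$ every rate vanishes, the state is stationary, and the equilibrium relations \eqref{eq_telegraphers}--\eqref{eq_z1} together with the control law immediately identify $U=U^\ast$. To repair your proof, replace \eqref{Pddot} by this full expression for $\dot P_d$.

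Separately, the obstacle you anticipate in the sandwich bound is spurious. The norm \eqref{norm_trans} involves only $(R\Delta i+\Delta v_z)$, $\Delta v_z$, $\Delta v$, $\Delta i_0$, $\Delta i_1$ and $\Delta v_{C_0}$; it does not contain $\Delta v_0$, $\Delta v_1$ or $\Delta v_{C_1}$ at all, so no trace or Poincar\'e-type estimate is required. Expanding $P_d$ about $U^\ast$, the equilibrium relations cancel the cross terms and $\mathcal N(\Delta U)$ comes out exactly as a weighted sum of the squares appearing in \eqref{norm_trans}; the paper then simply takes $\gamma_1$ and $\gamma_2$ to be the minimum and maximum of the coefficients $\frac{\alpha(1-\zeta^2)-1}{2R}$, $\frac{1}{2R}$, $\frac{G}{2}$, $\frac{R_0}{2}$, $\frac{R_1}{2}$, $\frac{K_I}{2}$. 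Stability is asserted with respect to this (semi)norm, which is all Theorem \ref{thm::stab} demands.
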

\begin{proof}
	From \eqref{PD}, we can show that
	\beq \label{stab_proof_1}
	\begin{matrix}
		P_d(U)>0, & \hspace{-3mm}\forall \;U\neq U^\ast, P_d(U)=0 \hspace{-2mm} & \text{if and onlyif}\hspace{-2mm}&U=U^\ast, &\hspace{-3mm} \text{and} & \delta_UP_d(U^\ast)=0 .
	\end{matrix}
	\eeq
	Moreover, $\mathcal{N}(\Delta U)=P_d(U^\ast+\Delta U)-P_d(U^\ast)$
\begin{flalign*}
&= \int_0^1\left(\frac{\alpha(1-\zeta^2)-1}{2R}(R\Delta i+\Delta v_z)^2+\dfrac{1}{2R}\Delta v_z^2+\frac{1}{2}G\Delta v^2\right)dz\\&\hspace{0.5cm}+\dfrac{1}{2}R_0\Delta i_0^2+\dfrac{1}{2}R_1\Delta i_1^2+\frac{1}{2}K_I\Delta v_{C_0}^2.&&
\end{flalign*}
For
\begin{flalign*}
\begin{matrix}
    \gamma_1&=&\min\left\{\frac{\alpha(1-\zeta^2)-1}{2R},\hspace{3mm}\frac{1}{2R},\hspace{3mm}\frac{1}{2}G,\hspace{3mm}\frac{1}{2}R_0,\hspace{3mm}\frac{1}{2}R_1,\hspace{3mm}\frac{1}{2}K_I\right\},\\
    \gamma_2&=&\max\left\{\frac{\alpha(1-\zeta^2)-1}{2R},\hspace{3mm}\frac{1}{2R},\hspace{3mm}\frac{1}{2}G,\hspace{3mm}\frac{1}{2}R_0,\hspace{3mm}\frac{1}{2}R_1,\hspace{3mm}\frac{1}{2}K_I\right\},
\end{matrix}
\end{flalign*}
we have the following 
	\beq \label{stab_proof_2}
	\gamma_1\norm{\Delta U}^2\leq P_d(U^\ast+\Delta U)-P_d(U^\ast)\leq \gamma_2\norm{\Delta U}^2.
	\eeq
	Finally, using \eqref{Pddot} and \eqref{PI} the time derivative $\dot{P}_d$ is
	\beq \label{stab_proof_3}
	\dot{P}_d&=& \int_0^1\left(u_t^\top\tilde{A}_{nd}u_t\right)dz+u_{01}^\top\tilde{A}_0 u_{0t}+u_{1t}^\top\tilde{A}_1 u_{1t}+ \left(I_0+K_I(v_{C_0}-v_{C_0}^\ast)-i_0^\ast\right)\frac{dv_{C_0}}{dt}\nonumber\\
	&=& -K\left(\int_0^1\left(i_t^2+v_t^2\right)dz+i_{0t}^2+v_{C_0t}^2\right)\leq 0 .
	\eeq
	\kcn{Arnold's first stability theorem (Theorem \ref{thm::stab}) can be proved using \eqref{stab_proof_1}, \eqref{stab_proof_2} and \eqref{stab_proof_3}.} Hence, the transmission line system \eqref{bm_total} in closed-loop is Lyapunov stable at $U^\ast$ with respect to the norm $\norm{\cdot}$ defined in \eqref{norm_trans} . Further from \eqref{eq_telegraphers}, \eqref{eq_z0} and \eqref{eq_z1}, one can show that $\dot{P_d}=0$ iff $U=U^\ast$. Thereby, we conclude the proof by invoking LaSalle$\rq$s invariance principle \citep[see Theorem 5.19]{RamThe}.
\end{proof}
\section{Alternative passive maps} \label{sec::Infinite dimensional port Hamiltonian systems}
   %
   In Chapter 2.4.4, we have seen that admissible pairs for finite-dimensional systems lead to new constraints on system parameters. In the case of infinite-dimensional systems, the interconnected boundary elements give additional constraints on the system, which makes the problem of finding admissible pairs even more difficult. As we argued in finite dimensional case, that these restrictions are mainly due to imposing `gradient structure'. Motivated by this, we now extend the framework developed in Chapter 3.2 to infinite-dimensional systems. 
   We present our result for a general infinite dimensional port-Hamiltonian system defined using  Stokes Dirac structure \cite{stokes}.
%
%
As an example, we present Maxwell's equations in $\mathbb{R}^3$ with non-zero boundary energy flows.
Consider the following storage function
\beq\label{infinite_dim_storage_co_energy}
S(e_p,e_q)&=& \frac{1}{2}\int_Z\left(\dot{e}_q\wedge \ast \mu \dot{e}_q+\dot{e}_p\wedge \ast \epsilon \dot{e}_p\right).
\eeq
\begin{proposition}
	The infinite dimensional port Hamiltonian system defined in \eqref{pH_2} is passive with storage function \eqref{infinite_dim_storage_co_energy} and port variables $\dot f_b$ and $-\dot e_b$.
\end{proposition}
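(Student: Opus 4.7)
The plan is to mirror the finite-dimensional calculation of Proposition~\ref{prop_main} in the differential-form setting, using the co-energy form \eqref{proof_Mix_a} of the dynamics. Assuming $\mu,\epsilon$ are constants, differentiate $S$ in time to get
\begin{eqnarray*}
\dot S &=& \int_Z\left(\dot e_q\wedge\ast\mu\,\ddot e_q+\dot e_p\wedge\ast\epsilon\,\ddot e_p\right),
\end{eqnarray*}
where we have used the symmetry property \eqref{Ab} of the wedge-with-Hodge-star pairing.

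Next, differentiate the co-energy form \eqref{proof_Mix_a} once in $t$, which, after applying the Hodge star to both sides, yields
\begin{eqnarray*}
-\ast\epsilon\,\ddot e_p &=& \ast G\dot e_p+(-1)^r\mathrm{d}\dot e_q,\\
-\ast\mu\,\ddot e_q &=& \mathrm{d}\dot e_p+\ast R\dot e_q.
\end{eqnarray*}
Substituting these into $\dot S$ splits the integrand into a symmetric dissipation piece and an interconnection piece:
\begin{eqnarray*}
\dot S &=& -\int_Z\left(\dot e_p\wedge\ast G\dot e_p+\dot e_q\wedge\ast R\dot e_q\right)-\int_Z\left(\dot e_q\wedge\mathrm{d}\dot e_p+(-1)^r\dot e_p\wedge\mathrm{d}\dot e_q\right).
\end{eqnarray*}
The first integral is non-positive since $G,R\geq 0$ (cf.\ \eqref{eqn::G}).

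For the interconnection piece, I would apply the Leibniz rule $\mathrm{d}(\dot e_q\wedge\dot e_p)=\mathrm{d}\dot e_q\wedge\dot e_p+(-1)^q\dot e_q\wedge\mathrm{d}\dot e_p$ together with the graded commutativity \eqref{Ac}. Using $p+q=n+1$ and $r=pq+1$, the two interior contributions cancel exactly (this is the differential-form analogue of the skew symmetry that kills the $i_t^\top\nabla^2_{iv}P\,v_t$ cross-terms in the proof of Proposition~\ref{prop_main}), and one is left with a pure boundary integral. Re-expressing this boundary integral in terms of $\dot f_b=\dot e_p|_{\partial Z}$ and $\dot e_b=-(-1)^{n-q}\dot e_q|_{\partial Z}$ (from the boundary part of \eqref{pH_2}) gives
\begin{eqnarray*}
\dot S &\leq& \int_{\partial Z}\dot f_b\wedge(-\dot e_b),
\end{eqnarray*}
which is the desired passivity inequality with port variables $\dot f_b$ and $-\dot e_b$.

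The main obstacle will be bookkeeping of the signs $(-1)^r$, $(-1)^{pq}$, $(-1)^{n-q}$, $(-1)^{(n-p)p}$ and $(-1)^{(n-q)q}$ arising from graded commutativity and from $\ast\ast=(-1)^{k(n-k)}$; these have to line up so that the two interior cross-terms annihilate and the boundary term reassembles into the claimed pairing. Once the sign ledger is correct, the argument is a line-by-line translation of the proof of Proposition~\ref{prop_main}, with integration by parts (Stokes' theorem) playing the role of the discrete skew symmetry of $\Gamma$ in the topologically complete RLC case.
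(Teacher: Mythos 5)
Your proposal is correct and follows essentially the same route as the paper's own proof: differentiate the co-energy dynamics \eqref{proof_Mix_a} in time, substitute into $\dot S$, bound the $R$, $G$ terms by non-negativity of the dissipation operators, and collapse the remaining cross-terms via the Leibniz rule and Stokes' theorem into a boundary pairing identified with $\dot f_b$ and $-\dot e_b$. Your sign ledger also checks out: with $p+q=n+1$ and $r=pq+1$ the two non-exact interior terms indeed annihilate, leaving exactly the exact form that the paper integrates to the boundary.
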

\begin{proof}
	We can now rewrite the spatial dynamics of the infinite-dimensional port-Hamiltonian system, in terms of the co-energy variables as \cite{IFAC}
	\beq\label{infinite_dim_eqn_co_energy}
	\begin{matrix}
		-\ast\epsilon \dot{e}_p &=&  \ast Ge_p + (-1)^r\mathrm{d}e_q\\
		-\ast \mu \dot{e}_q&=& \mathrm{d}e_p + \ast Re_q	.
	\end{matrix}
	\eeq
Using \eqref{infinite_dim_eqn_co_energy}, the time derivative of the storage function $S(e_p,e_q)$ \eqref{infinite_dim_storage_co_energy} can be simplified as follows
\beqn
\dot{S}(e_p,e_q)&=& \frac{1}{2}\int_Z\left(\dot{e}_q\wedge \ast \mu \ddot{e}_q+\dot{e}_p\wedge \ast \epsilon \ddot{e}_p\right)\\
&=&-\int_Z\left(\dot{e}_q\wedge \left(\mathrm{d}\dot e_p + \ast R\dot e_q\right)\right.\left.+\dot{e}_p\wedge\left(\ast G\dot e_p + (-1)^r\mathrm{d}\dot e_q\right)\right)\\
&=& -\int_Z\left(\dot{e}_q\wedge \ast R \dot e_q+\dot e_p\wedge \ast G \dot e_p\right)+(-1)^{(n-q) \times  q}\int_Z\left(\mathrm{d}(\dot e_p\wedge \dot e_q)\right)\\
&\leq & (-1)^{(n-q)}\int_{\partial Z}\left.(\dot e_q\wedge \dot e_p)\right.\\
&=& -\int_{\partial Z}\dot e_b\wedge \dot f_b.
\eeqn
This implies that the system \eqref{pH_2} is passive with port variables $\dot f_b$ and $-\dot e_b$.
\end{proof}
Next using Maxwell's equations as examples we will derive new passive maps for systems non zero boundary conditions. 
\subsection{Example: Maxwell's equations}\label{section:Maxwell}
\begin{example}[Maxwell's equations]
	Consider Maxwell's equations presented in Example \ref{Example::Maxwell} with boundary interaction, defined by 
%
\begin{align}
\begin{matrix}
-\mu \mathcal{H}_t &=& \ast d \mathcal{E}\\
\epsilon\mathcal{E}_t&=&-\sigma \mathcal{E}+\ast d \mathcal{H}\\
0&=&-(\mathcal{H}+\ast \sigma_d \mathcal{E})|_{\partial Z}+J^s
\end{matrix}
\label{grad_maxwell1}
\end{align}
where $\sigma_d$ is specific conductance at boundary and $J^s$ is the source connected at boundary.
\end{example}
\begin{proposition}
	Consider the following functional
	\beq\label{Max_storage}
	S(\mathcal{H}_t,\mathcal{E}_t)=\dfrac{1}{2}\int_Z\left(\mathcal{H}_t\wedge \ast \mu \mathcal{H}_t+\mathcal{E}_t\wedge \ast \epsilon  \mathcal{E}_t\right)
	\eeq
	The system of equations \eqref{grad_maxwell1} are passive with storage function \eqref{Max_storage} and port variable $\dfrac{d}{dt}J^s$ and $\mathcal{E}_t|_{\partial Z}$.
\end{proposition}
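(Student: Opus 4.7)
\textit{Proof proposal.} My plan is to mimic the computation carried out in the preceding general Proposition (for zero boundary energy flows), but now retain the boundary term and use the boundary condition in \eqref{grad_maxwell1} to express it through the claimed port variables. The storage functional $S(\mathcal{H}_t,\mathcal{E}_t)$ in \eqref{Max_storage} is nonnegative by construction and its rate of change along \eqref{grad_maxwell1} is
\beqn
\dot S=\int_Z\left(\mathcal H_t\wedge\ast\mu\mathcal H_{tt}+\mathcal E_t\wedge\ast\epsilon\mathcal E_{tt}\right),
\eeqn
so the first step will be to differentiate the two scalar-field equations of \eqref{grad_maxwell1} in time, giving $-\mu\mathcal H_{tt}=\ast d\mathcal E_t$ and $\epsilon\mathcal E_{tt}=-\sigma\mathcal E_t+\ast d\mathcal H_t$, and substitute. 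Using $\ast\ast=\mathrm{Id}$ on forms of the relevant degrees in $\mathbb R^3$, this reduces $\dot S$ to
\beqn
\dot S=-\int_Z\mathcal H_t\wedge d\mathcal E_t+\int_Z\mathcal E_t\wedge d\mathcal H_t-\int_Z\mathcal E_t\wedge\ast\sigma\mathcal E_t.
\eeqn

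The second step will be to combine the two curl-type integrands into an exact form and apply Stokes' theorem. Using the Leibniz rule \eqref{Ad} for $d(\mathcal E_t\wedge\mathcal H_t)$ together with the sign identity \eqref{Ac} (recall $\mathcal E_t,\mathcal H_t$ are $1$-forms and $d\mathcal E_t, d\mathcal H_t$ are $2$-forms on $Z\subset\mathbb R^3$), the two volume integrals collapse to $-\int_Z d(\mathcal E_t\wedge\mathcal H_t)=-\int_{\partial Z}\mathcal E_t\wedge\mathcal H_t$. The distributed dissipation term contributes $-\int_Z\mathcal E_t\wedge\ast\sigma\mathcal E_t\le 0$ thanks to $\sigma\ge 0$ and \eqref{eqn::G}, which may be discarded in the inequality. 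Thus
\beqn
\dot S\le-\int_{\partial Z}\mathcal E_t\wedge\mathcal H_t\bigr|_{\partial Z}.
\eeqn

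The third and key step is to invoke the boundary relation in \eqref{grad_maxwell1} to replace $\mathcal H|_{\partial Z}$ by $J^s-\ast\sigma_d\mathcal E|_{\partial Z}$, and differentiate in time to obtain $\mathcal H_t|_{\partial Z}=\dot J^s-\ast\sigma_d\mathcal E_t|_{\partial Z}$. Substituting yields
\beqn
\dot S\le-\int_{\partial Z}\mathcal E_t\wedge\dot J^s+\int_{\partial Z}\mathcal E_t\wedge\ast\sigma_d\mathcal E_t.
\eeqn
The boundary dissipation integral on the right is nonnegative when $\sigma_d\ge 0$, so the supply-rate inequality
\beqn
\dot S\le-\int_{\partial Z}\mathcal E_t\wedge\dot J^s
\eeqn
is the statement of passivity with port variables $\dot J^s$ and $\mathcal E_t|_{\partial Z}$, once the convention on signs is absorbed into the definition of the port pair (as was done in the zero-boundary-flow proposition with $\dot f_b$ and $-\dot e_b$).

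The main obstacle I anticipate is bookkeeping of the Hodge-star and wedge-product signs: one must carefully track the degrees of the forms on $Z$ versus on $\partial Z$, and in particular how $\ast\sigma_d$ is interpreted (as a boundary Hodge star acting on a $1$-form on the $2$-dimensional $\partial Z$) in order to guarantee $\int_{\partial Z}\mathcal E_t\wedge\ast\sigma_d\mathcal E_t\ge 0$. The distributed computation is a routine adaptation of the zero-boundary-flow proof; all the novelty lies in showing that the new boundary term splits cleanly into the port contribution $\mathcal E_t\wedge\dot J^s$ plus a sign-definite dissipation arising from $\sigma_d$, which is what allows the inequality to survive after dropping the dissipation.
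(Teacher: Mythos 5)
Your route is the same as the paper's: differentiate the dynamics \eqref{grad_maxwell1} in time, split $\dot S$ via the Leibniz rule and Stokes' theorem into a distributed dissipation term plus a boundary term, discard the distributed dissipation, substitute the time-differentiated boundary relation $\mathcal H_t|_{\partial Z}=\frac{d}{dt}J^s-\ast\sigma_d\mathcal E_t|_{\partial Z}$, and discard the boundary dissipation. The paper's proof is a terser rendition of exactly this computation (the distributed part is inherited from the preceding proposition for general Stokes--Dirac systems).

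The one step that does not hold together as written is the final discard. You arrive at $\dot S\le-\int_{\partial Z}\mathcal E_t\wedge\dot J^s+\int_{\partial Z}\mathcal E_t\wedge\ast\sigma_d\mathcal E_t$, assert that the second integral is \emph{nonnegative} for $\sigma_d\ge0$, and then delete it --- but deleting a nonnegative term from the right-hand side of an upper bound is not legitimate; the conclusion $\dot S\le-\int_{\partial Z}\mathcal E_t\wedge\dot J^s$ requires that term to be \emph{nonpositive}. The paper performs the same manipulation with the dissipative sign attached: it writes the boundary term as $\int_{\partial Z}\bigl(\frac{d}{dt}J^s-\ast\sigma_d\mathcal E_t\bigr)\wedge\mathcal E_t$ and drops $-\int_{\partial Z}\ast\sigma_d\mathcal E_t\wedge\mathcal E_t$ as nonpositive; since $\ast\sigma_d\mathcal E_t\wedge\mathcal E_t=-\,\mathcal E_t\wedge\ast\sigma_d\mathcal E_t$ for $1$-forms on the two-dimensional boundary, this is precisely your term, and the validity of the discard hinges on the orientation and boundary Hodge-star convention making the $\sigma_d$-contribution dissipative. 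So the gap is exactly the sign bookkeeping you flagged as the anticipated obstacle: you must set up $\ast\sigma_d$ on $\partial Z$ so that $\int_{\partial Z}\mathcal E_t\wedge\ast\sigma_d\mathcal E_t$ enters $\dot S$ with a nonpositive sign (a resistive boundary absorbs power); with that convention fixed, your chain of inequalities closes and coincides with the paper's.
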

\begin{proof}
	The time derivative of the $S(\mathcal{H}_t,\mathcal{E}_t)$ \eqref{Max_storage} along \eqref{grad_maxwell1} is
	\beqn
\dfrac{d}{dt} S(\mathcal{H}_t,\mathcal{E}_t)&=&-\int_Z\left(\mathcal{E}_t\wedge\ast  \sigma \mathcal{E}_t\right)+\int_{\partial Z}\left(\mathcal{H}_t\wedge \mathcal{E}_t\right)
	\eeqn
	But at boundary $\partial Z$ we have $(\mathcal{H}+\ast \sigma_d \mathcal{E})|_{\partial Z}=J^s$ using this we get
	\beqn
	\dfrac{d}{dt} S(\mathcal{H}_t,\mathcal{E}_t)&\leq&\int_{\partial Z}\left(\mathcal{H}_t\wedge \mathcal{E}_t\right)\\
	&=& \int_{\partial Z}\left(\left(\dfrac{d}{dt}J^s-\ast \sigma_d\mathcal{E}_t\right)\wedge \mathcal{E}_t\right)\\
	&\leq & \int_{\partial Z}\left(\dfrac{d}{dt}J^s\wedge \mathcal{E}_t\right),
	\eeqn	
	concludes the proof.
\end{proof}
\begin{ctrlobj}
The control objective is to stabilize the system at $\mathcal{E}|_{\partial Z}=\mathcal{E}^\ast$. At equilibrium we have 
\beqn
J^{s\ast}=\mathcal{H}^\ast+\ast \sigma_d\mathcal{E}^\ast.
\eeqn
\end{ctrlobj}
In Section 5.6, we have used energy-Casimir methodology for boundary control of transmission line system. Instead, we now use the control methodology presented in Chapter 3, by noting that the output port variable $\mathcal{E}_t$ is integrable with time. Consider the closed-loop storage function of the form
\beqn
S_d=S(\mathcal{H}_t,\mathcal{E}_t)+f(\mathcal{E})
\eeqn
where the functional $f(\mathcal{E}):\Omega^1(Z)\rightarrow \mathbb{R}$, is chosen such that $S_d$ has a minimum at the desired operating point. One possible choice is
\beqn
S_d=S(\mathcal{H}_t,\mathcal{E}_t)+\dfrac{K}{2}\int_{\partial Z}\left((\mathcal{E}-\mathcal{E}^\ast)\wedge (\mathcal{E}-\mathcal{E}^\ast)\right)
\eeqn
where $K\geq 0$. The time derivative of $S_d$ along the trajectories of \eqref{grad_maxwell1} is
\beqn
\dfrac{d}{dt}S_d&=&\dfrac{d}{dt}S(\mathcal{H}_t,\mathcal{E}_t)+K\int_{\partial Z}\left((\mathcal{E}-\mathcal{E}^\ast)\wedge \mathcal{E}_t\right)\\
&\leq &\int_{\partial Z}\left(\left(\dfrac{d}{dt}J^s+K(\mathcal{E}-\mathcal{E}^\ast)\right)\wedge \mathcal{E}_t\right)
\eeqn
Now choosing 
\beqn
J^s&=&-K\int_0^t(\mathcal{E}(z,\tau)-\mathcal{E}^\ast)d\tau-\alpha (\mathcal{E}(z,t)-\mathcal{E}^\ast)\\&&+\mathcal{H}^\ast+\ast \sigma_d\mathcal{E}^\ast
\eeqn
where $\alpha \geq 0$ and we get $\dot S_d\leq 0$ and $S_d$ has a minimum at $\mathcal{E}=\mathcal{E}^\ast$ and $\mathcal{E}_t=0$, $\mathcal{H}_t=0$ further at this equilibrium we have $J^{\ast}=\mathcal{H}^\ast+\ast \sigma_d\mathcal{E}^\ast$. Finally using a similar argument presented in Proposition \ref{prop::TL_control_stabl} we can conclude the stability.
\section{Conclusions}\label{sec:conclusion}
In this Chapter, we presented a methodology to overcome the dissipation obstacles in the case of infinite-dimensional systems, thus paving a way for passivity based control techniques. The basic building block was to write the system equations in the Brayton-Moser form. However, to effectively use the method, we need to construct admissible pairs for a given system, which aids in stability analysis and also in deriving new passivity properties. We presented a systematic way to derive these admissible pairs and prove the stability of Maxwell's equations. In the case of non-zero boundary energy flows, we used the transmission line system (as an example) and identified its admissible pair conditions. This resulted in a new passivity property with current and derivative of voltage as input and output port variables respectively, at the boundary. Using the new passive map, a PI controller was constructed to solve the boundary control problem.  Moreover, we extended the differential-passivity like passive maps presented in Chapter 3.2 to infinite dimensional systems.

\afterpage{\blankpage}
\chapter{Primal-dual dynamics of constrained optimization}
	    The applications of convex optimization are ubiquitous in various fields of research \cite{ben2001lectures} such as, resource allocation \cite{ibaraki1988resource}, utility maximization \cite{kelly1998rate}, etc. Numerous methods are proposed to solve these optimization problems \cite{boyd2004convex}. Solution techniques in a distributed setting have gained importance in recent times \cite{xiao2006optimal}.  One  of  the standard  tools  for  designing  algorithms  to  solve  such  optimization  problems  is  through  primal-dual  gradient  method \cite{kose1956solutions,arrow1958studies}. Gradient-based methods are a well-known class of mathematical routines for solving convex optimization problems. These gradient algorithms have much to gain from a control and dynamical systems perspective, to have a better understanding of the underlying system theoretic properties (such as stability, convergence rates, and robustness). The convergence of gradient-based methods and Lyapunov stability, relate the solution of the optimization problem to the equilibrium point of a dynamical system. 

        The Krasovskii-Lyapunov function \cite{krasovskiicertain} is  particularly suited for establishing stability of the continuous time gradient laws, as the equilibrium point (or solution of the optimization problem) is not known apriori.	In \cite{feijer2010stability}, the authors use  Krasovskii-Lyapunov function and hybrid Lasalle's invariance principle \cite{lygeros2003dynamical} to prove asymptotic stability of a network optimization problem. The gradient structure of the primal-dual equations characterizing the optima of a  convex optimization with only equality constraint admit a Brayton Moser (BM) form \cite{jeltsema2009multidomain}. Further, using the duality between energy and co-energy the BM form is partially transformed  into a port-Hamiltonian (pH) form \cite{stegink2017unifying}. These transformations pave the way for passivity/stability analysis using (i) the invariance principle for discontinuous Caratheodory systems \cite{cherukuri2016asymptotic} and (ii) an incremental passivity property for the misfit dynamics.
	    \textcolor{black}{Using the input/output dissipative properties \cite{l2gain}, the authors in \cite{simpson2016input}  provided robustness analysis  for  primal-dual dynamics of convex optimization problems with equality constraint. }
		\newpage
The contents in this chapter are organized as follows: We start by presenting the necessary and sufficient conditions for optimality of convex optimization problems. Next, we show that the primal-dual dynamics of optimization problem with only equality constraint can be written in Brayton-Moser formulation. Consequently, we use the passive maps derived in Chapter \ref{ch:con_fin_dim_sys}. The primal-dual dynamics of inequality constraints are modeled as a state dependent switching system. We first show that each switching mode is passive and the passivity of the system is preserved under arbitrary switching using hybrid passivity tools, a methodology similar to switched Lyapunov functions for stability analysis of switch system. Finally, the two systems, (i) one derived from the Brayton Moser formulation and (ii) the state dependent switching system, are interconnected in a manner such that the equilibrium is the solution of the convex  optimization problem. The proposed methodology is demonstrated by finding the optimal separating hyperplane using support vector machine methodology.
\section{Convex optimization}
In this section, we present a brief overview of mathematical tools in convex optimization, that will be useful in the subsequent sections. The standard form of a convex optimization problem contains three parts:

\begin{minipage}{0.05\linewidth}\vspace{-0.7cm}
(i) 
\end{minipage}
\begin{minipage}{0.95\linewidth}
A continuously differentiable convex function $f(x):\mathbb{R}^n\rightarrow \mathbb{R}$ to be minimized over $x$,
\end{minipage}

\begin{minipage}{0.05\linewidth}
(ii) 
\end{minipage}
\begin{minipage}{0.95\linewidth}
affine equality contraints $h_{i}(x)=0, \hspace{0.4cm} i = 1,\hdots , m$,
\end{minipage}

\begin{minipage}{0.05\linewidth}\vspace{-0.7cm}
(iii)
\end{minipage}
\begin{minipage}{0.95\linewidth}
continuously differentiable convex inequality constraints of the form $g_{i}(x)\leq 0$, \hspace{0.4cm} $i = 1,\hdots , p$.
\end{minipage}

\noindent This can be written in the following form, commonly known as the primal formulation:
    \begin{equation}\label{standard_SOP}
	\begin{aligned}
	& \underset{x \in \mathbb{R}^{n}}{\text{minimize}}
	& & f(x)\\
	& \text{subject to}
	& & h_{i}(x)=0 \hspace{0.4cm} i = 1,\hdots , m\\
    &
    & & g_{i}(x)\leq 0 \hspace{0.4cm} i = 1,\hdots , p\\
	\end{aligned}
	\end{equation}
\textbf{Karush-Kuhn-Tucker (KKT) conditions}: If the solution $x^\ast$ is optimal to the convex optimization problem \eqref{standard_SOP} then these exists $\lambda_i\in \mathbb{R}$, $i = 1,\hdots , m$ and $\mu_{i}\geq 0$, $i = 1,\hdots , p$ satisfying the following KKT conditions
\begin{eqnarray}\label{standard_KKT}
	\nabla_{x} f(x^{*})+\sum_{i=1}^{m}\lambda_{i}\nabla_{x} h_{i}(x^{*})+\sum_{i=1}^{m}\mu_{i}\nabla_{x} g_{i}(x^{*})=0,\nonumber\\
	h_{i}(x^{*})=0 \;\;\;\forall i \in \{1,\ldots, m\},\\
   g_j(x^\ast)\leq 0,\;\;\mu_j\geq 0,\;\; \mu_jg_j(x^\ast)=0\;\;\;\forall j \in \{1,\ldots, p\}.\nonumber
\end{eqnarray}
\begin{remark}
Note that the KKT conditions presented above in equation \eqref{standard_KKT} are only necessary conditions. We next present the requirements under which KKT conditions becomes sufficient.
\end{remark}
\noindent We now define the Lagrangian of the convex optimization \eqref{standard_SOP} as
 \begin{eqnarray}\label{convex_Lagrangian}
 \mathcal{L}(x,\lambda,\mu)=f(x)+\sum_{i=1}^{m}\lambda_{i} h_{i}(x)+\sum_{i=1}^{m}\mu_{i}g_{i}(x)
 \end{eqnarray}
 and the Lagrange dual function as
  \begin{eqnarray}\label{convex_Lagrange_dual}
 L_d(\lambda,\mu)=   
	\begin{aligned}
	& \underset{x \in \mathbb{R}^{n}}{\text{minimize}}
	& & L(x,\lambda,\mu) 
	\end{aligned}
 \end{eqnarray}
 \noindent giving us the following dual problem (correspnding to the primal problem \eqref{standard_SOP})
     \begin{equation}\label{standard_SOP_dual}
	\begin{aligned}
	& \underset{\lambda \in \mathbb{R}^{m},\;\mu\in \mathbb{R}^{p}}{\text{maximize}}
	& &  L_d(\lambda,\mu)\\
	& \text{subject to}
	& & \mu_i\geq 0 \hspace{0.4cm} i = 1,\hdots , p.\\
	\end{aligned}
	\end{equation}
    \begin{remark}
    Dual problem  is always convex, because $L_d$ is always a concave function even when the primal \eqref{standard_SOP} is not convex. If $f^\ast$ and $L_d^\ast$ denotes the optimal values of primal and dual problems respectively, then $L_d^\ast \leq f^\ast$. Therefore dual formulations are used to find the best lower bound of the optimization problem \cite{boyd2004convex,ben2001lectures}. Further, the negative number $L_d^\ast -f^\ast$ denotes the {\em duality gap}. In the case of zero duality gap, we say that the problem \eqref{standard_SOP} {\em satisfies strong duality}.
    \end{remark}
    \begin{definition}\textbf{Slater's conditions}. We call the convex optimization problem \eqref{standard_SOP} satisfies Slater's conditions if there exists an $x$ such that 
    $h_{i}(x)=0 \hspace{0.4cm} i = 1,\hdots , m$ and $ g_{i}(x)< 0 \hspace{0.4cm} i = 1,\hdots , p$. This implies that inequality constraints are strictly feasible.
    \end{definition}
    \begin{remark} If a convex optimization problems \eqref{standard_SOP} satisfies  Slater's conditions then the optimal values of primal and dual problems are equal, that is, \eqref{standard_SOP} satisfies strong duality. Further, in this case the KKT conditions becomes necessary and sufficient.
    \end{remark}
 %
 %
 %
 %
 %
	\section{The BM formulation: For equality constraint}
    In this section, we consider a convex optimization problem with only equality constraints. We show that the primal-dual gradient equations pertaining to this, have a naturally existing Brayton-Moser formulation. Thereafter, we leverage the analysis presented in Chapter 3 to find new passive maps associated with the primal-dual dynamics. Consider the following constrained optimization problem 
	\begin{equation}\label{SOP}
	\begin{aligned}
	& \underset{x \in \mathbb{R}^{n}}{\text{minimize}}
	& & f(x)\\
	& \text{subject to}
	& & h_{i}(x)=0 \hspace{0.4cm} i = 1,\hdots , m
	\end{aligned}
	\end{equation}
	where $f: \mathbb{R}^{n} \rightarrow \mathbb{R}$ is twice continuously differentiable $(C^2)$ and strictly convex and $h_{i} (\in C^2): \mathbb{R}^{n} \rightarrow \mathbb{R}$ is affine. 
 Assume (i) that the objective function has a positive definite Hessian $\nabla_{x}^{2}f(x)$ and (ii) that the problem (\ref{SOP}) has a finite optimum, and \kcb{Slater's condition is satisfied (i.e., the constraints are feasible) and strong duality holds} \cite{boyd2004convex}. The solution $x^{*}$ is an optimal solution to \eqref{SOP}
	if there exists $\lambda^{*} \in \mathbb{R}^{m}$ such that the following KKT conditions are satisfied:
	\begin{eqnarray}\label{mainKKT}
	\begin{aligned}
	\nabla_{x} f(x^{*})+\sum_{i=1}^{m}\lambda_{i}\nabla_{x} h_{i}(x^{*})=0\\
	h_{i}(x^{*})=0 \;\;\;\forall i \in \{1,\hdots, m\}
	\end{aligned}
	\end{eqnarray}
	The Lagrangian of \eqref{SOP} is given by
	\begin{equation}\label{MainLag}
	\mathcal{L}(x,\lambda)=f(x)+\sum_{i=1}^{m}\lambda_{i} h_{i}(x)
	\end{equation}
	Since strong duality holds for \eqref{SOP},
	$(x^{*},\lambda^{*})$ is a saddle point of the Lagrangian $\mathcal{L}$, that is,
    \begin{equation}
(x^\ast,\lambda^\ast)=arg\max_{\lambda}\left(arg\min_{x}\mathcal{L}(x,\lambda)\right),
\end{equation}
    if and only if $x^{*}$ is an optimal solution to primal problem \eqref{SOP} and $\lambda^{*}$ is optimal solution
	to its dual problem. Now, consider the dynamics given by
	\begin{equation}\label{maindyn1}
	\begin{split}
	-\tau_{x}\dot{x}&=\nabla_{x} \mathcal{L}(x,u)+u\\
	\vspace{-100mm}
	\tau_{\lambda_{i}}\dot{\lambda}_{i}&= \nabla_{\lambda}\mathcal{L}(x,u),\;\;~\hspace{1cm}
	y =-x,
	\end{split}
	\end{equation}
    or equivalently,
    \begin{equation}\label{maindyn}
	\begin{split}
	-\tau_{x}\dot{x}&=\nabla_{x} f(x)+\sum_{i=1}^{m}\lambda_{i}\nabla_{x} h_{i}(x)+u\\
	\vspace{-100mm}
	\tau_{\lambda_{i}}\dot{\lambda}_{i}&= h_{i}(x),\;\;~\hspace{2cm}
	y =-x,
	\end{split}
	\end{equation}
	where $\tau_{x}, \tau_{\lambda}\deff\text{diag}\{\tau_{\lambda_{i}},\ldots,\tau_{\lambda_{m}}\}$ are positive definite matrices and $u, y \in \mathbb{R}^n$. The unforced system of equations, obtained by setting $u=0$ in \eqref{maindyn}, represent primal-dual dynamics corresponding to \eqref{MainLag}. Moreover, the equilibrium point corresponds to the solution of the KKT conditions \eqref{mainKKT}.
\begin{remark}The discrete time primal-dual gradient descent equations of convex optimization problem \eqref{SOP} are 
\begin{eqnarray*}
x(t_{k+1})&=&x(t_k)-\eta_{x}\nabla_x\mathcal{L}(x,u)\\
\lambda(t_{k+1})&=&\lambda(t_k)+\eta_{\lambda}\nabla_{\lambda}\mathcal{L}(x,u),\;\;\; k\in \mathbb{Z}^+.
\end{eqnarray*}
where $\eta_x>0$ and $\eta_{\lambda}>0$ represents the step size. Further these are equivalent to the continuous time equations \eqref{maindyn1}, if the step sizes are chosen as $\eta_x=\Delta T\tau_x^{-1}$ and $\eta_{\lambda}=\Delta T\tau_{\lambda}^{-1}$, where $\Delta T=t_{k+1}-t_k$.
\end{remark}    
   Note that the primal-dual equations, expressed in \eqref{maindyn1}, closely resemble a pseudo-gradient structure. The pseudo-gradient structure in turn constitutes the basic skeleton of the Brayton-Moser itself. 
%
This key observation motivated us to look for connections between convex optimization and Brayton-Moser formulation. The following example better illustrates the idea we wish to outline here. Consider the unforced parallel RLC circuit with $V_s=0$ in Example \ref{Example::ch2::prlc::BM}. The equations governing the dynamics of the circuit, in BM formulation, is given by
      \begin{align*}
      \begin{matrix}
			-L\dot{i}&=& \nabla_{i}P\\
			C\dot{v}&=&\nabla_{v}P
	  \end{matrix}  
 		\end{align*}
Note that these equations can also be interpreted as the {\em continuous time gradient descent dynamics} for finding the quantity given by
\begin{equation*}
		   \max_{v}\left(\min_{i}P(i,v)\right),
\end{equation*}
where $P$ essentially acts as a Lagrangian. This example immediately points out that replacing $P$ with $\mathcal L$ would allow us to study the stability aspects of the primal-dual dynamics, under the same lens of the results derived in the earlier chapters. To do this, we start with BM formulation of convex optimization problem \eqref{SOP}.
        
	
	Let $z=(x,\lambda)$. The continuous time gradient laws \eqref{maindyn}, associated with \eqref{SOP}, naturally admit a Brayton-Moser (BM) formulation
	\begin{equation}
	Q(z)\dot{z}=\nabla_z P(z)+u\label{BM}
	\end{equation}
	with $Q(z)=\text{diag}\{-\tau_{x},\tau_{\lambda}\}$ and $P(z)=f(x)+\lambda^{\top} h(x)$ is a scalar function of the state. We next utilize this BM formulation to present a passivity property similar to the one derived in Proposition \ref{prop_control_finite}.
	\begin{proposition}\label{prop::eq_const}
		Let $\bar{z}=(\bar{x}, \bar{\lambda})$ satisfy \eqref{mainKKT}. Assume $h(x)$ is  convex and  $f(x)$ strictly convex. Then the system of equations \eqref{maindyn} are passive with port variables $(\dot{u},\dot{y})$ \cite{ICCvdotidot}.  Further every solution of the unforced version ($u=0$) of  \eqref{maindyn} asymptotically converges to $\bar{z}$.
	\end{proposition}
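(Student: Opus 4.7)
The plan is to follow the Krasovskii--type construction from Chapter~3 applied to the Brayton--Moser form \eqref{BM}, so the natural storage candidate is the quadratic form on the tangent space,
\[
S(\dot{x},\dot{\lambda})=\tfrac{1}{2}\dot{x}^{\top}\tau_{x}\dot{x}+\tfrac{1}{2}\dot{\lambda}^{\top}\tau_{\lambda}\dot{\lambda},
\]
motivated by the fact that $-Q=\mathrm{diag}\{\tau_{x},\tau_{\lambda}\}\succ 0$ and that, under strict convexity of $f$, the primal--dual equations should be contracting in this metric, making Proposition~\ref{prop_main} the right template.

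First I would differentiate \eqref{maindyn} once in time. The decisive structural fact is that each $h_{i}$ is \emph{affine}, so $\nabla_{x}^{2}h_{i}=0$ and the Jacobian $\nabla_{x}h$ (with columns $\nabla_{x}h_{i}$) is a constant matrix. This yields
\[
-\tau_{x}\ddot{x}=\nabla_{x}^{2}f(x)\,\dot{x}+(\nabla_{x}h)\,\dot{\lambda}+\dot{u},\qquad \tau_{\lambda}\ddot{\lambda}=(\nabla_{x}h)^{\top}\dot{x}.
\]
Substituting into $\dot{S}=\dot{x}^{\top}\tau_{x}\ddot{x}+\dot{\lambda}^{\top}\tau_{\lambda}\ddot{\lambda}$, the two coupling terms $-\dot{x}^{\top}(\nabla_{x}h)\dot{\lambda}$ and $\dot{\lambda}^{\top}(\nabla_{x}h)^{\top}\dot{x}$ cancel by skew symmetry, leaving
\[
\dot{S}=-\dot{x}^{\top}\nabla_{x}^{2}f(x)\,\dot{x}+\dot{u}^{\top}\dot{y},
\]
where I used $y=-x$, hence $\dot{y}=-\dot{x}$. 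Strict convexity of $f$ makes $\nabla_{x}^{2}f\succ 0$, so $\dot{S}\leq \dot{u}^{\top}\dot{y}$, establishing passivity with port variables $(\dot{u},\dot{y})$. This is precisely the ``differentiation at both ports'' passivity of Proposition~\ref{prop_main}, now realized for the primal--dual vector field.

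The second half---asymptotic convergence to $\bar{z}$ under $u=0$---is the more delicate part, and is where I expect the main obstacle to lie. The Krasovskii storage $S$ lives on velocities, so $\dot{S}\le 0$ does not by itself bound $(x,\lambda)$, which LaSalle's principle requires. To circumvent this, I would introduce the companion ``distance'' function
\[
V_{1}(x,\lambda)=\tfrac{1}{2}(x-\bar{x})^{\top}\tau_{x}(x-\bar{x})+\tfrac{1}{2}(\lambda-\bar{\lambda})^{\top}\tau_{\lambda}(\lambda-\bar{\lambda}).
\]
Using the KKT identities $\nabla_{x}f(\bar{x})+(\nabla_{x}h)\bar{\lambda}=0$ and $h(\bar{x})=0$, together with the affineness of $h$ (so that $h(x)-h(\bar{x})=(\nabla_{x}h)^{\top}(x-\bar{x})$, making the bilinear cross term $(x-\bar{x})^{\top}(\nabla_{x}h)(\lambda-\bar{\lambda})$ cancel against $(\lambda-\bar{\lambda})^{\top}(h(x)-h(\bar{x}))$), a short computation gives
\[
\dot{V}_{1}=-(x-\bar{x})^{\top}\bigl(\nabla_{x}f(x)-\nabla_{x}f(\bar{x})\bigr)\le 0,
\]
with equality only when $x=\bar{x}$ by strict convexity.

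Finally, $V_{1}$ provides radial boundedness of trajectories, and LaSalle's invariance principle forces trajectories into the largest invariant set where $x\equiv \bar{x}$. On this set the dual equation gives $\tau_{\lambda}\dot{\lambda}=h(\bar{x})=0$, so $\lambda$ is frozen; and the primal equation reduces to $(\nabla_{x}h)(\lambda-\bar{\lambda})=0$. Under the standard constraint qualification that $\nabla_{x}h$ has full column rank (already implicit in the uniqueness of $\bar{\lambda}$ in \eqref{mainKKT}), this pins $\lambda=\bar{\lambda}$, concluding convergence to $\bar{z}$. The subtlety I would need to handle carefully is precisely this dovetailing: the passivity/contraction argument naturally produces $\dot{x}\to 0$ but not $\lambda\to\bar{\lambda}$, so the companion function $V_{1}$ is essential, and verifying that the cross-term cancellations in both $\dot{S}$ and $\dot{V}_{1}$ genuinely rely on \emph{affineness} (and would fail for general convex $h_{i}$) is the conceptual crux of the argument.
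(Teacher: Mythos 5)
Your passivity half is essentially the paper's own argument: the same Krasovskii storage $\tilde P=\tfrac12\dot z^{\top}M\dot z$ with $M=\mathrm{diag}\{\tau_x,\tau_\lambda\}$, the same cancellation of the cross terms $\mp\dot x^{\top}(\nabla_x h)\dot\lambda$ (the paper does not even display it, but this is exactly what makes $\dot{\tilde P}=-\dot x^{\top}\nabla_x^2 f\,\dot x-\dot x^{\top}\dot u$ hold), and the same conclusion $\dot{\tilde P}\le \dot u^{\top}\dot y$.

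Where you genuinely diverge is the convergence half. The paper stays with the Krasovskii function alone: for $u=0$ it argues $\dot{\tilde P}=0\Rightarrow\dot x=0\Rightarrow x$ constant, then reads off from the first equation of \eqref{maindyn} that $\lambda$ is constant, and declares asymptotic stability of $\bar z$. This is a LaSalle-type argument compressed into two lines, and it silently uses two things: (i) precompactness of the orbit, which the Krasovskii storage does \emph{not} supply (its sublevel sets, viewed as functions of $z$, need not be bounded — e.g.\ along directions $\lambda$ with $(\nabla_x h)\lambda=0$), and (ii) injectivity of $\lambda\mapsto(\nabla_x h)\lambda$, i.e.\ linear independence of the constraint gradients, to pass from ``$(\nabla_x h)\lambda$ constant'' to ``$\lambda$ constant'' and from ``some equilibrium'' to ``the equilibrium $\bar z$''. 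Your companion function $V_1(x,\lambda)=\tfrac12(x-\bar x)^{\top}\tau_x(x-\bar x)+\tfrac12(\lambda-\bar\lambda)^{\top}\tau_\lambda(\lambda-\bar\lambda)$ repairs (i): the affineness-based cancellation giving $\dot V_1=-(x-\bar x)^{\top}(\nabla_x f(x)-\nabla_x f(\bar x))\le 0$ is correct, it yields bounded trajectories, and LaSalle then legitimately applies; and you make (ii) explicit as a full-column-rank hypothesis on $\nabla_x h$. So your route is more careful than the paper's, at the price of using the equilibrium $\bar z$ in the Lyapunov construction — which cuts against the paper's stated motivation for Krasovskii functions (no a priori knowledge of the optimum) but is perfectly legitimate for a convergence proof. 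One caveat on your side: the rank condition is not ``already implicit'' in \eqref{mainKKT} as you claim — the KKT conditions as stated only assert existence of $\bar\lambda$; without full rank the dual optimum is non-unique and the proposition's claim of convergence to the given $\bar z$ is itself too strong, so this should be stated as an added assumption rather than absorbed into the hypotheses.
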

	\begin{proof}
		We can now consider storage function $\tilde{P}$ of the form \eqref{kras_lypunov} with  $M = diag\{\tau_{x},\tau_{\lambda}\}$, resulting in
		\begin{eqnarray}\label{eq_const_P}
		\tilde{P}&=&\frac{1}{2}\dot{z}^{T}M\dot{z} = \frac{1}{2}\dot{x}^{T}\tau_{x}\dot{x}+\frac{1}{2}\dot{\lambda}^{T}\tau_{\lambda}\dot{\lambda}
		\end{eqnarray}
		The time derivative of the storage function \eqref{eq_const_P} along the system of equations \eqref{maindyn} can be computed as
		\beqn
		\dot{\tilde{P}}&=&-\dot{x}^\top\nabla_x^2f(x) \dot{x}-\dot{x}^\top  \dot{u}
		\leq  -\dot{x}^\top  \dot{u}= \dot u ^{\top}\dot y
		\eeqn
		which implies that the system \eqref{maindyn} is passive. 
	 Further for $u=0$ we have $\dot{\tilde{P}}=0$ $\implies$ $\dot{x}=0$ ( $x$ is some constant). Using this in the first equation of \eqref{maindyn} we get that $\lambda$ is a constant, proving asymptotic stability of $\bar{z}$.
	\end{proof}
\begin{remark}The second method of Lyapunov method hinges on finding a suitable Lyapunov function that decreases along the system trajectories. Knowing the equilibrium point of a dynamical system is not necessary, but definitely helps in constructing the Lyapunov function. Krasovskii-type Lyapunov functions \cite{khalil1996noninear} \cite{krasovskiicertain} is one such function which does not require the information about the equilibrium point of the dynamical system explicitly. Proving stability using Krasovskii-type Lyapunov implies that the distance between the trajectories is decreasing, which is essentially the fundamental idea in contraction analysis. Utilizing Lyapunov analysis, we can show that trajectories converge to a limit set.  However the limit set is not known apriori. 
Hence the use of Krasovskii-type storage function \eqref{eq_const_P} in Proposition \ref{prop::eq_const} is particularly suited, as the solution of the optimization problem (or the equilibrium point) is not known a priori.
%
%
%
\end{remark}
	%
	%
\section{Switch system formulation: For inequality constraint}
We now define the inequality constraint $g_i(\tilde{u})\leq 0$  as the following hybrid dynamics
	\begin{equation}\label{IED}
	\tau_{\mu_i}\dot \mu_i=(g_i(\tilde u))^+_{\mu_i}
	\end{equation}
	where  $\tilde{u}\in \mathbb{R}^n$ and $i\in \{1\cdots p\}$. The positive projection of $g_i(\tilde{u})$ can be written as 
	\beq\label{IED1}
	(g_i(\tilde u))_{\mu_i}^+&=&\left\{\begin{matrix} 
		g_i(\tilde u) &\mu_i>0\\\max\{0,g_i(\tilde{u})\} & \mu_i=0
	\end{matrix} \right. 
	\eeq
This is introduced in \cite{kose1956solutions}, where the authors construct a dynamical system which converges to the stationary solution of saddle value problems. These equations are proposed in such a way that, if the initial condition of $\mu(t)$ is non-negative, then the trajectories $\mu(t)$ always stay inside positive orthant $\mathbb{R}^+$. 
Note that the discontinuity in the above equations \eqref{IED1} occurs when $g_i(\tilde{u})<0$ and $\mu_i=0$, the value of $(g_i(\tilde{u}))^+_{\mu_i}$ switches from $g_i(\tilde{u})$ to $0$. This ensures that the $\mu_i$'s does not go below zero. To make this more visible, we redefine these equations equivalently as follows; 
	\beq 
	(g_i(\tilde u))_{\mu_i}^+&=&\left\{\begin{matrix}
		g_i(\tilde u) &\; (\mu_i>0 \;\;\text{or}\;\; g_i(\tilde{u})>0)\\0 &\;\text{otherwise}
	\end{matrix} \right. 
	\eeq The projection is said to be active in the second case. Let $\mathcal{P}$ represent the power set of $\{1\cdots p\}$, then we define the function $\sigma:[0,\;\infty)\rightarrow \mathcal{P}$ as follows
	\beq\label{sigma_map} \sigma(t)=\{i \mid \; \mu_i(t)=0\;\;\text{and }\;g_i(\tilde{u})\leq0\, \forall i \in \{1, ..., p\}\}\eeq
	where the projection is active. With $\sigma(t)$ representing the switching signal, equation \eqref{IED} now takes the form of a switched system 
	\beq \label{active_const_def}
	\tau_{\mu_i}\dot \mu_i=g_i(\tilde u,\sigma)&=&\left\{\begin{matrix}
		g_i(\tilde u) ;&\; i\notin \sigma(t)\\0 ;&\;i\in \sigma(t)
	\end{matrix} \right. 
	\eeq
	The overall dynamics of the $p$ inequality constraints $g_i(\tilde{u})\leq 0$ $\forall i\in \{1\cdots p\}$ can be written in a compact form as:
	\begin{equation}\label{IEdyn}
	\tau_{\mu}\dot{\mu}=g(\tilde u,\sigma)
	\end{equation}
	where $\mu_i$ and $g_i(\tilde{u},\sigma)$ are $i^{th}$ components of  $\mu$ and $g(\tilde{u},\sigma)$ respectively.
	%
	%
	%
	It is well known that a sufficient condition for a switched system to be passive system is that the storage function should be common for all the individual subsystems \cite{zhao2006notion}. In general it is not easy to find such storage functions. Here we use passivity property defined with `switched storage functions'\cite{HybridPassive}. Consider the following storage function(s) 
	\beq\label{storage_fun_ineq_const}
	S_{\sigma_q}(\mu)&=&\dfrac{1}{2}\sum_{i\notin\sigma_q}^{}\dot \mu_i^2\tau_{\mu_i}\;\;\; \forall \sigma_q \in \mathcal{P}
	\eeq
	\begin{proposition} \label{prop:ineq_passivity}
		The switched system \eqref{IEdyn} is  passive with switched storage functions $S_{\sigma_q}$ (defined one for each switching state $\sigma_q\in\mathcal{P}$ ), input port $u_s=\dot{\tilde{u}}$ and  output port $y_s=\dot{\tilde{y}}$ where $\tilde{y}=\sum_{\forall i} \mu_i\nabla_{\tilde u}g_i(\tilde{u})$. That is, for each $\sigma_p\in \mathcal{P}$ with the property that for every pair of switching times $(t_i,t_j)$, $i<j$ such that $\sigma(t_i)=\sigma(t_j)=\sigma_p\in \mathcal{P}$ and $\sigma(t_k)\neq \sigma_p$ for $t_i<t_k<t_j$, we have
		\begin{eqnarray}\label{Hybrid_passivity}
		S_{\sigma_p}(\mu(t_j))-S_{\sigma_p}(\mu(t_i))\leq \int_{t_i}^{t_j}u_s^\top y_sdt.
		\end{eqnarray}
	\end{proposition}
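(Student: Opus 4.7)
The plan is to first establish a dissipation inequality inside each mode $\sigma_q$ of the switched system, and then to chain these inequalities across switching instants by showing that the switched storage function does not jump upward at mode transitions. Fix a mode $\sigma_q$. Using \eqref{active_const_def}, for $i\in\sigma_q$ we have $\dot{\mu}_i=0$ and $\mu_i=0$ (by the very definition \eqref{sigma_map} of $\sigma(t)$), while for $i\notin\sigma_q$ we have $\tau_{\mu_i}\dot{\mu}_i = g_i(\tilde u)$, so that $\tau_{\mu_i}\ddot{\mu}_i = \nabla_{\tilde u} g_i(\tilde u)^\top \dot{\tilde u}$. Differentiating the storage \eqref{storage_fun_ineq_const} along the flow in mode $\sigma_q$ yields
\begin{equation*}
\dot S_{\sigma_q} \;=\; \sum_{i\notin\sigma_q}\tau_{\mu_i}\dot{\mu}_i \ddot{\mu}_i \;=\; \sum_{i\notin\sigma_q}\dot{\mu}_i\,\nabla_{\tilde u} g_i(\tilde u)^\top\dot{\tilde u}.
\end{equation*}

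Next, I would compute $\dot{\tilde y} = \sum_i \dot{\mu}_i\nabla_{\tilde u}g_i(\tilde u) + \sum_i \mu_i\nabla_{\tilde u}^2 g_i(\tilde u)\,\dot{\tilde u}$, and use $\mu_i=\dot{\mu}_i=0$ for $i\in\sigma_q$ to restrict both sums to $i\notin\sigma_q$. Taking the inner product with $\dot{\tilde u}=u_s$ then gives
\begin{equation*}
u_s^{\top}y_s \;=\; \dot{\tilde u}^\top\dot{\tilde y} \;=\; \dot S_{\sigma_q} + \sum_{i\notin\sigma_q}\mu_i\,\dot{\tilde u}^{\top}\nabla_{\tilde u}^2 g_i(\tilde u)\,\dot{\tilde u}.
\end{equation*}
Convexity of each $g_i$ gives $\nabla_{\tilde u}^2 g_i\succeq 0$, and nonnegativity of $\mu_i$ is preserved by the projection \eqref{IED1}. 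Hence the cross term is nonnegative and we obtain the per-mode dissipation inequality $\dot S_{\sigma_q}\le u_s^{\top}y_s$.

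The remaining and more delicate step is to splice these per-mode inequalities across the switching instants $t_i=s_0<s_1<\cdots<s_k=t_j$ between the two visits to mode $\sigma_p$. On each continuous interval $[s_\ell, s_{\ell+1})$ the instantaneous storage is $S_{\sigma(s_\ell)}$, which satisfies the dissipation inequality above. What I must establish separately is that $S_{\sigma(s_{\ell+1}^+)}(\mu(s_{\ell+1})) \le S_{\sigma(s_\ell)}(\mu(s_{\ell+1}^-))$ at every switching time. Two cases arise from \eqref{sigma_map}: \textbf{(a)} a constraint $j$ becomes active, i.e.\ $\sigma^+=\sigma^-\cup\{j\}$; this occurs exactly when $\mu_j$ hits $0$ with $g_j(\tilde u)\le 0$, so $\dot\mu_j$ jumps from $g_j/\tau_{\mu_j}$ to $0$, which strictly \emph{removes} the term $\tfrac12\dot\mu_j^2\tau_{\mu_j}$ from the sum in \eqref{storage_fun_ineq_const} and leaves the others unchanged, yielding a downward jump. \textbf{(b)} a constraint $j$ leaves the active set, $\sigma^+=\sigma^-\setminus\{j\}$; this happens precisely at the crossing $g_j(\tilde u)=0$, so the ``new'' $\dot\mu_j=g_j/\tau_{\mu_j}=0$ at the switching instant, and $S_{\sigma^+}$ coincides with $S_{\sigma^-}$. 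In both cases the storage does not jump up, which is the main obstacle of the proof. Summing the per-mode dissipation inequalities and the non-positive switching jumps telescopes to
\begin{equation*}
S_{\sigma_p}(\mu(t_j)) - S_{\sigma_p}(\mu(t_i)) \;\le\; \int_{t_i}^{t_j} u_s^{\top}y_s\,dt,
\end{equation*}
which is exactly \eqref{Hybrid_passivity}. The key conceptual work therefore concentrates on the regularity analysis at the switching instants and on verifying that the positive projection \eqref{IED1} keeps $\mu_i\ge 0$, so that the convexity argument in the per-mode step remains valid throughout.
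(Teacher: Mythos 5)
Your proposal is correct and follows essentially the same route as the paper's proof: a per-mode dissipation inequality obtained from convexity of the $g_i$ and nonnegativity of the multipliers (your identity $u_s^\top y_s = \dot S_{\sigma_q} + \sum_{i\notin\sigma_q}\mu_i\,\dot{\tilde u}^\top\nabla^2_{\tilde u}g_i\,\dot{\tilde u}$ is just the paper's computation of $\dot S_\sigma$ rearranged), followed by the same two-case analysis at switching instants (downward jump when a constraint activates, continuity when it deactivates since the entering term vanishes at $g_j=0$), and a telescoping argument. No gaps; the only cosmetic difference is that the paper expresses the mode-invariant step via $\sum_{i\notin\sigma}\mu_i\nabla_{\tilde u}g_i=\sum_{\forall i}\mu_i\nabla_{\tilde u}g_i$ rather than restricting both sums as you do.
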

	\begin{proof}	
We start with analyzing the passivity property for a time interval say $[0,\; \tau_{\sigma})$ with fixed $\sigma(t)$. The time derivative of the storage function $S_{\sigma}(\mu)$ is 
	\begin{eqnarray}
\label{Ssigmadot}
\begin{aligned}
\dot{S}_{\sigma}
&=\sum_{i\notin\sigma}^{}\dot \mu_i\ddot \mu_i\tau_{\mu_i}=\sum_{i\notin\sigma}^{}\dot \mu_i\nabla_{\tilde u}g_i^\top\dot{\tilde u}\\
&= \dot{\tilde{u}}^\top\left(\dfrac{d}{dt}\sum_{i\notin \sigma}\mu_i\nabla_{\tilde{u}}g_i -\sum_{i\notin \sigma}\mu_i\nabla_{\tilde{u}}^2g_i \dot{\tilde{u}}\right)\nonumber\\
&= \dot{\tilde{u}}^\top\left(\dot{\tilde{y}} -\sum_{\forall i}\mu_i\nabla_{\tilde{u}}^2g_i \dot{\tilde{u}}\right)\\
&\leq\dot{\tilde{u}}^\top \dot{\tilde{y}}=u_s^\top y_s.
\end{aligned}
\end{eqnarray}
	In step two we use $ \sum_{i\notin\sigma}^{}\mu_i\nabla_{u}g_i=\sum_{\forall i}^{}\mu_i\nabla_{u}g_i$ (which is true since $\mu_i=0$, if $ i\in \sigma $) and in step three we use the convexity of $g$ and non-negativity of the $\mu_i$. The above inequality can be equivalently written as
		\begin{eqnarray}\label{passive_IE}
		S_{\sigma}(\mu(\tau_{\sigma}))-S_{\sigma}(\mu(0))\leq \int_0^{\tau_\sigma}\dot{\tilde{u}}^\top \dot{\tilde{y}} dt
		\end{eqnarray}
	Hence, the system of equation \eqref{IEdyn} represent a finite family of passive systems and \eqref{storage_fun_ineq_const} represents their corresponding storage functions. 
	Note that in the above inequality, supply rate in the right hand side is independent of $\sigma$ (discrete state), where as the storage functions are dependent on $\sigma$. 
	Since this is not sufficient to prove the passivity property of \eqref{IEdyn}, we further need to analyse the behaviour of the storage functions at all switching times. Let $\sigma(t) \in \mathcal{P}$ denotes current active projection set as defined in \eqref{sigma_map}, then we have the following scenarios:\\
    \begin{minipage}{0.05\linewidth}
    \vspace{-5cm}
    (i)
    \end{minipage}
    \begin{minipage}{0.95\linewidth}
    \vspace{0.25cm}
    For some $i\notin \sigma(t^-)$, let the projection of $i^{th}$ constraint ($g_i(\tilde{u})\leq 0$) become active (i.e $ \mu_i$ reaches $0$ when $g_i(\tilde{u})<0$) at time $t$. This implies a new element $i$ is added to the projection set, $i\in\sigma(t)$. 
		The storage function \eqref{storage_fun_ineq_const} decreases by loosing the term $\tau_{\mu_i}\dot{\mu}_i^2$ from the summation. 
		The term in the storage function corresponding to this $i$ will not appear in \eqref{storage_fun_ineq_const} as $i\in \sigma(t)$. This happens discontinuously because $g_i(\tilde{u},\sigma)$ switches from $g_i(\tilde{u})< 0$ to $0$. Hence
		\beq\label{inactive2active_constraint} S_{\sigma(t)}(\mu(t))< S_{\sigma(t^-)}(\mu(t^-)) \eeq
    \end{minipage}
    
     \begin{minipage}{0.05\linewidth}
    \vspace{-4cm}
    (ii)
    \end{minipage}
    \begin{minipage}{0.95\linewidth}
    In the case when the projection of an active constraint $i \in \sigma(t^-)$ becomes inactive i.e $i \notin \sigma(t)$, a new term $\tau_{\mu_i}\dot{\mu}_i^2$ is added to the summation of the storage function \eqref{storage_fun_ineq_const}. But this happens in a continuous way because $g_i(\tilde{u},\sigma)$ has to increase from $g_i(\tilde{u})< 0$ to $g_i(\tilde{u})> 0$ by crossing $0$. By continuity argument we have
    \vspace{-0.5cm}
		\beq\label{active2inactive_constraint} S_{\sigma(t)}(\mu(t))= S_{\sigma(t^-)}(\mu(t^-)).\\\nonumber
        \eeq
    \end{minipage}
    		These situations are depicted in Fig. \ref{SFV}.
		\begin{figure}[t]
			\center
			\includegraphics[width=0.8\textwidth]{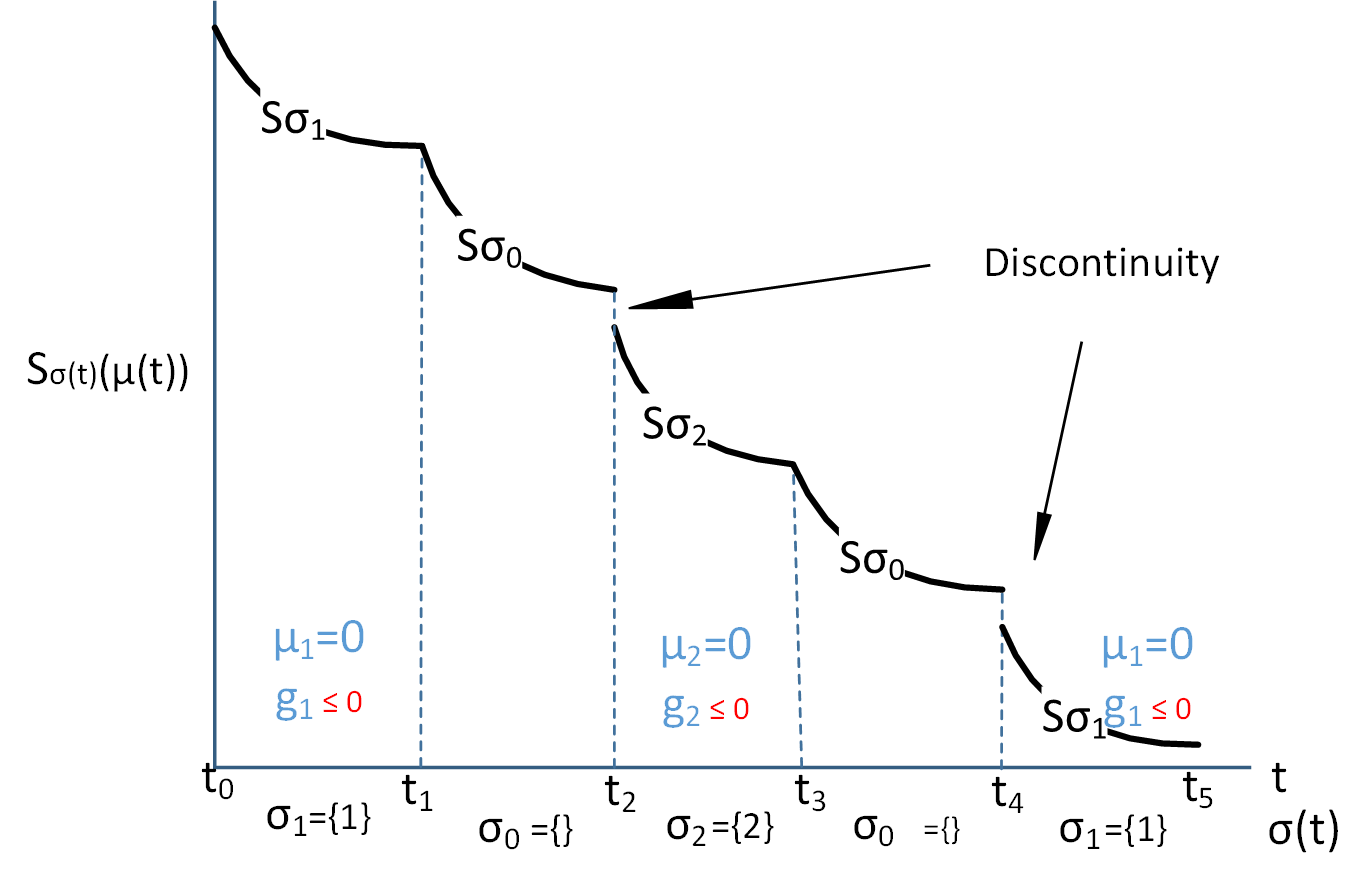}
			\caption{Example for time evolution of storage function with  two inequality constraints ($p=2$). Note that case (i) appears at switching time $t_2$, $t_4$ and case (ii) at $t_1$, $t_3$.}
			\label{SFV}
		\end{figure}
	Now consider a $\sigma_p\in \mathcal{P}$ as given in the Proposition 4.2.
	 We assume that there are $N$ switching times between $t_i$ and $t_j$. 
	 Noting that the storage function is not increasing at switching times we have, 
	  \begin{eqnarray*}
	 S_{\sigma(t_j)} &\leq& S_{\sigma(t_j^-)}
	 \leq  S_{\sigma(t_{i+N})} +\int_{t_{i+N}}^{t_j}\dot{\tilde{u}}^\top \dot{\tilde{y}} dt\\
	 &\leq & S_{\sigma(t_{i+N}^-)} +\int_{t_{i+N}}^{t_j}\dot{\tilde{u}}^\top \dot{\tilde{y}} dt\\
	  &\leq & S_{\sigma(t_{i+N-1})} +\int_{t_{i+N-1}}^{t_{i+N}}\dot{\tilde{u}}^\top \dot{\tilde{y}} dt +\int_{t_{i+N}}^{t_j}\dot{\tilde{u}}^\top \dot{\tilde{y}} dt\\
	    &\leq & S_{\sigma(t_{i})} +\int_{t_{i}}^{t_{i+1}}\dot{\tilde{u}}^\top \dot{\tilde{y}} dt +\cdots+\int_{t_{i+N}}^{t_j}\dot{\tilde{u}}^\top \dot{\tilde{y}} dt\\
	        &= & S_{\sigma(t_{i})} +\int_{t_{i}}^{t_j}\dot{\tilde{u}}^\top \dot{\tilde{y}} dt.
	  \end{eqnarray*}
	 Above we used \eqref{passive_IE}, \eqref{inactive2active_constraint} and \eqref{active2inactive_constraint}. 
	 We thus conclude the system is passive with port variables $(\dot{\tilde{u}}, \dot{\tilde{y}})$. 
	 \end{proof}
	\begin{proposition}\label{prop::ass_stab_ineq}
		The equilibrium set $\Omega_e$ defined by constant control input $\tilde{u}=\tilde{u}^\ast$ of \eqref{IED}
		\beqn\label{IE_equilibrium}
		\Omega_e=\left\{(\bar{\mu},\tilde{u}^\ast)\left|g_{i}(\tilde{u}^{*})\leq 0,\;\; \bar{\mu}_{i}g_{i}(\tilde{u}^{*})=0 \hspace{0.2cm} \forall i \in \{1,\hdots, p\}\right.\right\}
		\eeqn
		is asymptotically stable.
	\end{proposition}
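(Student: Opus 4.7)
The plan is to exploit the switched passivity property established in Proposition \ref{prop:ineq_passivity} with $\tilde u\equiv\tilde u^{\ast}$, so that the supply rate $u_s^{\top}y_s=\dot{\tilde u}^{\top}\dot{\tilde y}$ vanishes identically, and then carry out a LaSalle--type invariance argument adapted to the hybrid storage function $S_{\sigma}$.

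First, I would fix the constant input $\tilde u=\tilde u^{\ast}$. Substituting $\dot{\tilde u}=0$ into the inequality \eqref{Ssigmadot} gives $\dot S_{\sigma(t)}\le 0$ on every interval of constancy of $\sigma(t)$, and the analysis of switching instants performed in Proposition \ref{prop:ineq_passivity} (cases (i)--(ii) and equations \eqref{inactive2active_constraint}, \eqref{active2inactive_constraint}) shows that $S_{\sigma(t)}(\mu(t))$ does not increase across switches either. Hence the piecewise-defined map $t\mapsto S_{\sigma(t)}(\mu(t))$ is nonincreasing on $[0,\infty)$, is bounded below by zero, and so has a finite limit. This gives Lyapunov stability of $\Omega_e$ and boundedness of trajectories; in particular $\mu(t)$ stays in a compact set since $\mu_i(t)\ge 0$ for all $t$ by construction of the positive projection.

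Next I would identify the invariant set where convergence takes place. On any interval where $\sigma$ is constant, $\dot S_\sigma=-\dot{\tilde u}^{\top}\sum_i\mu_i\nabla^2_{\tilde u}g_i\,\dot{\tilde u}+\dot{\tilde u}^{\top}\dot{\tilde y}=0$, so $\dot S_\sigma\equiv 0$ implies no pointwise information on $\mu$ directly from \eqref{Ssigmadot}. However, since $S_{\sigma(t)}(\mu(t))\to S^{\infty}$, and each mode contributes a piece of the form $\tfrac12\sum_{i\notin\sigma}\tau_{\mu_i}^{-1}g_i(\tilde u^{\ast})^{2}$ (because $\tau_{\mu_i}\dot\mu_i=g_i(\tilde u^{\ast})$ for $i\notin\sigma$ is itself constant once $\tilde u^{\ast}$ is fixed), the asymptotic value of $S_{\sigma(t)}$ forces, for each $i$ that is eventually always in the ``inactive'' set, $g_i(\tilde u^{\ast})=0$, while indices that end up pinned in $\sigma$ satisfy $\mu_i=0$ and $g_i(\tilde u^{\ast})\le 0$ by the very definition \eqref{sigma_map} of $\sigma(t)$. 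Together these are exactly the complementary slackness conditions $\bar\mu_i g_i(\tilde u^{\ast})=0$ and $g_i(\tilde u^{\ast})\le 0$ characterising $\Omega_e$.

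The main obstacle is making the last step rigorous: a standard LaSalle invariance theorem does not apply directly to the discontinuous right-hand side \eqref{active_const_def}, and one must either invoke a hybrid/Caratheodory invariance principle (e.g.\ along the lines of \cite{cherukuri2016asymptotic} or \cite{lygeros2003dynamical}) or argue by hand that the sequence of switching times has no accumulation point in finite time and that, for every component $i$, only finitely many switches of membership in $\sigma(t)$ can occur before $\mu_i$ is trapped either at $0$ (with $g_i(\tilde u^{\ast})\le 0$) or on a trajectory driven by a constant $g_i(\tilde u^{\ast})$, which would contradict boundedness unless $g_i(\tilde u^{\ast})=0$. Finally I would combine Lyapunov stability, boundedness, and the limit-set characterisation to conclude asymptotic stability of $\Omega_e$.
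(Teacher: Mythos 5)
Your overall strategy (constant input makes the supply rate vanish, the hybrid storage function $S_{\sigma}$ is nonincreasing along intervals and across switches, then a hybrid LaSalle argument identifies the limit set with $\Omega_e$) is the same as the paper's, and your treatment of the switching instants matches Proposition \ref{prop:ineq_passivity}. However, there is a genuine gap at the boundedness step. You claim boundedness of trajectories follows because $S_{\sigma(t)}(\mu(t))$ has a finite limit and $\mu_i(t)\ge 0$. But $S_{\sigma}=\tfrac12\sum_{i\notin\sigma}\tau_{\mu_i}\dot\mu_i^2$ is a function of the \emph{velocities} $\dot\mu$, not of $\mu$ itself: with $\tilde u\equiv\tilde u^{\ast}$ it is in fact piecewise constant in time, and its boundedness says nothing about whether $\mu_i(t)$ grows, while non-negativity provides only a lower bound. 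Indeed, if $g_i(\tilde u^{\ast})>0$ for some $i$, then $\mu_i(t)$ grows linearly and without bound while $S_{\sigma}$ still converges, so no contradiction can be extracted from the limit of $S_{\sigma}$ alone; this also invalidates your claim that ``the asymptotic value of $S_{\sigma(t)}$ forces $g_i(\tilde u^{\ast})=0$'' for eventually-inactive indices. The paper closes exactly this hole with a separate argument: it introduces the quadratic function $V(\mu)=\tfrac12(\mu-\bar\mu)^{\top}\tau_{\mu}(\mu-\bar\mu)$ with $\bar\mu$ taken from $\Omega_e$, and shows $\dot V\le 0$ using $(g_i(\tilde u^{\ast}))^{+}_{\mu_i}\le g_i(\tilde u^{\ast})$, the definition \eqref{active_const_def} of the active set, and complementary slackness $\bar\mu_i g_i(\tilde u^{\ast})=0$. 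This yields boundedness of $\mu(t)$, which is then used both in the invariance argument and to rule out $g_i(\tilde u^{\ast})>0$ (linear growth would contradict boundedness).

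Note also that your own fallback in the last paragraph (``a trajectory driven by a constant $g_i(\tilde u^{\ast})$ \ldots would contradict boundedness unless $g_i(\tilde u^{\ast})=0$'') presupposes precisely this boundedness, so as written the argument is circular. To repair the proof you need to insert the distance-to-equilibrium Lyapunov function (or an equivalent boundedness argument) before invoking invariance; the remaining ingredients — Lyapunov stability of $\Omega_e$ from the nonincreasing hybrid storage function, and the characterisation of the invariant set via the no-discontinuity conditions $\bigl(g_i(\tilde u^{\ast})<0,\ \mu_i=0\bigr)$ or $\bigl(g_i(\tilde u^{\ast})=0,\ \mu_i\ge 0\bigr)$ — then go through essentially as in the paper.
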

	\begin{proof} 
	From \eqref{Ssigmadot}, \eqref{inactive2active_constraint} and \eqref{active2inactive_constraint} in Proposition \ref{prop:ineq_passivity}, we can infer that the Lyapunov function \eqref{storage_fun_ineq_const} is non-increasing for a constant $\tilde{u}=\tilde{u}^\ast$, concluding Lyapunov stability.  Now we use hybrid Lasalle's theorem condition \cite{lygeros2003dynamical} to show that $\Omega_e$ is the maximal positively invariant set, defined by \\
	   (i) $\dot{S}_{\sigma}(\mu(t))=0$ for fixed $\sigma$. This is can be verified by substituting $\tilde{u}=\tilde{u}^\ast$ a constant  in \eqref{Ssigmadot}.\\
	    (ii) $S_{\sigma(t^-)}(\mu(t^-))=S_{\sigma(t)}(\mu(t))$ if $\sigma$ switches between $\sigma(t^-)$ to $\sigma(t)$ at time $t$. In \eqref{IED}, if $g_i(\tilde{u}^\ast)<0$ and the corresponding $\mu_i^\ast>0$ then $\mu_i$ linearly converges to zero, causing a discontinuity in the Lyapunov function $S_{\sigma}(\mu(t))$ ( case-i of Proposition \ref{prop:ineq_passivity}). This does not happen if either
	     \beq\label{equild_cond12}
	        g_i(\tilde{u}^\ast)<0 ~\text{and} ~ \mu_i^\ast =0~\text{or}~
	         g_i(\tilde{u}^\ast)=0 ~\text{and}~  \mu_i^\ast\geq 0
	     \eeq
	     because both conditions imply $\dot{\mu}_i=0$. 
	We now prove that the trajectories of \eqref{IED} are bounded for $\tilde{u}=\tilde{u}^\ast$.
	Consider the quadratic norm $V(\mu)=\frac{1}{2}(\mu-\bar{\mu})^\top \tau_{\mu}(\mu-\bar{\mu})$. Next, using (8), (9) and (17) together with $g^+_i(\tilde{u})_{\mu_i}\leq g_i(\tilde{u})$, we show that the $V(\mu)$ is non-increasing 
	\begin{eqnarray*}
	\dot{V}
	&=&(\mu-\bar{\mu})^\top g^+(\tilde{u}^\ast)_{\mu}\\
	&\leq&(\mu-\bar{\mu})^\top g(\tilde{u}^\ast)\\
	&=&\sum_{\forall i\notin \sigma(t)}^{}(\mu_i-\bar{\mu}_i)^\top g_i(\tilde{u}^\ast)+\sum_{\forall i\in \sigma(t)}(\mu_i-\bar{\mu}_i)^\top g_i(\tilde{u}^\ast)\\
	&=&\sum_{\forall i\notin \sigma(t)}(\mu_i-\bar{\mu}_i)^\top g_i(\tilde{u}^\ast)\\
	&=&\sum_{\forall i\notin \sigma(t)}\mu_i^\top g_i(\tilde{u}^\ast)-\sum_{\forall i\notin \sigma(t)}\bar{\mu}_i^\top g_i(\tilde{u}^\ast)\\
	&=&\sum_{\forall i\notin \sigma(t)}\mu_i^\top g_i(\tilde{u}^\ast)\\&\leq& 0
	\end{eqnarray*}
	In step one we used \eqref{IED}, in step two we used the fact that $g^+_i(\tilde{u})\leq g_i(\tilde{u})$, in step three and four we used \eqref{active_const_def} , in step six we used \eqref{equild_cond12} and finally in step seven we again used \eqref{active_const_def}. 
	This implies that the trajectories of \eqref{IED} are bounded for $\tilde{u}=\tilde{u}^\ast$.
	If $g_i(\tilde{u}^\ast)>0$, 
	$\mu_i$ increases linearly, contradicting the boundedness of the trajectories. The proof follows by noting that conditions in \eqref{equild_cond12} represent $\Omega_e$ set.
	\end{proof} 
	The most interesting  property of passive systems is their modular nature. One can define power conserving interconnections (such as Newton law's or Kirchoff's current/voltage laws) between these systems, and show that the overall system is passive and there by stable. In the next section we make use of this property, to include inequality constraint \eqref{IED} in the optimization problem \eqref{SOP}.
	\section{The overall optimization problem}
    \label{chap::4::sec::3}
We now define a power conserving interconnection between passive systems associated with optimization problem with an equality constraint \eqref{maindyn} and an inequality constraint \eqref{IED} (see Fig. \ref{fig:my_label}).
			\begin{figure}[h!]
		\centering
		\includegraphics[width=1\linewidth]{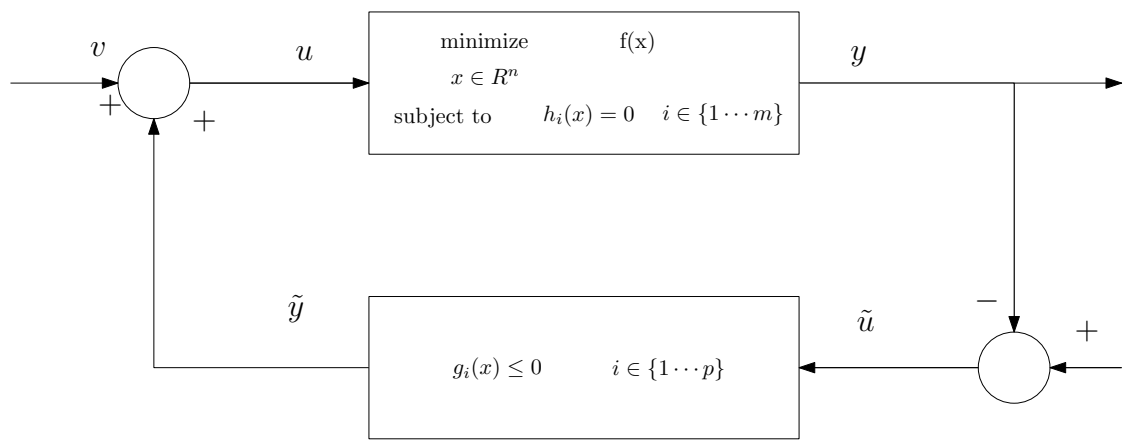}
		\caption{Interconnected optimization}
		\label{fig:my_label}
	\end{figure}
	\begin{proposition} \label{interconnectpassive}
		Consider the interconnection of passive systems \eqref{maindyn} and  \eqref{IED}, via the following interconnection constraints $ u=\tilde{y}+v~~ \text{and} ~~\tilde{u}=-y+\tilde{v}, ~ v\in \mathbb{R}^p, ~ \tilde v\in \mathbb{R}^n $. 
	For $\tilde v=0$, the interconnected system is then passive with port variables $\dot{v}$, $-\dot{x}$. 
		Moreover for $v=0$ and $\tilde v=0$ the interconnected system represents the primal-dual gradient dynamics of the optimization problem \eqref{standard_SOP}
		and the trajectories converge asymptotically to the optimal solution of \eqref{standard_SOP}.
	\end{proposition}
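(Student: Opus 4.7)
The plan is to combine the two passivity results already established, namely Proposition~5.2.1 giving passivity of \eqref{maindyn} with ports $(\dot{u},\dot{y})$ and storage $\tilde P=\frac12\dot z^\top M\dot z$, with Proposition~5.3.1 giving (switched) passivity of \eqref{IEdyn} with ports $(\dot{\tilde u},\dot{\tilde y})$ and switched storage $S_{\sigma}$. The candidate storage for the interconnected system is $W=\tilde P+S_{\sigma}$. Differentiating in time along the coupled dynamics and using the two passivity inequalities yields $\dot W \le \dot u^\top \dot y + \dot{\tilde u}^\top \dot{\tilde y}$. First I would differentiate the algebraic interconnection equations to get $\dot u=\dot{\tilde y}+\dot v$ and $\dot{\tilde u}=-\dot y+\dot{\tilde v}$, substitute them into the supply rate, and observe that with $\tilde v=0$ the cross-terms $\dot{\tilde y}^\top\dot y$ and $-\dot y^\top\dot{\tilde y}$ cancel exactly, leaving $\dot W \le \dot v^\top\dot y=\dot v^\top(-\dot x)$, which is the required passivity inequality with ports $(\dot v,-\dot x)$.

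Next I would verify that setting $v=\tilde v=0$ reproduces the primal-dual flow of \eqref{standard_SOP}. From the interconnection, $\tilde u=-y=x$ and $u=\tilde y=\sum_{i}\mu_i\nabla_x g_i(x)$, which when plugged into \eqref{maindyn} and \eqref{IED} gives
\begin{eqnarray*}
-\tau_x\dot x &=& \nabla_x f(x)+\sum_{i=1}^{m}\lambda_i\nabla_x h_i(x)+\sum_{i=1}^{p}\mu_i\nabla_x g_i(x),\\
\tau_{\lambda_i}\dot\lambda_i &=& h_i(x),\qquad \tau_{\mu_i}\dot\mu_i = (g_i(x))^+_{\mu_i},
\end{eqnarray*}
which is precisely the continuous-time primal-dual gradient dynamics associated with the Lagrangian \eqref{convex_Lagrangian} of \eqref{standard_SOP}. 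In particular, any equilibrium $(x^\ast,\lambda^\ast,\mu^\ast)$ of the closed system satisfies the KKT conditions \eqref{standard_KKT}, and since strong duality is assumed (Slater), such points are precisely the primal-dual optima.

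For the asymptotic convergence with $v=0$, the plan is a hybrid LaSalle argument on $W$. Between switching instants $\dot W\le 0$ directly from the computation above (invoking strict convexity of $f$ and convexity of $h,g$), and at switching instants $W$ can only jump downward or remain continuous, exactly as established in cases (i)-(ii) of the proof of Proposition~5.3.1 (the Krasovskii-type term $\tilde P$ depends only on the continuous trajectories, so it is automatically continuous across switches). Invoking the hybrid LaSalle principle, trajectories converge to the largest invariant set on which $\dot W=0$; the Krasovskii structure forces $\dot x=0$ there, which by \eqref{maindyn}-\eqref{IED} propagates to $\dot\lambda=0$ and, together with the equilibrium-set characterization in Proposition~5.3.2, to $\dot\mu=0$. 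Identifying this invariant set with $\Omega_e\cap\{\textrm{KKT for } \eqref{standard_SOP}\}$ and invoking strong duality then yields asymptotic convergence to the optimal primal-dual pair.

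The main technical obstacle I anticipate is the hybrid LaSalle step: one must ensure that the switched storage $W$ is a valid Lyapunov candidate across the discrete transitions of $\sigma(t)$, that trajectories are precompact (so LaSalle applies), and that $\dot x\equiv 0$ on the invariant set is strong enough to recover all four KKT conditions rather than merely primal feasibility. Boundedness follows from $W$ being radially unbounded in $\dot z$ plus the auxiliary quadratic bound on $\mu$ used in Proposition~5.3.2; the KKT recovery uses strict convexity of $f$ to pin down $x=x^\ast$ and then the equality/inequality equations of the primal-dual flow to fix $\lambda^\ast,\mu^\ast$ via the saddle-point characterization of \eqref{MainLag} extended by the inequality terms.
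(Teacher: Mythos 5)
Your proposal is correct and follows essentially the same route as the paper: the same composite storage function $\tilde{S}_{\sigma}=\tilde{P}+S_{\sigma}$, the same cancellation of the cross-terms through the skew-symmetric interconnection (with $\tilde{v}=0$) to obtain $\dot{\tilde{S}}_{\sigma}\leq \dot{v}^\top(-\dot{x})$, and the same identification of the closed loop with the primal-dual dynamics of \eqref{standard_SOP}. Your hybrid LaSalle argument merely spells out in detail the convergence step that the paper dispatches by citing the passivity--stability relation together with Propositions \ref{prop:ineq_passivity} and \ref{prop::ass_stab_ineq}.
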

    \begin{proof}
    Define the storage function $\tilde{S}_{\sigma}(x,\lambda,\mu)=\tilde{P}(x,\lambda)+S_{\sigma}(\mu)$.
	The time differential of $\tilde{S}_{\sigma}(x,\lambda,\mu)$ with $\tilde{v}=0$ is
	\begin{eqnarray*}
		\dot{\tilde{S}}_{\sigma}(x,\lambda,\mu)
		&=&-\dot{u}^\top \dot{x}+\dot{\tilde{u}}^\top\dot{\tilde{y}}\leq -\dot{v}^\top \dot{x}
	\end{eqnarray*}
	The interconnection of  \eqref{maindyn} and \eqref{IED} (see Fig. \ref{fig:my_labela}), with $v=0$ and $\tilde{v}=0$, gives
	\begin{eqnarray}\label{primal-dual-dyn}
	-\tau_{x}\dot{x}&=&\left(\nabla_{x} f(x)+\sum_{i=1}^{m}\lambda_{i}\nabla_{x} h_{i}(x)+\sum_{i=1}^{p}\mu_{i}\nabla_{x} g_{i}(x)\right)\nonumber\\
	\tau_{\lambda_{i}}\dot{\lambda_{i}}&=& h_{i}(x)\nonumber\\
	\tau_{\mu_i}\dot{\mu_{i}}&=& \begin{cases}
	g_{i}(x) \;\;\;\;\;\;\;\;\;\;\;\;\;\;\text{if}\; \mu_{i}>0 \;\; \forall i \in \{1,\hdots, p\} \label{PriD}\\
	\text{max}(0,g_{i}(x))\;\; \text{if} \;\mu_{i}=0
	\end{cases}
	\end{eqnarray}
	which represent the primal-dual gradient dynamics of \eqref{standard_SOP}.
	Hence the overall system takes the form of primal-dual gradient dynamics representing optimization problem with both equality and in-equality constraints \eqref{standard_SOP}. \\
	When $v=0$ and $\tilde{v}=0$, $\dot{\tilde{S}}_{\sigma}(x,\lambda,\mu) \le 0$, for the interconnected system. Stability can thus be concluded using the relation between passivity and stability \cite{l2gain} and Propositions \ref{prop:ineq_passivity}, \ref{prop::ass_stab_ineq}. 
    \end{proof}
			\begin{figure}[h!]
	\centering
	\includegraphics[width=1\linewidth]{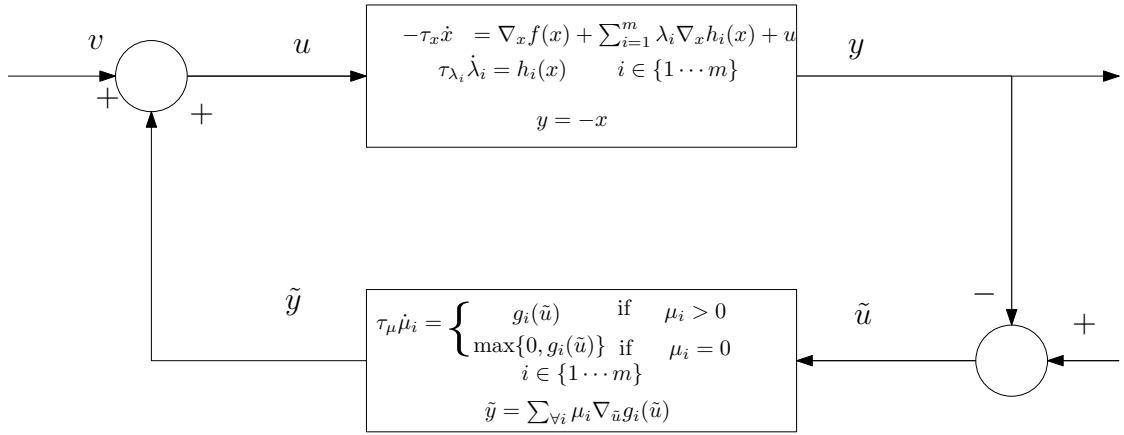}
	\caption{Interconnected primal dual dynamics}
	\label{fig:my_labela}
\end{figure}
The new port variables $\dot{v}$, $-\dot{x}$ can be used to change the convergence rate by damping injection methodology \cite{l2gain}. 
	One possible choice would be $v=k\nabla_x h(x) h(x)$ (note that equality constraint is affine) and $\tilde{v}=0$.
	The resulting dynamics 
    	\begin{eqnarray}\label{maindyna}
	-\tau_{x}\dot{x}&=&\nabla_{x} f(x)+\lambda\nabla_{x} h+\nabla_xg^{\top}\mu+kh(x)\nabla_x h(x)\nonumber\\
	\tau_{\lambda_{i}}\dot{\lambda}_{i}&=& h_{i}(x),\\
	\tau_{\mu_{i}}\dot{\mu}_{i}&=& (g_{i}(x))^{+}_{\mu_i},\;\;\nonumber
	y =-x
	\end{eqnarray}
    represents the primal-dual equations of augmented optimization problem
		\begin{equation}\label{SOP_main1}
		\begin{aligned}
		& \underset{x \in \mathbb{R}^{n}}{\text{minimize}}
		& & f(x) +\dfrac{1}{2}k(h(x))^2\\
		& \text{subject to}
		& & h(x)=0 , ~~g_{i}(x)\leq0 \hspace{0.4cm} i = 1,\hdots , p.
		\end{aligned}
		\end{equation}
        \begin{remark}
        Damping injection increases the convexity of the cost function (in this case we have $k(\nabla_xh(x)^2)-$strong convex function), and the resulting dynamics represents the primal-dual gradient laws of augmented Lagrangian.
        \end{remark}
%
%
%
%
%

		%
		%
		%
		%
		%
\noindent         In the next section, we demonstrate the continuous-time primal-dual algorithm, on the convex optimization formulation of Support Vector Machines (SVM) technique \cite{cortes1995support}. 
        \section{Support Vector Machine (SVM)}
 
Support Vector Machines \cite{cortes1995support} are a class of supervised machine learning algorithms which are commonly used for data classification. 
In this methodology, each data item is a point in $n$-dimensional space that is mapped to a category (or a class). 
Here the aim is to find an optimal separating hyperplane (OSH) which separates both the classes and maximizes the distance to the closest point from either class (as shown in Figure \ref{fig::svm_ex}). These closest points are usually called support vectors (SV). The lines passing through support vectors and parallel to the optimal separating hyperplane are called supporting hyperplanes (SH).

\begin{figure}[h!]
	\centering
	\includegraphics[width=0.8\linewidth]{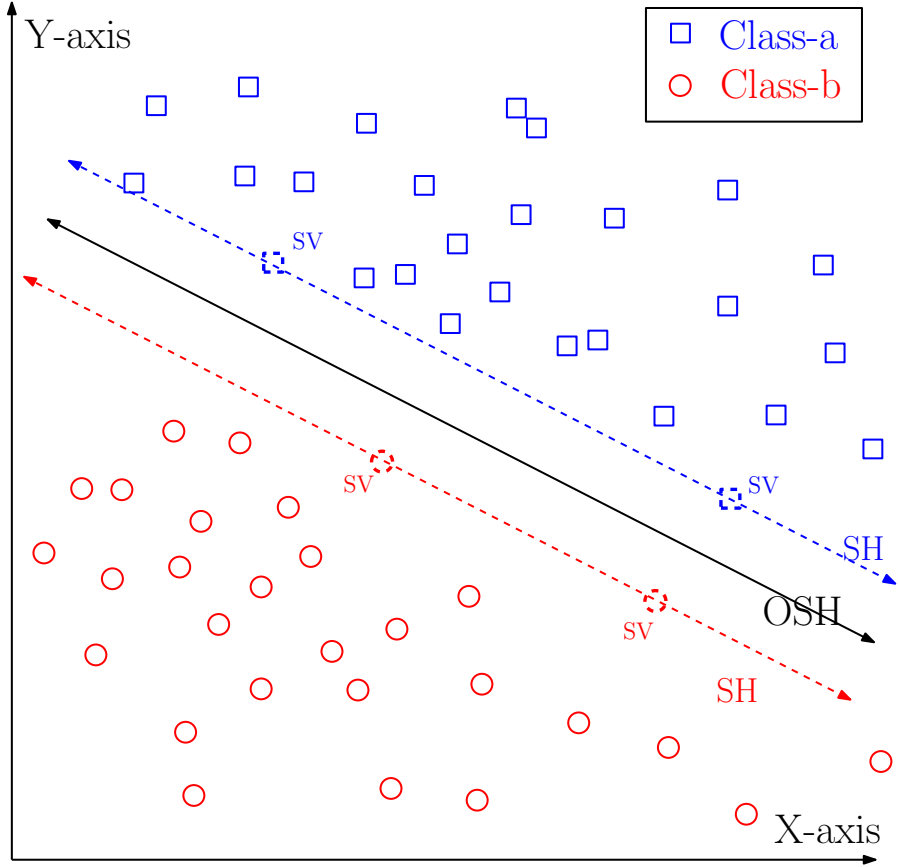}
	\caption{Description of a linear support vector machine.}
	\label{fig::svm_ex}
\end{figure}
{\em Problem formulation:} Consider two linearly separable classes, where each class (say class-$a$, class-$b$) contains a set of $N$ unique data points in $\mathbb{R}^2$. Let $X_a$ and $X_b$ denote the set of points in class-$a$ and class-$b$ respectively. In this methodology we find a hyperplane that separates the classes while maximizing the distance to the closest point from either class. Let $L$ be an affine set that characterizes such a hyperplane, defined as follows
  \beq
 L=\left\{x\in \mathbb{R}^2|x^\top \beta+\beta_0=0 \right\}
 \eeq
 where $\beta=(\beta_1,\beta_2)\in \mathbb{R}^2$ and $\beta_0\in \mathbb{R}$.  Define the map $l:\mathbb{R}^2\rightarrow \mathbb{R}$ by $l(x)=x^\top \beta +\beta_0$. Note the following, for any $x_0\in L$, $l(x_0)=0$ $\implies$ $x_0^\top \beta =-\beta_0$. This implies $l(x)$ can be rewritten as $l(x)=\beta^\top (x-x_0)$, which further implies the unit vector $\hat{\beta}=\dfrac{\beta}{||\beta||}$ is orthogonal to the line defined by the set $L$, that is, $x^\top\beta+\beta_0=0\iff (x-x_0)^\top \beta=0$.
\begin{figure}[h!]
	\centering
	\includegraphics[width=0.5\linewidth]{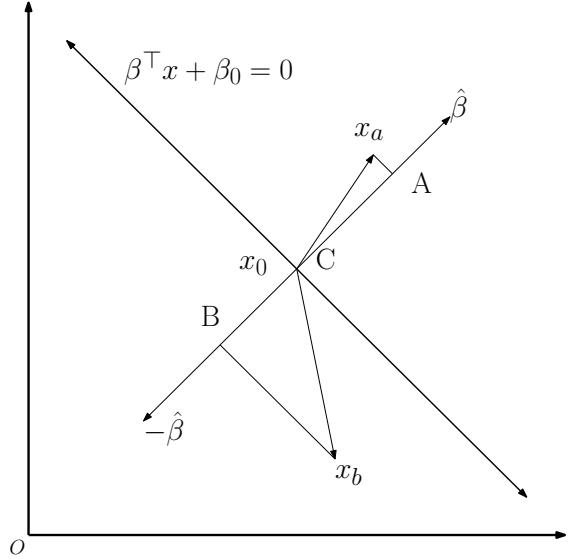}
	\caption{Mathematical formulation of a linear support vector machine, $x_a\in X_a$ (class-$a$) and $x_b\in X_b$ (class-$b$).}
	\label{fig:svm_derivation}
\end{figure}
The distance between the point $x_a\in X_a$ and line $L$ is $|AC|=(x_a-x_0)^\top \hat{\beta}$ (see Fig. \ref{fig:svm_derivation}). Similarly the distance between the point $x_b\in X_b$ and line $L$ is $|BC|=(x_b-x_0)^\top( -\hat{\beta})$. We want to find an optimal separating hyperplane that is at least $M$ units away from all the points. This implies
   \beq\label{svm_constraints}
   \begin{matrix}
 \forall x_a\in X_a, &  	(x_a-x_0)^\top \hat{\beta}     &\geq& M,\\
 \forall x_b\in X_b, & -(x_b-x_0)^\top \hat{\beta}&\geq& M.
   \end{matrix}
   \eeq
   Define $X\deff X_a\cup X_b$, and  $Y\deff Y_a \cup Y_b$ where $Y_a=\underbrace{\{1,\ldots,  1\}}_{\text{n\;\; times}}$ and $Y_b=\underbrace{\{-1,\ldots,  -1\}}_{\text{n\;\; times}}$. 
  The inequality constraints \eqref{svm_constraints} can be rewritten as
   \beq 
   \dfrac{1}{||\beta||}y_i(\beta^\top x_i+\beta_0)\geq M
   \eeq
   where $y_i=1$ if $x_i\in X_a$ (class-$a$), $y_i=-1$ if $x_i\in X_b$ (class-$b$). Finally, finding the optimal separating hyperplane can be proposed as the following optimization problem,
\begin{equation}\label{convex_form1}
\begin{aligned}
& \underset{\beta, \beta_0}{\text{maximize}}
& & M\\
& \text{subject to}
& & \dfrac{1}{||\beta||}y_i(\beta^\top x_i+\beta_0)\geq M,~\;\; \forall x_i\in X,~ y_i\in Y.
\end{aligned}
\end{equation} 
Since $M$ is arbitrary, choosing $M=\dfrac{2}{||\beta||}$ converts \eqref{convex_form1} into a convex optimization problem
\begin{equation}\label{convex_form2}
\begin{aligned}
& \underset{\beta, \beta_0}{\text{minimize}}
& & \dfrac{1}{2}||\beta||\\
& \text{subject to}
& & y_i(\beta^\top x_i+\beta_0)\geq 1,~\;\; \forall x_i\in X,~ y_i\in Y.
\end{aligned}
\end{equation} 
In order to use the primal-dual gradient method proposed in Section \ref{chap::4::sec::3}, we need the cost function to be twice differentiable. But, the cost function $\frac{1}{2}||\beta||\notin C^2$. 
The optimal solution $(\beta^\ast,\beta_0^\ast)$ of \eqref{convex_form2}, is further equivalent to the optimal solution of
\begin{equation}\label{convex_form3}
\begin{aligned}
& \underset{\beta, \beta_0}{\text{minimize}}
& & \dfrac{1}{2}||\beta||^2\\
& \text{subject to}
& & y_i(\beta^\top x_i+\beta_0)\geq 1,~\;\; \forall x_i\in X,~ y_i\in Y.
\end{aligned}
\end{equation}
We now use this convex optimization formulation for support vector machines, and derive its primal-dual gradient dynamics.\\
{\em Continuous time primal-dual gradient dynamics}: Comparing with the convex optimization formulation given in \eqref{standard_SOP}, the cost function is $f(\beta)=\dfrac{1}{2}||\beta||^2$ and inequality constraints are $g_i(\beta,\beta_0)=1-y_i(\beta^\top x_i+\beta_0)$, $i\in \{1,\cdots, 2N\}$. The Lagrangian can be written as
\beq
L(\beta,\mu)=\dfrac{1}{2}||\beta||^2+\sum_{i=1}^{2N}g_i(\beta,\beta_0)\mu_i
\eeq
where $\mu=(\mu_1,\cdots,\mu_{2N})$ denotes the Lagrange variable corresponding to the inequality constraints $g=(g_1,\cdots, g_{2N})$. The primal dual gradient laws given in \eqref{primal-dual-dyn} for the convex optimization problem \eqref{convex_form3} are
 \beqn
\begin{matrix}
	-\tau_{\beta}\dot{\beta}&=&\dfrac{\partial L}{\partial \beta}\\
		-\tau_{\beta_0}\dot{\beta}_0&=&\dfrac{\partial L}{\partial \beta_0}\\
			\tau_{\mu_i}\dot{\mu}_i&=&(g_i(\beta,\beta_0))^+_{\mu_i}\;\; \forall i \in \{1,\hdots, 2N\}
\end{matrix}
\eeqn
equivalently ,
\beq\label{primal-dual_svm}
	-\tau_{\beta}\dot{\beta}&=&\beta-\sum_{i=1}^{2N} \mu_iy_ix_i\nonumber\\
	-\tau_{\beta_0}\dot{\beta}_0&=&-\sum_{i=1}^{2N} \mu_iy_i\\
	\tau_{\mu_i}\dot{\mu}_i&=&\begin{cases}
		g_{i}(\beta,\beta_0) \;\;\;\;\;\;\;\;\;\;\;\;\;\;\text{if}\; \mu_{i}>0 \;\; \forall i \in \{1,\hdots, 2N\} \label{PriD}\\
		\text{max}\{0,g_{i}(\beta,\beta_0)\}\;\; \text{if} \;\mu_{i}=0.
	\end{cases}\nonumber
\eeq 
We now have the following result.
\begin{proposition}
	The primal-dual dynamics \eqref{primal-dual_svm} converges asymptotically to the optimal solution of \eqref{convex_form3}.
\end{proposition}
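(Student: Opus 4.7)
The plan is to recognize \eqref{primal-dual_svm} as a special instance of the primal--dual gradient dynamics \eqref{primal-dual-dyn} with no equality constraint ($m=0$), and then apply the interconnection/passivity machinery of Proposition \ref{interconnectpassive} together with the LaSalle-type argument used in Proposition \ref{prop::ass_stab_ineq}. First I would cast the SVM problem \eqref{convex_form3} in the standard form \eqref{standard_SOP} by setting $x=(\beta,\beta_0)\in\mathbb{R}^{3}$, $f(x)=\tfrac12\|\beta\|^2$, and $g_i(x)=1-y_i(\beta^\top x_i+\beta_0)$ for $i=1,\ldots,2N$. Each $g_i$ is affine, hence convex, and linear separability of $X_a$ and $X_b$ guarantees Slater's condition (there exists $(\beta,\beta_0)$ with all $g_i<0$), so strong duality holds and the KKT conditions are necessary and sufficient for optimality.

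Next I would construct the composite storage function suggested by Proposition \ref{interconnectpassive}, namely
\[
\tilde S_\sigma(\beta,\beta_0,\mu)=\tfrac{1}{2}\dot\beta^\top\tau_\beta\dot\beta+\tfrac{1}{2}\tau_{\beta_0}\dot\beta_0^{\,2}+S_\sigma(\mu),
\]
where $S_\sigma(\mu)$ is the switched storage \eqref{storage_fun_ineq_const} associated with the inequality-constraint subsystem. The passivity statements in Proposition \ref{prop::eq_const} (equality part, specialised to $m=0$) and Proposition \ref{prop:ineq_passivity} (inequality part), combined through the power-preserving interconnection $u=\tilde y$, $\tilde u=-y$ with $v=\tilde v=0$, give
\[
\dot{\tilde S}_\sigma \le -\dot\beta^\top\nabla_\beta^2 f(\beta)\,\dot\beta = -\dot\beta^\top\dot\beta\le 0,
\]
together with the jump inequalities \eqref{inactive2active_constraint}--\eqref{active2inactive_constraint} at the switching instants. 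This already yields Lyapunov stability of the KKT set.

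For asymptotic convergence I would invoke the hybrid LaSalle invariance principle, exactly as in Proposition \ref{prop::ass_stab_ineq}. On the largest invariant set contained in $\{\dot{\tilde S}_\sigma=0\}$ we have $\dot\beta\equiv 0$, so $\beta$ is constant; substituting into the $\beta$-equation of \eqref{primal-dual_svm} forces $\sum_i \mu_i y_i x_i$ to be constant, and $\mu_i$-dynamics combined with the complementarity form \eqref{equild_cond12} pins down $\mu_i$ at values satisfying $\mu_ig_i=0$ and $g_i\le 0$. The $\beta_0$-equation then gives $\sum_i \mu_i y_i=0$, i.e.\ $\dot\beta_0=0$, so $\beta_0$ is also constant. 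The limit point therefore satisfies all KKT conditions \eqref{standard_KKT} for \eqref{convex_form3}, which by strong duality are equivalent to optimality.

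The main obstacle I anticipate is that $\nabla^2_{(\beta,\beta_0)} f$ is only positive semidefinite (the $\beta_0$ direction lies in its kernel), so the direct strict-convexity assumption of Proposition \ref{prop::eq_const} fails and one cannot conclude $\dot\beta_0\to 0$ immediately from $\dot{\tilde S}_\sigma\le 0$. Handling this requires the two-step LaSalle argument above: first use semidefiniteness to force $\dot\beta\to 0$, and only then exploit the inequality-constraint dynamics (in particular $\sum_i \mu_i y_i=0$ at the limit) to obtain $\dot\beta_0\to 0$. Boundedness of $\mu(t)$, needed for LaSalle, follows exactly as in the proof of Proposition \ref{prop::ass_stab_ineq} using the auxiliary quadratic $V(\mu)=\tfrac12(\mu-\bar\mu)^\top\tau_\mu(\mu-\bar\mu)$.
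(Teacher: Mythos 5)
Your proposal takes the same overall route as the paper: the paper's entire proof is the single sentence ``since \eqref{convex_form3} has a strictly convex cost function and convex inequality constraints, the result follows from Propositions \ref{prop::eq_const}--\ref{interconnectpassive}.'' The difference is that your version is more honest about what that chain of propositions actually delivers, and on the key hypothesis you are right where the paper is wrong: $f(\beta,\beta_0)=\tfrac12\|\beta\|^2$ is \emph{not} strictly convex in the full decision variable $(\beta,\beta_0)$, since the $\beta_0$ direction lies in the kernel of its Hessian. Consequently the interconnected storage function only satisfies $\dot{\tilde S}_\sigma\le-\|\dot\beta\|^2$, so $\dot\beta_0\to 0$ cannot be read off directly from Proposition \ref{prop::eq_const}, and your two-step LaSalle argument is exactly the repair that the paper's one-line proof silently skips.

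That said, there is a sequencing flaw inside your invariance argument. You pin the $\mu_i$ down at complementarity values \emph{before} establishing that $\beta_0$ is constant, but complementarity ($\dot\mu_i=0$ with $g_i\le 0$, $\mu_i g_i=0$) can only be extracted from the $\mu$-dynamics once $g_i=1-y_i(\beta^\top x_i+\beta_0)$ is known to be constant, i.e.\ once $\beta_0$ is constant; as written, the two steps each presuppose the other. The clean order is: on the invariant set $\dot\beta\equiv 0$ gives $\beta\equiv\beta_c$ and $\sum_i\mu_i(t)y_ix_i\equiv\beta_c$; differentiating the latter along the $\mu$-dynamics forces $\bigl(\sum_{i\notin\sigma}\tau_{\mu_i}^{-1}x_i\bigr)\beta_0(t)=\sum_{i\notin\sigma}\tau_{\mu_i}^{-1}(1-y_i\beta_c^\top x_i)y_ix_i$, which for data in general position makes $\beta_0$ constant; then $\dot\beta_0=0$ yields $\sum_i\mu_iy_i=0$, the $g_i$ are constant, and boundedness of $\mu$ (your $V(\mu)$ argument) forces $\dot\mu_i=0$, which is complementarity, hence KKT. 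Note that the intermediate step genuinely uses the data: on $\{\dot\beta\equiv 0\}$ the residual $(\beta_0,\mu)$ dynamics conserve $\tfrac12\tau_{\beta_0}\dot\beta_0^2+S_\sigma(\mu)$ and can sustain undamped oscillations in the degenerate configuration $\sum_{i}\tau_{\mu_i}^{-1}x_i=0$, $\sum_{i}\tau_{\mu_i}^{-1}(1-y_i\beta_c^\top x_i)y_ix_i=0$. Neither your proposal nor the paper rules this case out, so strictly speaking the asymptotic claim needs such a non-degeneracy qualification — a gap that exists in the published proof as well, and that your approach at least makes visible.
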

\begin{proof}
			Since the optimization problem \eqref{convex_form3} has a strictly convex cost function and convex inequality constraints, the result follows from Propositions \ref{prop::eq_const} - \ref{interconnectpassive}.
\end{proof}

\subsection{Simulation Results}
A simulation study is conducted by generating two sets of linearly separable classes having 300 points each, using Normal distribution (see Table \ref{tab:data_gen} for distribution parameters). Figure \ref{fig::svm2} present the evolution of $\beta,\;\beta_0$. 
\begin{table}[h!]
	\centering
	\caption{Distribution parameters}
	\label{tab:data_gen}
	\begin{tabular}{|l|l|l|l|}
		\hline
		\textbf{}        & \textbf{mean}                     & \textbf{Variance}                          & \textbf{No. of data points} \\ \hline
		\textbf{Class-a} & $\begin{bmatrix}0&0\end{bmatrix}$ & $\begin{bmatrix}1&1.5\\1.5&3\end{bmatrix}$ & 300                       \\ \hline
		\textbf{Class-b} & $\begin{bmatrix}0&6\end{bmatrix}$ & $\begin{bmatrix}1&1.5\\1.5&3\end{bmatrix}$ & 300                         \\ \hline
	\end{tabular}
\end{table}
 At equilibrium, the primal-dual dynamics in equation \eqref{primal-dual_svm} results in
 $$\beta^{\ast}=\sum_{i=1}^{2N} \mu_i^{\ast}y_ix_i.$$
 The results depicted in Figure \ref{fig::svm2a} show that the value the Lagrange variables, except ($ \mu_{81},\; \mu_{208},\;\;\mu_{577}$) are identically equal to zero at equilibrium.  Hence 
 $$\beta^\ast=\mu_{81}^\ast x_{81}+\mu_{208}^\ast x_{208}-\mu_{577}^\ast x_{577}$$
 where the data points ($ x_{81},\; x_{208},\;\;x_{577}$) corresponding to these non zero Lagrange variables are support vectors. This implies that the support vectors completely determines the optimal separating hyperplane $\beta^\top x+\beta_0=0$ that separates class-$a$ and class-$b$ (see Fig. \ref{fig::svm1}). However, note that one needs to solve the optimization problem, to find these support vectors.
%
%
    \begin{remark} Remark on Figure \ref{fig::svm1}. In the case study we have 600 inequality constraints. Whenever, an inequality constraint becomes feasible  (i.e. $g_i(\beta,\beta_0)\leq  0$ ) and its corresponding Lagrange variable $\mu_i$ converges to zero, then the closed loop storage function switches to a new storage function that is strictly less than the current one, causing a discontinuity. 
	%
	This is coherent with the Proposition \ref{prop:ineq_passivity}, where passivity property is defined with `multiple storage functions'.
\end{remark}
\begin{remark} Remark on Figure \ref{fig::svm2}. One can see that, all the Lagrange variables except $\mu_{81}$, $\mu_{208}$ and $\mu_{577}$  converged to zero. Moreover, the data points corresponding to these non-zero Lagrange variables are called support vectors, can be seen in Fig. \ref{fig::svm3}
\end{remark}

\begin{remark}Remark on Figure \ref{fig::svm3}.	The three points denoted by $x_{81}$, $x_{208}$ and $x_{577}$ are usually called as support vectors, and the Lagrange variables corresponding to their inequality constraints are non-zero (can be seen in Figure \ref{fig::svm2}). The lines passing through these point and parallel to the separating hyperplane are called supporting hyperplanes. 
\end{remark}
 \begin{figure}[h!]
		\includegraphics[trim={0cm 0cm 0 0cm},scale=0.28]{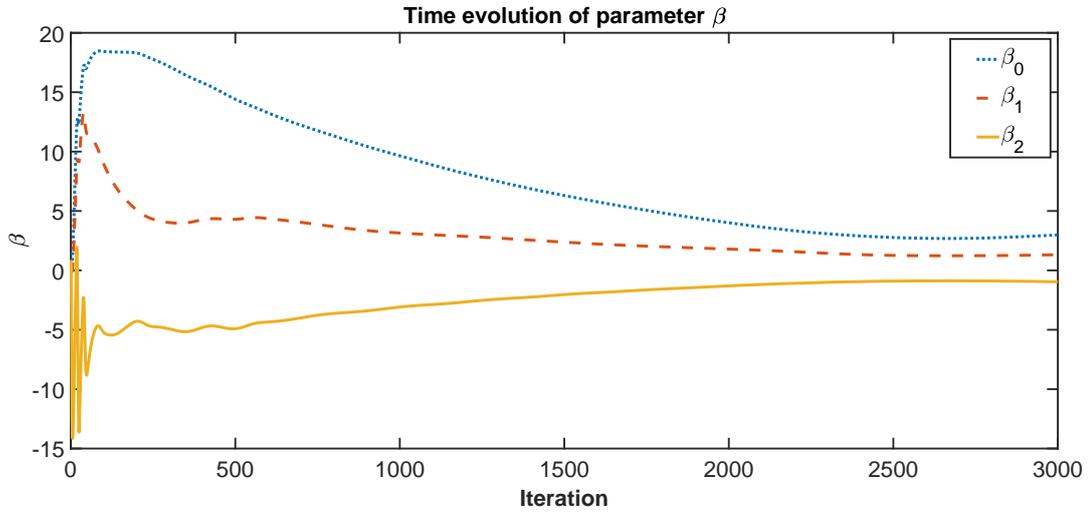}
		\caption{Time evolution of $\beta$ and $\beta_0$ }
		\label{fig::svm2a}
	\end{figure}
	\begin{figure}[h!]
		\includegraphics[trim={8cm 0cm 0 0cm}, scale=0.435]{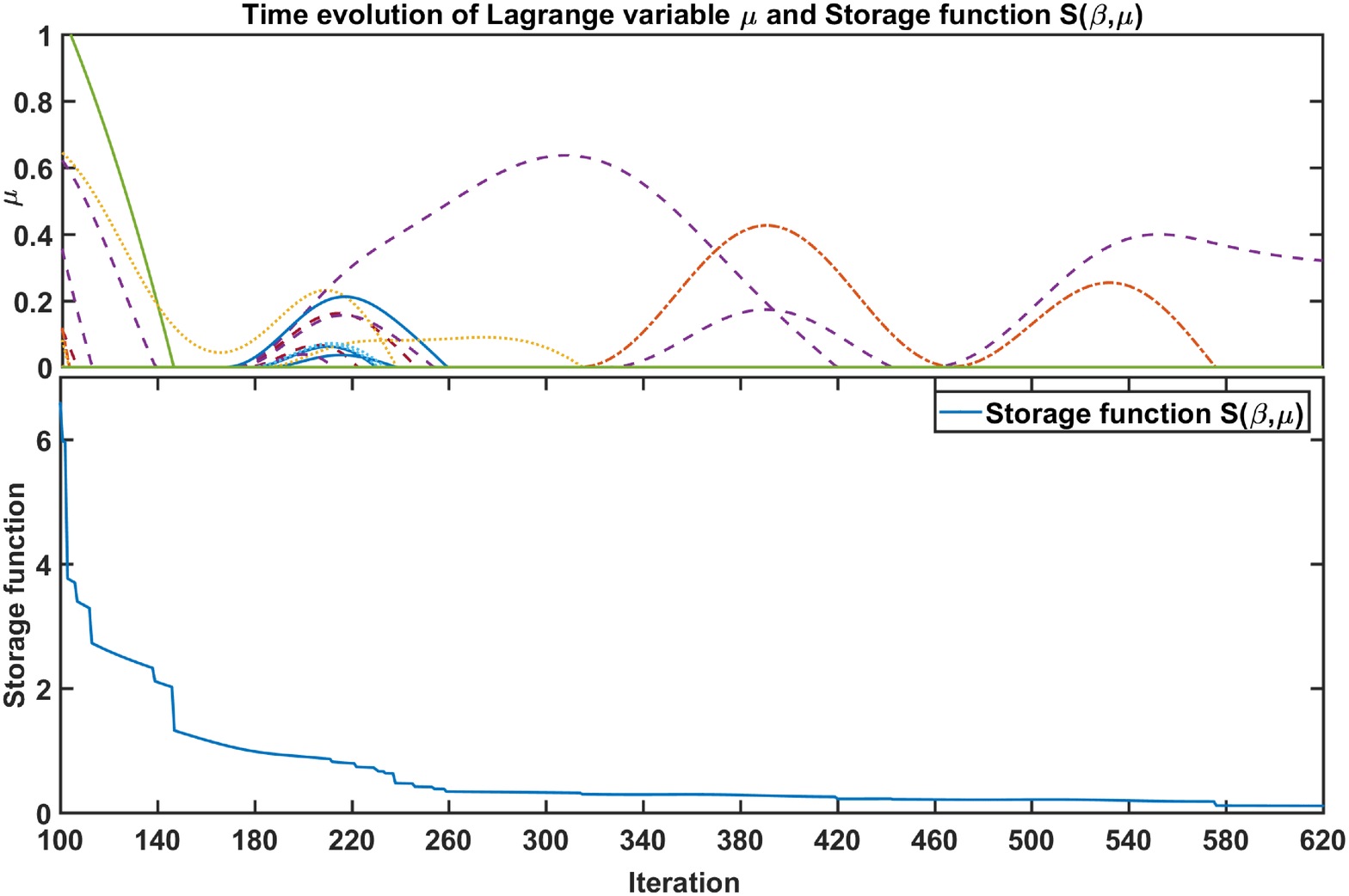}
        		\caption{Time evolution of closed-loop storage function. }
		\label{fig::svm1}
	\end{figure}
			\begin{figure}[h!]
		\includegraphics[trim={5cm 0cm 0 0cm},clip, width=16cm,height=10cm]{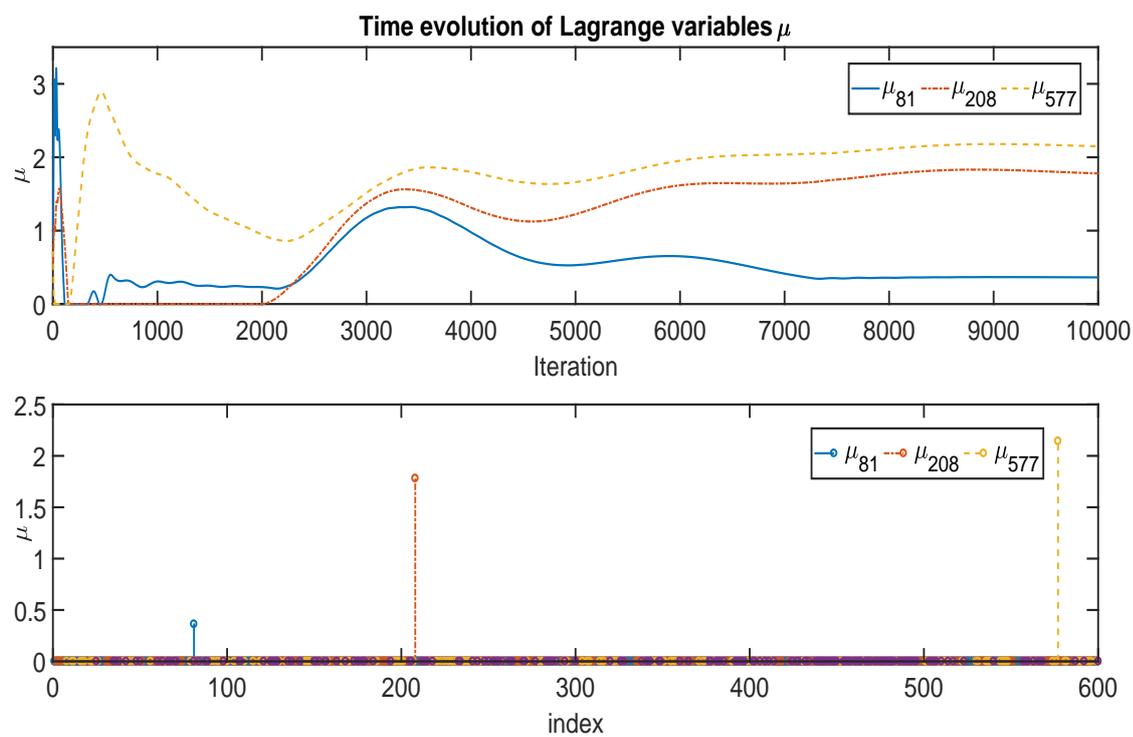}
		\caption{Time evolution of Lagrange variables $\mu_i$, $i\in \{1\cdots 600\}$. }
		\label{fig::svm2}
	\end{figure}
    \begin{figure}[h!]
		\includegraphics[		width=25cm,height=14cm,, trim={0cm 0cm 0 0cm},clip,   angle =90]{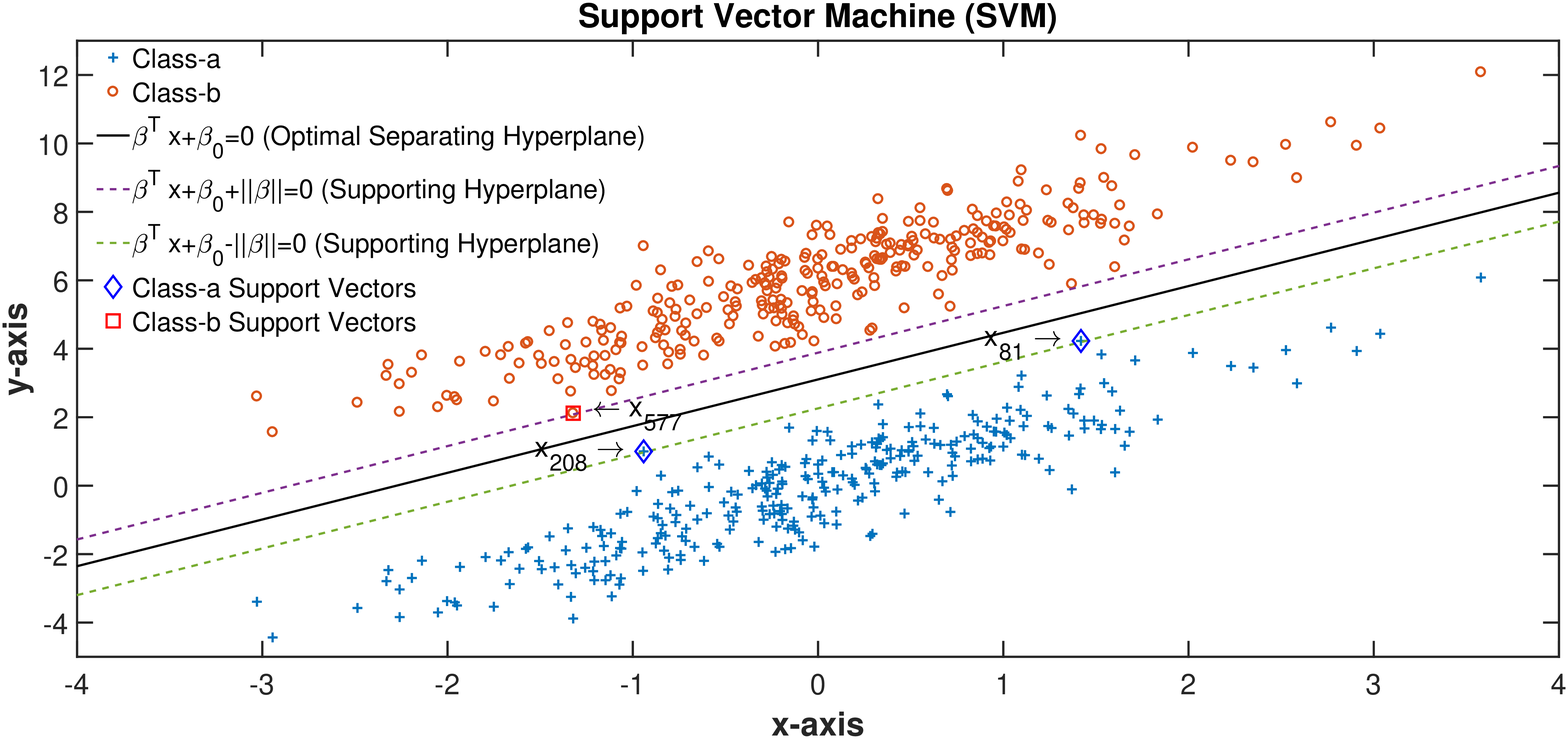}
		\caption{Classification using Support Vector Machine}
		\label{fig::svm3}
	\end{figure}
	\section{Conclusions}\label{concl}
	Starting from an optimization problem with equality constraint we have shown that their primal-dual equations have a naturally existing Brayton Moser representation. 
	Using the interconnection properties of passive systems we extended the optimization problem to include inequality constraints. The overall convergence is guaranteed by proving the asymptotic stability of individual subsystems, whose Lyapunov functions derived from BM formulation have their roots in Krasovskii method. As an example, we have demonstrated the primal-dual algorithm using the convex optimization formulation of SVM technique.
\afterpage{\blankpage}
	\addtocounter{page}{1}%
  \chapter{Concluding remarks}
In this book, we have presented various passivity based methodologies for control and optimization in Brayton-Moser framework. Each chapter has its own concluding remarks with some of the highlights and the limitations of the proposed work. We now discuss the remaining issues and propose some possible directions for future work.
%
%
%
%

In the Chapter 3, we have addressed two major limitations in using Brayton-Moser framework for control by power shaping. 

\hspace{-0.5cm}
\begin{minipage}{0.05\linewidth}
\vspace{-3.9cm}
(i)
\end{minipage}\begin{minipage}{0.95\linewidth}
The first one involves finding of the closed-loop storage function using the admissible pairs. One constructive methodology for finding the closed-loop storage function, that does not requires solving partial differential equations, involves in finding passive maps with integrable output port-variable. We presented a solution for this impediment under the assumption that the 1-forms corresponding to the columns of the input matrix are all closed. 
\end{minipage}
\vspace{0.5cm}

\hspace{-0.5cm}
\begin{minipage}{0.05\linewidth}
\vspace{-4.6cm}
(ii)
\end{minipage}\begin{minipage}{0.95\linewidth}
The second limitation concerns to the existence of admissible pairs. This has been addressed by introducing Krasovskii-type storage functions, which resulted in a new passivity property with integrable port-variables. Towards the end we have extended these results to a class of nonlinear systems characterized by Assumptions \ref{ass::A1}, \ref{ass::A2} and \ref{ass::A3}. Similar kind of studies have been carried out in order to extract new passivity properties of systems, namely differential passivity \cite{van2013differential} and incremental passivity \cite{l2gain}. 
\end{minipage}

%
\noindent {\em Relations to differential and incremental passivity}: 
Consider the prolonged system \cite{cortes2005characterization,crouch1987variational}, that is the original non-linear system \eqref{gen_sys} together with its variational system.
	\begin{eqnarray}\label{incre_extended_sys}
	    \dot{x}&=&f(x)+g(x)u\nonumber \\
	    \dot{\delta x} &=& \left(\dfrac{\partial f}{\partial x}+\dfrac{\partial g}{\partial x} u\right)\cdot \delta x+g(x)\delta u
	\end{eqnarray}
where $\delta x\in \mathbb{R}^n$, $\delta u\in \mathbb{R}^m$ denotes the variation in $x$ and $u$ respectively. The following theorem shows that the prolonged nonlinear system is differentially passive, under Assumptions \ref{ass::A1}, \ref{ass::A2} and \ref{ass::A3}.
	\begin{theorem}\label{thm:conclus::diff_pass} Let the Assumptions \ref{ass::A1}, \ref{ass::A2} and \ref{ass::A3} are satisfied. Then the system of equations \eqref{incre_extended_sys} are passive with port variable $\delta \Gamma=g^\top M \delta x$ and $\delta v= \delta{u}-\alpha u-\beta$. Where $\alpha=-\left(g^\top g\right)^{-1}g^\top \dfrac{\partial g}{\partial x}\delta x$ and $\beta=-g^\top M \delta{x}$.
	\end{theorem}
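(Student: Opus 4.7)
The plan is to mirror almost verbatim the proof of Theorem \ref{thm::main}, replacing the velocity-like quantities $\dot x,\dot u$ used there by the variational quantities $\delta x,\delta u$ appearing in the prolonged system \eqref{incre_extended_sys}, and to use the same Krasovskii-type metric $M$ supplied by Assumption \ref{ass::A1}. First I would take
\[
V(\delta x) \deff \tfrac{1}{2}\delta x^\top M\delta x
\]
as candidate storage and differentiate along \eqref{incre_extended_sys}, giving
\[
\dot V = \tfrac{1}{2}\delta x^\top\!\bigl(M\tfrac{\partial f}{\partial x}+\tfrac{\partial f}{\partial x}^\top\!M\bigr)\delta x + \delta x^\top M\bigl(\tfrac{\partial g}{\partial x}\delta x\bigr)u + \delta x^\top Mg\,\delta u.
\]
The first term is non-positive by Assumption \ref{ass::A1}, exactly as in Proposition \ref{prop::Contrating}.

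Next I would establish the variational analogue of Lemma \ref{prop::alpha}: under Assumption \ref{ass::A2}, the stated $\alpha=-(g^\top g)^{-1}g^\top\tfrac{\partial g}{\partial x}\delta x$ satisfies
\[
\tfrac{\partial g}{\partial x}\delta x + g\alpha = 0.
\]
The argument is word-for-word that of Lemma \ref{prop::alpha}: left-multiplying by the full-rank matrix $[(g^\perp)^\top\;\; g]^\top$, the top block vanishes because $g^\perp\tfrac{\partial g}{\partial x}\delta x = (g^\perp\tfrac{\partial g}{\partial x})\delta x = 0$ by Assumption \ref{ass::A2}, while the bottom block is an algebraic tautology coming from the pseudo-inverse identity. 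Substituting $g\alpha u = -(\tfrac{\partial g}{\partial x}\delta x)u$ and the change of variable $\delta u = \delta v+\alpha u+\beta$ with $\beta=-g^\top M\delta x$ into $\dot V$ telescopes the $u$-dependent cross term and leaves
\[
\dot V \le \delta x^\top Mg\,\delta v + \delta x^\top Mg\beta = \delta x^\top Mg\,\delta v - \norm{g^\top M\delta x}^2 \le \delta x^\top Mg\,\delta v.
\]

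Finally, Assumption \ref{ass::A3} together with Poincar\'e's lemma, invoked exactly as in Lemma \ref{prop::output_integrability}, supplies a potential $\Gamma$ with $\nabla_x\Gamma = Mg$, so the proposed output port-variable satisfies $\delta\Gamma = (\nabla_x\Gamma)^\top\delta x = g^\top M\delta x$. Combining this identification with the bound above yields $\dot V\le \delta v^\top\delta\Gamma$, which is precisely the claimed differential passivity of the prolonged system with supply rate $\delta v^\top\delta\Gamma$.

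The only non-routine obstacle I anticipate is the bookkeeping of the rank-three object $\tfrac{\partial g}{\partial x}\delta x$: it must be read as the $n\times m$ matrix whose $k$-th column is $\tfrac{\partial g_k}{\partial x}\delta x$, so that the identity $\tfrac{\partial g}{\partial x}\delta x + g\alpha=0$ lives in $\mathbb{R}^{n\times m}$ and Lemma \ref{prop::alpha}'s proof transcribes without modification. Once this convention is fixed, every step above is a direct parallel of the corresponding one in Section 3.3, so no new analytic ingredient beyond Assumptions \ref{ass::A1}--\ref{ass::A3} is needed.
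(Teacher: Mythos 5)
Your proof is correct and takes exactly the route the paper itself prescribes: the paper's own proof of this theorem is only a one-line sketch saying that the argument of Theorem \ref{thm::main} is repeated with the storage function replaced by $V(x,\delta x)=\tfrac{1}{2}\delta x^\top M \delta x$ and that Assumption \ref{ass::A3} together with Poincar\'e's lemma yields $\delta\Gamma=\tfrac{\partial \Gamma}{\partial x}^\top\delta x=g^\top M\delta x$, which is precisely what you have written out in detail. Your variational analogue of Lemma \ref{prop::alpha} and the explicit convention that $\tfrac{\partial g}{\partial x}\delta x$ is the $n\times m$ matrix of directional derivatives are exactly the bookkeeping the paper leaves implicit, so no gap remains.
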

\noindent The proof of the theorem follows on the similar lines of Theorem \ref{thm::main} with storage function in equation \eqref{kras_lypunov} replaced by $V(x,\delta x) = \dfrac{1}{2}\delta x^\top M\delta x$. Also note that from Assumption \ref{ass::A3}, $Mg(x)$ is integrable. This indicates the existence of  of $\Gamma(x)$ such that $\dfrac{\partial \Gamma}{\partial x}=Mg(x)$ leading to $\delta \Gamma=\dfrac{\partial \Gamma}{\partial x}^\top \delta x=g^\top M \delta x$. 
%
\begin{remark}
In the above Theorem \ref{thm:conclus::diff_pass}, if we consider $g(x)=B$, and $\Gamma(x)=Cx$, where $B\in \mathbb{R}^{n\times m}$ and $C\in \mathbb{R}^{m\times n}$ are constant, then we recover the conditions presented from incremental passivity in \cite{pavlov2008incremental}. 
\end{remark}
\noindent {\em Relation with Integral Quadratic Constraints (IQC) \cite{IQC}:}
Consider the linear system defined by 
\begin{eqnarray*}
	\begin{matrix}
		\dot{x}=Ax+Bu\\y=Cx+Du,
	\end{matrix}
\end{eqnarray*}
where $x\in \mathbb{R}^n$, $u,y\in \mathbb{R}^m$ and $A,B,C,D$ are of appropriate dimensions. Then, one can write the following equivalence
\begin{eqnarray*}
	y=H(u) \iff \dfrac{Y(s)}{U(s)}=G(s)=C(sI-A)^{-1}B+D,
\end{eqnarray*}
where $H$ is a bounded operator and $Y(s)$ \& $U(s)$ are Laplace's transforms of $y(t)$ and $u(t)$, respectively. 	Let $\Pi$ be a bounded and self adjoint operator. Then $u$ satisfies the IQC defined by $\Pi $ if 
\begin{eqnarray}
\left<\begin{bmatrix}
u\\y
\end{bmatrix},\Pi \begin{bmatrix}
u\\y
\end{bmatrix}\right>\geq 0\;\; \forall u\in \mathcal{H}.
\end{eqnarray}
In transfer function domain this translate to 
\begin{eqnarray}
\int_{-\infty}^{\infty}\left(\begin{bmatrix}
U(j\omega)\\Y(j\omega)
\end{bmatrix}^{\ast}\Pi(j\omega) \begin{bmatrix}
U(j\omega)\\Y(j\omega)
\end{bmatrix}\right)\geq 0\;\; \forall u\in \mathcal{H}.
\end{eqnarray}
\noindent We list a few relevant passivity conditions and their equivalent conditions in frequency domain and with respect to IQC operators. A system is passive
\begin{itemize}
	\item[(i)] {
		with port variables $u$ and $y$, if
		\begin{eqnarray}
		<y,u>_T\geq 0\;\; \forall T\geq 0,\;\; \forall u\in \mathcal{H}_e
`		\end{eqnarray}
		where $\mathcal{H}_e$ denotes the extended Hilbert's space. This is equivalent to 
		\begin{eqnarray}\label{pass_uy}
		G(j\omega) +G(j\omega)^\ast \geq 0\;\;\; \forall \omega.
		\end{eqnarray}
		It also implies that the real part of $G(j\omega)$ should be positive, which is well known as  the Positive Real condition. Further the IQC operator $\Pi$ takes the form $\begin{bmatrix}
		\mathrm{O}& \mathrm{I}\\\mathrm{I} & \mathrm{O}
		\end{bmatrix}$, where $O,\; I$ are zero and identity matrix of dimension $m$.}
	\item[(ii)]{with $u$ and $\dot{y}$, as in Brayton Moser framework, if
		\begin{eqnarray}
		<\dot{y},u>_T\geq 0\;\; \forall T\geq 0,\;\; \forall u\in \mathcal{H}_e
		\end{eqnarray}
		This can be simplified and expressed in frequency domain as
		\begin{eqnarray}
		-j\omega (G(j\omega) - G(j\omega)^\ast) \geq 0\;\;\; \forall \omega.
		\end{eqnarray}
		This implies that the imaginary part of the transfer function should be negative, which is termed here as the Negative Imaginary condition. In this case, IQC operator takes the form $\Pi=\begin{bmatrix}
		\mathrm{O}& j\omega \mathrm{I}\\-j\omega \mathrm{I} & \mathrm{O}
		\end{bmatrix}$.}
	\item[(iii)]{with $\dot{u}$ and $\dot{y}$, if,
		\begin{eqnarray}
		<\dot{y},\dot{u}>_T\geq 0\;\; \forall T\geq 0,\;\; \forall u\in \mathcal{H}_e.
		\end{eqnarray}
		This results in Positive Real condition \eqref{pass_uy}. This is not surprising, since it is well-known that if a linear system is passive with respect to port-variables $u$ and $y$, then is passive w.r.t port-variables $\dot{u}$ and $\dot{y}$. Further, in the case of $\dot{u}$ and $\dot{y}$, IQC operator $\Pi$ again takes the same form $\begin{bmatrix}
		\mathrm{O}& \mathrm{I}\\\mathrm{I} & \mathrm{O}
		\end{bmatrix}$, as in the case of $u$ and $y$ as port-variables. 
	}
\end{itemize}
\noindent The following are the important directions for future work:

\hspace{-0.5cm}
\begin{minipage}{0.05\linewidth}
\vspace{-0.85cm}
(i)
\end{minipage}\begin{minipage}{0.95\linewidth}
Finding integrable passive maps with out relying on the assumptions that the input matrix is integrable.
\end{minipage}

\hspace{-0.5cm}
\begin{minipage}{0.05\linewidth}
\vspace{-0.85cm}
(ii)
\end{minipage}\begin{minipage}{0.95\linewidth}
Exploring connections between dynamic feedback passivation and differential passivity.
\end{minipage}
\vspace{0.5cm}

%
%
%
%
%
In Chapter 4, the primal-dual algorithm is treated as interconnected passive systems, (i) convex optimization problem with only equality constraint, (ii) a state dependent switching system for inequality constraint. Recall that in Proposition 4.4, we interconnected these systems using
\beq\label{interconnection_ch6}
\begin{bmatrix}
	u\\\tilde{u}
\end{bmatrix}&=& \begin{bmatrix}
0 & 1\\-1 &0
\end{bmatrix}\begin{bmatrix}
y\\ \tilde{y}
\end{bmatrix}+\begin{bmatrix}
v\\\tilde{v}
\end{bmatrix}
\eeq
where $v$ and $\tilde{v}$ are considered as new input port-variables of the interconnected system. We can use these new port-variables to analyze and improve the primal-dual gradient laws. The following are some of the important ideas that can be  leveraged for future work.
%

\hspace{-0.5cm}
\begin{minipage}{0.05\linewidth}
\vspace{-6.4cm}
(i)
\end{minipage}\begin{minipage}{0.95\linewidth}
{\em Robustness}: To analyze uncertainties in parameters or disturbances such as the numerical error accumulated in the primal and dual variables, one can rewrite interconnection as
	   \begin{equation}
	   u=\tilde{y}+\Delta \tilde{y}~~ \text{and} ~~\tilde{u}=x+\Delta x
	  \end{equation}
	   where $\Delta x$ and $\Delta \tilde{y}$ denotes the numerical error in $x$ (primal variable) and $\tilde{y}$ (a function of dual variable) respectively. These can be treated as external disturbances  creeping in through the interconnected port variables. 
We can provide robustness analysis quantitatively (on sensitivity of the algorithm due to numerical errors), using input/output dissipative properties \cite{l2gain} of these systems.
\end{minipage}

\hspace{-0.5cm}
\begin{minipage}{0.05\linewidth}
\vspace{-5.4cm}
(ii)
\end{minipage}
\begin{minipage}{0.95\linewidth}
In SVM simulation we have seen that there are 600 inequality constraints (each corresponds to a data-point). Usually, real world examples may contain many more data-points. Each data-points gives rise to an inequality constraint, and further leads to a gradient-law. In situations involving large data, it is computationally ineffective to run gradient-descent algorithm using all the data-points. In general this obstacle is circumvented using a variation in gradient descent method called {\em stochastic gradient descent}. Can we propose a passivity based convergence analysis for stocastic gradient descent?
\end{minipage}
\vspace{0.5cm}

\hspace{-0.5cm}
\begin{minipage}{0.05\linewidth}
\vspace{-3.1cm}
(iii)
\end{minipage}\begin{minipage}{0.95\linewidth}
{\em Controller}: Using these new port variables one can interconnect the primal-dual dynamics to a plant, such that the closed-loop system is again a passive dynamical system \cite{stegink2017unifying}. Moreover, one can explore the idea of Barrier functions \cite{boyd2004convex} to derive a bounded controller. Gradient methods are inherently distributed computing methods. Hence the controllers derived from these may inherit this property.
\end{minipage}
\vspace{0.5cm}

In chapter 5, we have discussed modeling and control aspects  of infinite-dimensional port-Hamiltonian systems in  Brayton-Moser framework. In retrospection, we first showed the existence of dissipation obstacle and motivated the need for the new class of power-based storage functions. Next, we presented the Brayton-Moser formulation of an infinite-dimensional port-Hamiltonian system defined by Stokes-Dirac structure. Consequently, we had shown that these Brayton-Moser equations can be written as a distributed parameter dynamical system with respect to a noncanonical Dirac structure. Similar to finite-dimensional systems, the mixed-potential function obtained from Brayton -Moser formulation was indefinite, hence cannot be used to infer any stability or passivity properties. This resulted in the quest for finding new gradient structures called admissible pairs. 
We then presented a systematic way of finding these admissible pairs. Further, these admissible pairs were used to derive (i) the stability analysis of Maxwell's equations in $\mathbb{R}^3$ with zero energy flows through boundary and (ii) the passive maps for boundary controlled transmission line system modeled by Telegraphers equations. By adopting the new passive maps, we presented a boundary control methodology for transmission line system using control by interconnection. 
The following are some of the important future directions:

\begin{minipage}{0.05\linewidth}
	\vspace{-0.75cm}
	(i)
\end{minipage}\begin{minipage}{0.95\linewidth}
Extending Brayton-Moser formulation to a general class of infinite-dimensional systems from various physical domains.
\end{minipage}

\begin{minipage}{0.05\linewidth}
	\vspace{-2.3cm}
	(ii)
\end{minipage}\begin{minipage}{0.95\linewidth}
In Chapter 4, we have seen that Brayton-Moser formulation for finite-dimensional systems helped us analyze optimization problems.  Can we present a similar formulation for optimal control using the Brayton-Moser framework developed for infinite-dimensional systems?
\end{minipage}
\vspace{0.5cm}
\afterpage{\blankpage}
\addtocounter{page}{1}%



\begin{singlespace}
	\bibliography{refs}

\begin{thebibliography}{0}
\expandafter\ifx\csname natexlab\endcsname\relax\def\natexlab#1{#1}\fi
\expandafter\ifx\csname url\endcsname\relax
  \def\url#1{{\tt #1}}\fi
\expandafter\ifx\csname urlprefix\endcsname\relax\def\urlprefix{URL }\fi

\end{thebibliography}


\begin{thebibliography}{10}

\bibitem{l2gain}
Arjan van~der Schaft.
\newblock {\em $L_2$-gain and Passivity Techniques in Nonlinear Control}.
\newblock Springer International Publishing AG, 2017.

\bibitem{OrtVanMasEsc}
Romeo Ortega, Arjan van~der Schaft, Bernhard Maschke, and Gerardo Escobar.
\newblock Interconnection and damping assignment passivity-based control of
  port-controlled {H}amiltonian systems.
\newblock {\em Automatica}, 38(4):585--596, 2002.

\bibitem{ortega2001putting}
Romeo Ortega, Arjan van~der Schaft, Iven Mareels, and Bernhard~M. Maschke.
\newblock Putting energy back in control.
\newblock {\em IEEE Control Systems}, 21(2):18--33, 2001.

\bibitem{maxwell1867governors}
J~Clerk Maxwell.
\newblock On governors.
\newblock {\em Proceedings of the Royal Society of London}, 16:270--283, 1867.

\bibitem{bookgeo}
Vincent Duindam, Alessandro Macchelli, Stefano Stramigioli, and Herman
  Bruyninckx.
\newblock {\em Modeling and Control of Complex Physical Systems: The
  port-{H}amiltonian Approach}.
\newblock Springer Science \& Business Media, 2009.

\bibitem{BraMos1964I}
R.~K. Brayton and J.~K. Moser.
\newblock A theory of nonlinear networks. i.
\newblock {\em Quarterly of Applied Mathematics}, 22(1):1--33, 1964.

\bibitem{BraMos1964II}
R.~K. Brayton and J.~K. Moser.
\newblock A theory of nonlinear networks. ii.
\newblock {\em Quarterly of Applied Mathematics}, 22(2):81--104, 1964.

\bibitem{BraMir1964}
R.~K. Brayton and W.~L. Miranker.
\newblock A stability theory for nonlinear mixed initial boundary value
  problems.
\newblock {\em Archive for Rational Mechanics and Analysis}, 17(5):358--376,
  1964.

\bibitem{smale2000mathematical}
Stephen Smale.
\newblock On the mathematical foundations of electrical circuit theory.
\newblock In {\em The Collected Papers of Stephen Smale: Volume 2}, pages
  951--968. World Scientific, 2000.

\bibitem{jeltsema2009multidomain}
Dimitri Jeltsema and Jacquelien~MA Scherpen.
\newblock Multidomain modeling of nonlinear networks and systems.
\newblock {\em IEEE Control Systems Magazine}, 29(4):28--59, 2009.

\bibitem{chinde2016building}
Venkatesh Chinde, Krishna~Chaitanya Kosaraju, Atul Kelkar, Ramkrishna
  Pasumarthy, S.~Sarkar, and Navdeep~M Singh.
\newblock A passivity-based power-shaping control of building hvac systems.
\newblock {\em Journal of Dynamic Systems, Measurement, and Control},
  139(11):111007--111007--10, 2017.

\bibitem{ICCvdotidot}
Krishna~Chaitanya Kosaraju, Ramkrishna Pasumarthy, Navdeep~M Singh, and A.~L.
  Fradkov.
\newblock Control using new passivity property with differentiation at both
  ports.
\newblock In {\em Indian Control Conference, Guwahati, India}, pages 7--11,
  2017.

\bibitem{ACC18}
Krishna~Chaitanya Kosaraju, Venkatesh Chinde, Ramkrishna Pasumarthy, Atul
  Kelkar, and Navdeep~M Singh.
\newblock Differential passivity like properties for a class of nonlinear
  systems.
\newblock In {\em American Control Conference (ACC), under review}, 2018.

\bibitem{Hugo}
Hugo Rodr{\'\i}guez, Arjan van~der Schaft, and Romeo Ortega.
\newblock On stabilization of nonlinear distributed parameter port-controlled
  hamiltonian systems via energy shaping.
\newblock In {\em 40th IEEE Conference on Decision and Control, Orlando, USA},
  volume~1, pages 131--136, 2001.

\bibitem{AleCla}
Alessandro Macchelli and Claudio Melchiorri.
\newblock Control by interconnection of mixed port {H}amiltonian systems.
\newblock {\em IEEE Transactions on Automatic Control}, 50(11):1839--1844,
  2005.

\bibitem{DimVan07}
Dimitri Jeltsema and Arjan van~der Schaft.
\newblock Pseudo-gradient and {L}agrangian boundary control system formulation
  of electromagnetic fields.
\newblock {\em Journal of Physics A: Mathematical and Theoretical},
  40(38):11627, 2007.

\bibitem{mtns}
Ramkrishna Pasumarthy, Krishna~Chaitanya Kosaraju, and Addarsh Chandrasekar.
\newblock On power balancing and stabilization for a class of
  infinite-dimensional systems.
\newblock In {\em 21st International Symposium on Mathematical Theory of
  Networks and Systems, Groningen, The Netherlands}, 2014.

\bibitem{IFAC}
Krishna~Chaitanya Kosaraju, Ramkrishna Pasumarthy, and Dimitri Jeltsema.
\newblock Alternative passive maps for infinite-dimensional systems using
  mixed-potential functions.
\newblock In {\em 5th IFAC Workshop on Lagrangian and Hamiltonian Methods for
  Nonlinear Control LHMNC, Lyon, France}, volume~48, pages 1--6, 2015.

\bibitem{book}
Krishna~Chaitanya Kosaraju and Ramkrishna Pasumarthy.
\newblock Power-based methods for infinite-dimensional systems.
\newblock In {\em Mathematical Control Theory I}, pages 277--301. Springer,
  2015.

\bibitem{ima2018}
Krishna~Chaitanya Kosaraju, Ramkrishna Pasumarthy, and Dimitri Jeltsema.
\newblock Modeling and boundary control of infinite dimensional systems in the
  brayton-moser framework.
\newblock {\em IMA Journal of Mathematical Control and Information}, pp, 2018.

\bibitem{kosaraju2018stability}
Krishna~Chaitanya Kosaraju, Venkatesh Chinde, Ramkrishna Pasumarthy, Atul
  Kelkar, and Navdeep~M Singh.
\newblock Stability analysis of constrained optimization dynamics via passivity
  techniques.
\newblock {\em IEEE Control Systems Letters}, 2(1):91--96, 2018.

\bibitem{van2014port}
Arjan van~der Schaft and Dimitri Jeltsema.
\newblock Port-hamiltonian systems theory: An introductory overview.
\newblock {\em Foundations and Trends$^{\textregistered}$ in Systems and
  Control}, 1(2-3):173--378, 2014.

\bibitem{courant1990dirac}
Theodore~James Courant.
\newblock Dirac manifolds.
\newblock {\em Transactions of the American Mathematical Society},
  319(2):631--661, 1990.

\bibitem{dorfman1993dirac}
Irene Dorfman.
\newblock {\em Dirac structures and integrability of nonlinear evolution
  equations}.
\newblock Wiley, 1993.

\bibitem{dalsmo1998representations}
Morten Dalsmo and Arjan van~der Schaft.
\newblock On representations and integrability of mathematical structures in
  energy-conserving physical systems.
\newblock {\em SIAM Journal on Control and Optimization}, 37(1):54--91, 1998.

\bibitem{garcia2005control}
Elo{\'\i}sa Garcia-Canseco, Ramkrishna Pasumarthy, Arjan van~der Schaft, and
  Romeo Ortega.
\newblock On control by interconnection of port hamiltonian systems.
\newblock In {\em Proceedings of the 16th IFAC World Congress, Prague, The
  Czech Republic}, volume~38, pages 330--335, 2005.

\bibitem{castanos2009asymptotic}
Fernando Casta{\~n}os, Romeo Ortega, Arjan Van~der Schaft, and Alessandro
  Astolfi.
\newblock Asymptotic stabilization via control by interconnection of
  port-hamiltonian systems.
\newblock {\em Automatica}, 45(7):1611--1618, 2009.

\bibitem{ortega2007control}
Romeo Ortega, Arjan van~der Schaft, Fernando Castanos, and Alessandro Astolfi.
\newblock Control by (state--modulated) interconnection of port--hamiltonian
  systems.
\newblock In {\em 7th IFAC Symposium on Nonlinear Control Systems, Pretoria,
  South Africa}, volume~40, pages 28--35, 2007.

\bibitem{ortega2008control}
Romeo Ortega, Arjan van~der Schaft, Fernando Castanos, and Alessandro Astolfi.
\newblock Control by interconnection and standard passivity-based control of
  port-hamiltonian systems.
\newblock {\em IEEE Transactions on Automatic Control}, 53(11):2527--2542,
  2008.

\bibitem{Guido}
Guido Blankenstein.
\newblock Geometric modeling of nonlinear {RLC} circuits.
\newblock {\em IEEE Transactions on Circuits and Systems I: Regular Papers},
  52(2):396--404, 2005.

\bibitem{blankenstein2003joined}
Guido Blankenstein.
\newblock A joined geometric structure for hamiltonian and gradient control
  systems.
\newblock In {\em IFAC Lagrangian and Hamiltonian Methods for Nonlinear
  Control, Seville, Spain}, volume~36, pages 51--56, 2003.

\bibitem{ElosiaDimOrt}
Elo{\'\i}sa Garc{\'\i}a-Canseco, Dimitri Jeltsema, Romeo Ortega, and Jacquelien
  M.~A. Scherpen.
\newblock Power-based control of physical systems.
\newblock {\em Automatica}, 46(1):127--132, 2010.

\bibitem{van2011relation}
Arjan van~der Schaft.
\newblock On the relation between port-hamiltonian and gradient systems.
\newblock In {\em 18th IFAC World Congress, Milano, Italy}, volume~44, pages
  3321--3326, 2011.

\bibitem{fortney2010dirac}
Jon~Pierre Fortney.
\newblock Dirac structures in pseudo-gradient systems with an emphasis on
  electrical networks.
\newblock {\em IEEE Transactions on Circuits and Systems I: Regular Papers},
  57(7):1732--1745, 2010.

\bibitem{jeltsema2003passivity}
Dimitri Jeltsema, Romeo Ortega, and Jacquelien~MA Scherpen.
\newblock On passivity and power-balance inequalities of nonlinear rlc
  circuits.
\newblock {\em IEEE Transactions on Circuits and Systems I: Fundamental Theory
  and Applications}, 50(9):1174--1179, 2003.

\bibitem{stokes}
Arjan van~der Schaft and Bernhard~M Maschke.
\newblock {H}amiltonian formulation of distributed-parameter systems with
  boundary energy flow.
\newblock {\em Journal of Geometry and Physics}, 42(1):166--194, 2002.

\bibitem{AbrMarRat88}
Ralph Abraham, Jerrold~E Marsden, and Tudor~S Ratiu.
\newblock {\em Manifolds, Tensor Analysis, and Applications}, volume~75 of {\em
  Applied Mathematical Sciences}.
\newblock Springer-Verlag, 2nd edition, 2012.

\bibitem{swaters}
Gordon~E Swaters.
\newblock {\em Introduction to {H}amiltonian Fluid Dynamics and Stability
  Theory}, volume 102.
\newblock CRC Press, 1999.

\bibitem{venkatraman2009energy}
Aneesh Venkatraman and Arjan van~der Schaft.
\newblock Energy shaping of port-hamiltonian systems by using alternate passive
  outputs.
\newblock In {\em Proceedings of the European control conference, Budapest,
  Hungary}, pages 2175--2180, 2009.

\bibitem{venkatraman2010energy}
Aneesh Venkatraman and Arjan van~der Schaft.
\newblock Energy shaping of port-hamiltonian systems by using alternate passive
  input-output pairs.
\newblock {\em European Journal of Control}, 16(6):665--677, 2010.

\bibitem{ortega2003power}
Romeo Ortega, Dimitri Jeltsema, and Jacquelien~MA Scherpen.
\newblock Power shaping: A new paradigm for stabilization of nonlinear rlc
  circuits.
\newblock {\em IEEE Transactions on Automatic Control}, 48(10):1762--1767,
  2003.

\bibitem{ZhenBaoOmer}
Zheng-Hua Luo, Bao-Zhu Guo, and {\"O}mer Morg{\"u}l.
\newblock {\em Stability and Stabilization of Infinite Dimensional Systems with
  Applications}.
\newblock Springer-Verlag, London, 1999.

\bibitem{RamThe}
Ramkrishna Pasumarthy.
\newblock {\em On analysis and control of interconnected finite-and
  infinite-dimensional physical systems}.
\newblock PhD thesis, Twente University Press, 2006.

\bibitem{donaire2016shaping}
Alejandro Donaire, Rachit Mehra, Romeo Ortega, Sumeet Satpute, Jose~Guadalupe
  Romero, Faruk Kazi, and Navdeep~M Singh.
\newblock Shaping the energy of mechanical systems without solving partial
  differential equations.
\newblock {\em IEEE Transactions on Automatic Control}, 61(4):1051--1056, 2016.

\bibitem{mehra2017control}
Rachit Mehra, Sumeet~G Satpute, Faruk Kazi, and Navdeep~M Singh.
\newblock Control of a class of underactuated mechanical systems obviating
  matching conditions.
\newblock {\em Automatica}, 86:98--103, 2017.

\bibitem{gogte2012passivity}
Gaury Gogte, Chinde Venkatesh, Faruk Kazi, Navdeep~M Singh, and Ramkrishna
  Pasumarthy.
\newblock Passivity based control of underactuated 2-d spidercrane manipulator.
\newblock In {\em 20th International Symposium on Mathematical Theory of
  Networks and Systems, Melbourne, Australia}, 2012.

\bibitem{satpute2014geometric}
Sumeet Satpute, Rachit Mehra, Faruk Kazi, and Navdeep~M Singh.
\newblock Geometric--pbc approach for control of circular ball and beam system.
\newblock In {\em 21st International Symposium on Mathematical Theory of
  Networks and Systems, Groningen, The Netherlands}, 2014.

\bibitem{borja2015shaping}
Pablo Borja, Rafael Cisneros, and Romeo Ortega.
\newblock Shaping the energy of port-hamiltonian systems without solving pde's.
\newblock In {\em 54th IEEE Conference on Decision and Control, Osaka, Japan},
  pages 5713--5718, 2015.

\bibitem{khalil1996noninear}
Hassan~K Khalil.
\newblock Noninear systems.
\newblock {\em Prentice-Hall, New Jersey}, 2(5), 1996.

\bibitem{ma2012predictive}
Yudong Ma, Anthony Kelman, Allan Daly, and Francesco Borrelli.
\newblock Predictive control for energy efficient buildings with thermal
  storage: Modeling, stimulation, and experiments.
\newblock {\em IEEE Control Systems}, 32(1):44--64, 2012.

\bibitem{deng2010building}
Kun Deng, Prabir Barooah, Prashant~G Mehta, and Sean~P Meyn.
\newblock Building thermal model reduction via aggregation of states.
\newblock In {\em Proceedings of the 2010 American Control Conference,
  Baltimore, USA}, pages 5118--5123, 2010.

\bibitem{garcia2004new}
Eloisa Garcia-Canseco and Romeo Ortega.
\newblock A new passivity property of linear rlc circuits with application to
  power shaping stabilization.
\newblock In {\em Proceedings of the 2004 American Control Conference, Boston,
  USA}, volume~2, pages 1428--1433, 2004.

\bibitem{lohmiller1998contraction}
Winfried Lohmiller and Jean-Jacques~E Slotine.
\newblock On contraction analysis for non-linear systems.
\newblock {\em Automatica}, 34(6):683--696, 1998.

\bibitem{forni2014differential}
Fulvio Forni and Rodolphe Sepulchre.
\newblock A differential lyapunov framework for contraction analysis.
\newblock {\em IEEE Transactions on Automatic Control}, 59(3):614--628, 2014.

\bibitem{forni2013differential}
Fulvio Forni, Rodolphe Sepulchre, and Arjan van~der Schaft.
\newblock On differential passivity of physical systems.
\newblock In {\em 52nd IEEE Conference on Decision and Control, Florence,
  Italy}, pages 6580--6585, 2013.

\bibitem{van2013differential}
Arjan van~der Schaft.
\newblock On differential passivity.
\newblock volume~46, pages 21--25, 2013.

\bibitem{crouch1987variational}
P~E Crouch and Arjan van~der Schaft.
\newblock {\em Variational and Hamiltonian control systems}.
\newblock Springer-Verlag New York, Inc., 1987.

\bibitem{krasovskiicertain}
N.~N. Krasovskii.
\newblock {\em Certain Problems of the Theory of Stability of Motion [in
  Russian], Fizmatgiz, Moscow}.
\newblock English translation by Stanford University Press, 1963, 1959.

\bibitem{nijmeijer1990nonlinear}
Henk Nijmeijer and Arjan van~der Schaft.
\newblock {\em Nonlinear dynamical control systems}, volume 175.
\newblock Springer-Verlag, New York, 1990.

\bibitem{Jeltsemaa06apower}
Dimitri Jeltsema and Jacquelien M.~A. Scherpen.
\newblock A power-based description of standard mechanical systems.
\newblock {\em Systems \& Control Letters}, 56(5):349--356, 2007.

\bibitem{jeltsema2003dual}
Dimitri Jeltsema and Jacquelien~MA Scherpen.
\newblock A dual relation between port-hamiltonian systems and the
  brayton--moser equations for nonlinear switched rlc circuits.
\newblock {\em Automatica}, 39(6):969--979, 2003.

\bibitem{DimCleOrtSch}
Dimitri Jeltsema, Jesus Clemente-Gallardo, Romeo Ortega, Jacquelien~MA
  Scherpen, and J~Ben Klaassens.
\newblock Brayton-moser equations and new passivity properties for nonlinear
  electromechanical systems.
\newblock {\em proc. Mechatronics 2002, Twente, The Netherlands}, 2002.

\bibitem{KoopDim}
Johan Koopman and Dimitri Jeltsema.
\newblock Casimir-based control beyond the dissipation obstacle.
\newblock volume~45, pages 173--177, 2012.

\bibitem{YanhanMas}
Yann Le~Gorrec, Hans Zwart, and Bernhard Maschke.
\newblock Dirac structures and boundary control systems associated with
  skew-symmetric differential operators.
\newblock {\em SIAM Journal on Control and Optimization}, 44(5):1864--1892,
  2005.

\bibitem{pasumarthy2007achievable}
Ramkrishna Pasumarthy and Arjan van~der Schaft.
\newblock Achievable casimirs and its implications on control of
  port-{H}amiltonian systems.
\newblock {\em International Journal of Control}, 80(9):1421--1438, 2007.

\bibitem{ben2001lectures}
Aharon Ben-Tal and Arkadi Nemirovski.
\newblock {\em Lectures on modern convex optimization: analysis, algorithms,
  and engineering applications}.
\newblock SIAM, 2001.

\bibitem{ibaraki1988resource}
Toshihide Ibaraki and Naoki Katoh.
\newblock {\em Resource allocation problems: algorithmic approaches}.
\newblock MIT press, 1988.

\bibitem{kelly1998rate}
Frank~P Kelly, Aman~K Maulloo, and David~KH Tan.
\newblock Rate control for communication networks: shadow prices, proportional
  fairness and stability.
\newblock {\em Journal of the Operational Research society}, 49(3):237--252,
  1998.

\bibitem{boyd2004convex}
Stephen Boyd and Lieven Vandenberghe.
\newblock {\em Convex optimization}.
\newblock Cambridge university press, 2004.

\bibitem{xiao2006optimal}
Lin Xiao and Stephen Boyd.
\newblock Optimal scaling of a gradient method for distributed resource
  allocation.
\newblock {\em Journal of optimization theory and applications},
  129(3):469--488, 2006.

\bibitem{kose1956solutions}
T~Kose.
\newblock Solutions of saddle value problems by differential equations.
\newblock {\em Econometrica, Journal of the Econometric Society}, pages 59--70,
  1956.

\bibitem{arrow1958studies}
Kenneth~Joseph Arrow, Leonid Hurwicz, Hirofumi Uzawa, and Hollis~Burnley
  Chenery.
\newblock Studies in linear and non-linear programming.
\newblock 1958.

\bibitem{feijer2010stability}
Diego Feijer and Fernando Paganini.
\newblock Stability of primal--dual gradient dynamics and applications to
  network optimization.
\newblock {\em Automatica}, 46(12):1974--1981, 2010.

\bibitem{lygeros2003dynamical}
John Lygeros, Karl~Henrik Johansson, Slobodan~N Simic, Jun Zhang, and Shankar~S
  Sastry.
\newblock Dynamical properties of hybrid automata.
\newblock {\em IEEE Transactions on automatic control}, 48(1):2--17, 2003.

\bibitem{stegink2017unifying}
Tjerk Stegink, Claudio De~Persis, and Arjan van~der Schaft.
\newblock A unifying energy-based approach to stability of power grids with
  market dynamics.
\newblock {\em IEEE Transactions on Automatic Control}, 62(6):2612--2622, 2017.

\bibitem{cherukuri2016asymptotic}
Ashish Cherukuri, Enrique Mallada, and Jorge Cort{\'e}s.
\newblock Asymptotic convergence of constrained primal--dual dynamics.
\newblock {\em Systems \& Control Letters}, 87:10--15, 2016.

\bibitem{simpson2016input}
John~W Simpson-Porco.
\newblock Input/output analysis of primal-dual gradient algorithms.
\newblock In {\em 54th Annual Allerton Conference on Communication, Control,
  and Computing, Monticello, USA}, pages 219--224, 2016.

\bibitem{zhao2006notion}
Jun Zhao and David~J Hill.
\newblock A notion of passivity for switched systems with state-dependent
  switching.
\newblock {\em Journal of control theory and applications}, 4(1):70--75, 2006.

\bibitem{HybridPassive}
M~Zefran, F~Bullo, and M~Stein.
\newblock A notion of passivity for hybrid systems.
\newblock In {\em 40th IEEE Conference on Decision and Control, Orlando, USA},
  2001.

\bibitem{cortes1995support}
Corinna Cortes and Vladimir Vapnik.
\newblock Support-vector networks.
\newblock {\em Machine learning}, 20(3):273--297, 1995.

\bibitem{cortes2005characterization}
Jorge Cort{\'e}s, Arjan van~der Schaft, and Peter~E Crouch.
\newblock Characterization of gradient control systems.
\newblock {\em SIAM Journal on Control and Optimization}, 44(4):1192--1214,
  2005.

\bibitem{pavlov2008incremental}
Alexey Pavlov and Lorenzo Marconi.
\newblock Incremental passivity and output regulation.
\newblock {\em Systems \& Control Letters}, 57(5):400--409, 2008.

\bibitem{IQC}
Alexandre Megretski and Anders Rantzer.
\newblock System analysis via integral quadratic constraints.
\newblock {\em IEEE Transactions on Automatic Control}, 42(6):819--830, June
  1997.

\end{thebibliography}
\end{singlespace} 

  \section*{Publications}
{\bf I -} REFEREED JOURNALS 
\begin{enumerate}
\item K.  Kosaraju,  V.  Chinde,  R.  Pasumarthy,  A.  Kelkar,  and  N.  Singh, ``Stability analysis of constrained optimization dynamics via passivity techniques,'' {\em IEEE Control Systems Letters},  vol.  2,  no.  1,  pp.  91-96, 2018.
\item V. Chinde, K. Kosaraju, A. Kelkar, R.Pasumarthy, S. Sarkar, and N. Singh, 
``A passivity-based power-shaping control of building hvac systems, ''{\em Journal of Dynamic Systems, Measurement, and Control}, vol. 139, no. 11, pp. 111 007-111 007-10, 2017.
\item K. Kosaraju, R. Pasumarthy and D. Jeltsema, ``Modeling and boundary control of infinite dimensional systems in the Brayton Moser framework,'' {\em In IMA Journal of Mathematical Control and Information}, 2017. 
\end{enumerate}
\noindent {\bf II -}  BOOK CHAPTERS
\begin{enumerate}
\item  K. Kosaraju, R. Pasumarthy ``Power based methods for infinite dimensional systems''. In M.K. Camlibel, A.A. Julius, R. Pasumarthy, and J.M.A. Scherpen (eds.), {\em Mathematical Control Theory I: Nonlinear and Hybrid Control systems}, Springer, 2015.
\end{enumerate}
\noindent {\bf III -}  PEER-REVIEWED CONFERENCES
\begin{enumerate}
		\item K. Kosaraju, V. Chinde, R. Pasumarthy, A. Kelkar, and N. M. Singh. ``Differential passivity like properties for a class of nonlinear systems''. {\em In American Control Conference (ACC)}, Milwaukee, USA, 2018.
\item K. Kosaraju, R. Pasumarthy, N. Singh, and A. Fradkov, ``Control using new passivity property with differentiation at both ports,'' {\em Indian Control Conference (ICC)}, Guwahati, India, pp. 7-11, 2017.
\item V. Chinde, K. Kosaraju, A. Kelkar, R. Pasumarthy, S. Sarkar, and N. M. Singh. ``Building HVAC systems control using power shaping approach''. {\em In American Control Conference (ACC)}, Boston, USA, pp. 599-604, 2016.
\item K. C. Kosaraju, R. Pasumarthy, and D. Jeltsema, ``Alternative passive maps for infinite dimensional systems using mixed-potential functions,'' {\em IFAC Workshop on Lagrangian and Hamiltonian Methods for Non Linear Control}, Lyon, France, pp. 1-6, 2015.
\item R. Pasumarthy, K. Kosaraju, and A. Chandrasekar. `` On power balancing and stabilization for a class of infinite dimensional systems''. {\em In the 21st International Symposium on Mathematical Theory of Networks and Systems}, Groningen, The Netherlands, 2014.
%
%
%
%
\end{enumerate}
\end{document}